\renewcommand\section{\@startsection {section}{1}{\z@}                                   
                                      {-1.5ex \@plus -1ex \@minus -.2ex}                                   
                                      {0.8ex \@plus.2ex}                                   
                                      {\normalfont\large\bfseries}}
\renewcommand\subsection{\@startsection{subsection}{2}{\z@}                                     
                                             {-1.5ex\@plus -1ex \@minus -.2ex}                                     
                                             {0.5ex \@plus .2ex}                                     
                                             {\normalfont\normalsize\bfseries}}
\renewcommand\subsubsection{\@startsection{subsubsection}{3}{\z@}                                     
                                                   {-1ex\@plus -1ex \@minus -.2ex}                                     
                                                   {0.5ex \@plus .1ex}                                     
                                                   {\normalfont\normalsize\bfseries}}
\numberwithin{equation}{section}
\newtheorem{theorem}{Theorem}
\newtheorem{assumption}{Assumption}
\newtheorem{axiom}{Axiom}
\newtheorem{conjecture}{Conjecture}
\newtheorem{corollary}{Corollary}
\newtheorem{definition}{Definition}
\newtheorem{example}{Example}
\newtheorem{exercise}{Exercise}
\newtheorem{lemma}{Lemma}
\newtheorem{proposition}{Proposition}
\newtheorem{remark}{Remark}
\newenvironment{proof}[1][Proof]{\noindent\textbf{#1.} }{\ \rule{0.5em}{0.5em}}
\g@addto@macro\normalsize{
  \setlength\abovedisplayskip{1.35ex}
  \setlength\belowdisplayskip{1.35ex}
  \setlength\abovedisplayshortskip{1.15ex}
  \setlength\belowdisplayshortskip{1.15ex}
}
\chardef\@x10\chardef\@xv60
\def\tcitime{
\def\@time{%
  \@minute\time\@hour\@minute\divide\@hour\@xv
  \ifnum\@hour<\@x 0\fi\the\@hour:%
  \multiply\@hour\@xv\advance\@minute-\@hour
  \ifnum\@minute<\@x 0\fi\the\@minute
  }}%
\def\QCTOpt[#1]#2{%
  \def\QCTOptB{#1}
  \def\QCTOptA{#2}
}
\def\QCTNOpt#1{%
  \def\QCTOptA{#1}
  \let\QCTOptB\empty
}
\def\Qct{%
  \@ifnextchar[{%
    \QCTOpt}{\QCTNOpt}
}
\def\QCBOpt[#1]#2{%
  \def\QCBOptB{#1}
  \def\QCBOptA{#2}
}
\def\QCBNOpt#1{%
  \def\QCBOptA{#1}
  \let\QCBOptB\empty
}
\def\Qcb{%
  \@ifnextchar[{%
    \QCBOpt}{\QCBNOpt}
}
\def\PrepCapArgs{%
  \ifx\QCBOptA\empty
    \ifx\QCTOptA\empty
      {}%
    \else
      \ifx\QCTOptB\empty
        {\QCTOptA}%
      \else
        [\QCTOptB]{\QCTOptA}%
      \fi
    \fi
  \else
    \ifx\QCBOptA\empty
      {}%
    \else
      \ifx\QCBOptB\empty
        {\QCBOptA}%
      \else
        [\QCBOptB]{\QCBOptA}%
      \fi
    \fi
  \fi
}
\def\GRAPHICSPS#1{%
 \ifcase\GRAPHICSTYPE
   \special{ps: #1}%
 \or
   \special{language "PS", include "#1"}%
 \fi
}%
\def\graffile#1#2#3#4{%
    \leavevmode
    \raise -#4 \BOXTHEFRAME{%
        \hbox to #2{\raise #3\hbox to #2{\null #1\hfil}}}%
}%
\def\draftbox#1#2#3#4{%
 \leavevmode\raise -#4 \hbox{%
  \frame{\rlap{\protect\tiny #1}\hbox to #2%
   {\vrule height#3 width\z@ depth\z@\hfil}%
  }%
 }%
}%
\newif\ifwasdraft
\def\GRAPHIC#1#2#3#4#5{%
 \ifnum\draft=\@ne\draftbox{#2}{#3}{#4}{#5}%
  \else\graffile{#1}{#3}{#4}{#5}%
  \fi
 }%
\def\addtoLaTeXparams#1{%
    \edef\LaTeXparams{\LaTeXparams #1}}%
\newif\ifBoxFrame \BoxFramefalse
\newif\ifOverFrame \OverFramefalse
\newif\ifUnderFrame \UnderFramefalse
\def\BOXTHEFRAME#1{%
   \hbox{%
      \ifBoxFrame
         \frame{#1}%
      \else
         {#1}%
      \fi
   }%
}
\def\doFRAMEparams#1{\BoxFramefalse\OverFramefalse\UnderFramefalse\readFRAMEparams#1\end}%
\def\readFRAMEparams#1{%
 \ifx#1\end%
  \let\next=\relax
  \else
  \ifx#1i\dispkind=\z@\fi
  \ifx#1d\dispkind=\@ne\fi
  \ifx#1f\dispkind=\tw@\fi
  \ifx#1t\addtoLaTeXparams{t}\fi
  \ifx#1b\addtoLaTeXparams{b}\fi
  \ifx#1p\addtoLaTeXparams{p}\fi
  \ifx#1h\addtoLaTeXparams{h}\fi
  \ifx#1X\BoxFrametrue\fi
  \ifx#1O\OverFrametrue\fi
  \ifx#1U\UnderFrametrue\fi
  \ifx#1w
    \ifnum\draft=1\wasdrafttrue\else\wasdraftfalse\fi
    \draft=\@ne
  \fi
  \let\next=\readFRAMEparams
  \fi
 \next
 }%
\def\IFRAME#1#2#3#4#5#6{%
      \bgroup
      \let\QCTOptA\empty
      \let\QCTOptB\empty
      \let\QCBOptA\empty
      \let\QCBOptB\empty
      #6%
      \parindent=0pt%
      \leftskip=0pt
      \rightskip=0pt
      \setbox0 = \hbox{\QCBOptA}%
      \@tempdima = #1\relax
      \ifOverFrame
          \typeout{This is not implemented yet}%
          \show\HELP
      \else
         \ifdim\wd0>\@tempdima
            \advance\@tempdima by \@tempdima
            \ifdim\wd0 >\@tempdima
               \textwidth=\@tempdima
               \setbox1 =\vbox{%
                  \noindent\hbox to \@tempdima{\hfill\GRAPHIC{#5}{#4}{#1}{#2}{#3}\hfill}\\%
                  \noindent\hbox to \@tempdima{\parbox[b]{\@tempdima}{\QCBOptA}}%
               }%
               \wd1=\@tempdima
            \else
               \textwidth=\wd0
               \setbox1 =\vbox{%
                 \noindent\hbox to \wd0{\hfill\GRAPHIC{#5}{#4}{#1}{#2}{#3}\hfill}\\%
                 \noindent\hbox{\QCBOptA}%
               }%
               \wd1=\wd0
            \fi
         \else
            \ifdim\wd0>0pt
              \hsize=\@tempdima
              \setbox1 =\vbox{%
                \unskip\GRAPHIC{#5}{#4}{#1}{#2}{0pt}%
                \break
                \unskip\hbox to \@tempdima{\hfill \QCBOptA\hfill}%
              }%
              \wd1=\@tempdima
           \else
              \hsize=\@tempdima
              \setbox1 =\vbox{%
                \unskip\GRAPHIC{#5}{#4}{#1}{#2}{0pt}%
              }%
              \wd1=\@tempdima
           \fi
         \fi
         \@tempdimb=\ht1
         \advance\@tempdimb by \dp1
         \advance\@tempdimb by -#2%
         \advance\@tempdimb by #3%
         \leavevmode
         \raise -\@tempdimb \hbox{\box1}%
      \fi
      \egroup%
}%
\def\DFRAME#1#2#3#4#5{%
 \begin{center}
     \let\QCTOptA\empty
     \let\QCTOptB\empty
     \let\QCBOptA\empty
     \let\QCBOptB\empty
     \ifOverFrame 
        #5\QCTOptA\par
     \fi
     \GRAPHIC{#4}{#3}{#1}{#2}{\z@}
     \ifUnderFrame 
        \nobreak\par #5\QCBOptA
     \fi
 \end{center}%
 }%
\def\FFRAME#1#2#3#4#5#6#7{%
 \begin{figure}[#1]%
  \let\QCTOptA\empty
  \let\QCTOptB\empty
  \let\QCBOptA\empty
  \let\QCBOptB\empty
  \ifOverFrame
    #4
    \ifx\QCTOptA\empty
    \else
      \ifx\QCTOptB\empty
        \caption{\QCTOptA}%
      \else
        \caption[\QCTOptB]{\QCTOptA}%
      \fi
    \fi
    \ifUnderFrame\else
      \label{#5}%
    \fi
  \else
    \UnderFrametrue%
  \fi
  \begin{center}\GRAPHIC{#7}{#6}{#2}{#3}{\z@}\end{center}%
  \ifUnderFrame
    #4
    \ifx\QCBOptA\empty
      \caption{}%
    \else
      \ifx\QCBOptB\empty
        \caption{\QCBOptA}%
      \else
        \caption[\QCBOptB]{\QCBOptA}%
      \fi
    \fi
    \label{#5}%
  \fi
  \end{figure}%
 }%
\def\makeactives{
  \catcode`\"=\active
  \catcode`\;=\active
  \catcode`\:=\active
  \catcode`\'=\active
  \catcode`\~=\active
}
   \gdef\activesoff{%
      \def"{\string"}
      \def;{\string;}
      \def:{\string:}
      \def'{\string'}
      \def~{\string~}
    }
\def\FRAME#1#2#3#4#5#6#7#8{%
 \bgroup
 \@ifundefined{bbl@deactivate}{}{\activesoff}
 \ifnum\draft=\@ne
   \wasdrafttrue
 \else
   \wasdraftfalse%
 \fi
 \def\LaTeXparams{}%
 \dispkind=\z@
 \def\LaTeXparams{}%
 \doFRAMEparams{#1}%
 \ifnum\dispkind=\z@\IFRAME{#2}{#3}{#4}{#7}{#8}{#5}\else
  \ifnum\dispkind=\@ne\DFRAME{#2}{#3}{#7}{#8}{#5}\else
   \ifnum\dispkind=\tw@
    \edef\@tempa{\noexpand\FFRAME{\LaTeXparams}}%
    \@tempa{#2}{#3}{#5}{#6}{#7}{#8}%
    \fi
   \fi
  \fi
  \ifwasdraft\draft=1\else\draft=0\fi{}%
  \egroup
 }%
\def\TEXUX#1{"texux"}
\def\func#1{\mathop{\rm #1}}%
\long\def\QQQ#1#2{%
     \long\expandafter\def\csname#1\endcsname{#2}}%
\long\def\QQA#1#2{}%
\def\QTR#1#2{{\csname#1\endcsname #2}}
\def\EXPAND#1[#2]#3{}%
\def\NOEXPAND#1[#2]#3{}%
\def\LaTeXparent#1{}%
\def\ChildStyles#1{}%
\def\ChildDefaults#1{}%
\def\QTagDef#1#2#3{}%
\def\QQfnmark#1{\footnotemark}
\def\makeatletter\input gnuindex.sty\makeatother\makeindex{\makeatletter\input gnuindex.sty\makeatother\makeindex}%
\def\initial#1{\bigbreak{\raggedright\large\bf #1}\kern 2\p@\penalty3000}}%
 \def\abstract{%
  \if@twocolumn
   \section*{Abstract (Not appropriate in this style!)}%
   \else \small 
   \begin{center}{\bf Abstract\vspace{-.5em}\vspace{\z@}}\end{center}%
   \quotation 
   \fi
  }%
   \def\registered{\relax\ifmmode{}\r@gistered
                    \else$\m@th\r@gistered$\fi}%
 \def\r@gistered{^{\ooalign
  {\hfil\raise.07ex\hbox{$\scriptstyle\rm\text{R}$}\hfil\crcr
  \mathhexbox20D}}}}{}%
\newdimen\theight
\def\Column{%
 \vadjust{\setbox\z@=\hbox{\scriptsize\quad\quad tcol}%
  \theight=\ht\z@\advance\theight by \dp\z@\advance\theight by \lineskip
  \kern -\theight \vbox to \theight{%
   \rightline{\rlap{\box\z@}}%
   \vss
   }%
  }%
 }%
\def\qed{%
 \ifhmode\unskip\nobreak\fi\ifmmode\ifinner\else\hskip5\p@\fi\fi
 \hbox{\hskip5\p@\vrule width4\p@ height6\p@ depth1.5\p@\hskip\p@}%
 }%
\def\miss{\hbox{\vrule height2\p@ width 2\p@ depth\z@}}%
\def\tcol#1{{\baselineskip=6\p@ \vcenter{#1}} \Column}  %
\def\newfmtname{LaTeX2e}
\def\chkcompat{%
   \if@compatibility
   \else
     \usepackage{latexsym}
   \fi
}
  \DeclareOldFontCommand{\rm}{\normalfont\rmfamily}{\mathrm}
  \DeclareOldFontCommand{\sf}{\normalfont\sffamily}{\mathsf}
  \DeclareOldFontCommand{\tt}{\normalfont\ttfamily}{\mathtt}
  \DeclareOldFontCommand{\bf}{\normalfont\bfseries}{\mathbf}
  \DeclareOldFontCommand{\it}{\normalfont\itshape}{\mathit}
  \DeclareOldFontCommand{\sl}{\normalfont\slshape}{\@nomath\sl}
  \DeclareOldFontCommand{\sc}{\normalfont\scshape}{\@nomath\sc}
\def\alpha{{\Greekmath 010B}}%
\def\beta{{\Greekmath 010C}}%
\def\gamma{{\Greekmath 010D}}%
\def\delta{{\Greekmath 010E}}%
\def\epsilon{{\Greekmath 010F}}%
\def\zeta{{\Greekmath 0110}}%
\def\eta{{\Greekmath 0111}}%
\def\theta{{\Greekmath 0112}}%
\def\iota{{\Greekmath 0113}}%
\def\kappa{{\Greekmath 0114}}%
\def\lambda{{\Greekmath 0115}}%
\def\mu{{\Greekmath 0116}}%
\def\nu{{\Greekmath 0117}}%
\def\xi{{\Greekmath 0118}}%
\def\pi{{\Greekmath 0119}}%
\def\rho{{\Greekmath 011A}}%
\def\sigma{{\Greekmath 011B}}%
\def\tau{{\Greekmath 011C}}%
\def\upsilon{{\Greekmath 011D}}%
\def\phi{{\Greekmath 011E}}%
\def\chi{{\Greekmath 011F}}%
\def\psi{{\Greekmath 0120}}%
\def\omega{{\Greekmath 0121}}%
\def\varepsilon{{\Greekmath 0122}}%
\def\vartheta{{\Greekmath 0123}}%
\def\varpi{{\Greekmath 0124}}%
\def\varrho{{\Greekmath 0125}}%
\def\varsigma{{\Greekmath 0126}}%
\def\varphi{{\Greekmath 0127}}%
\def\nabla{{\Greekmath 0272}}
\def\FindBoldGroup{%
   {\setbox0=\hbox{$\mathbf{x\global\edef\theboldgroup{\the\mathgroup}}$}}%
}
\def\Greekmath#1#2#3#4{%
    \if@compatibility
        \ifnum\mathgroup=\symbold
           \mathchoice{\mbox{\boldmath$\displaystyle\mathchar"#1#2#3#4$}}%
                      {\mbox{\boldmath$\textstyle\mathchar"#1#2#3#4$}}%
                      {\mbox{\boldmath$\scriptstyle\mathchar"#1#2#3#4$}}%
                      {\mbox{\boldmath$\scriptscriptstyle\mathchar"#1#2#3#4$}}%
        \else
           \mathchar"#1#2#3#4%
        \fi 
    \else 
        \FindBoldGroup
        \ifnum\mathgroup=\theboldgroup 
           \mathchoice{\mbox{\boldmath$\displaystyle\mathchar"#1#2#3#4$}}%
                      {\mbox{\boldmath$\textstyle\mathchar"#1#2#3#4$}}%
                      {\mbox{\boldmath$\scriptstyle\mathchar"#1#2#3#4$}}%
                      {\mbox{\boldmath$\scriptscriptstyle\mathchar"#1#2#3#4$}}%
        \else
           \mathchar"#1#2#3#4%
        \fi     	    
	  \fi}
\newif\ifGreekBold  \GreekBoldfalse
\let\SAVEPBF=\pbf
\def\pbf{\GreekBoldtrue\SAVEPBF}%
  \newcounter{equationnumber}  
  \def\mathletters{%
     \addtocounter{equation}{1}
     \edef\@currentlabel{\theequation}%
     \setcounter{equationnumber}{\c@equation}
     \setcounter{equation}{0}%
     \edef\theequation{\@currentlabel\noexpand\alph{equation}}%
  }
    \def\BibTeX{{\rm B\kern-.05em{\sc i\kern-.025em b}\kern-.08em
                 T\kern-.1667em\lower.7ex\hbox{E}\kern-.125emX}}}{}%
\def\AmS{{\protect\usefont{OMS}{cmsy}{m}{n}%
                A\kern-.1667em\lower.5ex\hbox{M}\kern-.125emS}}}{}%
\let\DOTSI\relax
\def\RIfM@{\relax\ifmmode}%
\def\FN@{\futurelet\next}%
\def\iint{\DOTSI\intno@\tw@\FN@\ints@}%
\def\iiint{\DOTSI\intno@\thr@@\FN@\ints@}%
\def\iiiint{\DOTSI\intno@4 \FN@\ints@}%
\def\idotsint{\DOTSI\intno@\z@\FN@\ints@}%
\def\ints@{\findlimits@\ints@@}%
\newif\iflimtoken@
\newif\iflimits@
\def\findlimits@{\limtoken@true\ifx\next\limits\limits@true
 \else\ifx\next\nolimits\limits@false\else
 \limtoken@false\ifx\ilimits@\nolimits\limits@false\else
 \ifinner\limits@false\else\limits@true\fi\fi\fi\fi}%
\def\multint@{\int\ifnum\intno@=\z@\intdots@                          
 \else\intkern@\fi                                                    
 \ifnum\intno@>\tw@\int\intkern@\fi                                   
 \ifnum\intno@>\thr@@\int\intkern@\fi                                 
 \int}
\def\multintlimits@{\intop\ifnum\intno@=\z@\intdots@\else\intkern@\fi
 \ifnum\intno@>\tw@\intop\intkern@\fi
 \ifnum\intno@>\thr@@\intop\intkern@\fi\intop}%
\def\intic@{%
    \mathchoice{\hskip.5em}{\hskip.4em}{\hskip.4em}{\hskip.4em}}%
\def\negintic@{\mathchoice
 {\hskip-.5em}{\hskip-.4em}{\hskip-.4em}{\hskip-.4em}}%
\def\ints@@{\iflimtoken@                                              
 \def\ints@@@{\iflimits@\negintic@
   \mathop{\intic@\multintlimits@}\limits                             
  \else\multint@\nolimits\fi                                          
  \eat@}
 \else                                                                
 \def\ints@@@{\iflimits@\negintic@
  \mathop{\intic@\multintlimits@}\limits\else
  \multint@\nolimits\fi}\fi\ints@@@}%
\def\intkern@{\mathchoice{\!\!\!}{\!\!}{\!\!}{\!\!}}%
\def\plaincdots@{\mathinner{\cdotp\cdotp\cdotp}}%
\def\intdots@{\mathchoice{\plaincdots@}%
 {{\cdotp}\mkern1.5mu{\cdotp}\mkern1.5mu{\cdotp}}%
 {{\cdotp}\mkern1mu{\cdotp}\mkern1mu{\cdotp}}%
 {{\cdotp}\mkern1mu{\cdotp}\mkern1mu{\cdotp}}}%
\def\RIfM@{\relax\protect\ifmmode}
\def\text{\RIfM@\expandafter\text@\else\expandafter\mbox\fi}
\let\nfss@text\text
\def\text@#1{\mathchoice
   {\textdef@\displaystyle\f@size{#1}}%
   {\textdef@\textstyle\tf@size{\firstchoice@false #1}}%
   {\textdef@\textstyle\sf@size{\firstchoice@false #1}}%
   {\textdef@\textstyle \ssf@size{\firstchoice@false #1}}%
   \glb@settings}
\def\textdef@#1#2#3{\hbox{{%
                    \everymath{#1}%
                    \let\f@size#2\selectfont
                    #3}}}
\newif\iffirstchoice@
\def\Let@{\relax\iffalse{\fi\let\\=\cr\iffalse}\fi}%
\def\vspace@{\def\vspace##1{\crcr\noalign{\vskip##1\relax}}}%
\def\multilimits@{\bgroup\vspace@\Let@
 \baselineskip\fontdimen10 \scriptfont\tw@
 \advance\baselineskip\fontdimen12 \scriptfont\tw@
 \lineskip\thr@@\fontdimen8 \scriptfont\thr@@
 \lineskiplimit\lineskip
 \vbox\bgroup\ialign\bgroup\hfil$\m@th\scriptstyle{##}$\hfil\crcr}%
\def\Sb{_\multilimits@}%
\def\endSb{\crcr\egroup\egroup\egroup}%
\def\Sp{^\multilimits@}%
\newdimen\ex@
\def\rightarrowfill@#1{$#1\m@th\mathord-\mkern-6mu\cleaders
 \hbox{$#1\mkern-2mu\mathord-\mkern-2mu$}\hfill
 \mkern-6mu\mathord\rightarrow$}%
\def\leftarrowfill@#1{$#1\m@th\mathord\leftarrow\mkern-6mu\cleaders
 \hbox{$#1\mkern-2mu\mathord-\mkern-2mu$}\hfill\mkern-6mu\mathord-$}%
\def\leftrightarrowfill@#1{$#1\m@th\mathord\leftarrow
\mkern-6mu\cleaders
 \hbox{$#1\mkern-2mu\mathord-\mkern-2mu$}\hfill
 \mkern-6mu\mathord\rightarrow$}%
\def\overrightarrow{\mathpalette\overrightarrow@}%
\def\overrightarrow@#1#2{\vbox{\ialign{##\crcr\rightarrowfill@#1\crcr
 \noalign{\kern-\ex@\nointerlineskip}$\m@th\hfil#1#2\hfil$\crcr}}}%
\def\overleftarrow{\mathpalette\overleftarrow@}%
\def\overleftarrow@#1#2{\vbox{\ialign{##\crcr\leftarrowfill@#1\crcr
 \noalign{\kern-\ex@\nointerlineskip}$\m@th\hfil#1#2\hfil$\crcr}}}%
\def\overleftrightarrow{\mathpalette\overleftrightarrow@}%
\def\overleftrightarrow@#1#2{\vbox{\ialign{##\crcr
   \leftrightarrowfill@#1\crcr
 \noalign{\kern-\ex@\nointerlineskip}$\m@th\hfil#1#2\hfil$\crcr}}}%
\def\underrightarrow{\mathpalette\underrightarrow@}%
\def\underrightarrow@#1#2{\vtop{\ialign{##\crcr$\m@th\hfil#1#2\hfil
  $\crcr\noalign{\nointerlineskip}\rightarrowfill@#1\crcr}}}%
\def\underleftarrow{\mathpalette\underleftarrow@}%
\def\underleftarrow@#1#2{\vtop{\ialign{##\crcr$\m@th\hfil#1#2\hfil
  $\crcr\noalign{\nointerlineskip}\leftarrowfill@#1\crcr}}}%
\def\underleftrightarrow{\mathpalette\underleftrightarrow@}%
\def\underleftrightarrow@#1#2{\vtop{\ialign{##\crcr$\m@th
  \hfil#1#2\hfil$\crcr
 \noalign{\nointerlineskip}\leftrightarrowfill@#1\crcr}}}%
\def\qopnamewl@#1{\mathop{\operator@font#1}\nlimits@}
\let\nlimits@\displaylimits
\def\setboxz@h{\setbox\z@\hbox}
\def\varlim@#1#2{\mathop{\vtop{\ialign{##\crcr
 \hfil$#1\m@th\operator@font lim$\hfil\crcr
 \noalign{\nointerlineskip}#2#1\crcr
 \noalign{\nointerlineskip\kern-\ex@}\crcr}}}}
 \def\rightarrowfill@#1{\m@th\setboxz@h{$#1-$}\ht\z@\z@
  $#1\copy\z@\mkern-6mu\cleaders
  \hbox{$#1\mkern-2mu\box\z@\mkern-2mu$}\hfill
  \mkern-6mu\mathord\rightarrow$}
\def\leftarrowfill@#1{\m@th\setboxz@h{$#1-$}\ht\z@\z@
  $#1\mathord\leftarrow\mkern-6mu\cleaders
  \hbox{$#1\mkern-2mu\copy\z@\mkern-2mu$}\hfill
  \mkern-6mu\box\z@$}
\def\projlim{\qopnamewl@{proj\,lim}}
\def\injlim{\qopnamewl@{inj\,lim}}
\def\varinjlim{\mathpalette\varlim@\rightarrowfill@}
\def\varprojlim{\mathpalette\varlim@\leftarrowfill@}
\def\varliminf{\mathpalette\varliminf@{}}
\def\varliminf@#1{\mathop{\underline{\vrule\@depth.2\ex@\@width\z@
   \hbox{$#1\m@th\operator@font lim$}}}}
\def\varlimsup{\mathpalette\varlimsup@{}}
\def\varlimsup@#1{\mathop{\overline
  {\hbox{$#1\m@th\operator@font lim$}}}}
\def\align{\@verbatim \frenchspacing\@vobeyspaces \@alignverbatim
You are using the "align" environment in a style in which it is not defined.}
\let\csname endalign*\endcsname =\endtrivlist
\def\alignat{\@verbatim \frenchspacing\@vobeyspaces \@alignatverbatim
You are using the "alignat" environment in a style in which it is not defined.}
\let\csname endalignat*\endcsname =\endtrivlist
\def\xalignat{\@verbatim \frenchspacing\@vobeyspaces \@xalignatverbatim
You are using the "xalignat" environment in a style in which it is not defined.}
\let\csname endxalignat*\endcsname =\endtrivlist
\def\gather{\@verbatim \frenchspacing\@vobeyspaces \@gatherverbatim
You are using the "gather" environment in a style in which it is not defined.}
\let\csname endgather*\endcsname =\endtrivlist
\def\multiline{\@verbatim \frenchspacing\@vobeyspaces \@multilineverbatim
You are using the "multiline" environment in a style in which it is not defined.}
\let\csname endmultiline*\endcsname =\endtrivlist
\def\arrax{\@verbatim \frenchspacing\@vobeyspaces \@arraxverbatim
You are using a type of "array" construct that is only allowed in AmS-LaTeX.}
\def\tabulax{\@verbatim \frenchspacing\@vobeyspaces \@tabulaxverbatim
You are using a type of "tabular" construct that is only allowed in AmS-LaTeX.}
\let\csname endarrax*\endcsname =\endtrivlist
\let\csname endtabulax*\endcsname =\endtrivlist
\def\@@eqncr{\let\@tempa\relax
    \ifcase\@eqcnt \def\@tempa{& & &}\or \def\@tempa{& &}%
      \else \def\@tempa{&}\fi
     \@tempa
     \if@eqnsw
        \iftag@
           \@taggnum
        \else
           \@eqnnum\stepcounter{equation}%
        \fi
     \fi
     \global\tag@false
     \global\@eqnswtrue
     \global\@eqcnt\z@\cr}
 \def\endequation{%
     \ifmmode\ifinner 
      \iftag@
        \addtocounter{equation}{-1} 
        $\hfil
           \displaywidth\linewidth\@taggnum\egroup \endtrivlist
        \global\tag@false
        \global\@ignoretrue   
      \else
        $\hfil
           \displaywidth\linewidth\@eqnnum\egroup \endtrivlist
        \global\tag@false
        \global\@ignoretrue 
      \fi
     \else   
      \iftag@
        \addtocounter{equation}{-1} 
        \eqno \hbox{\@taggnum}
        \global\tag@false%
        $$\global\@ignoretrue
      \else
        \eqno \hbox{\@eqnnum}
        $$\global\@ignoretrue
      \fi
     \fi\fi
 } 
 \newif\iftag@ \tag@false
 \def\tag{\@ifnextchar*{\@tagstar}{\@tag}}
 \def\@tag#1{%
     \global\tag@true
     \global\def\@taggnum{(#1)}}
 \def\@tagstar*#1{%
     \global\tag@true
     \global\def\@taggnum{#1}%
}
\begin{document}

\title{{\large {Estimation of Average Effects in Short $T$ Heterogeneous
Panels\thanks{{\small We are grateful to Cheng Hsiao, Oliver Linton, Ron
Smith and Hayun Song for helpful comments on an earlier version of this
paper. The current version has benefited greatly from constructive comments
and suggestions from Jiti Gao, who was the discussant of our paper at the
2025 Workshop in Honour of Professors Heather Anderson and Farshid Vahid at
Monash University, two anonymous referees and an Associate Editor. The
research on this paper was started when Liying Yang was a Ph.D. student at
the Department of Economics, University of Southern California.}}}} }
\author{ {\normalsize M. Hashem Pesaran\thanks{{\small Trinity College,
University of Cambridge, UK, and Department of Economics, University of
Southern California, US, Email: pesaran@usc.edu.}}} \and {\normalsize Liying
Yang\thanks{{\small Shenzhen Audencia Financial Technology~Institute,
Shenzhen University, China, Email: lyang@szu.edu.cn.}} }}
\date{{\normalsize \today }}
\maketitle

\begin{abstract}
The commonly used two-way fixed effects estimator is biased under correlated
heterogeneity and can lead to misleading inference. The mean group estimator
proposed by \cite{PesaranSmith1995} is robust to correlated heterogeneity
but requires the underlying individual estimates to have second-order
moments that could fail if the number of estimated coefficients ($k$) is too
close to the time dimension ($T$) of the panel. This paper focuses on panels
where $k$ is close to $T$ (including $k=T$), and proposes a trimmed mean
group (TMG) estimator that shrinks individual estimates most likely to fail
the second-order moment condition. The TMG estimator is shown to be $%
n^{(1-\alpha )/2}$-consistent and asymptotically normally distributed, where 
$\alpha$ is determined by the degree to which individual estimates might not
have moments. The $\sqrt{n}$ convergence rate is achieved only if all
individual estimates have second-order moments. Extensions to panels with
time effects are provided, and a new Hausman test of correlated
heterogeneity is proposed. Small sample properties of the TMG estimator
(with and without time effects) are investigated by Monte Carlo experiments
and shown to be satisfactory. The proposed test of correlated heterogeneity
is also shown to have the correct size and satisfactory power. The utility
of the TMG approach is illustrated with an empirical application.

{\small \noindent \textbf{Keywords}: Correlated heterogeneity, irregular
estimators, two-way fixed effects, mean group estimation, tests of
correlated heterogeneity, calorie demand}

{\small \noindent \textbf{JEL Classification:} C21, C23}
\end{abstract}

\vspace{-5mm}

\vspace{-10mm}

\thispagestyle{empty}

\newpage \setcounter{page}{1}

\section{Introduction}

\doublespacing%
Fixed effects estimation of average effects has been predominantly utilized
for program and policy evaluation. For static panel data models where slope
heterogeneity is uncorrelated with regressors, two-way fixed effects (TWFE)
estimators that allow for unit-specific and time effects are $\sqrt{n}$%
-consistent, and if used in conjunction with robust standard errors lead to
valid inference in panels where the time dimension, $T$, is short and the
cross section dimension, $n$, is sufficiently large. However, when the slope
heterogeneity is correlated with the regressors, the TWFE estimators could
become inconsistent even if both $n$ and $T\rightarrow \infty$.\footnote{%
The concept of the correlated random coefficient model is due to \cite%
{HeckmanVytlacil1998}. \cite{Wooldridge2005} shows that TWFE estimators
continue to be consistent if slope heterogeneity is mean-independent of all
the de-trended covariates. See also condition (\ref{FEconsistency}) given
below.} Such correlated heterogeneity arises endogenously in the case of
dynamic panel data models, as originally noted by \cite{PesaranSmith1995},
and more generally, when the regressors are weakly exogenous. In the case of
static panels with strictly exogenous regressors, correlated (slope)
heterogeneity can arise, for example, when there is a high degree of
variation in treatments across units or when there are latent factors that
influence the level and variability of treatments and their outcomes. For
example, in estimation of returns to education, the choice of educational
level is likely to be correlated with expected returns to education. Other
examples include estimation of the effects of training programs on workers'
productivity and earnings reviewed by \cite{CreponVandenberg2016},
evaluation of the effectiveness of micro-credit programs discussed by \cite%
{BanerjeeEtal2015}, and the analysis of the effectiveness of anti-poverty
cash transfer programs considered by \cite{BastagliEtal2019}.

In the presence of correlated heterogeneity, \cite{PesaranSmith1995}
proposed to estimate the mean effects by simple averages of individual
estimates, which they called the mean group (MG) estimator. They showed that
the MG estimator is consistent for dynamic heterogeneous panels when $n$ and 
$T$ are both large. It was later shown that for panels with strictly
exogenous regressors, the MG estimator is in fact $\sqrt{n}$-consistent in
the presence of correlated heterogeneity even if $T$ is fixed as $%
n\rightarrow \infty $, so long as $T$ is sufficiently large such that
second-order moments of the individual estimates exist. However, such moment
conditions need not hold when $T$ is very close to the number of estimated
coefficients ($k$). In effect, we are faced with the problem of estimating
the mean of random variables with fat tails.

This problem was originally recognized by \cite{Chamberlain1992}, who showed
that one needs $T$ to be strictly larger than $k$ for regular identification
of average effects under correlated heterogeneity.\footnote{%
An unknown parameter, $\beta _{0}$, is said to be regularly identified if
there exists an estimator that converges to $\beta _{0}$ in probability at
the rate of $\sqrt{n}$. Any estimator that converges to its true value at a
rate slower than $\sqrt{n}$ is said to be irregularly identified.} In the
statistics literature, \cite{CsorgoEtal1988b,CsorgoEtal1988a} and \cite%
{GriffinPruitt1989}, among others, consider a trimmed mean estimator whereby
the observations are ordered, and those below and above a given threshold
value are excluded. However, in general such trimmed estimators may not
possess a limiting distribution if the observations are drawn from a
heavy-tailed distribution with the tail index, $\alpha _{p}<2$, and the
threshold values are chosen in an \textit{ad hoc} manner. \cite{Peng2001}
proposes a new trimmed estimator where the thresholds are endogenized and
the trimmed mean is augmented with mean estimates from the two tails. Peng's
modified trimmed estimator is shown to have the same limiting normal
distribution as the sample mean but with a slower convergence rate when $%
\alpha _{p}<2$ (see Theorem 1 and Remark 1 of \cite{Peng2001}). For settings
of inverse probability weighting where the denominator can be arbitrarily
close to zero, leading to heavy-tailed sampling distributions, \cite%
{MaWang2020} propose trimmed estimators that exclude observations with small
denominators and develop inference procedures for different trimming
threshold choices. For our purposes, trimmed estimators proposed by \cite%
{Peng2001} and \cite{MaWang2020} are subject to two limitations. They
consider only the scalar case and assume that the observations (in our
application, the individual estimates) are identically and independently
distributed. Extension of their estimators to a vector of estimates that are
not identically distributed does not seem to be straightforward.

In the context of MG estimation, we have additional information about the
precision of the individual estimates that are not used in the trimming
approaches considered in the statistics literature. One example where such
information is utilized is provided by \cite{GrahamPowell2012} (GP), who
build on the pioneering work of \cite{Chamberlain1992}. GP focus on panels
with $T=k$, where identification issues of time effects and the mean
coefficients arise especially when there are insufficient within-individual
variations for some regressors. These authors derive an irregular estimator
of the mean coefficients by excluding individual estimates from the
estimation of the average effects if the sample variance of regressors in
question is smaller than a given threshold value.

In this paper, we begin by providing conditions under which MG and fixed
effects (FE) estimators of the average effects in heterogeneous panels are $%
\sqrt{n}$-consistent, which serves as a basis for developing a diagnostic
test of the validity of (two-way) fixed effects estimators commonly used in
the literature. We then propose a trimmed mean group (TMG) estimator that
does not exclude any of the individual estimates, but uses a threshold
function similar to that of GP to shrink some of the estimates to overcome
the fat-tailed nature of the distribution of the individual estimates when $%
T $ is very close to $k$. In effect, we shrink rather than drop estimates as
done by GP. The decision on whether unit $i$ is subject to shrinkage is made
with respect to the determinant of the sample variance matrix of the
regressors, denoted by $d_{i}(>0)$. Individual estimates become
ill-conditioned when $d_{i}$ is close to zero. To characterize the trimming
process, we assume that $1/d_{i}$ follows Pareto-type distributions with the
tail index, $\alpha _{p}$, and show that individual estimates have
second-order moments when $\alpha _{p}>2$. Shrinkage of the individual
estimates is required only if $\alpha _{p}\leq 2$. The literature on
estimation of $\alpha _{p}$ is well established, for example, \cite%
{EmbrechtsEtal1997}, which can guide decisions on shrinkage/trimming before
estimating the average effects. Based on extensive Monte Carlo experiments,
we find that in general $\alpha _{p}>2$ when $T\geq 2k+1$, which could be
used as a practical rule of thumb when deciding whether to use the MG
estimator or its trimmed version proposed in this paper.

We also consider heterogeneous panels with time effects and show that the
TMG procedure can be applied after the time effects are eliminated. In the
case where $T=k$, we require the dependence between heterogeneous slope
coefficients and the regressors to be time-invariant. This assumption is not
required when $T>k$, and the time effects can be eliminated using the
approach first proposed by \cite{Chamberlain1992}. We refer to these
estimators as TMG-TE and derive their asymptotic distributions, bearing in
mind that the way the time effects are eliminated depends on whether $T=k$
or $T>k$.

As noted above, slope heterogeneity by itself does not render the TWFE
estimators inconsistent, which continue to have the regular convergence rate
of $\sqrt{n}$. The problem arises when slope heterogeneity is correlated
with the covariates. It is therefore important that before using the TWFE
estimators, the null of uncorrelated heterogeneity is tested. To this end,
we also propose Hausman tests of correlated heterogeneity by comparing the
FE and TWFE estimators with the associated TMG estimators and derive their
asymptotic distributions under fairly general conditions. The earlier
Hausman test of slope homogeneity developed by \cite{PesaranEtal1996} is
based on the difference between FE and MG estimators and does not apply when 
$T$ is close to $k$. The more recent dispersion-based test of slope
homogeneity proposed by \cite{PesaranYamagata2008} is shown to be quite
powerful as a test of slope homogeneity but does not distinguish correlated
or uncorrelated heterogeneity and requires $\sqrt{n}/T^{2}\rightarrow 0$ as $%
n$ and $T\rightarrow \infty $ jointly.

We also carry out an extensive set of Monte Carlo (MC) simulations to
investigate the small sample properties of the TMG and TMG-TE estimators and
how they compare with the trimmed estimator proposed by GP. The MC evidence
on the size and empirical power of the Hausman tests of correlated
heterogeneity in panel data models with and without time effects is
provided, and the sensitivity of estimation results to the choice of the
trimming threshold parameter, $\alpha $, is also investigated. The MC and
theoretical results of the paper are all in agreement. The TMG and TMG-TE
estimators not only have the correct size but also achieve better finite
sample properties compared with the other trimmed estimator across a number
of experiments with different data generating processes, allowing for
heteroskedasticity (random and correlated), error serial correlations, and
regressors with heterogeneous dynamics and interactive effects. The
simulation results also confirm that the Hausman tests based on the
difference between FE (TWFE) and TMG (TMG-TE) estimators have the correct
size and power against the alternative of correlated heterogeneity.

Finally, we illustrate the utility of our proposed trimmed estimators by
re-examining the average effect of household expenditures on calorie demand
using a balanced panel of $1,358$ households in poor rural communities in
Nicaragua over the years 2001--2002 $(T=2)$ and 2000--2002 $(T=3)$.

The rest of the paper is organized as follows. Section \ref{HPM} sets out
the heterogeneous panel data model and discusses the asymptotic properties
of FE and MG estimators. Section \ref{IRMGE} considers ultra short $T$
panels (including the case of $T=k$) and introduces the proposed TMG
estimator, with its asymptotic properties established in Section \ref{AsyTMG}%
. Section \ref{CRCTE} extends the TMG estimation to panels with time
effects, distinguishing between cases where $T>k$ and $T=k$. Section \ref%
{Test} sets out the Hausman test of correlated heterogeneous slope
coefficients. Section \ref{subset} discusses how to apply the TMG approach
to a subset of coefficients of interest. Section \ref{MC} provides the main
findings of the MC experiments. Section \ref{APP} presents the empirical
illustration. Section \ref{conclusion} concludes. Mathematical proofs of
propositions and theorems are given in a mathematical appendix.
Supplementary materials covering additional mathematical derivations, MC
experiments, as well as further empirical results, are provided in an online
supplement.

\textbf{Notations:} Generic positive finite constants are denoted by $C$
when large, and $c$ when small. They can take different values at different
instances. $\lambda _{\max }\left( \boldsymbol{A}\right) $ and $\lambda
_{\min }\left( \boldsymbol{A}\right) $ denote the maximum and minimum
eigenvalues of matrix $\boldsymbol{A}$. $\boldsymbol{A}\succ \boldsymbol{0}$
and $\boldsymbol{A}\succeq \boldsymbol{0}$ denote that matrix $\boldsymbol{A}
$ is positive definite and is positive semi-definite, respectively. When
matrix $\boldsymbol{A}$ is square, its adjugate (adjoint) and determinant
are denoted by $\func{adj}(\boldsymbol{A})$ and $\det (\boldsymbol{A})$,
respectively. If $\det (\boldsymbol{A})\neq 0$, then the inverse of $%
\boldsymbol{A}$ is given by $\boldsymbol{A}^{-1}=\func{adj}(\boldsymbol{A)}%
/\det (\boldsymbol{A)}$. $\left\Vert \boldsymbol{A}\right\Vert =\lambda
_{\max }^{1/2}(\boldsymbol{A}^{\prime }\boldsymbol{A)}$ and $\left\Vert 
\boldsymbol{A}\right\Vert _{1}$ denote the spectral and column norms of
matrix $\boldsymbol{A}$, respectively. $\left\Vert \boldsymbol{x}\right\Vert
_{p}=\left[ E\left( \left\Vert \boldsymbol{x}\right\Vert ^{p}\right) \right]
^{1/p}$. If $\left\{ f_{n}\right\} _{n=1}^{\infty }$ is any real sequence
and $\left\{ g_{n}\right\} _{n=1}^{\infty }$ is a sequence of positive real
numbers, then $f_{n}=O(g_{n})$ if there exists $C$ such that $\left\vert
f_{n}\right\vert /g_{n}\leq C$ for all $n$, and $f_{n}=o(g_{n})$ if $%
f_{n}/g_{n}\rightarrow 0$ as $n\rightarrow \infty $. Similarly, $%
f_{n}=O_{p}(g_{n})$ if $f_{n}/g_{n}$ is stochastically bounded, and $%
f_{n}=o_{p}(g_{n})$, if $f_{n}/g_{n}\rightarrow _{p}0$. $f_{n}=\ominus
(g_{n})$ if there exist $n_{0}\geq 1$ and positive finite constants $C_{0}$
and $C_{1}$, such that $\inf_{n\geq n_{0}}\left( \left\vert f_{n}\right\vert
/g_{n}\right) \geq C_{0}$, and $\sup_{n\geq n_{0}}\left( \left\vert
f_{n}\right\vert /g_{n}\right) \leq C_{1}$. The operator $\rightarrow _{p}$
denotes convergence in probability, and $\rightarrow _{d}$ denotes
convergence in distribution. $IID$ stands for independently and identically
distributed. $\boldsymbol{u}\perp \boldsymbol{v}$ is used to show that
vectors of random variables $\boldsymbol{u}$ and $\boldsymbol{v}$ are
independently distributed.

\section{Heterogeneous linear panel data models\label{HPM}}

Consider the following panel data model with individual fixed effects, $%
\alpha _{i}$, and heterogeneous slope coefficients, $\boldsymbol{\beta }_{i}$%
, 
\begin{equation}
y_{it}=\alpha _{i}+\boldsymbol{\beta }_{i}^{\prime }\boldsymbol{x}%
_{it}+u_{it}\text{, for }i=1,2,...,n\text{ and }t=1,2,...,T,  \label{eq2}
\end{equation}%
where $\boldsymbol{x}_{it}$ is a $k^{\prime }\times 1$ vector of regressors,
and $u_{it}$ is the error term. $\left\{ \boldsymbol{\beta }_{i}\right\}
_{i=1}^{n}$ follow the random coefficient model 
\begin{equation}
\boldsymbol{\beta }_{i}=\boldsymbol{\beta }_{0}+\boldsymbol{\eta }_{i},\text{
for }i=1,2,...,n,  \label{RCM}
\end{equation}%
where $\left\{ \boldsymbol{\eta }_{i}\right\} _{i=1}^{n}$ are the random
components, and $\boldsymbol{\beta }_{0}$ is the $k^{\prime }\times 1$
vector of average effects. In matrix notations, 
\begin{equation}
\boldsymbol{y}_{i}=\alpha _{i}\boldsymbol{\tau }_{T}+\boldsymbol{X}_{i}%
\boldsymbol{\beta }_{i}+\boldsymbol{u}_{i},  \label{m1}
\end{equation}%
where $\boldsymbol{y}_{i}=(y_{i1},y_{i2},...,y_{iT})^{\prime }$, $%
\boldsymbol{\tau }_{T}$ is a $T\times 1$ vector of ones, $\boldsymbol{X}%
_{i}=(\boldsymbol{x}_{i1},\boldsymbol{x}_{i2},...,\boldsymbol{x}%
_{iT})^{\prime }$, and $\boldsymbol{u}_{i}=(u_{i1},u_{i2},...,u_{iT})^{%
\prime }$. The FE estimator of $\boldsymbol{\beta }_{0}$ is given by 
\begin{equation}
\boldsymbol{\hat{\beta}}_{FE}=\left( n^{-1}\sum_{i=1}^{n}\boldsymbol{X}%
_{i}^{\prime }\boldsymbol{M}_{T}\boldsymbol{X}_{i}\right) ^{-1}\left(
n^{-1}\sum_{i=1}^{n}\boldsymbol{X}_{i}^{\prime }\boldsymbol{M}_{T}%
\boldsymbol{y}_{i}\right) ,  \label{fee}
\end{equation}%
where $\boldsymbol{M}_{T}=\boldsymbol{I}_{T}-T^{-1}\boldsymbol{\tau }_{T}%
\boldsymbol{\tau }_{T}^{\prime }$, and $\boldsymbol{I}_{T}$ is a $T\times T$
identity matrix. It is well known that for a fixed $T\geq k=k^{\prime }+1$, $%
\boldsymbol{\hat{\beta}}_{FE}$ is a $\sqrt{n}$-consistent estimator of $%
\boldsymbol{\beta }_{0}$ and robust to possible correlations between $\alpha
_{i}$ and $\{\boldsymbol{x}_{it}\}_{t=1}^{T}$, even under slope
heterogeneity so long as $\boldsymbol{\eta }_{i}$ are not correlated with $%
\boldsymbol{X}_{i}^{\prime }\boldsymbol{M}_{T}\boldsymbol{X}_{i}$. This
condition is clearly satisfied when heterogeneity is exogenous and $E\left( 
\boldsymbol{X}_{i}^{\prime }\boldsymbol{M}_{T}\boldsymbol{X}_{i}\boldsymbol{%
\eta }_{i}\right) =\boldsymbol{0}$, for the majority of the units (to be
formalized below).

In the presence of correlated slope heterogeneity, the MG estimator,
initially proposed by \cite{PesaranSmith1995}, is typically considered for
consistent estimation of the average effects. When $T\geq k$, $\boldsymbol{%
\beta }_{0}$ can be estimated by the MG estimator, $\boldsymbol{\hat{\beta}}%
_{MG}$, computed as a simple average of the least square estimates of $%
\boldsymbol{\beta }_{i}$, namely%
\begin{equation}
\boldsymbol{\hat{\beta}}_{MG}=\frac{1}{n}\sum_{i=1}^{n}\boldsymbol{\hat{\beta%
}}_{i},  \label{mge}
\end{equation}%
where 
\begin{equation}
\boldsymbol{\hat{\beta}}_{i}=\left( \boldsymbol{X}_{i}^{\prime }\boldsymbol{M%
}_{T}\boldsymbol{X}_{i}\right) ^{-1}\boldsymbol{X}_{i}^{\prime }\boldsymbol{M%
}_{T}\boldsymbol{y}_{i}.  \label{betaihat}
\end{equation}%
In contrast to the FE estimator, the MG estimator does not depend on $%
E\left( \boldsymbol{X}_{i}^{\prime }\boldsymbol{M}_{T}\boldsymbol{X}_{i}%
\boldsymbol{\eta }_{i}\right) $ and is consistent irrespective of whether
slope heterogeneity is correlated or not. As pointed out by an associate
editor, $\boldsymbol{\hat{\beta}}_{FE}$ can also be written as a weighted
average of $\boldsymbol{\hat{\beta}}_{i}$, 
\begin{equation}
\boldsymbol{\hat{\beta}}_{FE}=n^{-1}\sum_{i=1}^{n}\boldsymbol{\mathcal{W}}%
_{i}\boldsymbol{\hat{\beta}}_{i},  \label{fee2}
\end{equation}%
where $\boldsymbol{\mathcal{W}}_{i}=\left( n^{-1}\sum_{j=1}^{n}\boldsymbol{X}%
_{j}^{\prime }\boldsymbol{M}_{T}\boldsymbol{X}_{j}\right) ^{-1}\left( 
\boldsymbol{X}_{i}^{\prime }\boldsymbol{M}_{T}\boldsymbol{X}_{i}\right) $ is
the $k^{\prime }\times k^{\prime }$ weight matrix for unit $i$. The
difference between $\boldsymbol{\hat{\beta}}_{FE}$ and $\boldsymbol{\hat{%
\beta}}_{MG}$ lies in the choice of the weights. When heterogeneity is
exogenous, both estimators converge to the same limit $\boldsymbol{\beta }%
_{0}$. However, when heterogeneity is correlated, as we shall see, $%
\boldsymbol{\hat{\beta}}_{MG}$ continues to converge to $\boldsymbol{\beta }%
_{0}$, but $\boldsymbol{\hat{\beta}}_{FE}$ converges to $\boldsymbol{\beta }%
_{0}+\lim\limits_{n\rightarrow \infty }n^{-1}\sum_{i=1}^{n}E\left( 
\boldsymbol{\mathcal{W}}_{i}\boldsymbol{\eta }_{i}\right) $, where $%
\lim\limits_{n\rightarrow \infty }n^{-1}\sum_{i=1}^{n}E\left( \boldsymbol{%
\mathcal{W}}_{i}\boldsymbol{\eta }_{i}\right) \neq \boldsymbol{0}$.

To investigate the asymptotic properties of FE and MG estimators for a fixed 
$T\geq k$ as $n\rightarrow \infty $, we consider the following assumptions:

\begin{assumption}[errors]
\label{errors} Conditional on $\boldsymbol{X}_{i}$, (a) the errors, $u_{it}$%
, in (\ref{eq2}) are cross-sectionally independent, (b) $E(\boldsymbol{u}%
_{i}\left\vert \boldsymbol{X}_{j}\right. )=\boldsymbol{0}$, for all $i$ and $%
j$, and (c) $E(\boldsymbol{u}_{i}\boldsymbol{u}_{i}^{\prime }\left\vert 
\boldsymbol{X}_{i}\right. )=\boldsymbol{H}_{i}(\boldsymbol{X}_{i})=%
\boldsymbol{H}_{i}$, where $\boldsymbol{H}_{i}$ is a symmetric $T\times T$
matrix with $0<c<\func{inf}_{i}\lambda _{min}\left( \boldsymbol{H}%
_{i}\right) <\func{sup}_{i}\lambda _{max}\left( \boldsymbol{H}_{i}\right) <C$%
.
\end{assumption}

\begin{assumption}[coefficients]
\label{rcm} For $i=1,2,...,n$, the $k^{\prime }\times 1$ vector of
unit-specific slope coefficients, $\boldsymbol{\beta }_{i}$, follow the
random coefficient model 
\begin{equation}
\boldsymbol{\beta }_{i}=\boldsymbol{\beta }_{0}+\boldsymbol{\eta }_{i},
\label{RC}
\end{equation}%
where $\left\Vert \boldsymbol{\beta }_{0}\right\Vert <C$, and $\left\{ 
\boldsymbol{\eta }_{i}\right\}_{i=1}^{n}$ are independently distributed with
mean zero and a bounded variance, $Var(\boldsymbol{\beta }_{i}|\boldsymbol{X}%
_{i})=\boldsymbol{\Omega }_{\beta }\succeq \boldsymbol{0}$, such that%
\begin{equation}
\boldsymbol{\bar{\beta}}_{n}=n^{-1}\sum_{i=1}^{n}\boldsymbol{\beta }_{i}=%
\boldsymbol{\beta }_{0}+O_{p}(n^{-1/2}).  \label{Truebeta}
\end{equation}
\end{assumption}

\begin{assumption}[pooling]
\label{PoolA} (a) The pooled sample covariance matrix $\boldsymbol{\bar{\Psi}%
}_{n}=n^{-1}\sum_{i=1}^{n}\boldsymbol{\Psi }_{i}$, where $\boldsymbol{\Psi }%
_{i}=\boldsymbol{X}_{i}^{\prime }\boldsymbol{M}_{T}\boldsymbol{X}_{i}$,
tends to $\lim_{n\rightarrow \infty }n^{-1}\sum_{i=1}^{n}E\left( \boldsymbol{%
\Psi }_{i}\right) =\boldsymbol{\bar{\Psi}}\succ \boldsymbol{0}$ and $%
\boldsymbol{\bar{\Psi}}_{n}^{-1}=\boldsymbol{\bar{\Psi}}^{-1}+o_{p}(1)$. (b)
The sample covariances $\left\{ \boldsymbol{\Psi }_{i}\right\}_{i=1}^{n} $
satisfy the condition $0<c<\func{inf}_{i}\lambda _{min}\left( T^{-1}%
\boldsymbol{\Psi }_{i}\right) <\func{sup}_{i}\lambda _{max}\left( T^{-1}%
\boldsymbol{\Psi }_{i}\right) <C$, for a fixed $T\geq k$.
\end{assumption}

\begin{assumption}[correlated heterogeneity]
\label{CrCorr} The $k^{\prime }\times 1$ vectors $\boldsymbol{\zeta }_{iT}=%
\boldsymbol{X}_{i}^{\prime }\boldsymbol{M}_{T}\boldsymbol{X}_{i}\boldsymbol{%
\eta }_{i}$, for $i=1,2,...,n$, are weakly cross-correlated such that for a
fixed $T \geq k$, 
\begin{equation}
\sup_{j}\sum_{i=1}^{n}\left\Vert Cov\left( \boldsymbol{\zeta }_{iT},%
\boldsymbol{\zeta }_{jT}\right) \right\Vert <C.  \label{WCR}
\end{equation}
\end{assumption}

\begin{remark}
Assumption \ref{errors} requires the regressors, $\boldsymbol{x}_{it}$, to
be strictly exogenous, but it allows the conditional variance of $%
\boldsymbol{u}_{i}$ to depend on $\boldsymbol{X}_{i}$, and the errors, $%
u_{it}$, to be serially correlated. Assumption \ref{rcm} implies that $%
\boldsymbol{\beta }_{0}=\func{plim}_{n\rightarrow \infty }\left(
n^{-1}\sum_{i=1}^{n}\boldsymbol{\beta }_{i}\right) $. Part (a) of Assumption %
\ref{PoolA} is standard in the literature on FE estimation, and part (b) is
required for estimation of individual estimates and can be relaxed if
particular linear combinations of $\boldsymbol{\beta }_{i}$ are of interest.
Assumption \ref{CrCorr} is required for the convergence of the FE estimator
under correlated heterogeneity. Condition (\ref{WCR}) trivially holds when
heterogeneity is exogenous and $\boldsymbol{\eta }_{i}$ are independently
distributed, as under Assumption \ref{rcm}.
\end{remark}

\begin{remark}
The above assumptions do not require $(y_{it},\boldsymbol{x}_{it}^{\prime
},u_{it})$ to be identically and independently distributed as often assumed
in micro econometric panel data analysis with $T$ fixed. See Example \ref%
{ex1} below, and the MC designs used for evaluation of the small sample
performance of the proposed TMG estimators.
\end{remark}

\subsection{Why does correlated slope heterogeneity matter?}

\label{CREmatter}

Correlated slope heterogeneity can arise in various contexts, with important
implications for estimation and inference on the average effects. For
example, return to education is likely to be positively correlated with
latent factors such as ability, talent, and degree of self-belief.
Similarly, propensity to save across households is often inversely related
to their income volatility. FE estimation allows for possible correlation
between $\boldsymbol{x}_{it}$ and $\alpha _{i}$, but not between $%
\boldsymbol{x}_{it}$ and $\boldsymbol{\beta }_{i}$. As a simple example,
consider the panel data model 
\begin{equation*}
y_{it}=\alpha _{i}+\beta _{i}x_{it}+u_{it},
\end{equation*}%
where $x_{it}$ and $y_{it}$ could, respectively, be years of schooling and
return to schooling, or could be income and the saving rate of individual $i$
at time $t$. Suppose now that $x_{it}$ follows the following model with an
interactive effect and idiosyncratic error heteroskedasticity: 
\begin{equation*}
x_{it}=\alpha _{ix}+\gamma _{ix}f_{t}+\sigma _{ix}u_{x,it},
\end{equation*}%
where $f_{t}$ is a latent factor (could be talent in the context of return
to education) and $\sigma _{ix}u_{x,it}$ is the idiosyncratic innovation to
the $x_{it}$ process ($\sigma _{ix}$ being income volatility in the saving
rate example). A simple model of correlated slope heterogeneity is given by 
\begin{equation*}
\beta _{i}-\beta _{0}=\kappa _{\gamma }\left[ \gamma _{ix}-E\left( \gamma
_{ix}\right) \right] +\kappa _{\sigma }\left[ \sigma _{ix}^{2}-E\left(
\sigma _{ix}^{2}\right) \right] +\epsilon _{i},
\end{equation*}%
where $\kappa _{\gamma }$ and $\kappa _{\sigma }$ measure the extent to which
heterogeneity is correlated with $x_{it}$, and $\epsilon _{i}$ is the
exogenous component of heterogeneity. FE estimators are robust to exogenous
heterogeneity but can become badly biased if $\kappa _{\gamma }\neq 0$
and/or $\kappa _{\sigma }\neq 0$.

\subsection{Bias of the fixed effects estimator under correlated
heterogeneity}

It is known that fixed effects (with or without time effects) estimators are
biased under correlated heterogeneity. For example, see \cite{Wooldridge2005}%
. Here we provide a minimal set of conditions required for the FE estimator
to be $\sqrt{n}$-consistent in the presence of correlated heterogeneity.
Under the heterogeneous specification (\ref{eq2}) and noting that $%
\boldsymbol{M}_{T}\boldsymbol{\tau }_{T}=\boldsymbol{0}$, we have 
\begin{equation}
\boldsymbol{\hat{\beta}}_{FE}-\boldsymbol{\beta }_{0}=\boldsymbol{\bar{\Psi}}%
_{n}^{-1}\left[ n^{-1}\sum_{i=1}^{n}\boldsymbol{X}_{i}^{\prime }\boldsymbol{M%
}_{T}\boldsymbol{X}_{i}(\boldsymbol{\beta }_{i}-\boldsymbol{\beta }_{0})%
\right] +\boldsymbol{\bar{\Psi}}_{n}^{-1}\left( n^{-1}\sum_{i=1}^{n}%
\boldsymbol{X}_{i}^{\prime }\boldsymbol{M}_{T}\boldsymbol{u}_{i}\right) .
\label{FE1}
\end{equation}%
Then by Assumption \ref{errors}, $E\left( \boldsymbol{u}_{i}\left\vert 
\boldsymbol{X}_{i}\right. \right) =\boldsymbol{0}$, and hence $E\left( 
\boldsymbol{X}_{i}^{\prime }\boldsymbol{M}_{T}\boldsymbol{u}_{i}\right) =%
\boldsymbol{0}$. Under Assumptions \ref{errors}, \ref{rcm} and \ref{PoolA}, 
\begin{equation*}
\boldsymbol{\hat{\beta}}_{FE}-\boldsymbol{\beta }_{0}\rightarrow _{p}%
\boldsymbol{\bar{\Psi}}^{-1}\lim_{n\rightarrow \infty
}n^{-1}\sum_{i=1}^{n}E\left( \boldsymbol{X}_{i}^{\prime }\boldsymbol{M}_{T}%
\boldsymbol{X}_{i}\boldsymbol{\eta }_{i}\right) .
\end{equation*}%
Hence, $\boldsymbol{\hat{\beta}}_{FE}$ is a consistent estimator of the
average effect, $\boldsymbol{\beta }_{0}$, only if 
\begin{equation}
\lim_{n\rightarrow \infty }n^{-1}\sum_{i=1}^{n}E\left( \boldsymbol{X}%
_{i}^{\prime }\boldsymbol{M}_{T}\boldsymbol{X}_{i}\boldsymbol{\eta }%
_{i}\right) =\boldsymbol{0}.  \label{AveFE}
\end{equation}%
This condition is clearly met if 
\begin{equation}
E\left[ \left( \boldsymbol{X}_{i}^{\prime }\boldsymbol{M}_{T}\boldsymbol{X}%
_{i}\right) \boldsymbol{\eta }_{i}\right] =\boldsymbol{0},\text{ for all }i%
\text{,}  \label{FEconsistency}
\end{equation}%
and has already been derived by \cite{Wooldridge2005}. But it is too
restrictive, since it is possible for the average condition in (\ref{AveFE})
to hold even though condition (\ref{FEconsistency}) is violated for some
units as $n\rightarrow \infty $. Suppose the number of units that \textit{do
not} satisfy (\ref{FEconsistency}) is given by $m_{n}=\ominus \left(
n^{a_{\eta }}\right) $, namely $m_{n}$ rises in line with $n^{a_{\eta }}$.%
\footnote{%
Note that $m_{n}=\ominus \left( n^{a_{\eta }}\right) $ differs from the
familiar big O order, $m_{n}=O(n^{a_{\eta }})$. The latter requires $%
n^{-a_{\eta }}m_{n}$ is bounded in $n$ and its limiting value could be zero.
The former requires $n^{a_{\eta }}m_{n}$ to tend to a positive non-zero
number.} Then $n^{-1}\sum_{i=1}^{n}E\left( \boldsymbol{X}_{i}^{\prime }%
\boldsymbol{M}_{T}\boldsymbol{X}_{i}\boldsymbol{\eta }_{i}\right) =\ominus
\left( n^{a_{\eta }-1}\right) $, and condition (\ref{AveFE}) is met if $%
a_{\eta }<1$.

But for $\boldsymbol{\hat{\beta}}_{FE}$ to be a $\sqrt{n}$-consistent
estimator of $\boldsymbol{\beta }_{0}$, a much more restrictive condition on 
$a_{\eta }$ is required. Using (\ref{FE1}), 
\begin{equation}
\sqrt{n}\left( \boldsymbol{\hat{\beta}}_{FE}-\boldsymbol{\beta }_{0}\right) =%
\boldsymbol{\bar{\Psi}}_{n}^{-1}\left( n^{-1/2}\sum_{i=1}^{n}\boldsymbol{%
\zeta }_{iT}+n^{-1/2}\sum_{i=1}^{n}\boldsymbol{X}_{i}^{\prime }\boldsymbol{M}%
_{T}\boldsymbol{u}_{i}\right) ,  \label{AsyBiasFE}
\end{equation}%
where $\boldsymbol{\zeta }_{iT}=\boldsymbol{X}_{i}^{\prime }\boldsymbol{M}%
_{T}\boldsymbol{X}_{i}\boldsymbol{\eta }_{i}$. Under Assumptions \ref{errors}
and \ref{PoolA}, as $n\rightarrow \infty $, $\boldsymbol{\bar{\Psi}}%
_{n}\rightarrow _{p}\boldsymbol{\bar{\Psi}}\succ \boldsymbol{0}$, and $%
n^{-1/2}\sum_{i=1}^{n}\boldsymbol{X}_{i}^{\prime }\boldsymbol{M}_{T}%
\boldsymbol{u}_{i}\rightarrow _{d}N\left( \boldsymbol{0},\boldsymbol{Q}%
_{FE}\right) $, where $\boldsymbol{Q}_{FE}=\lim\limits_{n\rightarrow \infty }%
\frac{1}{n}\sum_{i=1}^{n}E\left( \boldsymbol{X}_{i}^{\prime }\boldsymbol{M}%
_{T}\boldsymbol{H}_{i}\boldsymbol{M}_{T}\boldsymbol{X}_{i}\right) \succ 
\boldsymbol{0} $, for a fixed $T$. The first term in the brackets of (\ref%
{AsyBiasFE}) can be decomposed as 
\begin{equation*}
n^{-1/2}\sum_{i=1}^{n}\boldsymbol{\zeta }_{iT}=\boldsymbol{s}%
_{nT}+n^{-1/2}\sum_{i=1}^{n}E\left( \boldsymbol{\zeta }_{iT}\right) ,
\end{equation*}%
where $\boldsymbol{s}_{nT}=n^{-1/2}\sum_{i=1}^{n}\left[ \boldsymbol{\zeta }%
_{iT}-E\left( \boldsymbol{\zeta }_{iT}\right) \right] $, and 
\begin{equation*}
Var\left( \boldsymbol{s}_{nT}\right) =n^{-1}\left\Vert
\sum_{i=1}^{n}\sum_{j=1}^{n}Cov\left( \boldsymbol{\zeta }_{iT},\boldsymbol{%
\zeta }_{jT}\right) \right\Vert \leq \sup_{j}\sum_{i=1}^{n}\left\Vert
Cov\left( \boldsymbol{\zeta }_{iT},\boldsymbol{\zeta }_{jT}\right)
\right\Vert ,
\end{equation*}%
which is bounded by Assumption \ref{CrCorr}. It also follows that $%
\boldsymbol{s}_{nT}$ tends to a limiting distribution with a zero mean by
construction. Therefore, $\sqrt{n}\left( \boldsymbol{\hat{\beta}}_{FE}-%
\boldsymbol{\beta }_{0}\right) $ will converge to a distribution with a zero
mean only if $n^{-1/2}\sum_{i=1}^{n}E\left( \boldsymbol{\zeta }_{iT}\right)
\rightarrow \boldsymbol{0}$. For this condition to hold, it is required that 
$m_{n}n^{-1/2}=\ominus \left( n^{a_{\eta }-1/2}\right) \rightarrow 0$, which
occurs only if $a_{\eta }<1/2$. A formal statement of this result is
summarized in the following proposition.

\begin{proposition}[Condition for $\protect\sqrt{n}-$consistency of the FE
estimator]
\label{prop_confee} Suppose that for $i=1,2,...,n$ and $t=1,2,...,T$, $%
y_{it} $ are generated by the heterogeneous panel data model (\ref{m1}) and
Assumptions \ref{errors}--\ref{CrCorr} hold. Then the FE estimator given by (%
\ref{fee}) is $\sqrt{n}$-consistent if%
\begin{equation}
\lim_{n\rightarrow \infty }n^{-1/2}\sum_{i=1}^{n}E\left( \boldsymbol{X}%
_{i}^{\prime }\boldsymbol{M}_{T}\boldsymbol{X}_{i}\boldsymbol{\eta }%
_{i}\right) =\boldsymbol{0}.  \label{rootnFE}
\end{equation}%
where $\boldsymbol{\eta }_{i}=\boldsymbol{\beta }_{i}-\boldsymbol{\beta }%
_{0} $.
\end{proposition}

\begin{remark}
It is also interesting that the $\boldsymbol{\hat{\beta}}_{FE}$ continues to
be inconsistent even if both $n$ and $T\rightarrow \infty $, jointly. In
this case, using (\ref{FE1}) we have 
\begin{equation*}
\sqrt{nT}\left( \boldsymbol{\hat{\beta}}_{FE}-\boldsymbol{\beta }_{0}\right)
=\boldsymbol{\bar{\Psi}}_{nT}^{-1}\left[ \sqrt{nT}\frac{1}{n}%
\sum_{i=1}^{n}\left( \frac{1}{T}\boldsymbol{X}_{i}^{\prime }\boldsymbol{M}%
_{T}\boldsymbol{X}_{i}\boldsymbol{\eta }_{i}\right) \right] +\boldsymbol{%
\bar{\Psi}}_{nT}^{-1}\left( T^{-1/2}n^{-1/2}\sum_{i=1}^{n}\boldsymbol{X}%
_{i}^{\prime }\boldsymbol{M}_{T}\boldsymbol{u}_{i}\right) ,
\end{equation*}%
where $\boldsymbol{\bar{\Psi}}_{nT}=\frac{1}{n}\sum_{i=1}^{n}\frac{1}{T}%
\boldsymbol{X}_{i}^{\prime }\boldsymbol{M}_{T}\boldsymbol{X}_{i}$. Assuming $%
\boldsymbol{\bar{\Psi}}_{nT}$ converges to a positive definite matrix as $n$
and $T\rightarrow \infty $ jointly, then for $\boldsymbol{\hat{\beta}}_{FE}$
to achieve the regular $\sqrt{nT}$ convergence rate, it is required that%
\begin{equation*}
\sqrt{nT}n^{-1}\sum_{i=1}^{n}E\left( T^{-1}\boldsymbol{X}_{i}^{\prime }%
\boldsymbol{M}_{T}\boldsymbol{X}_{i}\boldsymbol{\eta }_{i}\right)
\rightarrow \boldsymbol{0}\text{, as }n\rightarrow \infty \text{ and }%
T\rightarrow \infty.
\end{equation*}%
As before, suppose that $E\left( T^{-1}\boldsymbol{X}_{i}^{\prime }%
\boldsymbol{M}_{T}\boldsymbol{X}_{i}\boldsymbol{\eta }_{i}\right) \neq 
\boldsymbol{0}$ for $n^{a_{\eta }}$ of the $n$ cross-section units, and $%
T=\ominus \left( n^{d}\right) $, for $d\geq 0$. Then the above condition is
met if $\sqrt{nT}n^{a_{\eta }-1}=\ominus \left( n^{a_{\eta }-1/2+d/2}\right)
\rightarrow 0$, i.e., if $a_{\eta }+d/2<1/2$. Thus, increasing the time
dimension of the panel does not help with bias reduction and as a matter of
fact accentuates it.
\end{remark}

\subsection{Mean group estimator}

In contrast to the FE estimator, the MG estimator given by (\ref{mge})
continues to be $\sqrt{n}$-consistent, so long as certain moment conditions,
to be discussed below, are met. Substituting (\ref{m1}) in (\ref{betaihat}), 
\begin{equation}
\boldsymbol{\hat{\beta}}_{i}=\boldsymbol{\beta }_{i}+\boldsymbol{\xi }_{iT},
\label{betaihat2}
\end{equation}%
where $\boldsymbol{\xi }_{iT}=\boldsymbol{R}_{i}^{\prime }\boldsymbol{u}_{i} 
$ and $\boldsymbol{R}_{i}=\boldsymbol{M}_{T}\boldsymbol{X}_{i}\left( 
\boldsymbol{X}_{i}^{\prime }\boldsymbol{M}_{T}\boldsymbol{X}_{i}\right)
^{-1} $. Averaging both sides of (\ref{betaihat2}) over $i$ yields 
\begin{equation}
\boldsymbol{\hat{\beta}}_{MG}=\boldsymbol{\bar{\beta}}_{n}+\boldsymbol{\bar{%
\xi}}_{nT},  \label{mge2}
\end{equation}%
where $\boldsymbol{\bar{\beta}}_{n}=\frac{1}{n}\sum_{i=1}^{n}\boldsymbol{%
\beta }_{i}$ and $\boldsymbol{\bar{\xi}}_{nT}=\frac{1}{n}\sum_{i=1}^{n}%
\boldsymbol{\xi }_{iT}$. By Assumption \ref{errors}, $E\left( \boldsymbol{%
\bar{\xi}}_{nT}\right) =E\left( \frac{1}{n}\sum_{i=1}^{n}\boldsymbol{\xi }%
_{iT}\right) =\frac{1}{n}\sum_{i=1}^{n}E\left[ \boldsymbol{R}_{i}^{\prime
}E\left( \boldsymbol{u}_{i}\left\vert \boldsymbol{X}_{i}\right. \right) %
\right] =\boldsymbol{0}$. Then using (\ref{mge2}), $E(\boldsymbol{\hat{\beta}%
}_{MG})=E(\boldsymbol{\bar{\beta}}_{n})+E\left( \boldsymbol{\bar{\xi}}%
_{nT}\right) =\boldsymbol{\beta }_{0}$, i.e., $\boldsymbol{\hat{\beta}}_{MG}$
is an \textit{unbiased} estimator of $\boldsymbol{\beta }_{0}$ irrespective
of the possible dependence of $\boldsymbol{\beta }_{i}$ on $\boldsymbol{X}%
_{i}$. However, the MG estimator is likely to have a large variance when $T$
is too small. This arises, for example, when the variance of $\boldsymbol{%
\bar{\xi}}_{nT}$ does not exist or is very large. The conditions under which 
$\boldsymbol{\hat{\beta}}_{MG}$ converges to $\boldsymbol{\beta }_{0}$ at
the regular $\sqrt{n}$ rate are given in the following proposition:

\begin{proposition}[Sufficient conditions for $\protect\sqrt{n}$-consistency
of $\boldsymbol{\hat{\protect\beta}}_{MG}$]
\label{prop_mexist} Suppose that $y_{it}$ for $i=1,2,...,n$ and $t=1,2,...,T$
are generated by model (\ref{m1}) and Assumptions \ref{errors} and \ref{rcm}
hold. Then for a fixed $T$, as $n \rightarrow \infty$, the MG estimator
given by (\ref{mge}) is $\sqrt{n}$-consistent if 
\begin{equation}
E\left( d_{i}^{-2}\right) <C\text{, and }\func{sup}_{i}E\left[ \left\Vert 
\func{adj}(\boldsymbol{X}_{i}^{\prime }\boldsymbol{M}_{T}\boldsymbol{X}%
_{i})\right\Vert _{1}^{2}\right] <C,  \label{sufficient}
\end{equation}%
where $d_{i}=\func{det}(\boldsymbol{X}_{i}^{\prime }\boldsymbol{M}_{T}%
\boldsymbol{X}_{i})$, and $\func{adj}(\boldsymbol{X}_{i}^{\prime }%
\boldsymbol{M}_{T}\boldsymbol{X}_{i})$ is the adjugate (or adjoint) of $%
\boldsymbol{X}_{i}^{\prime }\boldsymbol{M}_{T}\boldsymbol{X}_{i}$.
\end{proposition}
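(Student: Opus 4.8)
The plan is to start from the decomposition (\ref{mge2}), which gives
\[
\sqrt{n}\left( \boldsymbol{\hat{\theta}}_{MG}-\boldsymbol{\theta}_{0}\right) =\sqrt{n}\left( \boldsymbol{\bar{\theta}}_{n}-\boldsymbol{\theta}_{0}\right) +\sqrt{n}\,\boldsymbol{\bar{\xi}}_{nT},
\]
and to show that each term on the right is $O_{p}(1)$. The first term is controlled directly by Assumption \ref{rcm}: since the $\boldsymbol{\theta}_{i}$ are drawn from a common distribution with mean $\boldsymbol{\theta}_{0}$ and bounded variances, Chebyshev's inequality (or a central limit theorem using $\sup_{i}E\Vert\boldsymbol{\theta}_{i}\Vert^{4}<C$) yields $\sqrt{n}(\boldsymbol{\bar{\theta}}_{n}-\boldsymbol{\theta}_{0})=O_{p}(1)$. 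The entire difficulty therefore lies in the noise term $\sqrt{n}\,\boldsymbol{\bar{\xi}}_{nT}=n^{-1/2}\sum_{i=1}^{n}\boldsymbol{\xi}_{iT}$.

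Because $E(\boldsymbol{\xi}_{iT})=\boldsymbol{0}$ and, by Assumption \ref{errors}(a), the $\boldsymbol{\xi}_{iT}$ are cross-sectionally independent conditional on the $\boldsymbol{W}_{i}$, showing $\sqrt{n}\,\boldsymbol{\bar{\xi}}_{nT}=O_{p}(1)$ reduces to establishing that the second moments are bounded uniformly in $i$, i.e. $\sup_{i}E\Vert\boldsymbol{\xi}_{iT}\Vert^{2}<C$; then $\mathrm{Var}(n^{-1/2}\sum_{i}\boldsymbol{\xi}_{iT})=n^{-1}\sum_{i}\mathrm{Var}(\boldsymbol{\xi}_{iT})=O(1)$. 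Conditioning on $\boldsymbol{W}_{i}$ and using Assumption \ref{errors}(c), the conditional second-moment matrix of $\boldsymbol{\xi}_{iT}=\boldsymbol{R}_{i}^{\prime}\boldsymbol{u}_{i}$ is $\boldsymbol{R}_{i}^{\prime}\boldsymbol{H}_{i}\boldsymbol{R}_{i}$, so that
\[
E\left( \left. \Vert\boldsymbol{\xi}_{iT}\Vert^{2}\right\vert \boldsymbol{W}_{i}\right) =\mathrm{tr}\left( \boldsymbol{R}_{i}^{\prime}\boldsymbol{H}_{i}\boldsymbol{R}_{i}\right) \leq \lambda_{\max}(\boldsymbol{H}_{i})\,\mathrm{tr}\left( \boldsymbol{R}_{i}^{\prime}\boldsymbol{R}_{i}\right) \leq C\,\mathrm{tr}\left[ (\boldsymbol{W}_{i}^{\prime}\boldsymbol{W}_{i})^{-1}\right],
\]
where I have used $\boldsymbol{R}_{i}=\boldsymbol{W}_{i}(\boldsymbol{W}_{i}^{\prime}\boldsymbol{W}_{i})^{-1}$, hence $\boldsymbol{R}_{i}^{\prime}\boldsymbol{R}_{i}=(\boldsymbol{W}_{i}^{\prime}\boldsymbol{W}_{i})^{-1}$, the positive-semidefinite ordering to pull out $\lambda_{\max}(\boldsymbol{H}_{i})$, and the upper bound on $\lambda_{\max}(\boldsymbol{H}_{i})$ from Assumption \ref{errors}(c).

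It then remains to bound $\sup_{i}E\{\mathrm{tr}[(\boldsymbol{W}_{i}^{\prime}\boldsymbol{W}_{i})^{-1}]\}$, and this is exactly the point at which the two conditions in (\ref{sufficient}) enter. Writing the inverse through the adjugate, $(\boldsymbol{W}_{i}^{\prime}\boldsymbol{W}_{i})^{-1}=d_{i}^{-1}(\boldsymbol{W}_{i}^{\prime}\boldsymbol{W}_{i})^{\ast}$, gives $\mathrm{tr}[(\boldsymbol{W}_{i}^{\prime}\boldsymbol{W}_{i})^{-1}]=d_{i}^{-1}\,\mathrm{tr}[(\boldsymbol{W}_{i}^{\prime}\boldsymbol{W}_{i})^{\ast}]$, and since $(\boldsymbol{W}_{i}^{\prime}\boldsymbol{W}_{i})^{\ast}$ is a fixed-dimensional positive-semidefinite matrix its trace is bounded by a constant multiple of $\Vert(\boldsymbol{W}_{i}^{\prime}\boldsymbol{W}_{i})^{\ast}\Vert_{1}$ (all norms on the $k\times k$ matrices being equivalent). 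The Cauchy--Schwarz inequality then yields
\[
E\left\{ \mathrm{tr}\left[ (\boldsymbol{W}_{i}^{\prime}\boldsymbol{W}_{i})^{-1}\right] \right\} \leq C\,\sqrt{E\left( d_{i}^{-2}\right) }\,\sqrt{E\left[ \Vert(\boldsymbol{W}_{i}^{\prime}\boldsymbol{W}_{i})^{\ast}\Vert_{1}^{2}\right] },
\]
and both factors on the right are bounded uniformly in $i$ by the two parts of (\ref{sufficient}). This gives $\sup_{i}E\Vert\boldsymbol{\xi}_{iT}\Vert^{2}<C$, and combining with the first paragraph completes the argument.

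I expect the last bound to be the main obstacle. In ultra short panels the determinant $d_{i}$ can be arbitrarily close to zero, so neither $d_{i}^{-1}$ nor the adjugate can be controlled in isolation; the two moment conditions in (\ref{sufficient}) must be invoked jointly, with Cauchy--Schwarz used precisely to decouple the near-singularity of $\boldsymbol{W}_{i}^{\prime}\boldsymbol{W}_{i}$ (captured by $E(d_{i}^{-2})$) from the magnitude of its adjugate (captured by $E[\Vert(\boldsymbol{W}_{i}^{\prime}\boldsymbol{W}_{i})^{\ast}\Vert_{1}^{2}]$). Managing the interaction between these two quantities, rather than bounding either one separately, is the crux of the proof and explains why the sufficient conditions take the form they do.
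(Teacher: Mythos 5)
Your proposal is correct and follows essentially the same route as the paper's own proof: the decomposition $\boldsymbol{\hat{\theta}}_{MG}=\boldsymbol{\bar{\theta}}_{n}+\boldsymbol{\bar{\xi}}_{nT}$, reduction to a uniform bound on $E\Vert \boldsymbol{\xi }_{iT}\Vert ^{2}$ via cross-sectional independence, the conditional trace bound $E(\Vert \boldsymbol{\xi }_{iT}\Vert ^{2}|\boldsymbol{W}_{i})\leq C\,\mathrm{tr}[(\boldsymbol{W}_{i}^{\prime }\boldsymbol{W}_{i})^{-1}]$ using $\lambda _{\max }(\boldsymbol{H}_{i})<C$, and finally the adjugate factorization $(\boldsymbol{W}_{i}^{\prime }\boldsymbol{W}_{i})^{-1}=d_{i}^{-1}(\boldsymbol{W}_{i}^{\prime }\boldsymbol{W}_{i})^{\ast }$ combined with Cauchy--Schwarz to invoke the two conditions in (\ref{sufficient}) jointly. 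The only differences are cosmetic (you make the $O_{p}(n^{-1/2})$ behavior of $\boldsymbol{\bar{\theta}}_{n}-\boldsymbol{\theta }_{0}$ explicit and work with the trace where the paper passes through $\lambda _{\max }[(\boldsymbol{W}_{i}^{\prime }\boldsymbol{W}_{i})^{-1}]$ and the column norm).
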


For a proof, see sub-section \ref{pf_prop_mexist} of the mathematical
appendix.

\begin{example}
\label{ex1} Consider the simple example $y_{it}=\alpha _{i}+\beta
_{i}x_{it}+u_{it}$, where $x_{it}=\alpha _{ix}+\sigma _{ix}\varepsilon
_{x,it}$, $\func{sup}_{i}\Vert \alpha _{ix}\Vert <C$, $\inf_{i}\sigma
_{ix}^{2}>c>0$, and $\varepsilon _{x,it}\thicksim IIDN(0,1)$. Suppose that $%
E(\boldsymbol{u}_{i}\boldsymbol{u}_{i}^{\prime }|\boldsymbol{x}_{i})=\sigma
_{i}^{2}\boldsymbol{I}_{T}$, for $i=1,2,...,n$. Then the individual OLS
estimator of the slope coefficient, $\hat{\beta}_{i}=(\boldsymbol{x}%
_{i}^{\prime }\boldsymbol{M}_{T}\boldsymbol{x}_{i})^{-1}\boldsymbol{x}%
_{i}^{\prime }\boldsymbol{M}_{T}\boldsymbol{y}_{i}$, has first- and
second-order moments if $E\left( u_{it}^{2}\right) <C$ and $E\left(
d_{i}^{-2}\right) <C$, where $d_{i}=\func{det}(\boldsymbol{x}_{i}^{\prime }%
\boldsymbol{M}_{T}\boldsymbol{x}_{i})$, $\boldsymbol{x}_{i}=\alpha _{ix}%
\boldsymbol{\tau }_{T}+\sigma _{ix}\boldsymbol{\varepsilon }_{ix}$, and $%
\boldsymbol{\varepsilon }_{ix}=(\varepsilon _{x,i1},\varepsilon
_{x,i2},...,\varepsilon _{x,iT})^{\prime }$. In this case $1/d_{i}=(1/\sigma
_{ix}^{2})\left( 1/\boldsymbol{\varepsilon }_{ix}^{\prime }\boldsymbol{M}_{T}%
\boldsymbol{\varepsilon }_{ix}\right) $, where $\boldsymbol{\varepsilon }%
_{ix}^{\prime }\boldsymbol{M}_{T}\boldsymbol{\varepsilon }_{ix}\thicksim
\chi _{v}^{2}$, with $v=T-1$ degrees of freedom. Suppose further that $%
\sigma _{ix}^{2}$ and $\boldsymbol{\varepsilon }_{ix}^{\prime }\boldsymbol{M}%
_{T}\boldsymbol{\varepsilon }_{ix}$ are independently distributed, then 
\begin{equation*}
E\left( \frac{1}{d_{i}}\right) =E\left( \frac{1}{\sigma _{ix}^{2}}\right)
E\left( \frac{1}{\chi _{v}^{2}}\right) =\frac{1}{v-2}E\left( \frac{1}{\sigma
_{ix}^{2}}\right) \text{, for }v>2\text{.}
\end{equation*}%
\begin{equation*}
E\left( \frac{1}{d_{i}^{2}}\right) =E\left( \frac{1}{\sigma _{ix}^{4}}%
\right) E\left( \frac{1}{\chi _{v}^{4}}\right) =\frac{1}{(v-2)(v-4)}E\left( 
\frac{1}{\sigma _{ix}^{4}}\right) \text{, for }v>4.
\end{equation*}%
Since by assumption $\sigma _{ix}^{-2}<1/c<\infty $, it follows that $%
E\left( d_{i}^{-2}\right) $ exists if $T>5$. This example also illustrates
that $d_{i}$ could be random draws from a common distribution without
requiring $x_{it}$ to be IID. Note that no restrictions are placed on the
distribution of $\alpha _{ix}$ over $i$.
\end{example}

\subsection{Relative efficiency of FE and MG estimators}

Suppose now that conditions (\ref{rootnFE}) and (\ref{sufficient}) hold and
both FE and MG estimators are $\sqrt{n}$-consistent. The choice between the
two estimators will then depend on their relative efficiency, which we
measure in terms of their covariances conditional on $\boldsymbol{X}=\left( 
\boldsymbol{X}_{1},\boldsymbol{X}_{2},...,\boldsymbol{X}_{n}\right) $. We
have%
\begin{equation*}
Var\left( \sqrt{n}\boldsymbol{\hat{\beta}}_{MG}\left\vert \boldsymbol{X}%
\right. \right) =\boldsymbol{\Omega }_{\beta }+n^{-1}\sum_{i=1}^{n}%
\boldsymbol{\Psi }_{i}^{-1}\boldsymbol{X}_{i}^{\prime }\boldsymbol{M}_{T}%
\boldsymbol{H}_{i}\boldsymbol{M}_{T}\boldsymbol{X}_{i}\boldsymbol{\Psi }%
_{i}^{-1},
\end{equation*}%
and%
\begin{equation*}
Var\left( \sqrt{n}\boldsymbol{\hat{\beta}}_{FE}\left\vert \boldsymbol{X}%
\right. \right) =\boldsymbol{\bar{\Psi}}_{n}^{-1}\left( n^{-1}\sum_{i=1}^{n}%
\boldsymbol{\Psi }_{i}\boldsymbol{\Omega }_{\beta }\boldsymbol{\Psi }%
_{i}\right) \boldsymbol{\bar{\Psi}}_{n}^{-1}+\boldsymbol{\bar{\Psi}}%
_{n}^{-1}\left( n^{-1}\sum_{i=1}^{n}\boldsymbol{X}_{i}^{\prime }\boldsymbol{M%
}_{T}\boldsymbol{H}_{i}\boldsymbol{M}_{T}\boldsymbol{X}_{i}\right) 
\boldsymbol{\bar{\Psi}}_{n}^{-1},
\end{equation*}%
where $\boldsymbol{\Omega }_{\beta }=Var(\boldsymbol{\beta }_{i}|\boldsymbol{%
X})\succeq \boldsymbol{0}$, $\boldsymbol{H}_{i}=E\left( \boldsymbol{u}_{i}%
\boldsymbol{u}_{i}^{\prime }|\boldsymbol{X}\right) $, and as before $%
\boldsymbol{\Psi }_{i}=\boldsymbol{X}_{i}^{\prime }\boldsymbol{M}_{T}%
\boldsymbol{X}_{i}$ and $\boldsymbol{\bar{\Psi}}_{n}=n^{-1}\sum_{i=1}^{n}%
\boldsymbol{\Psi }_{i}$. Hence, 
\begin{equation}
Var\left( \sqrt{n}\boldsymbol{\hat{\beta}}_{MG}\left\vert \boldsymbol{X}%
\right. \right) -Var\left( \sqrt{n}\boldsymbol{\hat{\beta}}_{FE}\left\vert 
\boldsymbol{X}\right. \right) =\boldsymbol{A}_{n}+\boldsymbol{B}_{n},
\label{vardif}
\end{equation}%
where%
\begin{equation}
\boldsymbol{A}_{n}=\boldsymbol{\Omega }_{\beta }-\boldsymbol{\bar{\Psi}}%
_{n}^{-1}\left( n^{-1}\sum_{i=1}^{n}\boldsymbol{\Psi }_{i}\boldsymbol{\Omega 
}_{\beta }\boldsymbol{\Psi }_{i}\right) \boldsymbol{\bar{\Psi}}_{n}^{-1},
\label{An}
\end{equation}%
and%
\begin{equation}
\boldsymbol{B}_{n}=\left( \frac{1}{n}\sum_{i=1}^{n}\boldsymbol{\Psi }%
_{i}^{-1}\boldsymbol{X}_{i}^{\prime }\boldsymbol{M}_{T}\boldsymbol{H}_{i}%
\boldsymbol{M}_{T}\boldsymbol{X}_{i}\boldsymbol{\Psi }_{i}^{-1}\right) -%
\boldsymbol{\bar{\Psi}}_{n}^{-1}\left( \frac{1}{n}\sum_{i=1}^{n}\boldsymbol{X%
}_{i}^{\prime }\boldsymbol{M}_{T}\boldsymbol{H}_{i}\boldsymbol{M}_{T}%
\boldsymbol{X}_{i}\right) \boldsymbol{\bar{\Psi}}_{n}^{-1}.  \label{Bn}
\end{equation}%
$\boldsymbol{A}_{n}$ and $\boldsymbol{B}_{n}$ capture the effects of two
different types of heterogeneity, namely slope heterogeneity and
regressors/errors heterogeneity. The superiority of the FE estimator over
the MG estimator is readily established when the slope coefficients and
error variances are homogeneous across $i$ and the errors are serially
uncorrelated, namely if $\boldsymbol{\Omega }_{\beta }=\boldsymbol{0}$ and $%
\boldsymbol{H}_{i}=\sigma ^{2}\boldsymbol{I}_{T}$ for all $i$. In this case, 
$\boldsymbol{A}_{n}=\boldsymbol{0}$, and we have 
\begin{equation*}
\frac{Var\left( \sqrt{n}\boldsymbol{\hat{\beta}}_{MG}\left\vert \boldsymbol{X%
}\right. \right) -Var\left( \sqrt{n}\boldsymbol{\hat{\beta}}_{FE}\left\vert 
\boldsymbol{X}\right. \right) }{\sigma ^{2}}=n^{-1}\sum_{i=1}^{n}\boldsymbol{%
\Psi }_{i}^{-1}-\boldsymbol{\bar{\Psi}}_{n}^{-1},
\end{equation*}%
which is the difference between the harmonic mean of $\boldsymbol{\Psi }_{i}$
and the inverse of its arithmetic mean and is a positive semi-definite
matrix.\footnote{%
For a proof, see the Appendix to \cite{PesaranEtal1996}.} However, this
result may be reversed when we allow for heterogeneity, $\boldsymbol{\Omega }%
_{\beta }\succ \boldsymbol{0}$, and/or if $\boldsymbol{H}_{i}\neq \sigma ^{2}%
\boldsymbol{I}_{T}$. The following proposition summarizes the results of the
comparison between FE and MG estimators.

\begin{proposition}[Relative efficiency of FE and MG estimators]
\label{prop_mgvsfe} Suppose that $y_{it}$ for $i=1,2,...,n$ and $t=1,2,...,T$
are generated by the heterogeneous panel data model (\ref{m1}), Assumptions %
\ref{errors}--\ref{CrCorr} hold, and the uncorrelated heterogeneity
condition (\ref{rootnFE}) and second-order moment conditions (\ref%
{sufficient}) are met. Then $Var\left( \sqrt{n}\boldsymbol{\hat{\beta}}%
_{MG}\left\vert \boldsymbol{X}\right. \right) -Var\left( \sqrt{n}\boldsymbol{%
\hat{\beta}}_{FE}\left\vert \boldsymbol{X}\right. \right) =\boldsymbol{A}%
_{n}+\boldsymbol{B}_{n}$, where $\boldsymbol{A}_{n}$ and $\boldsymbol{B}_{n}$
are given by (\ref{An}) and (\ref{Bn}), respectively. $\boldsymbol{A}_{n}$
is a negative semi-definite matrix, and the sign of$\ \boldsymbol{B}_{n}$ is
indeterminate. Under uncorrelated heterogeneity, the FE estimator, $%
\boldsymbol{\hat{\beta}}_{FE}$, is asymptotically more efficient than the MG
estimator if the benefit from pooling (i.e., when $\boldsymbol{B}_{n}\succ 
\boldsymbol{0}$) outweighs the loss in efficiency due to slope heterogeneity
(since $\boldsymbol{A}_{n}\preceq \boldsymbol{0}$).
\end{proposition}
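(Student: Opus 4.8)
The plan is to take the two conditional variance expressions displayed just above the proposition as given (they follow routinely from the representations (\ref{mge2}) and (\ref{FE1}) together with Assumptions \ref{errors}--\ref{rcm}, using $E(\boldsymbol{u}_i|\boldsymbol{W}_i)=\boldsymbol{0}$ and cross-sectional independence), and then to establish the three assertions in turn. For the decomposition, I would simply subtract $Var(\sqrt{n}\boldsymbol{\hat{\beta}}_{FE}|\boldsymbol{X})$ from $Var(\sqrt{n}\boldsymbol{\hat{\beta}}_{MG}|\boldsymbol{X})$ and collect terms by source of heterogeneity: the two pieces carrying the slope covariance $\boldsymbol{\Omega}_\beta$ combine into $\boldsymbol{A}_n$ of (\ref{An}) (the MG side contributing $\boldsymbol{\Omega}_\beta$ directly and the FE side the sandwiched average $\boldsymbol{\bar{\Psi}}_n^{-1}(n^{-1}\sum_i\boldsymbol{\Psi}_{ix}\boldsymbol{\Omega}_\beta\boldsymbol{\Psi}_{ix})\boldsymbol{\bar{\Psi}}_n^{-1}$), while the two pieces carrying the error covariances $\boldsymbol{H}_i$ combine into $\boldsymbol{B}_n$ of (\ref{Bn}). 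This step is pure bookkeeping.

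Second, to show $\boldsymbol{A}_n\preceq\boldsymbol{0}$, I would exploit $\boldsymbol{\bar{\Psi}}_n\succ\boldsymbol{0}$ (Assumption \ref{PoolA}): since congruence by an invertible symmetric matrix preserves the sign of definiteness, $\boldsymbol{A}_n\preceq\boldsymbol{0}$ is equivalent to $\boldsymbol{\bar{\Psi}}_n\boldsymbol{A}_n\boldsymbol{\bar{\Psi}}_n=\boldsymbol{\bar{\Psi}}_n\boldsymbol{\Omega}_\beta\boldsymbol{\bar{\Psi}}_n-n^{-1}\sum_i\boldsymbol{\Psi}_{ix}\boldsymbol{\Omega}_\beta\boldsymbol{\Psi}_{ix}\preceq\boldsymbol{0}$. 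Writing $\boldsymbol{\Omega}_\beta=\boldsymbol{\Omega}_\beta^{1/2}\boldsymbol{\Omega}_\beta^{1/2}$ (the symmetric root exists as $\boldsymbol{\Omega}_\beta\succeq\boldsymbol{0}$) and setting $\boldsymbol{G}_i=\boldsymbol{\Psi}_{ix}\boldsymbol{\Omega}_\beta^{1/2}$, with $\bar{\boldsymbol{G}}=n^{-1}\sum_i\boldsymbol{G}_i=\boldsymbol{\bar{\Psi}}_n\boldsymbol{\Omega}_\beta^{1/2}$, the required inequality becomes $\bar{\boldsymbol{G}}\bar{\boldsymbol{G}}'\preceq n^{-1}\sum_i\boldsymbol{G}_i\boldsymbol{G}_i'$. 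This is immediate from the identity $n^{-1}\sum_i\boldsymbol{G}_i\boldsymbol{G}_i'-\bar{\boldsymbol{G}}\bar{\boldsymbol{G}}'=n^{-1}\sum_i(\boldsymbol{G}_i-\bar{\boldsymbol{G}})(\boldsymbol{G}_i-\bar{\boldsymbol{G}})'\succeq\boldsymbol{0}$, the matrix analogue of a sample variance being nonnegative (equivalently, matrix Jensen for the convex map $\boldsymbol{v}\mapsto\boldsymbol{v}'\boldsymbol{\Omega}_\beta\boldsymbol{v}$).

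Third, for the indeterminacy of $\boldsymbol{B}_n$ I would exhibit configurations of opposite sign. Under the homoskedastic, serially uncorrelated benchmark $\boldsymbol{H}_i=\sigma^2\boldsymbol{I}_T$, idempotency of $\boldsymbol{M}_T$ gives $\boldsymbol{X}_i'\boldsymbol{M}_T\boldsymbol{H}_i\boldsymbol{M}_T\boldsymbol{X}_i=\sigma^2\boldsymbol{\Psi}_{ix}$, so $\boldsymbol{B}_n=\sigma^2(n^{-1}\sum_i\boldsymbol{\Psi}_{ix}^{-1}-\boldsymbol{\bar{\Psi}}_n^{-1})\succeq\boldsymbol{0}$, the harmonic-minus-arithmetic-mean difference already noted in the text. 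To reverse the sign it suffices to take a scalar regressor, so that $\boldsymbol{\Psi}_{ix}=s_i$ and $\boldsymbol{X}_i'\boldsymbol{M}_T\boldsymbol{H}_i\boldsymbol{M}_T\boldsymbol{X}_i=h_i$ are positive scalars, and choose the heteroskedastic pattern $h_i=cs_i^2$ with $c>0$; then $B_n=c\,(1-\overline{s^2}/\bar{s}^{\,2})\le 0$, strictly negative whenever $s_i$ varies across $i$. Since one choice of $\boldsymbol{H}_i$ yields $\boldsymbol{B}_n\succeq\boldsymbol{0}$ and another yields $\boldsymbol{B}_n\preceq\boldsymbol{0}$, the sign of $\boldsymbol{B}_n$ is indeterminate, and whether pooling is beneficial then hinges on whether the $\boldsymbol{B}_n\succ\boldsymbol{0}$ gain dominates the $\boldsymbol{A}_n\preceq\boldsymbol{0}$ loss.

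The main obstacle is the third part. Unlike $\boldsymbol{A}_n$, there is no structural monotonicity forcing a sign on $\boldsymbol{B}_n$, so the argument cannot rest on a single inequality and must instead display explicit opposing cases; the care here is in choosing a transparent counterexample (the scalar construction above) that makes the reversal unambiguous rather than relying on a generic appeal to ``heterogeneity.''
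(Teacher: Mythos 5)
Your proposal is correct and follows essentially the same route as the paper: the decomposition is the same bookkeeping, and your argument for $\boldsymbol{A}_{n}\preceq \boldsymbol{0}$ (congruence by $\boldsymbol{\bar{\Psi}}_{n}$ plus the sample-covariance identity applied to $\boldsymbol{G}_{i}=\boldsymbol{\Psi }_{ix}\boldsymbol{\Omega }_{\beta }^{1/2}$) is exactly the paper's construction with $\boldsymbol{P}_{i}=\boldsymbol{\Psi }_{ix}\boldsymbol{\Omega }_{\beta }^{1/2}$. Your scalar counterexample $h_{i}=cs_{i}^{2}$ for the indeterminacy of $\boldsymbol{B}_{n}$ is precisely the configuration $\boldsymbol{H}_{i}=\sigma ^{2}\psi _{ix}\boldsymbol{I}_{T}$ used in the paper's Example \ref{ExampleMG-FE}, so the only (harmless) difference is that you spell out explicitly what the paper's proof delegates to that example.
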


For a proof, see sub-section \ref{Proofmgvsfe} in the mathematical appendix.

\begin{example}
\label{ExampleMG-FE}Consider a simple case where $k^{\prime}=1$, $%
\boldsymbol{\Psi }_{i}=\psi _{i}$ and $\boldsymbol{\Omega }_{\beta }=\sigma
_{\beta }^{2}$ are scalars, and suppose that $\boldsymbol{H}_{i}(\boldsymbol{%
X}_{i})=E\left( \boldsymbol{u}_{i}\boldsymbol{u}_{i}^{\prime }\left\vert 
\boldsymbol{X}_{i}\right. \right) =\sigma ^{2}\psi _{i}\boldsymbol{I}_{T}$,
then%
\begin{equation*}
Var\left( \sqrt{n}\boldsymbol{\hat{\beta}}_{MG}\left\vert \boldsymbol{X}%
\right. \right) -Var\left( \sqrt{n}\boldsymbol{\hat{\beta}}_{FE}\left\vert 
\boldsymbol{X}\right. \right) =-\left( \sigma _{\beta }^{2}+\sigma
^{2}\right) \left[ n^{-1}\frac{\sum_{i=1}^{n}\left( \psi _{i}-\bar{\psi}%
_{n}\right) ^{2}}{\bar{\psi}_{n}^{2}}\right] ,
\end{equation*}%
where $\bar{\psi}_{n}=n^{-1}\sum_{i=1}^{n}\psi _{i}$. In this simple case,
the MG estimator is more efficient than the FE estimator even if $\sigma
_{\beta }^{2}=0$.
\end{example}

In general, under uncorrelated heterogeneity, the relative efficiency of the
MG and FE estimators depends on the relative magnitude of the two components
in (\ref{vardif}). Since $\boldsymbol{A}_{n} \preceq \boldsymbol{0}$, the
outcome depends on the sign and the magnitude of $\boldsymbol{B}_{n}$, which
in turn depends on the heterogeneity of error variances, $\boldsymbol{H}_{i}(%
\boldsymbol{X}_{i}$), and $\boldsymbol{\Psi }_{i}$ over $i$.

\section{Irregular mean group estimators\label{IRMGE}}

So far, we have argued that the MG estimator is robust to correlated
heterogeneity and its performance is comparable to the FE estimator even
under uncorrelated heterogeneity. However, since the MG estimator is based
on the individual estimates, $\boldsymbol{\hat{\beta}}_{i}$, $i=1,2,...,n$,
its optimality and robustness critically depend on how well individual
coefficients can be estimated. This is particularly important when $T$ is
ultra short, which is the primary concern of this paper. In cases where $T$
is small and/or the observations on $\boldsymbol{x}_{it}$ are highly
correlated or are slowly moving, $d_{i}=\func{det}\left( \boldsymbol{X}%
_{i}^{\prime }\boldsymbol{M}_{T}\boldsymbol{X}_{i}\right) $ is likely to be
close to zero for a large number of units $i=1,2,...,n$. As a result, $%
\boldsymbol{\hat{\beta}}_{i}$ is likely to be a poor estimate of $%
\boldsymbol{\beta }_{i}$ for some $i$, and including such estimates when
computing $\boldsymbol{\hat{\beta}}_{MG}$ could be problematic, rendering
the MG estimator inefficient and unreliable.

However, as discussed above, $\boldsymbol{\hat{\beta}}_{MG}$ continues to be
an unbiased estimator of $\boldsymbol{\beta }_{0}$, even if $\boldsymbol{%
\beta }_{i}$ are correlated with $\boldsymbol{X}_{i}$, so long as the
stochastic component of $\boldsymbol{x}_{it}$ is strictly exogenous with
respect to $u_{it}$. By averaging over $\boldsymbol{\hat{\beta}}_{i}$ for $%
i=1,2,...,n$, as $n\rightarrow \infty $, the MG estimator converges to $%
\boldsymbol{\beta }_{0}$ if $T$ is sufficiently large such that $\boldsymbol{%
\hat{\beta}}_{i}$ have at least second-order moments for all $i$. The
existence of first-order moments of $\boldsymbol{\hat{\beta}}_{i}$ is
required for the MG estimator to be unbiased, and we need $\boldsymbol{\hat{%
\beta}}_{i}$ to have second-order moments for $\sqrt{n}$-consistent
estimation and valid inference about the average effects, $\boldsymbol{\beta 
}_{0}$.

When individual estimates do not have second-order moments, trimming is
required. The question is how to trim the individual estimates $\boldsymbol{%
\hat{\beta}}_{i}$. \cite{Peng2001} proposes to categorize the individual
estimates and then use a weighted average of the estimates depending on
their left and right tail shape parameters. He assumes the estimates are
identically and independently distributed and considers only the case of a
scalar parameter. \cite{GrahamPowell2012} propose to trim by exclusion,
namely dropping individual estimates with $d_{i}$ below a threshold value.
We establish that trimming and/or shrinkage is required only if the tail
index, $\alpha _{p}$, of the distribution of $1/d_{i}$ is below or equal to $%
2$, and propose to shrink only those estimates whose $d_{i}$ are below a
threshold value, $a_{n}$, when $\alpha _{p}\leq 2$. Moreover, it is shown by
simulations that estimates of $\alpha _{p}$ rise with $T$, and shrinkage is
typically required when $T<2k+1$.

\subsection{A trimmed mean group estimator \label{TMGE}}

To formalize our proposed TMG estimator, we start with the least square
estimator of $\boldsymbol{\beta }_{i}$, namely $\boldsymbol{\hat{\beta}}%
_{i}=(\boldsymbol{X}_{i}^{\prime }\boldsymbol{M}_{T}\boldsymbol{X}_{i})^{-1}%
\boldsymbol{X}_{i}^{\prime }\boldsymbol{M}_{T}\boldsymbol{y}_{i}$, and
shrink it using the threshold function $\boldsymbol{1}\{d_{i}>a_{n}\}$,
which takes the value of unity if $d_{i}>a_{n}$ and zero otherwise, where $%
d_{i}=\func{det}(\boldsymbol{X}_{i}^{\prime }\boldsymbol{M}_{T}\boldsymbol{X}%
_{i})$. The threshold value is set as 
\begin{equation}
a_{n}=C_{n}n^{-\alpha },  \label{an}
\end{equation}%
where $\alpha >0$ and $C_{n}$ is a positive constant bounded in $n$. The
choice of $\alpha $ and $C_{n}$ will be discussed below. The resultant
shrinkage estimator is given by 
\begin{equation*}
\boldsymbol{\tilde{\beta}}_{i}=\left\{ 
\begin{array}{ll}
\boldsymbol{\hat{\beta}}_{i}=(\boldsymbol{X}_{i}^{\prime }\boldsymbol{M}_{T}%
\boldsymbol{X}_{i})^{-1}\boldsymbol{X}_{i}^{\prime }\boldsymbol{M}_{T}%
\boldsymbol{y}_{i}, & \text{if }d_{i}>a_{n}, \\ 
\boldsymbol{\hat{\beta}}_{i}^{\ast }=a_{n}^{-1}\func{adj}(\boldsymbol{X}%
_{i}^{\prime }\boldsymbol{M}_{T}\boldsymbol{X}_{i})\boldsymbol{X}%
_{i}^{\prime }\boldsymbol{M}_{T}\boldsymbol{y}_{i}, & \text{if }d_{i}\leq
a_{n},%
\end{array}%
\right.
\end{equation*}%
where $\boldsymbol{\hat{\beta}}_{i}^{\ast }$ is the shrinkage version of $%
\boldsymbol{\hat{\beta}}_{i}$ since $a_{n}^{-1}\leq d_{i}^{-1}$. Written
more compactly, we have 
\begin{equation}
\boldsymbol{\tilde{\beta}}_{i}=\boldsymbol{1}\{d_{i}>a_{n}\}\boldsymbol{\hat{%
\beta}}_{i}+\boldsymbol{1}\{d_{i}\leq a_{n}\}\boldsymbol{\hat{\beta}}%
_{i}^{\ast }=(1+\delta _{i})\boldsymbol{\hat{\beta}}_{i},  \label{betai2}
\end{equation}%
where 
\begin{equation}
\delta _{i}=\left( \frac{d_{i}-a_{n}}{a_{n}}\right) \boldsymbol{1}%
\{d_{i}\leq a_{n}\}\leq 0.  \label{deltai}
\end{equation}%
We considered two versions of TMG estimators, depending on how individual
trimmed estimators, $\boldsymbol{\tilde{\beta}}_{i}$, are combined. An
obvious choice is to use a simple average of $\boldsymbol{\tilde{\beta}}_{i}$%
, namely $\overline{\boldsymbol{\tilde{\beta}}}_{n}=n^{-1}\sum_{i=1}^{n}%
\boldsymbol{\tilde{\beta}}_{i}=n^{-1}\sum_{i=1}^{n}(1+\delta _{i})%
\boldsymbol{\hat{\beta}}_{i}$, which can also be viewed as a weighted
average estimator with the weights $w_{i}=(1+\delta _{i})/n<1/n$. But it is
easily seen that these weights do not add up to unity, and it might be
desirable to use the scaled weights $w_{i}/(1+\bar{\delta}%
_{n})=n^{-1}(1+\delta _{i})/(1+\bar{\delta}_{n})$, where $\bar{\delta}%
_{n}=n^{-1}\sum_{i=1}^{n}\delta _{i}$. Using these modified weights, we
propose the following TMG estimator
\begin{equation}
\boldsymbol{\hat{\beta}}_{TMG}=n^{-1}\sum_{i=1}^{n}\left( \frac{1+\delta _{i}%
}{1+\bar{\delta}_{n}}\right) \boldsymbol{\hat{\beta}}_{i},  \label{TMGb}
\end{equation}%
which can be written more compactly as 
\begin{equation}
\boldsymbol{\hat{\beta}}_{TMG}=n^{-1}\sum_{i=1}^{n}\left( 1+\bar{\delta}%
_{n}\right) ^{-1}\boldsymbol{Q}_{i}^{\prime }\boldsymbol{y}_{i},
\label{TMGc}
\end{equation}%
where 
\begin{equation}
\boldsymbol{Q}_{i}=\left( 1+\delta _{i}\right) \boldsymbol{R}_{i}=\left(
1+\delta _{i}\right) \boldsymbol{M}_{T}\boldsymbol{X}_{i}\left( \boldsymbol{X%
}_{i}^{\prime }\boldsymbol{M}_{T}\boldsymbol{X}_{i}\right) ^{-1}.  \label{Qi}
\end{equation}

The TMG estimator differs in two important respects from the trimmed
estimator proposed by \cite{GrahamPowell2012}, which in the context of our
setup (and abstracting from time effects for now) can be written as 
\begin{equation}
\boldsymbol{\hat{\beta}}_{GP}=\frac{\sum_{i=1}^{n}\boldsymbol{1}%
\{d_{i,GP}>h_{n}^{2}\}\boldsymbol{\hat{\beta}}_{i}}{\sum_{i=1}^{n}%
\boldsymbol{1}\{d_{i,GP}>h_{n}^{2}\}},  \label{gpe}
\end{equation}%
where $d_{i,GP}=\det (\boldsymbol{W}_{i}^{\prime }\boldsymbol{W}_{i})$, and $%
\boldsymbol{W}_{i}=\left( \boldsymbol{\tau }_{T},\boldsymbol{X}_{i}\right) $%
. In the special case where $T=k$, $d_{i,GP}=\left\vert \func{det}\left( 
\boldsymbol{W}_{i}\right) \right\vert ^{2}$ and the threshold function
considered by GP reduces to $\boldsymbol{1}\{\left\vert \func{det}\left( 
\boldsymbol{W}_{i}\right) \right\vert >h_{n}\}$.

The GP approach can be viewed as trimming by exclusion, which overlooks the
information that might be contained in $\func{adj}(\boldsymbol{W}%
_{i}^{\prime }\boldsymbol{W}_{i})$ when $\left\vert \func{det}\left( 
\boldsymbol{W}_{i}\right) \right\vert \leq h_{n}$. More specifically, to
relate $\boldsymbol{\hat{\beta}}_{TMG}$ to the GP estimator given by (\ref%
{gpe}), using (\ref{betai2}) in (\ref{TMGb}), we note that%
\begin{equation}
\boldsymbol{\hat{\beta}}_{TMG}=\frac{1-\pi _{n}}{1+\bar{\delta}_{n}}\left( 
\frac{\sum_{i=1}^{n}\boldsymbol{1}\{d_{i}>a_{n}\}\boldsymbol{\hat{\beta}}_{i}%
}{\sum_{i=1}^{n}\boldsymbol{1}\{d_{i}>a_{n}\}}\right) +\frac{\pi _{n}}{1+%
\bar{\delta}_{n}}\left( \frac{\sum_{i=1}^{n}\boldsymbol{1}\{d_{i}\leq a_{n}\}%
\boldsymbol{\hat{\beta}}_{i}^{\ast }}{\sum_{i=1}^{n}\boldsymbol{1}%
\{d_{i}\leq a_{n}\}}\right),  \label{betaTn}
\end{equation}%
where $\pi _{n}$ is the fraction of the estimates being trimmed given by 
\begin{equation}
\pi _{n}=\frac{\sum_{i=1}^{n}\boldsymbol{1}\{d_{i}\leq a_{n}\}}{n}.
\label{pin}
\end{equation}%
Compared to $\boldsymbol{\hat{\beta}}_{TMG}$, the GP estimator places zero
weights on the estimates with $d_{i}\leq a_{n}$, and leaves out the scaling
factor, $\left( 1+\bar{\delta}_{n}\right) ^{-1}$, which, as already noted,
can play an important role in the small sample performance of the TMG
estimator. Our proposed method also differs from GP in the way we motivate
and calibrate the threshold function.

\section{Asymptotic properties of the TMG estimator\label{AsyTMG}}

To investigate the asymptotic properties of the TMG estimator, $\boldsymbol{%
\hat{\beta}}_{TMG}$, we introduce the following additional assumptions:

\begin{assumption}[moments]
\label{regressorsx} For $i=1,2,...,n$, denote by $d_{i}=\func{det}\left( 
\boldsymbol{X}_{i}^{\prime }\boldsymbol{M}_{T}\boldsymbol{X}_{i}\right) $,
where $\boldsymbol{X}_{i}=(\boldsymbol{x}_{i1},\boldsymbol{x}_{i2},...,%
\boldsymbol{x}_{iT})^{\prime }$ is the $T\times k^{\prime }$ matrix of
observations on $\boldsymbol{x}_{it}$ in the heterogeneous panel data model
given by (\ref{m1}) and $\boldsymbol{M}_{T}=\boldsymbol{I}_{T}-\boldsymbol{%
\tau }_{T}\boldsymbol{\tau }_{T}^{\prime }/T$. $\func{inf}_{i}\left(
d_{i}\right) >0$, almost surely, and $\func{sup}_{i}E\left[ \left\Vert \func{%
adj}\left( \boldsymbol{X}_{i}^{\prime }\boldsymbol{M}_{T}\boldsymbol{X}%
_{i}\right) \right\Vert ^{2}\right] <C$, where $\func{adj}\left( \boldsymbol{%
X}_{i}^{\prime }\boldsymbol{M}_{T}\boldsymbol{X}_{i}\right) $ is the
adjugate of $\boldsymbol{X}_{i}^{\prime }\boldsymbol{M}_{T}\boldsymbol{X}%
_{i} $.
\end{assumption}

\begin{assumption}[distribution of $d_{i}$]
\label{distributiondi} For $i=1,2,...,n$, $d_{i}$ are random draws from the
probability distribution function, $F_{d}(u)$, with a continuously
differentiable density function, $f_{d}(u)$, over $u\in (0,\infty )$, such
that $F_{d}(0)=0$, and $F_{d}(a_{n})\thicksim C_{f}a_{n}^{\alpha _{p}}$, as $%
n\rightarrow \infty $, where $C_{f}$ is bounded in $n$, and $\alpha _{p}>0$.
\end{assumption}

\begin{remark}
\label{Pareto}Parameter $\alpha _{p}$ in Assumption \ref{distributiondi} is
related to the tail index of Pareto-type distributions of $z=1/d$ given by%
\footnote{%
Pareto-type distributions cover a wide class of distributions including
Pareto, Cauchy, Burr and Stable distributions with exponent $\alpha _{p}<2$.
See Example 3.3.10 (p.133) in \cite{EmbrechtsEtal1997}.} 
\begin{equation*}
\Pr \left( z>b_{n}\right) \thicksim C_{p}b_{n}^{-\alpha _{p}}\text{, }%
b_{n}\rightarrow \infty .
\end{equation*}%
For $d>0$, 
\begin{equation*}
\Pr \left( z>b_{n}\right) =\Pr \left( 1/d>b_{n}\right) =\Pr \left(
d<1/b_{n}\right) =F_{d}\left( b_{n}^{-1}\right) \thicksim
C_{f}b_{n}^{-\alpha _{p}}.
\end{equation*}%
Connecting $\alpha _{p}$ to the shape parameter of the Pareto-type
distributions of $1/d$ allows us to utilize the rich and extensive
literature that already exists on estimation of the tail index. See, for
example, Section 6.4.2 of \cite{EmbrechtsEtal1997}
\end{remark}

\begin{remark}
Note that when $\alpha _{p}>2$, $E\left( d_{i}^{-2}\right) <C$, condition (%
\ref{sufficient}) will be met and as a result the MG estimator becomes $%
\sqrt{n}$-consistent. Trimming is only required if $\alpha _{p}\leq 2$.
\end{remark}

\begin{remark}
\label{Fu} By the mean value theorem, $F_{d}(a_{n})=%
\int_{0}^{a_{n}}f_{d}(u)du=f_{d}(\bar{a}_{n})a_{n}$, where $\bar{a}_{n}$
lies on the segment $(0,a_{n})$. Hence, under Assumption \ref{distributiondi}%
, it follows that $f_{d}(\bar{a}_{n})=O\left( a_{n}^{\alpha _{p}-1}\right) $.
\end{remark}

\begin{remark}
Assumption \ref{distributiondi}, by imposing $F_{d}(0)=0$, rules out the
case where $d_{i}=0$ for some units. In practice, units with $d_{i}=0$ can
be excluded when computing the TMG estimates. This case arises, for example,
when $k^{\prime }=1$ and $T=2$, and $d_{i}=\func{det}(\boldsymbol{x}%
_{i}^{\prime }\boldsymbol{M}_{T}\boldsymbol{x}_{i})=\frac{1}{2}\left(
x_{i2}-x_{i1}\right) ^{2}=0$, for some units. Units with $x_{i2}=x_{i1}$ are
known as stayers, and those with very small values of $\left\vert
x_{i2}-x_{i1}\right\vert $ (below a given threshold value) are known as slow
movers and discussed by GP, and more recently by \cite{SasakiUra2026} (SU).
Our proposed TMG estimator does exploit the information on slow movers,
namely those with $a_{n}\geq d_{i}>0$, and excludes units with $d_{i}=0$
only. Due to the error cross-sectional independence, we conjecture that the
TMG estimator will perform well even if there are stayers that are excluded
from our analysis, so long as the fraction of stayers in the sample is not
too large. In cases where stayers form a large fraction of the units, the
TMG estimate can be augmented with an estimate for the stayers computed
using the extrapolation approach proposed by SU. Following such a mixed
strategy, the extrapolation error must be balanced against the loss of
efficiency associated with using TMG that excludes the stayers.
\end{remark}

\begin{assumption}[$\boldsymbol{\protect\beta }_{i}$ and $d_{i}$ dependence]

\label{CRE} Dependence of $\boldsymbol{\beta }_{i}=(\beta _{i1},\beta
_{i2},...,\beta _{ik^{\prime }})^{\prime }$ on $d_{i}$ is characterized by%
\begin{equation}
\boldsymbol{\beta }_{i}=\boldsymbol{\beta }_{0}+\boldsymbol{\eta }_{i},
\label{etai}
\end{equation}%
where $\boldsymbol{\eta }_{i}$ are independently distributed over $i$,%
\begin{equation}
\boldsymbol{\eta }_{i}=\boldsymbol{B}_{i}\left\{ \boldsymbol{g}(d_{i})-E%
\left[ \boldsymbol{g}(d_{i})\right] \right\} +\boldsymbol{\epsilon }_{i},
\label{etai2}
\end{equation}%
$E(\boldsymbol{\epsilon }_{i}|d_{i})=\boldsymbol{0}$, $\func{sup}%
_{i}E\left\Vert \boldsymbol{\epsilon }_{i}\right\Vert ^{4}<C$, $\boldsymbol{g%
}(u)=(g_{1}(u),g_{2}(u),...,g_{k^{\prime }}(u))^{\prime }$, and $g_{j}(u)$
for $j=1,2,...,k^{\prime }$ are bounded and continuously differentiable
functions of $u$ on $(0,\infty )$. $\boldsymbol{B}_{i}$ are bounded $%
k^{\prime }\times k^{\prime }$ matrices of fixed constants with $\func{sup}%
_{i}\left\Vert \boldsymbol{B}_{i}\right\Vert <C$.
\end{assumption}

\begin{remark}
Under Assumption \ref{distributiondi}, $d_{i}$ are distributed independently
over $i$, which implies that $\delta _{i}$, defined by (\ref{deltai}), are
also distributed independently over $i$. The cross-sectional independence
assumption is made to simplify the mathematical exposition. It can be
relaxed by requiring $\boldsymbol{\eta }_{i}$ and $\delta _{i}$ to be weakly
cross-sectionally correlated.
\end{remark}

Using (\ref{betaihat2}) and (\ref{etai}) in (\ref{betai2}), we have 
\begin{equation*}
\boldsymbol{\tilde{\beta}}_{i}=(1+\delta _{i})\boldsymbol{\beta }%
_{0}+(1+\delta _{i})\left( \boldsymbol{\eta }_{i}+\boldsymbol{\xi }%
_{iT}\right) ,
\end{equation*}%
and $\boldsymbol{\hat{\beta}}_{TMG}$ defined by (\ref{TMGb}) can be written
as 
\begin{equation}
\boldsymbol{\hat{\beta}}_{TMG}-\boldsymbol{\beta }_{0}=\left( \frac{1}{1+%
\bar{\delta}_{n}}\right) n^{-1}\sum_{i=1}^{n}(1+\delta _{i})\left( 
\boldsymbol{\eta }_{i}+\boldsymbol{\xi }_{iT}\right) .  \label{TMG_egzi}
\end{equation}%
(\ref{TMG_egzi}) can be written equivalently as 
\begin{equation}
\boldsymbol{\hat{\beta}}_{TMG}-\boldsymbol{\beta }_{0}=\left( \frac{%
1+E\left( \bar{\delta}_{n}\right) }{1+\bar{\delta}_{n}}\right) \left( 
\boldsymbol{b}_{n}+n^{-1}\sum_{i=1}^{n}\left[ \boldsymbol{p}_{i}-E\left( 
\boldsymbol{p}_{i}\right) \right] +n^{-1}\sum_{i=1}^{n}\boldsymbol{q}%
_{iT}\right) ,  \label{TMGgap1}
\end{equation}%
where 
\begin{equation}
\boldsymbol{b}_{n}=n^{-1}\sum_{i=1}^{n}E\left( \boldsymbol{p}_{i}\right)
=n^{-1}\sum_{i=1}^{n}\frac{E(\delta _{i}\boldsymbol{\eta }_{i})}{1+E\left( 
\bar{\delta}_{n}\right) },\text{ }  \label{bn}
\end{equation}
\begin{equation}
\boldsymbol{p}_{i}=\left[ 1+E\left( \bar{\delta}_{n}\right) \right]
^{-1}\left( 1+\delta _{i}\right) \boldsymbol{\eta }_{i} \text{, and }%
\boldsymbol{q}_{iT}=\left[ 1+E\left( \bar{\delta}_{n}\right) \right]
^{-1}\left( 1+\delta _{i}\right) \boldsymbol{\xi }_{iT}.  \label{piqiT}
\end{equation}%
Also, by Lemma \ref{deltaeta}, $E(\delta _{i})=O(a_{n}^{\alpha _{p}})$, $%
E\left( \bar{\delta}_{n}\right) =O(a_{n}^{\alpha _{p}})$, $E\left( \delta
_{i}\boldsymbol{\eta }_{i}\right) =O(a_{n}^{\alpha _{p}})$, and hence 
\begin{equation*}
\boldsymbol{b}_{n}=\frac{1}{1+E\left( \bar{\delta}_{n}\right) }\left[
n^{-1}\sum_{i=1}^{n}E(\delta _{i}\boldsymbol{\eta }_{i})\right] =\frac{%
O(a_{n}^{\alpha _{p}})}{1+O(a_{n}^{\alpha _{p}})}=O(a_{n}^{\alpha
_{p}})=O(n^{-\alpha \alpha _{p}}).
\end{equation*}%
Since under Assumptions \ref{regressorsx} and \ref{distributiondi}, $\delta
_{i}-E\left( \delta _{i}\right) $ is distributed independently over $i$ with
a zero mean and bounded variance, then 
\begin{equation}
\frac{1+E\left( \bar{\delta}_{n}\right) }{1+\bar{\delta}_{n}}=1-\frac{\bar{%
\delta}_{n}-E\left( \bar{\delta}_{n}\right) }{1+E\left( \bar{\delta}%
_{n}\right) +\left( \bar{\delta}_{n}-E\left( \bar{\delta}_{n}\right) \right) 
}=1+O_{p}\left( n^{-1/2-\alpha \alpha _{p}}\right) .  \label{deltaOrder}
\end{equation}%
Similarly, under Assumptions \ref{regressorsx}, \ref{distributiondi} and \ref%
{CRE}, $\boldsymbol{p}_{i}-E\left( \boldsymbol{p}_{i}\right) $ is
distributed independently over $i$ with zero means and bounded variances,
and we have%
\begin{equation*}
\frac{1}{n}\sum_{i=1}^{n}\left[ \boldsymbol{p}_{i}-E\left( \boldsymbol{p}%
_{i}\right) \right] =\left[ 1+E\left( \bar{\delta}_{n}\right) \right]
^{-1}\left\{ \frac{1}{n}\sum_{i=1}^{n}\boldsymbol{\eta }_{i}+\frac{1}{n}%
\sum_{i=1}^{n}\left[ \delta _{i}\boldsymbol{\eta }_{i}-E\left( \delta _{i}%
\boldsymbol{\eta }_{i}\right) \right] \right\} =O_{p}(n^{-1/2}).
\end{equation*}%
Consider now $\boldsymbol{\bar{q}}_{nT}=n^{-1}\sum_{i=1}^{n}\boldsymbol{q}%
_{iT}$, where $\boldsymbol{q}_{iT}$ is defined by (\ref{piqiT}), and note
that 
\begin{equation}
\boldsymbol{\bar{q}}_{nT}=\left( \frac{1}{1+E\left( \bar{\delta}_{n}\right) }%
\right) \boldsymbol{\bar{\xi}}_{\delta ,nT},  \label{qbarnt}
\end{equation}%
where $\boldsymbol{\bar{\xi}}_{\delta ,nT}=n^{-1}\sum_{i=1}^{n}\left(
1+\delta _{i}\right) \boldsymbol{\xi }_{iT}$. Using results of Lemma \ref%
{EVegzi}, we have $E\left( \boldsymbol{\bar{q}}_{nT}\right) =\boldsymbol{0}$%
, and for any $\alpha _{p}>0$, 
\begin{equation*}
Var\left( \boldsymbol{\bar{q}}_{nT}\right) =\left( \frac{1}{1+E\left( \bar{%
\delta}_{n}\right) }\right) ^{2}Var\left( \boldsymbol{\bar{\xi}}_{\delta
,nT\ }\right) =O(n^{-1}a_{n}^{-1})+O(n^{-1}a_{n}^{-1+\alpha
_{p}/2})=O(n^{-1}a_{n}^{-1}).
\end{equation*}%
Hence, $\boldsymbol{\bar{q}}_{nT}=O_{p}\left( n^{-1/2+\alpha /2}\right) $.
Using the above results in (\ref{TMGgap1}), we have%
\begin{equation}
\boldsymbol{\hat{\beta}}_{TMG}-\boldsymbol{\beta }_{0}=O(n^{-\alpha \alpha
_{p}})+O_{p}\left( n^{-\frac{(1-\alpha )}{2}}\right) .  \label{TMGgap2}
\end{equation}%
Thus, $\boldsymbol{\hat{\beta}}_{TMG}$ converges to $\boldsymbol{\beta }_{0}$
asymptotically so long as $0<\alpha <1$ as $n\rightarrow \infty $. The
convergence rate of $\boldsymbol{\hat{\beta}}_{TMG}$ to $\boldsymbol{\beta }%
_{0}$ will depend on the trade-off between the bias and variance of $%
\boldsymbol{\hat{\beta}}_{TMG}$. Though it is possible to reduce the bias of 
$\boldsymbol{\hat{\beta}}_{TMG}$ by choosing a value of $\alpha $ close to
unity, it will be at the expense of a larger variance. In what follows, we
shed light on the choice of $\alpha $ by considering the conditions under
which the asymptotic distribution of $\boldsymbol{\hat{\beta}}_{TMG}$ is
centered around $\boldsymbol{\beta }_{0}$ and at the same time, its
asymptotic variance tends to zero at a reasonably fast rate.

\subsection{The choice of the trimming threshold value \label{Threshold}}

We begin by assuming that the rate at which $\boldsymbol{\hat{\beta}}_{TMG}$
converges to $\boldsymbol{\beta }_{0}$ is given by $n^{\gamma }$, where $%
\gamma $ is set in relation to $\alpha $. Since it is not guaranteed that
the individual estimates, $\boldsymbol{\hat{\beta}}_{i}$, have second order
moments when when $T$ is ultra short, we expect the rate, $n^{\gamma }$, to
be below the regular rate of $n^{1/2}$. Using (\ref{TMGgap1}) and (\ref%
{deltaOrder}) and noting that $\gamma \leq 1/2$ (with equality holding only
under regular convergence), we have 
\begin{equation}
n^{\gamma }\left( \boldsymbol{\hat{\beta}}_{TMG}-\boldsymbol{\beta }%
_{0}\right) =n^{\gamma }\boldsymbol{b}_{n}+n^{\gamma -\frac{1-\alpha }{2}}%
\left[ n^{-\frac{1+\alpha }{2}}\sum_{i=1}^{n}\left[ \boldsymbol{p}%
_{i}-E\left( \boldsymbol{p}_{i}\right) \right] +n^{-\frac{1+\alpha }{2}%
}\sum_{i=1}^{n}\boldsymbol{q}_{iT}\right] +o_{p}(1).  \label{Asy3}
\end{equation}%
To ensure that the asymptotic distribution of $\boldsymbol{\hat{\beta}}%
_{TMG} $ is correctly centered, we must have $n^{\gamma }\boldsymbol{b}%
_{n}\rightarrow \boldsymbol{0,}$ as $n\rightarrow \infty $. Since $n^{\gamma
}\boldsymbol{b}_{n}=O(n^{\gamma }a_{n}^{\alpha _{p}})=O(n^{\gamma -\alpha
\alpha _{p}})$, this condition is ensured if $\gamma <\alpha \alpha _{p}$.
Turning to the second term of the above, we note that to obtain a
non-degenerate distribution, we also need to set $\gamma =\left( 1-\alpha
\right) /2$. Combining these two requirements yields $\left( 1-\alpha
\right) /2<\alpha \alpha _{p}$, or 
\begin{equation}
\alpha >\frac{1}{1+2\alpha _{p}}.  \label{calpha}
\end{equation}%
In view of (\ref{TMGgap2}), the convergence rate of $\boldsymbol{\hat{\beta}}%
_{TMG}-\boldsymbol{\beta }_{0}$, namely $n^{-\frac{(1-\alpha )}{2}}$,
depends on $\alpha _{p}$, which governs the tail property of the
distribution of $1/d_{i}$. In the simple example \ref{ex1}, assuming $\sigma
_{ix}^{2}=\sigma _{x}^{2}$, then $1/d_{i}=(1/\sigma _{x}^{2})(1/\chi
_{v}^{2})$, and the shape parameter of $1/d_{i}$ will equal the shape
parameter of $1/\chi _{v}^{2}$, which is given by $\alpha _{p}=v/2$. In the
simple case where $k^{\prime }=1$, we have $\alpha_{p}=(T-1)/2$. In the
worse case scenario with $T=2=k$, $\alpha _{p}=1/2$, and the convergence
rate of $\boldsymbol{\hat{\beta}}_{TMG}$ is at most be $n^{-1/4}$, well
below the regular convergence rate of $n^{-1/2}$. The optimum choice of $%
\alpha $ depends on $\alpha _{p}$, which in turn is determined by the rate
at which $F_{d}(a_{n})$ tends to zero as $a_{n}\rightarrow 0$.

To ensure that the threshold function, $\boldsymbol{1}\{d_{i}>C_{n}n^{-%
\alpha }\}$, does not depend on the scale of $\boldsymbol{x}_{it}$, we
suggest setting $C_{n}=\bar{d}_{n}=n^{-1}\sum_{i=1}^{n}d_{i}$. With this
choice, the threshold function can be written as $\boldsymbol{1}%
\{d_{i}>C_{n}n^{-\alpha }\}=\boldsymbol{1}\{d_{i}/\bar{d}_{n}>n^{-\alpha }\}$%
, noting that $d_{i}/\bar{d}_{n}$ is scale free. The selection of $\alpha $
is more complicated. In practice, a two-step procedure can be implemented,
whereby in the first step the value of $\alpha _{p}$ is estimated using the
observations $\left\{ 1/d_{i}\right\} _{i=1}^{n}$.\footnote{%
Asymptotically, the estimate of $\alpha _{p}$ does not depend on the scale
of $d_{i}$.} Then TMG estimation can be carried out using $\hat{\alpha}%
=1/(1+2\hat{\alpha}_{p})+\epsilon $, where $\hat{\alpha}_{p}$ is a
consistent estimator of $\alpha _{p}$, and $\epsilon $ is a small positive
constant. See also Remark \ref{Pareto}.

Given the uncertainty associated with the estimates of $\alpha _{p}$, we
also considered setting $\alpha _{p}=1$, and using the threshold function $%
\boldsymbol{1}\{d_{i}/\bar{d}_{n}>n^{-1/3}\}$. This approach has the
advantage of being simple to implement. The small sample performance of
these two approaches to setting $\alpha $ are compared using MC experiments.
See sub-section \ref{MChatalpha} of the online supplement. The simulation
results show that the simple threshold rule $\boldsymbol{1}\{d_{i}/\bar{d}%
_{n}>n^{-1/3}\}$ performs better when $T$ is ultra-short ($T=k$), and has a
similar performance when the estimated value of $\hat{\alpha}_{p}$ is used
if $T>k$.

GP show that to correctly center the limiting distribution of their proposed
estimator, $\boldsymbol{\hat{\beta}}_{GP}$ given by (\ref{gpe}), $h_{n}$ in
their threshold function $\boldsymbol{1}\{|\det (\boldsymbol{W}%
_{i})|>h_{n}\} $ must be set as $h_{n}=C_{GP}n^{-\alpha _{GP}}$, such that $%
(nh_{n})^{1/2}h_{n}\rightarrow 0$, as $n\rightarrow \infty $, which implies
that $\alpha _{GP}>1/3$ (see p. 2125 and p. 2138 of GP). SU adopt the same
choice and set $h_{n}=C_{GP}n^{-\alpha _{GP}}n^{-L/(1+2L)}$, with $L\in
\{1,2\}$ being the order of local polynomials used to infer the average
effects of stayers (see sub-section 5.2 on p. 16 of SU).

\begin{theorem}[Asymptotic distribution of TMG estimator]
\label{thm_asytmg} Suppose for $i=1,2,...,n$ and $t=1,2,...,T$, $y_{it}$ are
generated by the heterogeneous panel data model (\ref{m1}), and Assumptions %
\ref{errors}, \ref{rcm}, and \ref{regressorsx}--\ref{CRE} hold. Then for $%
\alpha >1/(1+2\alpha _{p})$, where $\alpha _{p}\in (0,2]$ is the shape
parameter of the tail probability distribution of $1/d_{i}$ over $i$, 
\begin{equation*}
n^{(1-\alpha )/2}\left( \boldsymbol{\hat{\beta}}_{TMG}-\boldsymbol{\beta }%
_{0}\right) \rightarrow _{d}N\left( \boldsymbol{0}_{k^{\prime }},\boldsymbol{%
V}_{\beta }\right) ,\text{ as }n\rightarrow \infty ,
\end{equation*}%
where $\boldsymbol{\hat{\beta}}_{TMG}$ is given by (\ref{TMGb}), and%
\begin{equation}
\boldsymbol{V}_{\beta }=Avar\left( n^{(1-\alpha )/2}\boldsymbol{\hat{\beta}}%
_{TMG}\right) =C^{-1}\lim_{n\rightarrow \infty }n^{-1}\sum_{i=1}^{n}E\left[
a_{n}\boldsymbol{1}\{d_{i}>a_{n}\}\boldsymbol{R}_{i}^{\prime }\boldsymbol{H}%
_{i}\boldsymbol{R}_{i}\right] ,  \label{Varbeta}
\end{equation}%
where $\boldsymbol{H}_{i}=E\left( \boldsymbol{u}_{i}\boldsymbol{u}%
_{i}^{\prime }\left\vert \boldsymbol{X}_{i}\right. \right) $, $\boldsymbol{R}%
_{i}=\boldsymbol{M}_{T}\boldsymbol{X}_{i}\left( \boldsymbol{X}_{i}^{\prime }%
\boldsymbol{M}_{T}\boldsymbol{X}_{i}\right) ^{-1}$, $d_{i}=\func{det}\left( 
\boldsymbol{X}_{i}^{\prime }\boldsymbol{M}_{T}\boldsymbol{X}_{i}\right) >0$, 
$a_{n}=C_{n}n^{-\alpha }$, and $C=\lim_{n\rightarrow \infty
}C_{n}=\lim_{n\rightarrow \infty }\bar{d}_{n}>0$.
\end{theorem}

For a proof, see sub-section \ref{Proofthm1} of the mathematical appendix.

\subsection{Robust estimation of the covariance matrix of the TMG estimator}

As with standard MG estimation, consistent estimation of $\boldsymbol{V}%
_{\beta }$ using (\ref{Varbeta}) requires knowledge of $\boldsymbol{H}_{i}$
which cannot be estimated consistently when $T$ is short. We follow the
literature and propose a robust covariance estimator of $\boldsymbol{V}%
_{\beta }$, which is asymptotically unbiased for a wide class of error
variances, $E\left( \boldsymbol{u}_{i}\boldsymbol{u}_{i}^{\prime }\left\vert 
\boldsymbol{X}_{i}\right. \right) =\boldsymbol{H}_{i}(\boldsymbol{X}_{i})$,
thus allowing for serially correlated and conditionally heteroskedastic
errors. The following theorem summarizes the main result.

\begin{theorem}[Robust covariance matrix of TMG estimator]
\label{VarCon}Suppose Assumptions \ref{errors}, \ref{rcm}, and \ref%
{regressorsx}-\ref{CRE} hold and $\boldsymbol{\beta }_{0}$ is estimated by $%
\boldsymbol{\hat{\beta}}_{TMG}$ given by (\ref{TMGb}). Then for $\alpha
>1/(1+2\alpha _{p})$, where $\alpha _{p}\in (0,2]$ is the shape parameter of
the tail probability distribution of $1/d_{i}$ over $i$, as $n\rightarrow
\infty $, 
\begin{equation*}
\boldsymbol{V}_{\beta }=\func{plim}_{n\rightarrow \infty }\left[
n^{-1}\sum_{i=1}^{n}\left( \boldsymbol{\tilde{\beta}}_{i}-\boldsymbol{\hat{%
\beta}}_{TMG}\right) \left( \boldsymbol{\tilde{\beta}}_{i}-\boldsymbol{\hat{%
\beta}}_{TMG}\right) ^{\prime }\right] ,
\end{equation*}%
where $\boldsymbol{V}_{\beta }=Avar\left( n^{(1-\alpha )/2}\boldsymbol{\hat{%
\beta}}_{TMG}\right) $ is defined by (\ref{Varbeta}). Accordingly, $Var(%
\boldsymbol{\hat{\beta}}_{TMG})$ can be consistently estimated by $%
n^{-2}\sum_{i=1}^{n}\left( \boldsymbol{\tilde{\beta}}_{i}-\boldsymbol{\hat{%
\beta}}_{TMG}\right) \left( \boldsymbol{\tilde{\beta}}_{i}-\boldsymbol{\hat{%
\beta}}_{TMG}\right) ^{\prime }$, where $\boldsymbol{\tilde{\beta}}_{i}$ is
given by (\ref{betai2}).
\end{theorem}

For a proof, see sub-section \ref{ProofVarCon} of the mathematical appendix.

\begin{remark}
Following the literature on MG estimation, here we also consider the
following bias-adjusted and scaled version: 
\begin{equation}
\widehat{Var(\boldsymbol{\hat{\beta}}_{TMG})}=\frac{1}{n(n-1)(1+\bar{\delta}%
_{n})^{2}}\sum_{i=1}^{n}\left( \boldsymbol{\tilde{\beta}}_{i}-\boldsymbol{%
\hat{\beta}}_{TMG}\right) \left( \boldsymbol{\tilde{\beta}}_{i}-\boldsymbol{%
\hat{\beta}}_{TMG}\right) ^{\prime }.  \label{varC}
\end{equation}
\end{remark}

\section{Panels with time effects\label{CRCTE}}

The panel data model with time effects is given by 
\begin{equation}
y_{it}=\alpha _{i}+\phi _{t}+\boldsymbol{x}_{it}^{\prime }\boldsymbol{\beta }%
_{i}+u_{it},  \label{panTE1}
\end{equation}%
where $\phi _{t}$ for $t=1,2,...,T$ are the time effects. Without loss of
generality, we adopt the normalization $\boldsymbol{\tau }_{T}^{\prime }%
\boldsymbol{\phi }=0$, where $\boldsymbol{\phi }=(\phi _{1},\phi
_{2},...,\phi _{T})^{\prime }$. To estimate $\boldsymbol{\beta }_{0}$,
initially we suppose $\boldsymbol{\phi }$ is known, and the trimmed
estimator of $\boldsymbol{\beta }_{i}$ is now given by $\boldsymbol{\tilde{%
\beta}}_{i}(\boldsymbol{\phi })=\boldsymbol{Q}_{i}^{\prime }(\boldsymbol{y}%
_{i}-\boldsymbol{\phi })=\boldsymbol{\tilde{\beta}}_{i}-\boldsymbol{Q}%
_{i}^{\prime }\boldsymbol{\phi }$, where $\boldsymbol{Q}_{i}$ is defined by (%
\ref{Qi}). The associated TMG-TE estimator of $\boldsymbol{\beta }_{0}$ then
follows as 
\begin{equation*}
\boldsymbol{\hat{\beta}}_{TMG-TE}(\boldsymbol{\phi })=n^{-1}\sum_{i=1}^{n}%
\left( 1+\bar{\delta}_{n}\right) ^{-1}\boldsymbol{\tilde{\beta}}_{i}(%
\boldsymbol{\phi })=\boldsymbol{\hat{\beta}}_{TMG}-\boldsymbol{\bar{Q}}%
_{n}^{\prime }\boldsymbol{\phi },
\end{equation*}%
where $\boldsymbol{\hat{\beta}}_{TMG}$ is given by (\ref{TMGc}), and 
\begin{equation}
\boldsymbol{\bar{Q}}_{n}=\frac{1}{1+\bar{\delta}_{n}}\left(
n^{-1}\sum_{i=1}^{n}\boldsymbol{Q}_{i}\right) .  \label{Qn}
\end{equation}

From our earlier analysis, it is clear that for a known $\boldsymbol{\phi }$%
, $\boldsymbol{\hat{\beta}}_{TMG-TE}(\boldsymbol{\phi })$ has the same
asymptotic distribution as $\boldsymbol{\hat{\beta}}_{TMG}$ with $%
\boldsymbol{y}_{i}$ replaced by $\boldsymbol{y}_{i}-\boldsymbol{\phi }$.
When $T>k$, we can follow \cite{Chamberlain1992} and eliminate the time
effects by the de-meaning transformation $\boldsymbol{M}_{i}=\boldsymbol{I}%
_{T}-\boldsymbol{M}_{T}\boldsymbol{X}_{i}(\boldsymbol{X}_{i}^{\prime }%
\boldsymbol{M}_{T}\boldsymbol{X}_{i})^{-1}\boldsymbol{X}_{i}^{\prime }%
\boldsymbol{M}_{T}$. Under the normalization $\boldsymbol{\tau }_{T}^{\prime
}\boldsymbol{\phi }=0$, we have $\boldsymbol{M}_{T}\boldsymbol{\phi }=%
\boldsymbol{\phi }$, and$\ \boldsymbol{M}_{T}\boldsymbol{y}_{i}=\boldsymbol{M%
}_{T}\boldsymbol{X}_{i}\boldsymbol{\beta }_{i}+\boldsymbol{\phi }+%
\boldsymbol{M}_{T}\boldsymbol{u}_{i}$. Then $\boldsymbol{M}_{i}\boldsymbol{M}%
_{T}\boldsymbol{y}_{i}=\boldsymbol{M}_{i}\boldsymbol{\phi }+\boldsymbol{M}%
_{i}\boldsymbol{M}_{T}\boldsymbol{u}_{i}$, and averaging over $i$ we obtain 
\begin{equation}
n^{-1}\sum_{i=1}^{n}\boldsymbol{M}_{i}\boldsymbol{M}_{T}\boldsymbol{y}%
_{i}=\left( n^{-1}\sum_{i=1}^{n}\boldsymbol{M}_{i}\right) \boldsymbol{\phi }%
+n^{-1}\sum_{i=1}^{n}\boldsymbol{M}_{i}\boldsymbol{M}_{T}\boldsymbol{u}_{i}.
\label{TE-C}
\end{equation}%
Hence, $\boldsymbol{\phi }$ can be estimated without knowing $\boldsymbol{%
\beta }_{0}$, if $\boldsymbol{\bar{M}}_{n}=n^{-1}\sum_{i=1}^{n}\boldsymbol{M}%
_{i}$ is a positive definite matrix. This requires $T>k$, since $\boldsymbol{%
\bar{M}}_{n}$ is singular if $T=k$. Therefore, to implement the
Chamberlain's estimation approach, we require the following assumption:

\begin{assumption}[identification of time effects]
\label{invertm} For $T>k$, $\boldsymbol{\bar{M}}_{n}=n^{-1}\sum_{i=1}^{n}%
\boldsymbol{M}_{i}\rightarrow _{p}\boldsymbol{M}\succ \boldsymbol{0}$, where 
$\boldsymbol{M}_{i}=\boldsymbol{I}_{T}-\boldsymbol{M}_{T}\boldsymbol{X}_{i}(%
\boldsymbol{X}_{i}^{\prime }\boldsymbol{M}_{T}\boldsymbol{X}_{i})^{-1}%
\boldsymbol{X}_{i}^{\prime }\boldsymbol{M}_{T}$.
\end{assumption}

Under this Assumption, $\boldsymbol{\phi }$ can be estimated by%
\begin{equation}
\boldsymbol{\hat{\phi}}_{C}=\left( n^{-1}\sum_{i=1}^{n}\boldsymbol{M}%
_{i}\right) ^{-1}\left( n^{-1}\sum_{i=1}^{n}\boldsymbol{M}_{i}\boldsymbol{M}%
_{T}\boldsymbol{y}_{i}\right) ,  \label{TE2}
\end{equation}%
and its asymptotic distribution follows straightforwardly. Specifically,
using (\ref{TE-C}) we have 
\begin{equation}
\sqrt{n}\left( \boldsymbol{\hat{\phi}}_{C}-\boldsymbol{\phi }\right) =%
\boldsymbol{\bar{M}}_{n}^{-1}\left( n^{-1/2}\sum_{i=1}^{n}\boldsymbol{M}_{i}%
\boldsymbol{M}_{T}\boldsymbol{u}_{i}\right) ,  \label{TE-Ca}
\end{equation}%
and $\sqrt{n}\left( \boldsymbol{\hat{\phi}}_{C}-\boldsymbol{\phi }%
_{0}\right) \rightarrow _{d}N(\boldsymbol{0},\boldsymbol{V}_{\phi ,C})$,
where $\boldsymbol{V}_{\phi ,C}=Avar\left( \sqrt{n}\boldsymbol{\hat{\phi}}%
_{C}\right) $ is given by 
\begin{equation*}
\boldsymbol{V}_{\phi ,C}=\boldsymbol{M}^{-1}\lim_{n\rightarrow \infty
}E\left( n^{-1}\sum_{i=1}^{n}\boldsymbol{M}_{i}\boldsymbol{M}_{T}\boldsymbol{%
u}_{i}\boldsymbol{u}_{i}^{\prime }\boldsymbol{M}_{T}\boldsymbol{M}%
_{i}\right) \boldsymbol{M}^{-1}.
\end{equation*}%
Since $\boldsymbol{M}_{i}\boldsymbol{M}_{T}\boldsymbol{u}_{i}=\boldsymbol{M}%
_{i}\boldsymbol{M}_{T}(\boldsymbol{y}_{i}-\boldsymbol{\phi })$, the
asymptotic variance of $\boldsymbol{\hat{\phi}}_{C}$ can be consistently
estimated by 
\begin{equation}
\widehat{\boldsymbol{V}}_{\phi ,C}=\boldsymbol{\bar{M}}_{n}^{-1}\left[
n^{-1}\sum_{i=1}^{n}\boldsymbol{M}_{i}\boldsymbol{M}_{T}(\boldsymbol{y}_{i}-%
\boldsymbol{\hat{\phi}}_{C})(\boldsymbol{y}_{i}-\boldsymbol{\hat{\phi}}%
_{C})^{\prime }\boldsymbol{M}_{T}\boldsymbol{M}_{i}\right] \boldsymbol{\bar{M%
}}_{n}^{-1}.  \label{Varphi2}
\end{equation}%
Using $\boldsymbol{\hat{\phi}}_{C}$, the TMG-TE estimator of $\boldsymbol{%
\beta }_{0}$ is now given by 
\begin{equation}
\boldsymbol{\hat{\beta}}_{C,TMG-TE}=\frac{1}{1+\bar{\delta}_{n}}\left[
n^{-1}\sum_{i=1}^{n}\boldsymbol{Q}_{i}^{\prime }(\boldsymbol{y}_{i}-%
\boldsymbol{\hat{\phi}}_{C})\right] \text{, for }T>k,  \label{BetaTE2}
\end{equation}%
and 
\begin{equation}
n^{\frac{1-\alpha }{2}}\left( \boldsymbol{\hat{\beta}}_{C,TMG-TE}-%
\boldsymbol{\beta }_{0}\right) =n^{\frac{1-\alpha }{2}}\left( \boldsymbol{%
\hat{\beta}}_{C,TMG-TE}(\boldsymbol{\phi })-\boldsymbol{\beta }_{0}\right)
-n^{-\alpha /2}\boldsymbol{\bar{Q}}_{n}^{\prime }\sqrt{n}\left( \boldsymbol{%
\hat{\phi}}_{C}-\boldsymbol{\phi }\right) ,  \label{AsyDTE-C1}
\end{equation}%
where $\boldsymbol{\bar{Q}}_{n}^{\prime }\sqrt{n}\left( \boldsymbol{\hat{\phi%
}}_{C}-\boldsymbol{\phi }\right) =O_{p}(1)$. Hence%
\begin{equation}
n^{(1-\alpha )/2}\left( \boldsymbol{\hat{\beta}}_{C,TMG-TE}-\boldsymbol{%
\beta }_{0}\right) =n^{(1-\alpha )/2}\left( \boldsymbol{\hat{\beta}}%
_{C,TMG-TE}(\boldsymbol{\phi })-\boldsymbol{\beta }_{0}\right)
+O_{p}(n^{-\alpha /2}).  \label{AsyDTE-C}
\end{equation}%
Also since $\alpha >0$, then $n^{(1-\alpha )/2}\left( \boldsymbol{\hat{\beta}%
}_{C,TMG-TE}-\boldsymbol{\beta }_{0}\right) $ has the same asymptotic
distribution as $n^{(1-\alpha )/2}\left( \boldsymbol{\hat{\beta}}_{C,TMG-TE}(%
\boldsymbol{\phi })-\boldsymbol{\beta }_{0}\right) $, with $\boldsymbol{\phi 
}$ treated as known. This result follows since $\boldsymbol{\hat{\phi}}_{C}-%
\boldsymbol{\phi }\rightarrow _{p}\boldsymbol{0}$ at the faster rate of $%
\sqrt{n}$, compared with the rate $n^{\frac{1-\alpha }{2}}$ of $\boldsymbol{%
\hat{\beta}}_{C,TMG-TE}-\boldsymbol{\beta }_{0}\rightarrow _{p}\boldsymbol{0}
$. In short, $Avar\left( n^{(1-\alpha )/2}\boldsymbol{\hat{\beta}}%
_{C,TMG-TE}\right) $ is not affected by the estimation uncertainty of the
time effects when $\alpha >0$. A consistent estimator is given by 
\begin{equation}
\widehat{Var\left( \boldsymbol{\hat{\beta}}_{C,TMG-TE}\right) }=\frac{1}{%
n(n-1)(1+\bar{\delta}_{n})^{2}}\sum_{i=1}^{n}\left( \boldsymbol{\tilde{\beta}%
}_{i,C}-\boldsymbol{\hat{\beta}}_{C,TMG-TE}\right) \left( \boldsymbol{\tilde{%
\beta}}_{i,C}-\boldsymbol{\hat{\beta}}_{C,TMG-TE}\right) ^{\prime },
\label{VarbetaC}
\end{equation}%
where $\boldsymbol{\tilde{\beta}}_{i,C}=\boldsymbol{Q}_{i}^{\prime }(%
\boldsymbol{y}_{i}-\boldsymbol{\hat{\phi}}_{C})$.

When $T=k$, Assumption \ref{invertm} does not hold and standard de-meaning
techniques can not be used to eliminate $\boldsymbol{\phi }$. This in turn
requires the correlation of slope coefficients with the regressors to be
time-invariant, a condition that holds trivially under uncorrelated
heterogeneity but could be restrictive when heterogeneity is correlated. The
TMG-TE estimator for the case of $T= k$ and its asymptotic properties are
derived in Section \ref{Teqk} of the mathematical appendix.

For the Monte Carlo experiments and the empirical application, $\boldsymbol{%
\hat{\phi}}_{C}$, given by (\ref{TE2}), is used as an estimator of $%
\boldsymbol{\phi }$ for panels with time effects when $T>k$. When $T=k$, we
use the estimator set out in Section \ref{Teqk} of the mathematical appendix.

\section{A test of correlated heterogeneity\label{Test}}

As summarized by Proposition \ref{prop_confee}, $\sqrt{n}$-consistency of
the FE estimator requires the slope coefficients, $\boldsymbol{\beta }_{i}$,
and the regressors, $\boldsymbol{X}_{i}=\left( \boldsymbol{x}_{i1},%
\boldsymbol{x}_{i2},...,\boldsymbol{x}_{iT}\right) ^{\prime }$, to be
independently distributed. As a diagnostic check, we consider testing the
null hypothesis 
\begin{equation}
H_{0}:\boldsymbol{\eta }_{i}\perp \boldsymbol{X}_{i}\text{ and }E\left( 
\boldsymbol{\eta }_{i}\right) =\boldsymbol{0}\text{, for all }i,
\label{null}
\end{equation}%
where $\boldsymbol{\eta }_{i}=\boldsymbol{\beta }_{i}-\boldsymbol{\beta }%
_{0} $. To test $H_{0}$, we follow \cite{Hausman1978} and compare FE and TMG
estimators given by (\ref{fee}) and (\ref{TMGb}), respectively. Under $H_{0}$%
, both estimators are consistent. But, as required by Hausman tests, only
the TMG estimator is consistent under the alternative 
\begin{equation}
H_{1}:E\left( n^{-1}\sum_{i=1}^{n}\boldsymbol{\bar{\Psi}}_{n}^{-1}%
\boldsymbol{X}_{i}^{\prime }\boldsymbol{M}_{T}\boldsymbol{X}_{i}\boldsymbol{%
\eta }_{i}\right) =\ominus \left( n^{a_{\eta }-1}\right).  \label{H1}
\end{equation}

The Hausman test of $H_{0}$ is based on the difference $\sqrt{n}\boldsymbol{%
\hat{\Delta}}_{\beta }=\sqrt{n}\left( \boldsymbol{\hat{\beta}}_{FE}-%
\boldsymbol{\hat{\beta}}_{TMG}\right) $. Such a test has been considered by 
\cite{PesaranEtal1996} and \cite{PesaranYamagata2008}, assuming the MG
estimator has at least second-order moments.\footnote{%
See pp. 160--162 of \cite{PesaranEtal1996} and p. 53 of \cite%
{PesaranYamagata2008}.} Here we extend this test to cover cases when $T$ is
ultra short and the moment condition (\ref{sufficient}) is not met. Also,
the earlier tests were derived under the null of homogeneity (namely $%
\boldsymbol{\eta }_{i}=\boldsymbol{0}$ for all $i$), whilst the null that we
consider is more general and covers the null of homogeneity as a special
case.

In the development of his test, Hausman originally assumed that one of the
estimators under consideration is asymptotically efficient and showed that
in this case, the asymptotic variance of the difference between the two
estimators is equal to the difference between the two variances. But in the
present application, as shown in Proposition \ref{prop_mgvsfe} and
illustrated by Example \ref{ExampleMG-FE}, neither of the two estimators, $%
\boldsymbol{\hat{\beta}}_{FE}$ and $\boldsymbol{\hat{\beta}}_{TMG}$, are
efficient, and the asymptotic variance of $\sqrt{n}\boldsymbol{\hat{\Delta}}%
_{\beta }$ will not be equal to the difference between $Avar\left( \sqrt{n}%
\boldsymbol{\hat{\beta}}_{TMG}\right) $ and $Avar\left( \sqrt{n}\boldsymbol{%
\hat{\beta}}_{FE}\right) $. See also Section 26.9.1 of \cite{Pesaran2015}.

The following theorem summarizes our main result for the Hausman test
applied to $\sqrt{n}\left( \boldsymbol{\hat{\beta}}_{FE}-\boldsymbol{\hat{%
\beta}}_{TMG}\right) $.

\begin{theorem}
\label{AsyDTest} Suppose for $i=1,2,...,n$ and $t=1,2,...,T$, $y_{it}$ are
generated by the heterogeneous panel data model (\ref{m1}) and Assumptions %
\ref{errors}--\ref{CRE} hold. Then under $H_{0}$, defined by (\ref{null}),
and for a fixed $T\geq k$, $\sqrt{n}\boldsymbol{\hat{\Delta}}_{\beta }=\sqrt{%
n}\left( \boldsymbol{\hat{\beta}}_{FE}-\boldsymbol{\hat{\beta}}_{TMG}\right)
\rightarrow _{d}N(\boldsymbol{0},\boldsymbol{V}_{\Delta })$, as $%
n\rightarrow \infty $, where 
\begin{equation*}
\boldsymbol{V}_{\Delta }=\lim_{n\rightarrow \infty }\frac{1}{n}%
\sum_{i=1}^{n}E\left( \boldsymbol{\Gamma }_{i}\boldsymbol{X}_{i}^{\prime }%
\boldsymbol{M}_{T}\boldsymbol{H}_{i}\boldsymbol{M}_{T}\boldsymbol{X}_{i}%
\boldsymbol{\Gamma }_{i}\right) +\lim_{n\rightarrow \infty }\frac{1}{n}%
\sum_{i=1}^{n}E\left( \boldsymbol{\Gamma }_{i}\boldsymbol{\Psi }_{i}%
\boldsymbol{\Omega }_{\beta }\boldsymbol{\Psi }_{i}\boldsymbol{\Gamma }%
_{i}\right) ,
\end{equation*}%
$\boldsymbol{\Psi }_{i}=\boldsymbol{X}_{i}^{\prime }\boldsymbol{M}_{T}%
\boldsymbol{X}_{i}$, $\boldsymbol{H}_{i}=$ $E(\boldsymbol{u}_{i}\boldsymbol{u%
}_{i}^{\prime }\left\vert \boldsymbol{X}_{i}\right. )$, $\boldsymbol{\Gamma }%
_{i}=\boldsymbol{\bar{\Psi}}_{n}^{-1}-\left( \frac{1+\delta _{i}}{1+\bar{%
\delta}_{n}}\right) \boldsymbol{\Psi }_{i}^{-1}$, $\boldsymbol{\bar{\Psi}}%
_{n}=\frac{1}{n}\sum_{i=1}^{n}\boldsymbol{\Psi }_{i}$, $\boldsymbol{\Omega }%
_{\beta } = Var(\boldsymbol{\beta }_{i}|\boldsymbol{X}_{i})\succeq 
\boldsymbol{0}$, and $\delta _{i}$ is defined by (\ref{deltai}). The
asymptotic variance matrix, $\boldsymbol{V}_{\Delta }$, is positive definite
if either $\lim\limits_{n\rightarrow \infty }\frac{1}{n}\sum_{i=1}^{n}E%
\left( \boldsymbol{\Gamma }_{i}^{2}\right) \succ \boldsymbol{0}$, and/or $%
\lim\limits_{n\rightarrow \infty }\frac{1}{n}\sum_{i=1}^{n}E\left( 
\boldsymbol{\Gamma }_{i}\boldsymbol{\Psi }_{i}\boldsymbol{\Psi }_{i}%
\boldsymbol{\Gamma }_{i}\right) \succ \boldsymbol{0}$ and $\lambda _{\min
}\left( \boldsymbol{\Omega }_{\beta }\right) >c>0$. When one of these
conditions is met, under $H_{0}$ and for a fixed $T\geq k$, $H_{\beta }=n%
\boldsymbol{\hat{\Delta}}_{\beta }^{\prime }\boldsymbol{V}_{\Delta }^{-1}%
\boldsymbol{\hat{\Delta}}_{\beta }\rightarrow _{d}\chi _{k^{\prime }}^{2}$
as $n\rightarrow \infty $. Under the alternative hypothesis, $H_{1}$ defined
by (\ref{H1}), $H_{\beta }=\ominus (n^{2a_{\eta }-1})+O(n^{a_{\eta }-\alpha
\alpha _{p}})+O\left( n^{1-2\alpha \alpha _{p}}\right) $, and $H_{\beta
}\rightarrow \infty $ as $n\rightarrow \infty $, if $a_{\eta }>1/2$.
\end{theorem}

For a proof, see sub-section \ref{ProofAsyDTest} in the mathematical
appendix.

\begin{remark}
The proposed test of correlated heterogeneity is consistent, and its power
rises not only with $a_{\eta }$, but is also enhanced from the correlation
between $\boldsymbol{\eta }_{i}$ and $\delta _{i}$, induced from the
shrinkage of some of the individual estimators used in construction of the
TMG estimator (see equation (\ref{meuB}) in the mathematical appendix). But
under $H_{0}$, $\boldsymbol{\eta }_{i}$ and $\boldsymbol{X}_{i}$ are
independently distributed, which implies that $\boldsymbol{\eta }_{i}$ and $%
\delta _{i}$ are also independently distributed, and shrinkage of individual
estimates does not lead to size distortions.
\end{remark}

To implement the $H_{\beta }$ test, the asymptotic variance, $\boldsymbol{V}%
_{\Delta }$, can be rewritten as $\boldsymbol{V}_{\Delta
}=n^{-1}\sum_{i=1}^{n}\sum_{t=1}^{T}\sum_{t^{\prime }=1}^{T}E\left( 
\boldsymbol{g}_{it}\boldsymbol{g}_{it^{\prime }}^{\prime }\tilde{\nu}_{it}%
\tilde{\nu}_{it^{\prime }}\right) $, where $\tilde{\nu}_{it}=\nu _{it}-\bar{%
\nu}_{i\circ }$, $\bar{\nu}_{i\circ } = \sum_{t=1}^{T}%
\sum_{t^{\prime }=1}^{T}\nu_{it}$, and $\boldsymbol{g}_{it}$ is the $t^{th}$ column of $%
\boldsymbol{G}_{i}$, where (see also (\ref{GiApp}) in the mathematical
appendix)%
\begin{equation}
\boldsymbol{G}_{i}=\boldsymbol{X}_{i}\left[ \left( n^{-1}\sum_{i=1}^{n}%
\boldsymbol{X}_{i}^{\prime }\boldsymbol{M}_{T}\boldsymbol{X}_{i}\right)
^{-1}-\left( \frac{1+\delta _{i}}{1+\bar{\delta}_{n}}\right) \left( 
\boldsymbol{X}_{i}^{\prime }\boldsymbol{M}_{T}\boldsymbol{X}_{i}\right) ^{-1}%
\right].  \label{Gi}
\end{equation}%
For $T$ fixed, a consistent estimator of $\boldsymbol{V}_{\Delta }$, which
is robust to the choices of $\boldsymbol{H}_{i}$ and $\boldsymbol{\Omega }%
_{\beta }$, is 
\begin{equation}
\boldsymbol{\widehat{V}}_{\Delta }=\frac{1}{n}\sum_{i=1}^{n}\sum_{t=1}^{T}%
\sum_{t^{\prime }=1}^{T}\boldsymbol{g}_{it}\boldsymbol{g}_{it^{\prime
}}^{\prime }\hat{\tilde{\nu}}_{it}\hat{\tilde{\nu}}_{it^{\prime }},
\label{VarHbeta}
\end{equation}%
where $\hat{\tilde{\nu}}_{it}=(y_{it}-\bar{y}_{i\circ })-\boldsymbol{\hat{%
\beta}}_{FE}^{\prime }(\boldsymbol{x}_{it}-\boldsymbol{\bar{x}}_{i\circ })$, 
$\bar{y}_{i\circ }=\frac{1}{T}\sum_{t=1}^{T}y_{it}$, and $\boldsymbol{\bar{x}%
}_{i\circ }=\frac{1}{T}\sum_{t=1}^{T}\boldsymbol{x}_{it}$. Using $%
\boldsymbol{\widehat{V}}_{\Delta }$, the Hausman test statistic for
uncorrelated slope heterogeneity is given by 
\begin{equation}
\hat{H}_{\beta }=n\left( \boldsymbol{\hat{\beta}}_{FE}-\boldsymbol{\hat{\beta%
}}_{TMG}\right) ^{\prime }\boldsymbol{\widehat{V}}_{\Delta }^{-1}\left( 
\boldsymbol{\hat{\beta}}_{FE}-\boldsymbol{\hat{\beta}}_{TMG}\right) .
\label{htest}
\end{equation}%
Under $H_{0}$, the consistency of $\boldsymbol{\widehat{V}}_{\Delta }$ as an
estimator of $\boldsymbol{V}_{\Delta }$ follows from the $\sqrt{n}$%
-consistency of the FE estimator.

An extension of the $\hat{H}_{\beta }$ test statistic to panel data models
with time effects is provided in Section \ref{TestTE} of the online
supplement.

\section{Trimmed mean group estimation for a subset of coefficients}

\label{subset}

The TMG approach can be applied to a subset of coefficients of interest. Let 
$\boldsymbol{X}_{i}=(\boldsymbol{X}_{i1},\boldsymbol{X}_{i2})$, where $%
\boldsymbol{X}_{i1}$ is the $T\times p$ matrix of observations on the focal
(treatment) variables and $\boldsymbol{X}_{i2}$ is the $T\times (k^{\prime
}-p)$ matrix of observations on the auxiliary (control) variables, and
partition the coefficients accordingly as $\boldsymbol{\beta }_{i}=(%
\boldsymbol{\beta }_{i1}^{\prime },\boldsymbol{\beta }_{i2}^{\prime
})^{\prime }$, where $\boldsymbol{\beta }_{i1}$ is the $p\times 1$ vector of
the coefficients of interest. Using results from the partitioned
regressions, we have 
\begin{equation*}
\boldsymbol{\hat{\beta}}_{i1}=(\boldsymbol{X}_{i1}^{\prime }\boldsymbol{M}%
_{i2}\boldsymbol{X}_{i1})^{-1}\boldsymbol{X}_{i1}^{\prime }\boldsymbol{M}%
_{i2}\boldsymbol{y}_{i},
\end{equation*}%
where $\boldsymbol{M}_{i2}=\boldsymbol{I}_{T}-\boldsymbol{X}_{i2}\left( 
\boldsymbol{X}_{i2}^{\prime }\boldsymbol{X}_{i2}\right) ^{-}\boldsymbol{X}%
_{i2}^{\prime }$, and $\left( \boldsymbol{X}_{i2}^{\prime }\boldsymbol{X}%
_{i2}\right) ^{-}$ denotes a generalized inverse of $\left( \boldsymbol{X}%
_{i2}^{\prime }\boldsymbol{X}_{i2}\right) $.\footnote{%
Note that $\boldsymbol{\hat{\beta}}_{i1}$ is invariant to the choice of the
generalized inverse, $\left( \boldsymbol{X}_{i2}^{\prime }\boldsymbol{X}%
_{i2}\right) ^{-}$.}

The TMG estimator of $\boldsymbol{\beta }_{01}=E\left( \boldsymbol{\beta }%
_{i1}\right) $ is given by 
\begin{equation*}
\boldsymbol{\hat{\beta}}_{1,TMG}=n^{-1}\sum_{i=1}^{n}\left( \frac{1+\delta
_{i1}}{1+\bar{\delta}_{n1}}\right) \boldsymbol{\hat{\beta}}_{i1},
\end{equation*}%
where $\delta _{i1}=\left( \frac{d_{i1}-a_{n1}}{a_{n1}}\right) \boldsymbol{1}%
\{d_{i1}\leq a_{n1}\}$, $\bar{\delta}_{n1}=\frac{1}{n}\sum_{i=1}^{n}\delta
_{i1}$, and $d_{i1}=\det \left( \boldsymbol{X}_{i1}^{\prime }\boldsymbol{M}%
_{i2}\boldsymbol{X}_{i1}\right) $. Similarly, for the choice of threshold
value, we consider $a_{n1}=\bar{d}_{n1}n^{-\alpha _{1}}$, where $\alpha _{1}>%
\frac{1}{1+2\alpha _{1,p}}$, $\alpha _{1,p}$ is the shape parameter of the
tail probability distribution of $1/d_{i1}$ over $i$, and $\bar{d}%
_{n1}=n^{-1}\sum_{i=1}^{n}d_{i1}$.

The above partitioned formula can also be adapted for testing general linear
restrictions $H_{0}:\boldsymbol{R\beta }_{0}=\boldsymbol{r}$ against $H_{1}:%
\boldsymbol{R\beta }_{0}\neq \boldsymbol{r}$, where $\boldsymbol{r}$ is a $%
p\times 1$ vector and $\boldsymbol{R}$ is a $p\times k^{\prime }$ ($%
p<k^{\prime }$) full rank matrix of fixed constants. Let $\boldsymbol{\gamma 
}_{i}=$ $\boldsymbol{R\beta }_{i}-\boldsymbol{r}$ and partition $\boldsymbol{%
R=(R}_{1},\boldsymbol{R}_{2})$ such that $\boldsymbol{R}_{1}$ is a $p\times
p $ non-singular matrix.\footnote{%
This can be achieved by a suitable reordering of the elements of $%
\boldsymbol{\beta }_{i}$.} Then (\ref{m1}) can be written as%
\begin{equation*}
\boldsymbol{\tilde{y}}_{i}=\alpha _{i}\boldsymbol{\tau }_{T}+\boldsymbol{%
\tilde{X}}_{i1}\boldsymbol{\gamma }_{i}+\boldsymbol{\tilde{X}}_{i2}%
\boldsymbol{\beta }_{i2}+\boldsymbol{u}_{i},
\end{equation*}%
where $\boldsymbol{\tilde{y}}_{i}=\boldsymbol{y}_{i}-\boldsymbol{X}_{i1}%
\boldsymbol{R}_{1}^{-1}\boldsymbol{r}$, $\boldsymbol{\tilde{X}}_{i1}=%
\boldsymbol{X}_{i1}\boldsymbol{R}_{1}^{-1}$, and $\boldsymbol{\tilde{X}}%
_{i2}=\boldsymbol{X}_{i2}-\boldsymbol{X}_{i1}\boldsymbol{R}_{1}^{-1}%
\boldsymbol{R}_{2}$. The TMG estimator of $\boldsymbol{\gamma }_{0}=E\left( 
\boldsymbol{\gamma }_{i}\right) $ can be computed using the unit-specific
estimates of $\boldsymbol{\gamma }_{i}$ as above.

The Hausman test proposed in the previous section can also be applied to a
subset of the coefficients in a straightforward manner.

\section{Monte Carlo experiments\label{MC}}

\subsection{Data generating processes (DGP) \label{DGP}}

The outcome variable, $y_{it}$, is generated as 
\begin{equation}
y_{it}=\alpha _{i}+\sum_{j=1}^{k^{\prime }}\beta _{ij}x_{j,it}+\kappa \sigma
_{it}e_{it}\text{, for }i=1,2,...,n\text{, and }t=1,2,...,T,  \label{ydgp}
\end{equation}%
where the errors, $u_{it}$, are allowed to be serially correlated and
heteroskedastic. Specifically, we set $u_{it}=\kappa \sigma _{it}e_{it}$ and
generate $e_{it}$ as AR(1) processes 
\begin{equation}
e_{it}=\rho _{ie}e_{i,t-1}+\left( 1-\rho _{ie}^{2}\right) ^{1/2}\varsigma
_{it},\text{ for }i=1,2,...,n.  \label{yerror}
\end{equation}%
In the baseline DGP, we set $\sigma _{it}=\sigma _{iu}$ for all $t$ and
generate $\sigma _{iu}^{2}\sim IID\frac{1}{2}\left( 1+z_{iu}^{2}\right)$,
with $z_{iu}\sim IIDN(0,1)$.\footnote{%
More general heteroskedastic specifications for $\sigma _{it}$ are
considered in sub-section \ref{dgprobust} of the online supplement.} Both
Gaussian and non-Gaussian errors are considered: $\varsigma _{it}\sim
IIDN(0,1)$ and $\varsigma _{it}\sim IID\frac{1}{2}\left( \chi
_{2}^{2}-2\right) $.

The regressors, $x_{j,it}$, for $j=1,2,....k^{\prime }$, $i=1,2,...,n$, and $%
t=1,2,...,T$, are generated as factor-augmented AR(1) processes 
\begin{equation}
x_{j,it}=\alpha _{j,ix}(1-\rho _{j,ix})+\gamma _{j,ix}f_{j,t}+\rho
_{j,ix}x_{j,i,t-1}+\left( 1-\rho _{j,ix}^{2}\right) ^{1/2}u_{xj,it},
\label{xdgp}
\end{equation}%
where $\alpha _{j,ix}\sim IIDN(1,1)$, $u_{xj,it}=\sigma _{j,ix}e_{xj,it}$, $%
e_{xj,it}\sim IID(0,1)$, $\sigma _{j,ix}^{2}=\frac{1}{2}\left(
1+z_{j,ix}^{2}\right) $, and $z_{j,ix}\sim IIDN(0,1)$. We consider panels
with $k^{\prime }=1,2$ and $3$ regressors and investigate the extent to
which trimming is required for different combinations of $T$ and $k^{\prime
} $. The common factors are generated as $%
f_{j,t}=0.9f_{j,t-1}+(1-0.9^{2})^{1/2}v_{j,t}$, for $%
t=-49,-48,...,-1,0,1,...,T$, where $v_{j,t}\sim IIDN(0,1)$, and $f_{j,-50}=0 
$. The factor loadings are generated as $\gamma _{j,ix}\sim IIDU(0,2)$. When
time effects are included in the model, we set $\phi _{t}=t$, for $%
t=1,2,...,T-1$, and $\phi _{T}=-T(T-1)/2$, so that $\boldsymbol{\tau }%
_{T}^{\prime }\boldsymbol{\phi }=0$.

For the slope coefficients, we experiment with both correlated and
uncorrelated effects in $\beta _{i1}$, while considering uncorrelated
heterogeneous effects for the other slope coefficients when the respective
regressors are included. Specifically, $\alpha _{i}$ and $\beta _{i1}$ are
generated as 
\begin{equation}
(\alpha _{i},\beta _{i1})^{\prime }=\left( \alpha _{0},\beta _{01}\right)
^{\prime }+\left( \eta _{i\alpha },\eta _{i\beta _{1}}\right) ^{\prime },
\label{coefdgp}
\end{equation}%
where 
\begin{equation}
\boldsymbol{\tilde{\eta}}_{i}=\left( \eta _{i\alpha },\eta _{i\beta
_{1}}\right) ^{\prime }=\left( \frac{\sigma _{1,ix}^{2}-E\left( \sigma
_{1,ix}^{2}\right) }{\sqrt{Var\left( \sigma _{1,ix}^{2}\right) }}\right) 
\boldsymbol{\psi }+\boldsymbol{\epsilon }_{i}=\sqrt{2}(\sigma _{1,ix}^{2}-1)%
\boldsymbol{\psi }+\boldsymbol{\epsilon }_{i},  \label{eta_i}
\end{equation}%
$\boldsymbol{\psi =(}\psi _{\alpha },\psi _{\beta _{1}})^{\prime }$, $%
\boldsymbol{\epsilon }_{i}=(\epsilon _{i\alpha },\epsilon _{i\beta
_{1}})^{\prime }\sim IIDN\left( \boldsymbol{0},\boldsymbol{V}_{\epsilon
}\right) $, and $\boldsymbol{V}_{\epsilon }=Diag(\boldsymbol{\sigma }%
_{\epsilon }^{2})$ with $\boldsymbol{\sigma }_{\epsilon }^{2}=\left( \sigma
_{\epsilon \alpha }^{2},\sigma _{\epsilon \beta _{1}}^{2}\right) ^{\prime }$%
. It follows that 
\begin{equation*}
E(\boldsymbol{\tilde{\eta}}_{i})=\boldsymbol{0}\text{, and }\boldsymbol{V}_{%
\boldsymbol{\tilde{\eta}}}=E(\boldsymbol{\tilde{\eta}}_{i}\boldsymbol{\tilde{%
\eta}}_{i}^{\prime })=\left( 
\begin{array}{cc}
\sigma _{\alpha }^{2} & \sigma _{\alpha \beta _{1}} \\ 
\sigma _{\alpha \beta _{1}} & \sigma _{\beta _{1}}^{2}%
\end{array}%
\right) =\boldsymbol{\psi \psi }^{\prime }+\boldsymbol{V}_{\epsilon }.
\end{equation*}%
The degree of correlated heterogeneity is determined by $\boldsymbol{\psi
\psi }^{\prime }$, with $Cov(\alpha _{i},\beta _{i1})=\sigma _{\alpha \beta
_{1}}\neq 0$ when $\psi _{\alpha }$ and $\psi _{\beta _{1}}$ are both
non-zero. Specifically, $\sigma _{\alpha }^{2}=\psi _{\alpha }^{2}+\sigma
_{\epsilon \alpha }^{2}$, $\sigma _{\alpha \beta _{1}}=\psi _{\alpha }\psi
_{\beta _{1}}$, and $\sigma _{\beta _{1}}^{2}=\psi _{\beta _{1}}^{2}+\sigma
_{\epsilon \beta _{1}}^{2}$. The coefficients of $x_{j,it}$, for $%
j=2,3,...,k^{\prime }$ are generated as $\beta _{ij}=\beta _{0j}+\epsilon
_{i\beta _{j}}$ with $\epsilon _{i\beta _{j}}\sim IIDN(0,\sigma _{\epsilon
\beta _{j}}^{2})$, which are not correlated with any of the regressors. The
true values of the parameters of interest are set as follows: $E(\alpha
_{i})=\alpha _{0}=1$ and $E(\beta _{ij})=\beta _{0j}=1$ for $%
j=1,2,...,k^{\prime }$. We also set $\sigma _{\alpha }^{2}=0.5$, $\sigma
_{\beta _{1}}^{2}=0.75$, and $\sigma _{\epsilon \beta _{j}}^{2}=0.5$ for $%
j=2,...,k^{\prime }$.

When $k=k^{\prime }+1=2$, for a fixed $T\geq k$, the asymptotic bias of the
FE estimator of $\beta _{01}=E(\beta _{i1})$ is given by (see (\ref%
{AsyBiasFE})) 
\begin{equation*}
\func{plim}_{n\rightarrow \infty }\left( \hat{\beta}_{1,FE}-\beta
_{01}\right) =\frac{\sum_{t=1}^{T}\lim_{n\rightarrow \infty
}n^{-1}\sum_{i=1}^{n}E\left[ (x_{1,it}-\bar{x}_{1,iT})^{2}\eta _{i\beta _{1}}%
\right] }{\sum_{t=1}^{T}\lim_{n\rightarrow \infty }n^{-1}\sum_{i=1}^{n}E%
\left[ (x_{1,it}-\bar{x}_{1,iT})^{2}\right] }.
\end{equation*}%
The size of the bias will depend on $\psi _{\beta _{1}}$ and the parameters
of $x_{1,it}$ process. The exact expression for this bias simplifies
considerably if $\rho _{1,ix}=0$ (no dynamics in the $x_{1,it}$ equation).
In this case%
\begin{equation*}
\func{plim}_{n\rightarrow \infty }\left( \hat{\beta}_{1,FE}-\beta
_{01}\right) =\frac{\sqrt{2}\left[ E\left( \sigma _{1,ix}^{4}\right)
-E\left( \sigma _{1,ix}^{2}\right) \right] \psi _{\beta _{1}}\left( \frac{T-1%
}{T}\right) }{\left[ T^{-1}\sum_{t=1}^{T}\left( f_{1,t}-\bar{f}_{1,T}\right)
^{2}\right] E\left( \gamma _{1,ix}^{2}\right) +\left( \frac{T-1}{T}\right)
E\left( \sigma _{1,ix}^{2}\right) },
\end{equation*}%
which, noting that $E\left( \sigma _{1,ix}^{2}\right) =1$, $E\left( \sigma
_{1,ix}^{4}\right) =3/2$ and $E\left( \gamma _{1,ix}^{2}\right) =4/3$,
simplifies to 
\begin{equation*}
\func{plim}_{n\rightarrow \infty }\left( \hat{\beta}_{1,FE}-\beta
_{01}\right) =\frac{\left( \frac{T-1}{T}\right) \sqrt{2}/2\psi _{\beta _{1}}%
}{\left[ T^{-1}\sum_{t=1}^{T}\left( f_{1,t}-\bar{f}_{1,T}\right) ^{2}\right]
\left( 4/3\right) +\left( \frac{T-1}{T}\right) }.
\end{equation*}%
It is also worth noting that when $n$ and $T\rightarrow \infty $, jointly,
then $\func{plim}_{n,T\rightarrow \infty }(\hat{\beta}_{1,FE}-\beta
_{01})=0.303\psi _{\beta _{1}}$ without interactive effects in $x_{1, it}$
process, and the bias of the FE estimator does not vanish even if $%
T\rightarrow \infty $.

For the baseline DGP, we generate the errors in the outcome equation as
chi-squared without serial correlation ($\rho _{ie}=0$ in (\ref{yerror})). $%
x_{j,it}$ are generated without dynamics or interactive effects ($\rho
_{j,ix}=0$ and $\gamma _{j,ix}=0$ in (\ref{xdgp})). We set $\psi _{\alpha
}=0.5$ for individual fixed effects. For the slope coefficient, we consider
three possibilities: (a) uncorrelated heterogeneity, $\psi _{\beta _{1}}=0$,
(b) a medium level of correlated heterogeneity, $\psi _{\beta _{1}}=0.5$,
generating a bias around $15\%$ when $k^{\prime }=1$, and (c) a high level, $%
\psi _{\beta _{1}}=0.8$, leading to a bias of around $24\%$ for the FE
estimator when $k^{\prime }=1$. For each choice of $\psi _{\beta _{1}}$, the
scalar parameter $\kappa $ in (\ref{ydgp}) is set such that the pooled $%
R^{2} $ ($PR^{2}$) of panel regressions is around $0.2$. This is achieved by
stochastic simulation for each $T$, as described in sub-section \ref{Simuk}
of the online supplement. We also experiment with a medium level of fit by
setting $PR^{2}=0.4$ when $k^{\prime }=1$.\footnote{%
The scaling of the errors in the outcome equation, $\kappa $, is set when
the DGP contains one regressor. As a result, the $PR^{2}$ will be slightly
higher than the target values of $0.2$ and $0.4$, when $k^{\prime }=2$ or $3$%
.}

We consider different values of $T$ depending on $k^{\prime }$ up to $T=15$.
We also carried out a number of robustness checks, detailed in Section \ref%
{dgprobust} of the online supplement.

\subsection{Monte Carlo findings}

\subsubsection{Estimates of the tail index of the distribution of $1/d_{i}$}

\label{MCalphap}

We first provide estimates of the shape parameter of Pareto distributions
fitted to $z_{i}=1/\func{det}(\boldsymbol{X}_{i}^{\prime }\boldsymbol{M}_{T}%
\boldsymbol{X}_{i})$, $i=1,2,...,n$, for different choices of $T$ and $%
k^{\prime }$. We use the estimator of $\alpha _{p}$ proposed by \cite%
{Hill1975}, which is given by 
\begin{equation}
\hat{\alpha}_{p,Hill}=\frac{(m+1)}{\sum_{j=1}^{m}\ln \left( z^{(j)}\right)
-m\ln \left( z^{(m+1)}\right) },  \label{aphill}
\end{equation}%
where $z^{(1)}\geq z^{(2)}....\geq z^{(m)}$ are the first $m$ largest values
of $z_{i}$, and $m$ is the cut-off point set such that $m$ and $%
n/m\rightarrow \infty $, as $n\rightarrow \infty $.\footnote{%
See also \cite{Pickands1975} and Section 6 of \cite{PesaranYang2020} for
more recent literature on estimation of $\alpha _{p}$.}

Table \ref{tab:ap_mc} summarizes the estimates of $\alpha _{p}$ for $n=5,000$
and two cut-off values: $m=n^{1/2}$ and $n^{1/3}$, in the case of panels
with $k^{\prime }=1,2$ and $3$ regressors. As to be expected, the estimates
of $\alpha _{p}$ are larger for the lower cut-off value, although the
differences between the two estimates are quite small when $T=k^{\prime }+1$%
. It is also interesting that when $T=k^{\prime }+1$, the estimates of $%
\alpha _{p}$ lie in the narrow interval [$0.51,0.57$], irrespective of the
value of $k^{\prime }=1,2$ and $3$; thus indicating lack of moments for the
individual estimates and the need for trimming. This finding is not affected
when we consider more general $\{\boldsymbol{x}_{it}\}$ processes. See Table %
\ref{tab:ap_mc_3} in the online supplement for $k^{\prime }=1$. Also, as to
be expected, the estimates of $\alpha _{p}$ rise with $T-k^{\prime}$ and
exceed the threshold value of $2$ for both choices of cut-off values when $%
T>2k^{\prime }+3=2k+1$. $T=2k+1$ represents a borderline case where $\alpha
_{p}$ is estimated to be very close to $2$, but there is still a wide margin
of uncertainty. For example, for $k^{\prime }=1$ and using the cut-off value
of $n^{1/2}$, estimates of $\alpha _{p}$ for $T=4$ and $5$ are given by $%
1.96 $ $(0.23)$ and $2.37$ $(0.28)$, respectively. The standard errors are
in parentheses.

\begin{table}[h]
\caption{The estimates of $\protect\alpha _{p}$, the tail index of the
distribution of $1/d_{i}$, where $d_{i}=\func{det}(\boldsymbol{X}%
_{i}^{\prime }\boldsymbol{M}_{T}\boldsymbol{X}_{i})$, by Hill's method for
two choices of cut-off values}\vspace{-6mm}
\par
\begin{center}
\scalebox{0.75}{
\begin{tabular}{cccccccccccccccccccc}
\hline \hline
\multicolumn{6}{c}{One regressor $(k^{\prime}=1)$} &  & \multicolumn{6}{c}{Two regressors $(k^{\prime}=2)$} &  & \multicolumn{6}{c}{Three regressors $(k^{\prime}=3)$} \\ \cline{1-6} \cline{8-13} \cline{15-20}
Cut-off & \multicolumn{2}{c}{$n^{1/2}$} && \multicolumn{2}{c}{$n^{1/3}$} &  &  & \multicolumn{2}{c}{$n^{1/2}$} && \multicolumn{2}{c}{$n^{1/3}$} &  &  & \multicolumn{2}{c}{$n^{1/2}$} && \multicolumn{2}{c}{$n^{1/3}$} \\ \cline{1-3} \cline{5-6} \cline{9-10} \cline{12-13} \cline{16-17} \cline{19-20}
$T$ & $\hat{\alpha}_{p}$ & & & $\hat{\alpha}_{p}$ & &   & $T$ & $\hat{\alpha}_{p}$ & & & $\hat{\alpha}_{p}$ & &   & $T$ & $\hat{\alpha}_{p}$ & & & $\hat{\alpha}_{p}$  & \\ \hline 
& \multicolumn{19}{c}{$n=5,000$} \\ \hline
2 & 0.51 & (0.06) && 0.56 & (0.13) &  & 3 & 0.51 & (0.06) && 0.56 & (0.13) &  & 4 & 0.51 & (0.06) && 0.57 & (0.13) \\
3 & 1.02 & (0.12) && 1.13 & (0.27) &  & 4 & 0.98 & (0.12) && 1.10 & (0.26) &  & 5 & 0.95 & (0.11) && 1.08 & (0.25) \\
4 & 1.51 & (0.18) && 1.67 & (0.39) &  & 5 & 1.39 & (0.16) && 1.59 & (0.37) &  & 6 & 1.31 & (0.15) && 1.52 & (0.36) \\
5 & 1.96 & (0.23) && 2.19 & (0.52) &  & 6 & 1.73 & (0.20) && 1.98 & (0.47) &  & 7 & 1.59 & (0.19) && 1.87 & (0.44) \\
6 & 2.37 & (0.28) && 2.68 & (0.63) &  & 7 & 2.03 & (0.24) && 2.37 & (0.56) &  & 8 & 1.83 & (0.22) && 2.20 & (0.52) \\
8 & 3.11 & (0.37) && 3.59 & (0.85) &  & 8 & 2.29 & (0.27) && 2.71 & (0.64) &  & 9 & 2.02 & (0.24) && 2.42 & (0.57) \\
10 & 3.73 & (0.44) && 4.32 & (1.02) &  & 10 & 2.76 & (0.33) && 3.29 & (0.77) &  & 10 & 2.22 & (0.26) && 2.68 & (0.63) \\
15 & 5.10 & (0.60) && 6.04 & (1.42) &  & 15 & 3.67 & (0.43) && 4.52 & (1.07) &  & 15 & 2.95 & (0.35) && 3.66 & (0.86)
\\\hline\hline
\end{tabular}
}
\end{center}
\par
\vspace{-1mm} 
\begin{spacing}{1}
{\footnotesize 
Notes: The estimates of $\alpha_{p}$ and their standard errors are computed using Hill's estimation procedure, with the cut-off values $n^{1/2}$ and $n^{1/3}$. $\boldsymbol{X}_{i} = (\boldsymbol{x}_{i1}, \boldsymbol{x}_{i2}, ..., \boldsymbol{x}_{iT})^{\prime}$, where the $k^{\prime} \times 1$ regressors, $\boldsymbol{x}_{it}$, are generated as specified in the baseline DGP. $\boldsymbol{M}_{T} = \boldsymbol{I}_{T} - \boldsymbol{\tau}_{T}\boldsymbol{\tau}_{T}^{\prime}/T$. The numbers in brackets are standard errors. }
\end{spacing}
\label{tab:ap_mc}
\end{table}

Based on these estimates, it seems plausible to conclude that trimming
should be considered if $T<2k+1$.\footnote{%
Additional estimation results for $\alpha _{p}$ are provided in Tables \ref%
{tab:ap_mc_2} and \ref{tab:ap_mc_3} in Section \ref{MCap} of the online
supplement.} This conclusion is further supported when we compare the
performance of MG and TMG estimators for different choices of $k$ and $T$
using Monte Carlo experiments, to which we now turn.

\subsubsection{Comparison of TMG, FE, and MG estimators \label{MCfe}}

We begin by comparing the performance of the TMG estimator with those of FE
and MG estimators under both uncorrelated and correlated heterogeneity. We
consider the sample size combinations $n=1,000,2,000,5,000$, $10,000$, $%
T=2,3,...,15$ subject to $T\geq k^{\prime }+1$, for $k^{\prime }=1,2$ and $3$%
. Recall that $k^{\prime }$ is the number of regressors and $k=k^{\prime }+1$%
. The TMG estimator depends on the indicator, $\boldsymbol{1}\{d_{i}>a_{n}\}$%
, where $a_{n}=C_{n}n^{-\alpha }$. In view of the discussion in Section \ref%
{Threshold} on the choice of $\alpha $, we consider the values of $\alpha
=1/3,0.35$ and $1/2$, and as discussed earlier we set $C_{n}=\bar{d}%
_{n}=n^{-1}\sum_{i=1}^{n}d_{i}>0$, where $d_{i}=\func{det}(\boldsymbol{X}%
_{i}^{\prime }\boldsymbol{M}_{T}\boldsymbol{X}_{i})$. In what follows, we
report the results for the TMG estimator with $\alpha =1/3$, but discuss the
sensitivity of the TMG estimator to the choice of $\alpha $ in Section \ref%
{MChatalpha} of the online supplement. All MC experiments are based on $%
R=2,000 $ replications.

\begin{sidewaystable}
\caption{Bias, RMSE and size of FE, MG and TMG estimators of $\beta_{01}$ $(E(\beta_{i1}) = \beta_{01}=1)$ in the baseline DGP with one regressor, without time effects} 
\label{tab:T_d1_c12_chi2_tex0}
\vspace{-7mm}
\begin{center}
\scalebox{0.65}{
\begin{tabular}{rrcrrrrrrrrrrrrrrrrrrrrrrrrr}
 \hline\hline &  \multicolumn{13}{c}{ Uncorrelated heterogeneity: $\psi_{\beta_{1}} = 0 $, $PR^{2}=0.2$ } &  & \multicolumn{13}{c}{ Correlated heterogeneity: $\psi_{\beta_{1}}=0.5$, $PR^{2}=0.2$ }  \\ \cline{2-14} \cline{16-28}   
& $\hat{\pi}$ $(\times 100)$  & & \multicolumn{3}{c}{Bias}  & & \multicolumn{3}{c}{RMSE} && \multicolumn{3}{c}{Size $(\times 100)$} && $\hat{\pi}$ $(\times 100)$ & & \multicolumn{3}{c}{Bias}  & & \multicolumn{3}{c}{RMSE} && \multicolumn{3}{c}{Size $(\times 100)$} \\ \cline{2-2} \cline{4-6} \cline{8-10} \cline{12-14} \cline{16-16} \cline{18-20} \cline{22-24} \cline{26-28}  
 $T$ & TMG & & FE & MG & TMG && FE & MG & TMG && FE & MG & TMG &&  TMG & & FE & MG & TMG && FE & MG & TMG && FE & MG & TMG    \\ \hline 
&   \multicolumn{27}{c}{$n=1,000$} \\ \hline
        2 & 27.30 & ~ & 0.001 & 12.871 & -0.004 & ~ & 0.129 & 744.592 & 0.238 & ~ & 5.0 & 2.8 & 5.0 & ~ & 27.30 & ~ & 0.354 & 14.537 & 0.012 & ~ & 0.395 & 841.266 & 0.268 & ~ & 49.8 & 2.8 & 5.1 \\ 
        3 & 12.00 & ~ & 0.002 & -0.002 & 0.001 & ~ & 0.096 & 0.341 & 0.147 & ~ & 5.7 & 4.2 & 5.2 & ~ & 12.00 & ~ & 0.350 & -0.007 & 0.006 & ~ & 0.371 & 0.386 & 0.165 & ~ & 77.4 & 4.2 & 5.2 \\ 
        4 & 5.90 & ~ & -0.002 & -0.003 & -0.003 & ~ & 0.079 & 0.134 & 0.109 & ~ & 5.2 & 4.6 & 5.0 & ~ & 5.90 & ~ & 0.347 & -0.007 & -0.002 & ~ & 0.361 & 0.151 & 0.122 & ~ & 90.3 & 4.7 & 5.1 \\ 
        5 & 3.10 & ~ & 0.001 & -0.001 & -0.001 & ~ & 0.069 & 0.100 & 0.092 & ~ & 4.3 & 5.3 & 5.4 & ~ & 3.10 & ~ & 0.352 & -0.005 & -0.002 & ~ & 0.363 & 0.111 & 0.102 & ~ & 95.0 & 5.2 & 5.2 \\ 
        6 & 1.70 & ~ & 0.000 & 0.000 & 0.001 & ~ & 0.064 & 0.083 & 0.080 & ~ & 5.3 & 5.2 & 5.5 & ~ & 1.70 & ~ & 0.351 & -0.004 & -0.002 & ~ & 0.360 & 0.091 & 0.088 & ~ & 97.9 & 4.4 & 4.8 \\ 
        8 & 0.60 & ~ & 0.000 & 0.001 & 0.001 & ~ & 0.056 & 0.064 & 0.063 & ~ & 4.8 & 4.9 & 5.0 & ~ & 0.60 & ~ & 0.351 & -0.003 & -0.003 & ~ & 0.357 & 0.069 & 0.068 & ~ & 99.6 & 4.5 & 4.3 \\ 
        10 & 0.20 & ~ & 0.001 & 0.002 & 0.002 & ~ & 0.050 & 0.055 & 0.055 & ~ & 4.8 & 4.9 & 4.9 & ~ & 0.20 & ~ & 0.351 & -0.002 & -0.002 & ~ & 0.356 & 0.059 & 0.059 & ~ & 100.0 & 4.2 & 4.2 \\ 
        15 & 0.00 & ~ & 0.000 & 0.002 & 0.002 & ~ & 0.045 & 0.046 & 0.046 & ~ & 5.1 & 4.7 & 4.7 & ~ & 0.00 & ~ & 0.351 & -0.003 & -0.003 & ~ & 0.354 & 0.047 & 0.047 & ~ & 100.0 & 3.8 & 3.8 \\ 
 &    \multicolumn{27}{c}{$n=2,000$} \\ \hline
        2 & 24.50 & ~ & 0.002 & -32.239 & 0.001 & ~ & 0.094 & 1307.118 & 0.179 & ~ & 5.8 & 1.7 & 5.1 & ~ & 24.50 & ~ & 0.331 & -36.440 & 0.004 & ~ & 0.351 & 1476.827 & 0.202 & ~ & 80.3 & 1.8 & 5.2 \\ 
        3 & 9.60 & ~ & -0.001 & -0.009 & -0.003 & ~ & 0.068 & 0.263 & 0.108 & ~ & 5.7 & 4.9 & 5.0 & ~ & 9.60 & ~ & 0.326 & -0.025 & -0.011 & ~ & 0.337 & 0.298 & 0.122 & ~ & 97.1 & 4.9 & 4.9 \\ 
        4 & 4.30 & ~ & -0.001 & -0.001 & -0.001 & ~ & 0.056 & 0.097 & 0.081 & ~ & 4.9 & 4.8 & 5.2 & ~ & 4.30 & ~ & 0.328 & -0.016 & -0.012 & ~ & 0.335 & 0.109 & 0.091 & ~ & 99.7 & 5.1 & 5.4 \\ 
        5 & 2.00 & ~ & 0.001 & 0.003 & 0.002 & ~ & 0.049 & 0.070 & 0.065 & ~ & 4.3 & 5.1 & 4.3 & ~ & 2.00 & ~ & 0.329 & -0.012 & -0.012 & ~ & 0.334 & 0.078 & 0.072 & ~ & 100.0 & 4.9 & 4.4 \\ 
        6 & 1.00 & ~ & 0.000 & 0.000 & 0.000 & ~ & 0.045 & 0.059 & 0.058 & ~ & 4.8 & 5.2 & 5.1 & ~ & 1.00 & ~ & 0.328 & -0.015 & -0.014 & ~ & 0.333 & 0.067 & 0.065 & ~ & 100.0 & 5.7 & 5.6 \\ 
        8 & 0.30 & ~ & 0.000 & 0.001 & 0.001 & ~ & 0.039 & 0.046 & 0.045 & ~ & 4.9 & 4.6 & 4.5 & ~ & 0.30 & ~ & 0.328 & -0.013 & -0.013 & ~ & 0.332 & 0.052 & 0.051 & ~ & 100.0 & 5.2 & 5.1 \\ 
        10 & 0.10 & ~ & -0.001 & -0.001 & -0.001 & ~ & 0.036 & 0.041 & 0.041 & ~ & 4.5 & 5.1 & 5.0 & ~ & 0.10 & ~ & 0.329 & -0.016 & -0.016 & ~ & 0.331 & 0.046 & 0.046 & ~ & 100.0 & 6.2 & 6.2 \\ 
        15 & 0.00 & ~ & -0.001 & -0.001 & -0.001 & ~ & 0.031 & 0.032 & 0.032 & ~ & 5.6 & 5.0 & 5.0 & ~ & 0.00 & ~ & 0.327 & -0.016 & -0.016 & ~ & 0.329 & 0.037 & 0.037 & ~ & 100.0 & 6.6 & 6.5 \\ 
 &    \multicolumn{27}{c}{$n=5,000$} \\ \hline
        2 & 21.10 & ~ & -0.001 & -2.518 & -0.001 & ~ & 0.057 & 115.070 & 0.123 & ~ & 3.6 & 2.0 & 5.4 & ~ & 21.10 & ~ & 0.317 & -2.856 & 0.002 & ~ & 0.324 & 130.011 & 0.139 & ~ & 99.6 & 2.0 & 5.2 \\ 
        3 & 7.20 & ~ & 0.000 & -0.003 & 0.001 & ~ & 0.043 & 0.192 & 0.073 & ~ & 5.5 & 4.2 & 5.4 & ~ & 7.20 & ~ & 0.319 & -0.015 & -0.005 & ~ & 0.323 & 0.217 & 0.082 & ~ & 100.0 & 4.5 & 5.3 \\ 
        4 & 2.80 & ~ & 0.001 & 0.003 & 0.002 & ~ & 0.037 & 0.061 & 0.051 & ~ & 5.9 & 4.2 & 3.6 & ~ & 2.80 & ~ & 0.320 & -0.008 & -0.006 & ~ & 0.322 & 0.069 & 0.057 & ~ & 100.0 & 4.4 & 4.0 \\ 
        5 & 1.20 & ~ & 0.000 & -0.001 & -0.001 & ~ & 0.032 & 0.046 & 0.044 & ~ & 5.2 & 5.1 & 5.2 & ~ & 1.20 & ~ & 0.318 & -0.013 & -0.012 & ~ & 0.320 & 0.052 & 0.050 & ~ & 100.0 & 5.7 & 5.9 \\ 
        6 & 0.50 & ~ & -0.001 & 0.000 & 0.000 & ~ & 0.029 & 0.038 & 0.038 & ~ & 5.3 & 5.2 & 5.0 & ~ & 0.50 & ~ & 0.318 & -0.012 & -0.012 & ~ & 0.320 & 0.044 & 0.043 & ~ & 100.0 & 5.7 & 5.3 \\ 
        8 & 0.10 & ~ & -0.001 & -0.001 & -0.001 & ~ & 0.025 & 0.030 & 0.030 & ~ & 5.4 & 5.1 & 5.2 & ~ & 0.10 & ~ & 0.318 & -0.013 & -0.013 & ~ & 0.319 & 0.035 & 0.035 & ~ & 100.0 & 6.6 & 6.6 \\ 
        10 & 0.00 & ~ & -0.001 & -0.001 & -0.001 & ~ & 0.023 & 0.026 & 0.026 & ~ & 5.2 & 4.9 & 4.8 & ~ & 0.00 & ~ & 0.318 & -0.013 & -0.013 & ~ & 0.319 & 0.030 & 0.030 & ~ & 100.0 & 6.7 & 6.6 \\ 
        15 & 0.00 & ~ & 0.000 & 0.000 & 0.000 & ~ & 0.020 & 0.021 & 0.021 & ~ & 5.0 & 5.1 & 5.1 & ~ & 0.00 & ~ & 0.319 & -0.011 & -0.011 & ~ & 0.320 & 0.024 & 0.024 & ~ & 100.0 & 6.0 & 6.0 \\ 
 &    \multicolumn{27}{c}{$n=10,000$} \\ \hline
        2 & 18.90 & ~ & 0.000 & -0.063 & -0.003 & ~ & 0.042 & 73.476 & 0.093 & ~ & 4.9 & 2.4 & 5.0 & ~ & 18.90 & ~ & 0.333 & -0.078 & 0.005 & ~ & 0.336 & 83.016 & 0.105 & ~ & 100.0 & 2.4 & 4.8 \\ 
        3 & 5.80 & ~ & -0.001 & -0.002 & -0.001 & ~ & 0.030 & 0.146 & 0.054 & ~ & 4.9 & 4.8 & 5.1 & ~ & 5.80 & ~ & 0.332 & -0.009 & -0.002 & ~ & 0.334 & 0.166 & 0.060 & ~ & 100.0 & 4.9 & 4.9 \\ 
        4 & 2.00 & ~ & 0.000 & 0.000 & 0.000 & ~ & 0.025 & 0.045 & 0.039 & ~ & 4.9 & 5.2 & 5.6 & ~ & 2.00 & ~ & 0.333 & -0.006 & -0.004 & ~ & 0.334 & 0.051 & 0.044 & ~ & 100.0 & 5.2 & 5.6 \\ 
        5 & 0.80 & ~ & 0.000 & -0.002 & -0.002 & ~ & 0.023 & 0.033 & 0.031 & ~ & 5.9 & 5.6 & 5.3 & ~ & 0.80 & ~ & 0.333 & -0.008 & -0.008 & ~ & 0.334 & 0.037 & 0.036 & ~ & 100.0 & 5.9 & 5.8 \\ 
        6 & 0.30 & ~ & 0.000 & 0.000 & 0.000 & ~ & 0.020 & 0.026 & 0.026 & ~ & 4.9 & 4.5 & 4.7 & ~ & 0.30 & ~ & 0.333 & -0.006 & -0.005 & ~ & 0.333 & 0.029 & 0.029 & ~ & 100.0 & 4.5 & 4.4 \\ 
        8 & 0.00 & ~ & 0.000 & -0.001 & -0.001 & ~ & 0.018 & 0.022 & 0.022 & ~ & 5.4 & 5.9 & 5.9 & ~ & 0.00 & ~ & 0.333 & -0.007 & -0.007 & ~ & 0.334 & 0.024 & 0.024 & ~ & 100.0 & 5.8 & 5.9 \\ 
        10 & 0.00 & ~ & 0.000 & 0.000 & 0.000 & ~ & 0.017 & 0.018 & 0.018 & ~ & 5.7 & 4.7 & 4.7 & ~ & 0.00 & ~ & 0.333 & -0.006 & -0.006 & ~ & 0.333 & 0.021 & 0.020 & ~ & 100.0 & 5.9 & 5.8 \\ 
        15 & 0.00 & ~ & 0.000 & 0.000 & 0.000 & ~ & 0.014 & 0.015 & 0.015 & ~ & 4.8 & 5.4 & 5.4 & ~ & 0.00 & ~ & 0.333 & -0.006 & -0.006 & ~ & 0.333 & 0.016 & 0.016 & ~ & 100.0 & 5.4 & 5.4 \\ 
   \hline
\hline
\end{tabular}
}
\end{center}
\vspace{-3mm}
{\footnotesize
Notes: 
(i) The baseline DGP is generated as $y_{it}=\alpha_{i} + \beta_{i1} x_{1,it} + u_{it}$, where the errors processes for $y_{it}$ and $x_{1,it}$ equations are chi-squared and Gaussian, respectively, $x_{1,it}$ are generated without autoregressions, $\rho_{1,ix}=0$, or interactive effects, $\gamma_{1,ix}=0$, and $\psi_{\beta_{1}}$ (the degree of correlated heterogeneity) is defined by (\ref{eta_i}).  
For further details see Section \ref{DGP}. 
(ii) FE and MG estimators are given by (\ref{fee}) and (\ref{mge}), respectively. 
The TMG estimator and its asymptotic variance estimator are given by (\ref{TMGb}) and (\ref{varC}). 
(iii) The trimming threshold value for the TMG estimator is given by $a_{n}=\bar{d}_{n} n^{-\alpha}$, where $\bar{d}_{n} =\frac{1}{n} \sum_{i}^{n} d_{i}$ and $d_{i} =\func{det}(\boldsymbol{X}_{i}^{\prime} \boldsymbol{M}_{T} \boldsymbol{X}_{i})$ with $\boldsymbol{X}_{i}=(\boldsymbol{x}_{i1},\boldsymbol{x}_{i2},...,\boldsymbol{x}_{iT})^{\prime}$, $\boldsymbol{M}_{T} = \boldsymbol{I}_{T} - \boldsymbol{\tau}_{T}\boldsymbol{\tau}_{T}^{\prime}/T$, $\boldsymbol{I}_{T}$ being a $T \times T$ identity matrix and $\boldsymbol{\tau}_{T}$ being a $T \times 1$ vector of of ones. 
$\alpha$ is set to $1/3$. 
$\hat{\pi}$ is the simulated fraction of individual estimates being trimmed, defined by (\ref{pin}). 
}
\end{sidewaystable}

Table \ref{tab:T_d1_c12_chi2_tex0} reports bias, root mean squared errors
(RMSE) and size for estimation of $E(\beta _{i1})=\beta _{01}$ in the case
of DGPs with one regressor, $k^{\prime }=1$. The left panel of the table
provides results when heterogeneity is uncorrelated (i.e. $\psi _{\beta
_{1}}=0$), whilst the right panel of the table gives the results for the
case of correlated heterogeneity with $\psi _{\beta _{1}}=0.5$. The fraction
of the trimmed estimates, $\pi _{n}$, defined by (\ref{pin}), tends to be
quite large for the case where $T=k$, but falls quite rapidly as $T$ $-k$ is
increased. For example, for $T=2$ and $n=1,000$, as many as $27.3$ per cent
of the individual estimates are trimmed when computing the TMG estimates,
but this fraction falls to $0.6$ per cent when the number of time periods is
increased to $T=6$. However, recall that the TMG estimator continues to make
use of the trimmed estimates, as can be seen from (\ref{betaTn}), and the
TMG estimator shows little bias compared to the (untrimmed) MG estimator.
The TMG and MG estimators converge as $T$ is increased, and they are almost
identical for the panels when $T\geq 8$. This is in line with the two
estimates of $\alpha _{p}$ reported in Table \ref{tab:ap_mc_3} for $%
k^{\prime }=1$ and $T=8$. Both estimates ($3.11$ and $3.59$) are well in
excess of $2$, such that all individual estimates have second-order moments
and therefore no trimming is required.

Comparing TMG and FE estimators, we first note that in line with the theory,
the FE estimator performs very well under uncorrelated heterogeneity but is
badly biased when heterogeneity is correlated. Further, this bias does not
diminish if $n$ and $T$ are increased. The simulated bias of the FE
estimator in the case where $T=k=2$ and $n=1,000$ amounts to $0.354$, which
is close to the analytical result presented in Section \ref{DGP}. When
heterogeneity is correlated, the FE estimator also exhibits substantial size
distortions, which tend to get accentuated as $n$ is increased for a given $%
T $. In contrast, the TMG estimator is robust to the choice of $\psi _{\beta
_{1}}$ and delivers size very close to the assumed five per cent level.%
\footnote{%
Increasing $PR^{2}$ from $0.2$ to $0.4$ does not affect the bias and RMSE of
the FE estimator but results in a higher degree of size distortion under
correlated heterogeneity. See the results summarized in the right panel of
Table \ref{tab:T_d1_c12_chi2_tex0} and Table \ref{tab:T_d1_c34_chi2_tex0} in
the online supplement.}

The empirical power functions for TMG and FE estimators in the case of a
single regressor and for the sample sizes $n=10,000$ and $T=2$, $3$, and $4$%
, are displayed in Figure \ref{fig:fe_tmg_k2_base_1}. As can be seen, under
uncorrelated heterogeneity (the left panel with $\psi _{\beta _{1}}=0$),
both estimators are centered correctly around $\beta _{01}=1$, with the FE
estimator having better power properties. But the differences between the
power of FE and TMG estimators shrink rapidly and become negligible as $T$
is increased from $T=2$ to $T=4$.\footnote{%
But as shown in Example \ref{ExampleMG-FE}, it does not necessarily follow
that the FE estimator will dominate the TMG estimator in terms of efficiency
under uncorrelated heterogeneity. See the left panel of Table \ref%
{tab:T_d1_c2_hk23_chi2_tex0} and Figure \ref{fig:fe_tmg_k2_hetrosk} in the
online supplement.} The right panel provides the power plots under
correlated heterogeneity with $\psi _{\beta _{1}}=0.5$. In this case, the
empirical power functions of the FE estimator now shift markedly to the
right, away from the true value, an outcome that becomes more sharpened as $%
T $ is increased. In contrast, the empirical power functions for the TMG
estimator are always centered correctly and are robust to the choice of $%
\psi _{\beta _{1}}$.

\begin{figure}[h!]
\caption{Empirical power functions for FE, MG, and TMG estimators of $%
\protect\beta _{01}$ $(E(\protect\beta_{i1}) = \protect\beta_{01}=1)$ in the
baseline DGP with one regressor, without time effects, for $n=10,000$ and $%
T=2,3,4$}
\label{fig:fe_tmg_k2_base_1}\vspace{-7mm}
\par
\begin{center}
\includegraphics[scale=0.25]{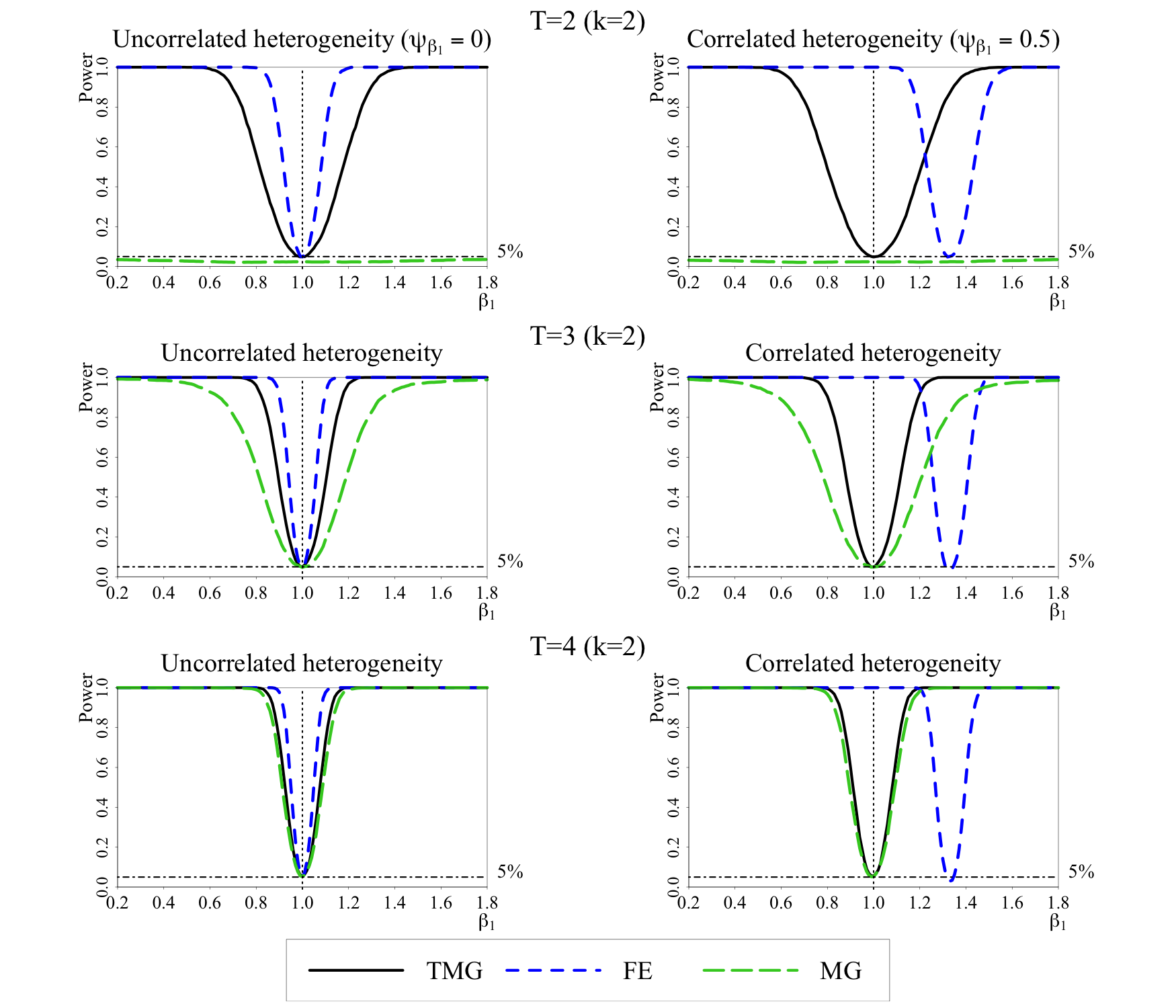}
\end{center}
\par
\vspace{-3mm} 
\begin{spacing}{1}
{\footnotesize
Notes: See footnotes to Table \ref{tab:T_d1_c12_chi2_tex0}.}
\end{spacing}
\end{figure}

Turning to DGPs with more than one regressor, to save space, the results are
summarized in Tables \ref{tab:T_d1_c12_chi2_tex0_k3} and \ref%
{tab:T_d1_c12_chi2_tex0_k4} in Section \ref{MCk34} of the online supplement.
These tables present estimation results for the baseline DGP with two and
three regressors, respectively. The FE estimator continues to display
substantial bias under correlated heterogeneity. For the TMG estimator, as
the number of regressors, $k^{\prime }$, increases, the trimmed fraction
rises for a given $T$. With $n=1,000$, it grows from $27.3$ per cent to $%
41.6 $ percent for $T=k=3$, and $50.1$ per cent for $T=k=4$. Moreover, as $%
k^{\prime }$ increases, the trimmed fraction decreases with $T$ at a slower
rate, in line with the estimates of $\alpha _{p}$ in Table \ref{tab:ap_mc}.

To summarize, under uncorrelated heterogeneity, the FE estimator performs
well despite the heterogeneity, and is more efficient than the TMG estimator
in the case of the baseline DGP used in our MCs. But, in general, the
relative efficiency of TMG and FE estimators depends on the underlying DGP.
The situation is markedly different when heterogeneity is correlated, and
the FE estimator can be badly biased, leading to incorrect inference, whilst
the TMG estimator provides valid inference with size around the nominal five
per cent level and reasonable power, irrespective of whether $\beta _{i1}$
is correlated with $x_{1,it}$ or not.

\subsubsection{Comparison of TMG and GP estimators \label{MCcorr}}

Focusing on the case of correlated heterogeneity, we now compare the
relative performance of TMG and GP estimators. To implement the GP
estimator, defined by (\ref{gpe}), for $T=k$, we follow GP and set $%
h_{n}=C_{GP}n^{-\alpha _{GP}}$, with $\alpha _{GP}=1/3$ and $C_{GP}=\frac{1}{%
2}\min \left( \hat{\sigma}_{D},\hat{r}_{D}/1.34\right) $, where $\hat{\sigma}%
_{D}$ and $\hat{r}_{D}$ are the respective sample standard deviation and
interquartile range of $|\func{det}\left( \boldsymbol{W}_{i}\right) |$ with $%
\boldsymbol{W}_{i}=\left( \boldsymbol{\tau }_{T},\boldsymbol{X}_{i}\right) $%
. For further details of GP's choice of $\alpha _{GP}$ when $T=k$, see p.
2138 of \cite{GrahamPowell2012}. The asymptotic variance of their estimator
in the case of models with and without time effects is provided in equation
(30) on p. 2126 of their paper. There is no clear guidance by GP as to the
choice of $h_{n}$ when $T>k$.\footnote{%
For $T=3$ with $k=2$, GP do not use the bandwidth parameter, $h_{n}$, but
directly select the \textquotedblleft percent trimmed\textquotedblright , $%
\pi _{n}$. In their empirical application, they report estimates with 4 per
cent being trimmed for $T=3$ with $k=2$. See the last column of Table 3 on
p. 2136 of GP.} For consistency, when $T>k$, for GP estimates we continue to
use their bandwidth, $h_{n}=C_{GP}n^{-\alpha _{GP}}$ with $\alpha _{GP}=1/3$%
, but set $C_{GP}=\left( n^{-1}\sum_{i=1}^{n}d_{i,GP}\right) ^{1/2}$ and
trim if $d_{i,GP} = \func{det}\left( \boldsymbol{W}_{i}^{\prime} \boldsymbol{%
W}_{i}\right) < h_{n}^{2}$.

The bias, RMSE, and size for the two estimators are summarized in Table \ref%
{tab:beta_d1_c2_chi2_tex0_b} for $T=2,3,4,5,6$, and $n=1000,2000,5000,10000$%
. The associated empirical power functions are displayed in Figure \ref%
{fig:tmg_gp_k2_base}.

\begin{table}[h]
\caption{Bias, RMSE and size of TMG and GP estimators of $\protect\beta %
_{01} $ $(E(\protect\beta _{i1})=\protect\beta _{01}=1)$ in the baseline DGP
with one regressor, without time effects, but with correlated heterogeneity, 
$\protect\psi _{\protect\beta_{1}}=0.5$}
\label{tab:beta_d1_c2_chi2_tex0_b}\vspace{-6mm}
\par
\begin{center}
\scalebox{0.78}{
\begin{tabular}{rrrrrrrrrrrr}
\hline\hline
 & \multicolumn{2}{c}{$\hat{\pi}$ $(\times 100)$} &  & \multicolumn{2}{c}{Bias} &  & \multicolumn{2}{c}{RMSE} &  & \multicolumn{2}{c}{Size $(\times 100)$} \\ \cline{2-3} \cline{5-6} \cline{8-9} \cline{11-12}
$T$ & TMG   & GP   &  & TMG    & GP     &  & TMG   & GP    &  & TMG & GP  \\ \hline
  & \multicolumn{11}{c}{$n=1,000$}                                        \\ \hline
2 & 27.30 & 4.00 &  & 0.012  & -0.004 &  & 0.268 & 0.599 &  & 5.1 & 5.1 \\
3 & 12.00 & 1.30 &  & 0.006  & -0.003 &  & 0.165 & 0.210 &  & 5.2 & 5.0 \\
4 & 5.90  & 0.20 &  & -0.002 & -0.007 &  & 0.122 & 0.140 &  & 5.1 & 4.6 \\
5 & 3.10  & 0.00 &  & -0.002 & -0.005 &  & 0.102 & 0.110 &  & 5.2 & 5.7 \\
6 & 1.70  & 0.00 &  & -0.002 & -0.004 &  & 0.088 & 0.090 &  & 4.8 & 4.6 \\
  & \multicolumn{11}{c}{$n=2,000$}                                        \\ \hline
2 & 24.50 & 3.20 &  & 0.004  & -0.008 &  & 0.202 & 0.474 &  & 5.2 & 4.9 \\
3 & 9.60  & 0.80 &  & -0.011 & -0.018 &  & 0.122 & 0.164 &  & 4.9 & 5.2 \\
4 & 4.30  & 0.10 &  & -0.012 & -0.016 &  & 0.091 & 0.105 &  & 5.4 & 5.3 \\
5 & 2.00  & 0.00 &  & -0.012 & -0.012 &  & 0.072 & 0.078 &  & 4.4 & 4.8 \\
6 & 1.00  & 0.00 &  & -0.014 & -0.015 &  & 0.065 & 0.067 &  & 5.6 & 5.8 \\
  & \multicolumn{11}{c}{$n=5,000$}                                        \\ \hline
2 & 21.10 & 2.40 &  & 0.002  & -0.012 &  & 0.139 & 0.355 &  & 5.2 & 4.4 \\
3 & 7.20  & 0.40 &  & -0.005 & -0.010 &  & 0.082 & 0.110 &  & 5.3 & 5.1 \\
4 & 2.80  & 0.00 &  & -0.006 & -0.008 &  & 0.057 & 0.065 &  & 4.0 & 3.8 \\
5 & 1.20  & 0.00 &  & -0.012 & -0.013 &  & 0.050 & 0.052 &  & 5.9 & 5.6 \\
6 & 0.50  & 0.00 &  & -0.012 & -0.012 &  & 0.043 & 0.044 &  & 5.3 & 5.7 \\
  & \multicolumn{11}{c}{$n=10,000$}                                        \\ \hline
2 & 18.90 & 1.90 &  & 0.005  & -0.013 &  & 0.105 & 0.281 &  & 4.8 & 4.7 \\
3 & 5.80  & 0.30 &  & -0.002 & -0.008 &  & 0.060 & 0.082 &  & 4.9 & 5.1 \\
4 & 2.00  & 0.00 &  & -0.004 & -0.006 &  & 0.044 & 0.049 &  & 5.6 & 5.1 \\
5 & 0.80  & 0.00 &  & -0.008 & -0.008 &  & 0.036 & 0.037 &  & 5.8 & 5.9 \\
6 & 0.30  & 0.00 &  & -0.005 & -0.006 &  & 0.029 & 0.029 &  & 4.4 & 4.4 \\
\hline\hline
\end{tabular}
}
\end{center}
\par
\vspace{-1mm} 
\begin{spacing}{1}
{\footnotesize 
Notes: 
(i) The GP estimator proposed by \cite{GrahamPowell2012} is given by (\ref{gpe}). For $T=k$, GP compare $d_{i,GP}^{1/2}$ with the bandwidth $h_{n} = C_{GP}n^{-\alpha_{GP}}$, where $d_{i,GP} = \func{det}(\boldsymbol{W}_{i}^{\prime}\boldsymbol{W}_{i})$ and $\boldsymbol{W}_{i} = (\boldsymbol{\tau}_{T}, \boldsymbol{X}_{i})$. $\alpha_{GP}$ is set to 1/3. $C_{GP}=\frac{1}{2}\min \left( \hat{\sigma}_{D},\hat{r}_{D}/1.34\right) $, where $\hat{\sigma}_{D}$ and $\hat{r}_{D}$ are the respective sample standard deviation and interquartile range of $d_{i,GP}^{1/2}$. For $T>k$, $C_{GP}=\left(n^{-1}\sum_{i=1}^{n}d_{i,GP}\right)^{1/2}$. See sub-section \ref{MCcorr} for details. 
(ii) For details of the baseline DGP without time effects and the TMG estimator, see footnotes to Table \ref{tab:T_d1_c12_chi2_tex0}. 
$\hat{\pi}$ is the simulated fraction of individual estimates being trimmed, defined by (\ref{pin}). } 
\end{spacing}
\end{table}

The fractions of the trimmed estimates, $\hat{\pi}$, differ markedly across
the estimators. For example, when $T=2$ and $n=1,000$, the fraction of
trimmed estimates for the TMG estimator is around $27.3$ per cent as
compared to $4.0$ per cent for the GP estimator, and falls to $18.9$ per
cent as $n$ is increased to $10,000$. Increasing $T$ from $2$ to $3$ with $%
n=1,000$ reduces this fraction to $16.5$ per cent as compared to $2$ per
cent for the GP estimator. The heavy trimming causes the TMG estimator to
have a larger bias than the GP estimator, particularly when $T=2$ and $n$ is
large. However, the TMG estimator continues to have better overall small
sample performance due to its higher efficiency. Recall that the TMG
estimator makes use of the trimmed estimates, as set out in the second term
of (\ref{betaTn}), but the trimmed estimates are not used in the GP
estimator. This difference in the way trimmed estimates are treated is
reflected in the lower RMSE of the TMG estimator as compared to MG and GP
estimators for all $T$ and $n$ combinations. For example, when $T=2$ and $%
n=1,000$, the RMSE of the TMG is $0.27$ as compared to $0.60$ for the GP
estimator. The relative advantage of the TMG estimator continues when $T$
increases from $2$ to $3$. For $T=3$, the RMSE of the TMG estimator stands
at $0.17$ compared to $0.21$ for the GP estimator. The larger the value of $%
T $, the less important the trimming becomes.

The empirical power functions for TMG, GP and MG estimators are shown in
Figure \ref{fig:tmg_gp_k2_base}. As can be seen, the TMG estimator is
uniformly more powerful than the GP estimator. To save space, the MC results
for models with time effects are summarized in sub-section \ref{MCte} of the
online supplement. Similar outcomes are obtained when we consider DGPs with
two or three regressors. See Tables \ref{tab:beta_d1_c2_chi2_tex0_b_k3} and %
\ref{tab:beta_d1_c2_chi2_tex0_b_k4}, and the corresponding empirical power
functions, Figures \ref{fig:tmg_gp_k3_base} and \ref{fig:tmg_gp_k4_base}, in
sub-section \ref{MCk34} of the online supplement. This is particularly the case
when $T=k\in \{3,4\}$, where the trimmed fraction of the GP estimator rises
only slightly with the number of regressors, resulting in substantial
declines in the empirical powers.

\begin{figure}[h!]
\caption{Empirical power functions for TMG, GP, and MG estimators of $%
\protect\beta_{01}$ $(E(\protect\beta_{i1}) = \protect\beta_{01}=1)$ in the
baseline DGP with one regressor, without time effects, but with correlated
heterogeneity, $\protect\psi_{\protect\beta_{1}}=0.5$, for $n=10,000$ and $%
T=2,3,4,5$}
\label{fig:tmg_gp_k2_base}\vspace{-8mm}
\par
\begin{center}
\includegraphics[scale=0.14]{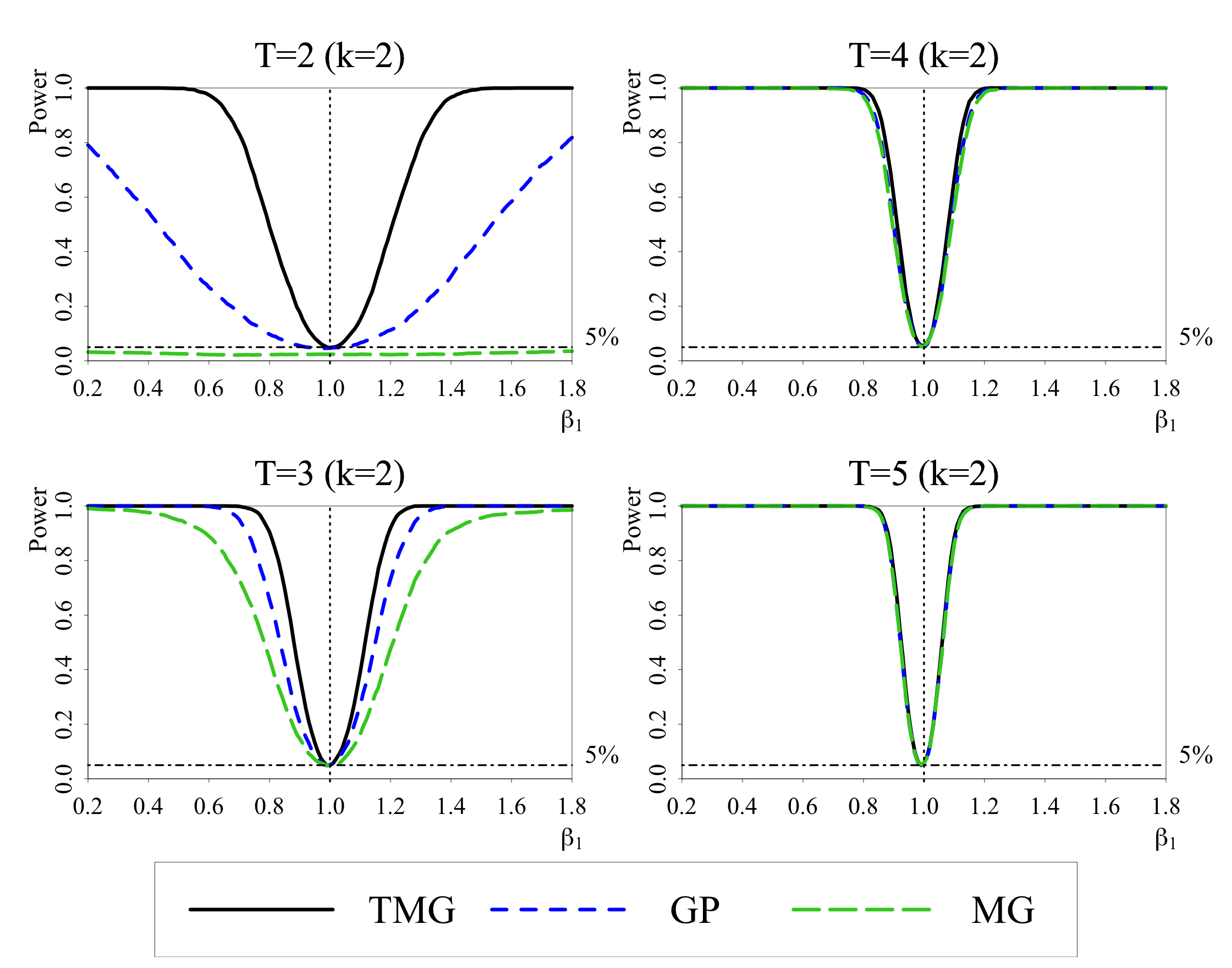}
\end{center}
\par
\vspace{-3mm} 
\begin{spacing}{1}
{\footnotesize
Notes: See footnotes to Tables \ref{tab:T_d1_c12_chi2_tex0} and \ref{tab:beta_d1_c2_chi2_tex0_b}.
}
\end{spacing}
\end{figure}

For the purpose of comparisons, in addition to the choice of $\alpha
_{GP}=1/3$ by GP, we also considered the threshold values $2\alpha
_{GP}=\alpha \in \{0.35,1/2\}$, so that the two threshold functions (ours
and the one suggested by GP) share the same exponents. As reported in Table %
\ref{tab:thresh_d1_c2_chi2_tex0_b_t2}, the TMG estimator has a lower RMSE
for all choices of $\alpha $ and $\alpha _{GP}$, when $T=2$, and delivers
better empirical powers, as shown in Figures \ref{fig:tmg_gp_alpha_1} and %
\ref{fig:tmg_gp_alpha_2} in the online supplement. Additional results for $%
T=3$ are provided in sub-section \ref{MCthresh} of the online supplement,
where the differences between TMG and GP estimators are much smaller.

\begin{table}[!htb]
\caption{Bias, RMSE and size of TMG and GP estimators of $\protect\beta_{01}$
$(E(\protect\beta_{i1})=\protect\beta_{01}=1)$ for different threshold
exponents, $\protect\alpha $ and $\protect\alpha _{GP}$, in the baseline DGP
with one regressor, without time effects, but with correlated heterogeneity, 
$\protect\psi _{\protect\beta_{1}}=0.5$ ($T=2$)}
\label{tab:thresh_d1_c2_chi2_tex0_b_t2}\vspace{-6mm}
\par
\begin{center}
\scalebox{0.45}{
\LARGE
\begin{tabular}{lcrrcclrrcclrrcclrrcc}
\hline\hline
 &  & \multicolumn{4}{c}{$n=1,000$} &  & \multicolumn{4}{c}{$n=2,000$} &  & \multicolumn{4}{c}{$n=5,000$} &  & \multicolumn{4}{c}{$n=10,000$} \\ \cline{3-6} \cline{8-11} \cline{13-16} \cline{18-21} 
 & $\alpha$/$\alpha_{GP}$ & \multicolumn{1}{c}{$\hat{\pi}$} & \multicolumn{1}{c}{Bias} & RMSE & Size &  &  \multicolumn{1}{c}{$\hat{\pi}$} & \multicolumn{1}{c}{Bias} & RMSE & Size &  & \multicolumn{1}{c}{$\hat{\pi}$} &\multicolumn{1}{c}{Bias} & RMSE & Size &  & \multicolumn{1}{c}{$\hat{\pi}$} & \multicolumn{1}{c}{Bias} & RMSE & Size \\ \hline
TMG & 1/3 & 27.3 & 0.012 & 0.27 & 5.1 &  & 24.5 & 0.004 & 0.20 & 5.2 &  & 21.1 & 0.002 & 0.14 & 5.2 &  & 18.9 & 0.005 & 0.11 & 4.8 \\
TMG & 0.35 & 25.9 & 0.011 & 0.28 & 5.1 &  & 23.0 & 0.002 & 0.21 & 5.0 &  & 19.7 & 0.001 & 0.14 & 5.2 &  & 17.6 & 0.003 & 0.11 & 5.2 \\
TMG & 0.50 & 15.6 & 0.001 & 0.35 & 4.6 &  & 13.1 & -0.009 & 0.27 & 5.3 &  & 10.5 & -0.007 & 0.20 & 4.6 &  & 8.9 & -0.004 & 0.15 & 4.9 \\
GP & 0.35/2 & 12.0 & 0.006 & 0.36 & 5.3 &  & 10.7 & -0.006 & 0.27 & 4.9 &  & 9.1 & -0.006 & 0.19 & 5.8 &  & 8.1 & -0.001 & 0.14 & 4.8 \\
GP & 0.25 & 7.2 & -0.003 & 0.45 & 4.6 &  & 6.0 & -0.015 & 0.35 & 4.4 &  & 4.8 & -0.009 & 0.25 & 4.5 &  & 4.1 & -0.007 & 0.19 & 5.1 \\
GP & 1/3 & 4.0 & -0.004 & 0.60 & 5.1 &  & 3.2 & -0.008 & 0.47 & 4.9 &  & 2.4 & -0.012 & 0.36 & 4.4 &  & 1.9 & -0.013 & 0.28 & 4.7
\\
\hline\hline
\end{tabular}}
\end{center}
\par
\vspace{-2mm} 
\begin{spacing}{1}
{\footnotesize
Notes: See footnotes to Table \ref{tab:beta_d1_c2_chi2_tex0_b}. $\hat{\pi}$ is the simulated fraction of individual estimates being trimmed, defined by (\ref{pin}); the reported values are $100 \times \hat{\pi}$. Size is reported in per cent.}
\end{spacing}
\end{table}

\subsubsection{MC evidence on the Hausman test of correlated heterogeneity}

Table \ref{tab:Test_d1_arx_chi2_tex0} reports empirical size and power of
the Hausman test of correlated heterogeneity given by (\ref{htest}) under
three scenarios: homogeneity (left), uncorrelated heterogeneity (middle),
and correlated heterogeneity (right). We continue to focus on the simple
case where $k^{\prime }=1$. The size of the test is around the nominal level
of 5 per cent. When $\boldsymbol{x}_{it}$ is strictly exogenous and $%
\boldsymbol{\beta }_{i1}$ is distributed independently of $\boldsymbol{X}%
_{i} $, FE, MG and TMG estimators are all consistent under homogeneity and
if heterogeneity is present but uncorrelated. In such a case, the Hausman
test does not have power. However, in the case where slope coefficients are
heterogeneous \textit{and} correlated with the regressors, the TMG estimator
is consistent while the FE estimator is biased for all $T$. In this case, we
would expect the proposed test to have power, and this is indeed evident in
the right panel of Table \ref{tab:Test_d1_arx_chi2_tex0}. Also, the power of
the test rises with increases in $n$ even when $T=2$, illustrating the
(ultra) small $T$ consistency of the proposed test. We also obtain similar
test results when we allow for time effects. See sub-section \ref{MCtest} of
the online supplement.

\begin{table}[h]
\caption{Empirical size and power of the Hausman test of correlated
heterogeneity in the baseline DGP with one regressor, without time effects}
\label{tab:Test_d1_arx_chi2_tex0}\vspace{-6mm}
\par
\begin{center}
\scalebox{0.8}{
\begin{tabular}{lccccccccccrrrr}
\hline\hline 
     &   \multicolumn{9}{c}{Under $H_{0}$: Uncorrelated heterogeneity} &  & \multicolumn{4}{c}{Under $H_{1}$: Correlated heterogeneity}   \\ \cline{2-10} \cline{12-15}
     &   \multicolumn{4}{c}{$\sigma_{\beta_{1}}^{2}=0$}   &  & \multicolumn{4}{c}{$\sigma_{\beta_{1}}^{2}=0.75$, $\psi_{\beta_{1}} =0$} &  & \multicolumn{4}{c}{$\sigma_{\beta_{1}}^{2}=0.75$, $\psi_{\beta_{1}} =0.5$ }  \\ \cline{2-5} \cline{7-10} \cline{12-15} 
$T/n$   & 1,000   & 2,000   & 5,000   & 10,000  &  & 1,000 & 2,000 & 5,000  & 10,000  &  & $\,\,\,\,\,$1,000 & $\,\,\,\,\,$2,000 & $\,\,\,\,\,$5,000 & $\,\,\,\,$10,000 \\ \hline  
2 & 4.9 & 5.0 & 4.4 & 5.1 &  & 5.2 & 4.6 & 4.9 & 5.0 &  & 25.8 & 39.2  & 67.9  & 92.2  \\
3  & 5.4 & 4.9 & 5.8 & 5.4 &  & 5.2 & 5.1 & 4.7 & 5.1 &  & 58.9 & 86.5  & 99.5  & 100.0 \\
4  & 4.8 & 5.0 & 5.0 & 5.4 &  & 4.7 & 5.2 & 4.4 & 4.7 &  & 86.4 & 99.0  & 100.0 & 100.0 \\
5  & 5.3 & 5.0 & 4.8 & 4.6 &  & 5.5 & 5.2 & 5.3 & 4.7 &  & 95.9 & 99.9  & 100.0 & 100.0 \\
6  & 5.0 & 4.8 & 4.9 & 5.4 &  & 4.9 & 5.2 & 5.7 & 4.6 &  & 99.1 & 100.0 & 100.0 & 100.0 \\
8  & 4.9 & 5.4 & 4.7 & 4.7 &  & 4.7 & 4.9 & 5.4 & 4.7 &  & 99.9 & 100.0 & 100.0 & 100.0 \\
\hline\hline
\end{tabular}}
\end{center}
\par
\vspace{-3mm} 
\begin{spacing}{1}
\footnotesize{
Notes: 
(i) In the baseline DGP for the test, the outcome variable is generated as $y_{it}=\alpha_{i} + \beta_{i1} x_{1,it} + u_{it}$, with $\alpha_{i}$ correlated with $x_{1,it}$ under both the null and alternative hypotheses. For details of the baseline DGP without time effects, see sub-section \ref{DGP}. (ii) The null hypothesis is given by (\ref{null}), including the case of homogenity with $\sigma_{\beta_{1}}^{2}=0$ and the case of uncorrelated heterogeneity with $\psi_{\beta_{1}}=0$ (the degree of correlated heterogeneity defined by (\ref{eta_i})) and $\sigma_{\beta_{1}}^{2}=0.75$. The alternative of correlated heterogeneity is generated with $\psi_{\beta_{1}}=0.5$ and $\sigma_{\beta_{1}}^{2}=0.75$. 
(iii) The test statistic is calculated based on the difference between FE and TMG estimators, given by (\ref{htest}). 
Size and power are in per cent.}
\end{spacing}
\end{table}

Hausman test results for DGPs with two or three regressors are summarized in
Tables \ref{tab:Test_d1_arx_chi2_tex0_k3} through \ref%
{tab:Test_te_d1_arx_chi2_tex0_k4} of the online supplement. Recall that we
allow for correlated heterogeneity only in the coefficients of the first
regressor. Consequently, we observe a decline in the power of the Hausman
test as we add regressors with uncorrelated heterogeneous coefficients. The
empirical power of the test decreases with $k$, particularly when $T=k$.

\section{Empirical application \label{APP}}

In this section, we re-visit the empirical application in \cite%
{GrahamPowell2012} who provide estimates of the average effect of household
expenditures on calorie demand, based on a sample of households from poor
rural communities in Nicaragua that participated in a conditional cash
transfer program. The data set is a balanced panel with $n=1,358$ households
observed from 2000 to 2002. We present estimates of the average effects
using the following panel data model with time effects: 
\begin{equation}
\ln (Cal_{it})=\alpha _{i}+\phi _{t}+\beta _{i}\ln (Exp_{it})+u_{it},
\label{calm1}
\end{equation}%
where $\ln (Cal_{it})$ denotes the logarithm of household calorie
availability per capita in year $t$ of household $i$, and $\ln (Exp_{it})$
denotes the logarithm of real household expenditures per capita (in
thousands of 2001 cordobas) of household $i$ in year $t$. The parameter of
interest is the average effect defined by $\beta _{0}=E(\beta_{i})$.

We first estimate $\alpha _{p}$ as a diagnostic to see if trimming is
needed. For the panel covering the period 2001--2002 $(T=2)$, the Hill's
estimates of $\alpha _{p}$ are 0.71 (0.12) and 0.59 (0.17) for the two
cut-off values of $n^{1/2}$ and $n^{1/3}$, respectively, with standard
errors in parentheses. Similarly, for the dataset 2000--2002 $(T=3)$, the
estimates of $\alpha _{p}$ are 0.94 (0.15) and 0.96 (0.28), respectively.
All these estimates are well below two, and it is advisable that the trimmed
MG estimator is used for consistent estimation of the average effects, $%
\beta _{0}$, in case heterogeneity in $\beta _{i}$ is correlated.

Table \ref{tab:RPS_cal_htest} reports results of the Hausman test of
correlated heterogeneity in the effects of household expenditures on calorie
demand. The null hypothesis of uncorrelated heterogeneity is rejected for
both the panels and irrespective of whether time effects are included.
Therefore, for this application, the FE and TWFE estimates of $\beta _{0}$
could be biased.

\begin{table}[h]
\caption{Hausman statistics for testing correlated heterogeneity in the
effects of household expenditures on calorie demand in Nicaragua}
\label{tab:RPS_cal_htest}
\begin{center}
\vspace{-6mm} 
\scalebox{0.85}{
\begin{tabular}{lccccccc}
\hline\hline
 & \multicolumn{3}{c}{Without time effects} &  & \multicolumn{3}{c}{With time effects} \\ \cline{2-4} \cline{6-8}
 & 2001--2002 &  & 2000--2002 &  & 2001--2002 &  & 2000--2002 \\ \hline
Statistics & 5.918 &  & 7.626 &  & 5.959 & & 7.653 \\
$p$-value  & 0.015 &  & 0.006 &  & 0.015 & & 0.006\\
$T$ & 2 &  & 3 &  & 2 &  & 3\\
\hline\hline
\end{tabular}}
\end{center}
\par
\vspace{-2mm}{\footnotesize Notes: The test is applied to the average effect 
$\beta_{0}=E(\beta_{i})$ in the model (\ref{calm1}) based on a panel of $%
1,358$ households. The test statistic for panels without time effects is
described in the footnote (iii) to Table \ref{tab:Test_d1_arx_chi2_tex0}.
For panels with time effects, the test statistic is based on the difference
between the TWFE and TMG-TE estimators given by (\ref{htetest}) with $T=2$
and (\ref{htetest2}) with $T>2$. For further details see sub-section \ref{TestTE}
in the online supplement. }
\end{table}

Table \ref{tab:RPS_cal_T2} presents the estimates of $\beta _{0}$ based on
the panel of 2001--2002 (with $T=2$) without time effects (left panel), and
with time effects (right panel). The estimates are not affected by the
inclusion of time effects but differ considerably across different methods.%
\footnote{%
When $T=2$, $\hat{\phi}_{2002}$ is not significant, and adding time effects
does not change the estimated average effect.} Turning to the trimmed
estimators, we find that only the TMG estimator is heavily trimmed with 27.1
per cent of the estimates being trimmed, whilst the rate of trimming is only
around 3.8 per cent for the GP estimator.\footnote{%
For the 2001--2002 panel, $\hat{\pi}$ of the GP estimator is identical to
the one reported in Table 3 of \cite{GrahamPowell2012}. GP estimated a model
with time-varying coefficients, $y_{it}=\alpha _{i}+\phi _{t}+\left( \beta
_{i}+\phi _{t,\beta }\right) x_{it}+u_{it}$, where $\left( \boldsymbol{\phi }%
^{\prime},\boldsymbol{\phi }_{\beta }^{\prime}\right)^{\prime}$ are
identified by stayers but estimated by near stayers with $\boldsymbol{\phi }%
_{\beta } = (\phi _{1,\beta }, \phi _{2,\beta }, ..., \phi _{T,\beta
})^{\prime}$. While $\boldsymbol{\phi }_{\beta }$ is not included in (\ref%
{calm1}), the GP estimates we compute are close to the trimmed estimates in
Table 3 of \cite{GrahamPowell2012}.} Focusing on the estimates without time
effects, we find the FE estimate, 0.6568 (0.0287), is much larger and more
precisely estimated than either the GP or TMG estimates, given by 0.4549
(0.1003) and 0.5623 (0.0425), respectively, with standard errors in
brackets. Judging by the standard errors, it is also noticeable that the TMG
is more precisely estimated than the GP estimate and lies somewhere between
the FE and GP estimates. These estimates are in line with the MC results
reported in the previous section, where we found that in the presence of
correlated heterogeneity, FE estimates are biased with smaller standard
errors (thus leading to incorrect inference), whilst GP and TMG estimators
are correctly centered, with the TMG estimator being more efficient. Similar
results are obtained when we use the extended panel with $T=3$ (2000--2002),
presented in Section \ref{appt3} of the online supplement.

\begin{table}[h]
\caption{Alternative estimates of the average effect of household
expenditures on calorie demand in Nicaragua over the period 2001--2002 ($T=2$%
) }
\label{tab:RPS_cal_T2}\vspace{-6mm}
\par
\begin{center}
\scalebox{0.85}{
\begin{tabular}{lccccccc}
\hline\hline & \multicolumn{3}{c}{Without time effects} &  & \multicolumn{3}{c}{With time effects} \\ \cline{2-4} \cline{6-8}
 & (1) & (2) & (3)  &  & (5) & (6) & (7)  \\ 
 & FE & GP  & TMG &  & TWFE & GP-TE  & TMG-TE \\ \hline
$\hat{\beta}_{0}$ & 0.6568 & 0.4549  & 0.5623 &  & 0.6554 & 0.4629 & 0.5612 \\
 & (0.0287) & (0.1003)  & (0.0425) &  & (0.0284) & (0.1025)  & (0.0424) \\
$\hat{\phi}_{2002}$ & ...  & ...  & ...  &  & 0.0172 & -0.0181  & 0.0178 \\
 & ...  & ...  & ...   &  & (0.0063) & (0.0296)  & (0.0064) \\
$\hat{\pi}$ $(\times 100)$ & ...  & 3.8 & 27.1 &  & ... & 3.8  & 27.1\\
\hline\hline
\end{tabular}}
\end{center}
\par
\vspace{-2mm} {\footnotesize Notes: (i) The estimates of $\beta _{0}=E(\beta
_{i}) $ and $\phi _{2002}$ in the model (\ref{calm1}) are based on a panel
of $1,358$ households. (ii) FE and GP estimators are given by (\ref{fee})
and (\ref{gpe}), respectively. The TMG estimator is given by (\ref{TMGb}),
and its asymptotic variance is estimated by (\ref{varC}). (iii) The TWFE
estimator is given by (\ref{TWFEhat}) in the online supplement. The GP-TE
estimator of $\boldsymbol{\beta }_{0}$ and $\boldsymbol{\phi}_{0}$ is given
by equations (25) and (24) in \cite{GrahamPowell2012}. When $T=k=2$, the
TMG-TE estimator of $\boldsymbol{\beta }_{0}$ and $\boldsymbol{\phi }_{0}$
are given by (\ref{TMG-TE1}) and (\ref{phihat_b}), and their asymptotic
variances are estimated by (\ref{varbetaTE}) and (\ref{VarPhicombined}),
respectively, in the mathematical appendix. (iv) The trimming threshold
value for TMG and TMG-TE estimators is given by $a_{n}=\bar{d}_{n}
n^{-\alpha}$, where $\bar{d}_{n} =\frac{1}{n} \sum_{i}^{n} d_{i}$, $d_{i} =%
\func{det}(\boldsymbol{X}_{i}^{\prime} \boldsymbol{M}_{T} \boldsymbol{X}%
_{i}) $, and $\boldsymbol{X}_{i}=(\boldsymbol{x}_{i1},\boldsymbol{x}%
_{i2},...,\boldsymbol{x}_{iT})^{\prime}$ and $\boldsymbol{M}_{T} = 
\boldsymbol{I}_{T} - \boldsymbol{\tau}_{T}\boldsymbol{\tau}_{T}^{\prime}/T$. 
$\alpha$ is set to $1/3$. $\hat{\pi}$ is the estimated fraction of
individual estimates being trimmed given by (\ref{pin}). \textquotedblleft
...\textquotedblright denotes that the estimation algorithms are not applicable. The numbers
in brackets are standard errors. }
\end{table}

\section{Conclusions \label{conclusion}}

This paper studies the estimation of average effects in panel data models
with possibly correlated heterogeneous coefficients, when the number of
cross-sectional units is large, but the number of time periods can be as
small as the number of regression coefficients. We recall that the FE
estimator is inconsistent under correlated heterogeneity, and the MG
estimator could not have second-order (or even first-order) moments when
applied to ultra short panels. The TMG estimator is therefore proposed to
deal with the fat-tailed distributions of the individual estimates (which
does arise if $T$ is very close to $k$) by shrinking (not trimming)
individual estimates that are most likely to fail the second-order moment
condition. The TMG estimator is shown to be consistent and asymptotically
normally distributed, but at a slower rate than $\sqrt{n}$. The paper also
proposes TMG estimators for panels with time effects, distinguishing between
cases where $T=k$ and $T>k$. The TMG estimators play a crucial role in
assessing the robustness of FE and TWFE estimators against slope correlated
heterogeneity. The dispersion slope homogeneity tests by \cite%
{PesaranYamagata2008} require large $T$ and do not differentiate between
uncorrelated heterogeneity and correlated heterogeneity.

We highlight the bias and size distortion properties of the FE and TWFE
estimators under correlated heterogeneity. In contrast, the TMG and TMG-TE
estimators are shown to have desirable finite sample performance under a
number of different MC designs, allowing for Gaussian and non-Gaussian
heteroskedastic error processes, dynamic heterogeneity and interactive
effects in the covariates, different numbers of regressors, and different
choices of the trimming threshold parameter, $\alpha $. In particular, since
the TMG and TMG-TE estimators exploit information on all available
individual estimates, they have the smallest RMSE, and tests based on them
have the correct size and are more powerful than the other trimmed
estimators currently proposed in the literature. The Hausman tests based on
TMG and TMG-TE estimators are also shown to have very good small sample
properties, with their size controlled and their power rising strongly with $%
n$ even when $T=k=2$.

\newpage \medskip {\small \setstretch{1.02} 
\bibliographystyle{chicago}
\bibliography{TMGref}

@article{MaWang2020,
  title={\href{https://doi.org/10.1080/01621459.2019.1660173}{Robust inference using inverse probability weighting}},
  author={Ma, Xinwei and Wang, Jingshen},
  journal={Journal of the American Statistical Association},
  volume={115},
  number={532},
  pages={1851--1860},
  year={2020},
  publisher={Taylor \& Francis}
}

@article{GriffinPruitt1989,
  title={\href{https://doi.org/10.1214/aop/1176991264}{Asymptotic normality and subsequential limits of trimmed sums}},
  author={Griffin, Philip S and Pruitt, William E},
  journal={The Annals of Probability},
  volume={17},
  pages={1186--1219},
  year={1989},
  publisher={Institute of Mathematical Statistics}
}

@article{CsorgoEtal1988b,
  title={\href{https://doi.org/10.1016/0196-8858(88)90016-4}{A probabilistic approach to the asymptotic distribution of sums of independent, identically distributed random variables}},
  author={Cs{\"o}rg{\H{o}}, S{\'a}ndor and Haeusler, Erich and Mason, David M},
  journal={Advances in applied mathematics},
  volume={9},
  pages={259--333},
  year={1988},
  publisher={Academic Press}
}

@article{CsorgoEtal1988a,
  title={\href{https://doi.org/10.1214/aop/1176991780}{The asymptotic distribution of trimmed sums}},
  author={Cs{\"o}rg{\H{o}}, S{\'a}ndor and Haeusler, Erich and Mason, David M},
  journal={The Annals of Probability},
  pages={672--699},
  volume={16},
  year={1988},
  publisher={Institute of Mathematical Statistics}
}

@article{Peng2001,
  title={\href{https://doi.org/10.1016/S0167-7152(00)00203-0}{Estimating the mean of a heavy tailed distribution}},
  author={Peng, Liang},
  journal={Statistics \& Probability Letters},
  volume={52},
  pages={255--264},
  year={2001},
  publisher={Elsevier}
}

@book{EmbrechtsEtal1997,
	author = {Embrechts, Paul and Kluppelberg, Claudia and Mikosch, Thomas},
	publisher = {Springer Berlin, Heidelberg},
	title = {\href{https://doi.org/10.1007/978-3-642-33483-2}{Modelling Extremal Events}},
	year = {1997}}

@article{PesaranYang2020,
  title={\href{https://doi.org/10.1016/j.jeconom.2020.03.014}{Econometric analysis of production networks with dominant units}},
  author={Pesaran, M Hashem and Yang, Cynthia Fan},
  journal={Journal of Econometrics},
  volume={219},
  pages={507--541},
  year={2020},
  publisher={Elsevier}
}

@article{Pickands1975,
  title={\href{https://www.jstor.org/stable/2958083}{Statistical inference using extreme order statistics}},
  author={Pickands III, James},
  journal={The Annals of Statistics},
  pages={119--131},
  year={1975},
  volume={3},
  publisher={JSTOR}
}

@article{Hill1975,
  title={\href{https://www.jstor.org/stable/2958370}{A simple general approach to inference about the tail of a distribution}},
  author={Hill, Bruce M},
  journal={The Annals of Statistics},
  pages={1163--1174},
  year={1975},
  volume={3},
  publisher={JSTOR}
}

@article{Hausman1978,
  title={\href{https://doi.org/10.2307/1913827}{Specification tests in econometrics}},
  author={Hausman, Jerry A},
  journal={Econometrica},
  pages={1251--1271},
  year={1978},
  volume = {46},
  publisher={JSTOR}
}

@article{SasakiUra2026, 
      author={Sasaki, Yuya and Ura, Takuya}, 
      title={\href{https://doi.org/10.1017/S0266466625100157}{Slow movers in panel data}}, 
      journal={Econometric Theory}, 
      year={2026}, 
      pages={1–43}}

@article{HeckmanVytlacil1998,
	author = {Heckman, James and Vytlacil, Edward},
	journal = {Journal of Human Resources},
	pages = {974--987},
	publisher = {JSTOR},
	title = {\href{https://doi.org/10.2307/146405}{Instrumental variables methods for the correlated random coefficient model: Estimating the average rate of return to schooling when the return is correlated with schooling}},
	volume = {33},
	year = {1998}}

@article{CreponVandenberg2016,
	author = {Cr{\'e}pon, Bruno and Van Den Berg, Gerard J},
	journal = {Annual Review of Economics},
	pages = {521--546},
	publisher = {Annual Reviews},
	title = {\href{https://doi.org/10.1146/annurev-economics-080614-115738}{Active labor market policies}},
	volume = {8},
	year = {2016}}

@article{BanerjeeEtal2015,
	author = {Banerjee, Abhijit and Karlan, Dean and Zinman, Jonathan},
	journal = {American Economic Journal: Applied Economics},
	pages = {1--21},
	title = {\href{https://doi.org/10.1257/app.20140287}{Six randomized evaluations of microcredit: Introduction and further steps}},
	volume = {7},
	year = {2015}}

@article{Wooldridge2005,
	author = {Wooldridge, Jeffrey M},
	journal = {Review of Economics and Statistics},
	pages = {385--390},
	publisher = {MIT Press 238 Main St., Suite 500, Cambridge, MA 02142-1046, USA journals~{\ldots}},
	title = {\href{https://doi.org/10.1162/0034653053970320}{Fixed-effects and related estimators for correlated random-coefficient and treatment-effect panel data models}},
	volume = {87},
	year = {2005}}

@article{PesaranYamagata2008,
	author = {Pesaran, M Hashem and Yamagata, Takashi},
	date-modified = {2022-04-30 22:22:57 +0000},
	journal = {Journal of Econometrics},
	pages = {50--93},
	publisher = {Elsevier},
	title = {\href{https://doi.org/10.1016/j.jeconom.2007.05.010}{Testing slope homogeneity in large panels}},
	volume = {142},
	year = {2008}}

@article{BastagliEtal2019,
	author = {Bastagli, Francesca and Hagen-Zanker, Jessica and Harman, Luke and Barca, Valentina and Sturge, Georgina and Schmidt, Tanja},
	journal = {Journal of Social Policy},
	pages = {569--594},
	publisher = {Cambridge University Press},
	title = {\href{https://doi.org/10.1017/S0047279418000715}{The impact of cash transfers: A review of the evidence from low-and middle-income countries}},
	volume = {48},
	year = {2019}}

@article{Chamberlain1992,
	author = {Chamberlain, Gary},
	journal = {Econometrica},
	pages = {567--596},
	publisher = {Wiley-Blackwell},
	title = {\href{https://doi.org/10.2307/2951584}{Efficiency bounds for semiparametric regression}},
	volume = {60},
	year = {1992}}

@article{GrahamPowell2012,
	author = {Graham, Bryan S and Powell, James L},
	journal = {Econometrica},
	pages = {2105--2152},
	publisher = {Wiley Online Library},
	title = {\href{https://doi.org/10.3982/ECTA8220}{Identification and estimation of average partial effects in ``irregular'' correlated random coefficient panel data models}},
	volume = {80},
	year = {2012}}

@book{Pesaran2015,
	author = {Pesaran, M Hashem},
	publisher = {Oxford University Press},
	title = {\href{https://doi.org/10.1093/acprof:oso/9780198736912.001.0001}{Time Series and Panel Data Econometrics}},
	year = {2015}}

@incollection{PesaranEtal1996,
	abstract = {This chapter is concerned with the problem of heterogeneity across groups (individuals, firms, regions or countries) in dynamic panels.1 While it is widely recognised that parameter heterogeneity can have important consequences for estimation and inference, most attempts at dealing with it have focused on allowing for intercept variation, and in comparison little attention has been paid to the implications of variation in slopes. There are a number of justifications that can be advanced for this neglect.},
	author = {Pesaran, M Hashem and Smith, Ron and Im, Kyung So},
	booktitle = {The Econometrics of Panel Data: A Handbook of the Theory with Applications},
	doi = {10.1007/978-94-009-0137-7_8},
	editor = {Matyas, L. and Sevestre, P.},
	isbn = {978-94-009-0137-7},
	publisher = {Springer Netherlands},
	title = {\href{https://doi.org/10.1007/978-94-009-0137-7_8}{Dynamic linear models for heterogenous panels}},
	url = {https://doi.org/10.1007/978-94-009-0137-7_8},
	year = {1996},
	bdsk-url-1 = {https://doi.org/10.1007/978-94-009-0137-7_8}}

@article{PesaranSmith1995,
	author = {Pesaran, M Hashem and Smith, Ron},
	date-modified = {2022-03-15 06:21:44 +0000},
	journal = {Journal of Econometrics},
	pages = {79--113},
	publisher = {Elsevier},
	title = {\href{https://doi.org/10.1016/0304-4076(94)01644-F}{Estimating long-run relationships from dynamic heterogeneous panels}},
	volume = {68},
	year = {1995}}
}

\newpage

\appendix

\begin{center}
{\large Mathematical Appendix}
\end{center}

\setcounter{page}{1}\renewcommand{\thepage}{A\arabic{page}} %
\setcounter{table}{0} \renewcommand{\thetable}{A.\arabic{table}} %
\setcounter{section}{0} \renewcommand{\thesection}{A.\arabic{section}} %
\setcounter{section}{0}\renewcommand{\theHsection}{appendixsection.A.%
\arabic{section}} 
\setcounter{example}{0} \renewcommand{\theexample}{A.\arabic{example}} %
\setcounter{figure}{0} \renewcommand{\thefigure}{A.\arabic{figure}} %
\setcounter{footnote}{0} \renewcommand{\thefootnote}{A\arabic{footnote}} %
\renewcommand{\thetheorem}{A.\arabic{theorem}}\setcounter{theorem}{0} %
\renewcommand{\theproposition}{A.\arabic{proposition}}%
\setcounter{proposition}{0} \renewcommand{\theassumption}{A.%
\arabic{assumption}}\setcounter{assumption}{0} \renewcommand{\thelemma}{A.%
\arabic{lemma}}\setcounter{lemma}{0} \renewcommand{\theremark}{A.%
\arabic{remark}}\setcounter{remark}{0}

\section{Lemmas}

\begin{lemma}
\label{deltaeta}Suppose that Assumptions \ref{rcm}, \ref{distributiondi} and %
\ref{CRE} hold. Then for each $i$, we have 
\begin{align}
E\left[ d_{i}^{s}\boldsymbol{1}\{d_{i}\leq a_{n}\}\right] &
=O(a_{n}^{s+\alpha_{p}})\text{, for }s=1,2,...,  \label{ds} \\
E(\delta _{i})& =O(a_{n}^{\alpha_{p}})\text{, }E(\delta
_{i}^{2})=O(a_{n}^{\alpha_{p}}),  \label{dbar2} \\
E\left( \delta _{i}\boldsymbol{\eta }_{i}\right) & =O(a_{n}^{\alpha_{p}})%
\text{, }E\left( \delta _{i}^{2}\boldsymbol{\eta }_{i}\right)
=O(a_{n}^{\alpha_{p}}),  \label{detax}
\end{align}
\begin{equation}
n^{-1}\sum_{i=1}^{n}\left\{ E\left[ d_{i}^{2}\boldsymbol{1}\{d_{i}\leq
a_{n}\}\right] \right\} ^{1/2} =O(a_{n}^{1+\alpha_{p}/2}),  \label{a}
\end{equation}
and 
\begin{equation}
n^{-1}\sum_{i=1}^{n}\left\{ E\left[ d_{i}^{-2}\boldsymbol{1}\{d_{i}>a_{n}\}%
\right] \right\} ^{1/2} =O(a_{n}^{-1}).  \label{b}
\end{equation}
\end{lemma}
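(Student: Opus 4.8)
The plan is to reduce each of the five statements to an elementary integral estimate against the density $f_d$, exploiting that $a_n=C_n n^{-\alpha}\to 0$ (since $\alpha>0$ and $C_n$ is bounded), so that the relevant probability mass sits in a shrinking neighbourhood of the origin where Assumption \ref{distributiondi} controls $f_d$. The linchpin is to show that $f_d$ is bounded on the shrinking interval $(0,a_n]$, uniformly in $n$. Assumption \ref{distributiondi} supplies $f_d(\bar a_n)<C$ and $|f_d'(\bar a_n)|<C$ at a point $\bar a_n\in(0,a_n)$; a first-order Taylor (mean-value) expansion of the continuously differentiable $f_d$ about $\bar a_n$ then gives $f_d(u)=f_d(\bar a_n)+O(|u-\bar a_n|)=O(1)$ for $u\in(0,a_n]$ once $a_n$ is small, i.e. $\sup_{0<u\le a_n}f_d(u)\le C$ for all large $n$. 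Because the $d_i$ are identically distributed draws from $F_d$, this bound holds uniformly in $i$, and statement (\ref{ds}) is then immediate:
$$E\left[d_i^{s}\boldsymbol{1}\{d_i\le a_n\}\right]=\int_0^{a_n}u^{s}f_d(u)\,du\le C\int_0^{a_n}u^{s}\,du=\frac{C}{s+1}\,a_n^{s+1}=O(a_n^{s+1}).$$
The same integral at $s=0$ gives $F_d(a_n)=O(a_n)$, which I use repeatedly below.

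For (\ref{dbar2}) I substitute $\delta_i=a_n^{-1}(d_i-a_n)\boldsymbol{1}\{d_i\le a_n\}$ and apply (\ref{ds}) with $s=1$ together with $F_d(a_n)=O(a_n)$, giving $E(\delta_i)=a_n^{-1}\!\left(O(a_n^{2})-a_nO(a_n)\right)=O(a_n)$. Since $0<d_i\le a_n$ on the trimming event forces $|\delta_i|\le1$, I have $\delta_i^2\le|\delta_i|=-\delta_i$, so $E(\delta_i^2)\le -E(\delta_i)=O(a_n)$ with no further computation. For (\ref{detax}) I use that $\delta_i$ is a measurable function of $d_i$ alone together with the decomposition $\boldsymbol{\eta}_i=\boldsymbol{B}_i\{\boldsymbol{g}(d_i)-E[\boldsymbol{g}(d_i)]\}+\boldsymbol{\epsilon}_i$ of Assumption \ref{CRE}: the conditional restriction $E(\boldsymbol{\epsilon}_i\,|\,d_i)=\boldsymbol{0}$ kills the idiosyncratic part by iterated expectations, $E(\delta_i^{r}\boldsymbol{\epsilon}_i)=E[\delta_i^{r}E(\boldsymbol{\epsilon}_i|d_i)]=\boldsymbol{0}$ for $r=1,2$, while boundedness of $\boldsymbol{g}$ (and of $\boldsymbol{B}_i$) yields $\left\Vert E[\delta_i^{r}(\boldsymbol{g}(d_i)-E[\boldsymbol{g}(d_i)])]\right\Vert\le 2C\,E(|\delta_i|^{r})$. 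Since $E|\delta_i|=-E(\delta_i)=O(a_n)$ and $E(\delta_i^2)=O(a_n)$, both $E(\delta_i\boldsymbol{\eta}_i)$ and $E(\delta_i^2\boldsymbol{\eta}_i)$ are $O(a_n)$; the moment framework ensuring all these expectations are finite is supplied by Assumptions \ref{rcm} and \ref{CRE}.

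The two averaging statements follow in the same spirit. For (\ref{a}), (\ref{ds}) with $s=2$ gives $E[d_i^{2}\boldsymbol{1}\{d_i\le a_n\}]=O(a_n^{3})$ uniformly in $i$, so each summand equals $\{O(a_n^3)\}^{1/2}=O(a_n^{3/2})$ and the average inherits the rate. For (\ref{b}) the integrand $u^{-2}$ is singular at the origin, but the indicator $\boldsymbol{1}\{d_i>a_n\}$ supplies the cutoff and the crude bound $u^{-2}\le a_n^{-2}$ on $\{d_i>a_n\}$ already suffices: $E[d_i^{-2}\boldsymbol{1}\{d_i>a_n\}]\le a_n^{-2}\int_{a_n}^{\infty}f_d(u)\,du\le a_n^{-2}$, so each summand is at most $a_n^{-1}$ and the average is $O(a_n^{-1})$. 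I expect the only genuine obstacle to be the very first step — extracting a single uniform-in-$n$ bound on $f_d$ near zero from the pointwise conditions of Assumption \ref{distributiondi} — since every subsequent estimate is a routine consequence of it; here I would take care to verify that the Taylor remainder is genuinely $o(1)$ as $a_n\downarrow0$, which is precisely where continuity of $f_d'$ near the origin enters.
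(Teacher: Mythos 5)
Your proposal is correct and takes essentially the same route as the paper: every claim is reduced to integral estimates of $f_{d}$ on the shrinking interval $(0,a_{n}]$, the $\boldsymbol{\epsilon}_{i}$ component of $\boldsymbol{\eta}_{i}$ is removed by iterated expectations via $E(\boldsymbol{\epsilon}_{i}|d_{i})=\boldsymbol{0}$, and (\ref{b}) is handled with the crude bound $d_{i}^{-2}\boldsymbol{1}\{d_{i}>a_{n}\}\leq a_{n}^{-2}$. The differences are in execution, and two of yours are cleaner: the paper proves (\ref{ds}), (\ref{dbar2}) and (\ref{detax}) by applying the mean value theorem separately to each integral ($F_{d}(a_{n})$, $\int_{0}^{a_{n}}u^{s}f_{d}(u)du$, and the $g_{j}$-weighted integrals) and invoking Assumption \ref{distributiondi} at the resulting intermediate points, whereas you extract one uniform bound $\sup_{0<u\leq a_{n}}f_{d}(u)\leq C$ and integrate directly; for $E(\delta_{i}^{2})$ you exploit $-1<\delta_{i}\leq 0$ so that $\delta_{i}^{2}\leq -\delta_{i}$, avoiding the paper's explicit three-term expansion; and for (\ref{detax}) you need only $\left\Vert E\left[ \delta_{i}^{r}\left( \boldsymbol{g}(d_{i})-E\left[ \boldsymbol{g}(d_{i})\right] \right) \right] \right\Vert \leq 2C\,E|\delta_{i}|^{r}$, using boundedness of $\boldsymbol{g}$ and $\boldsymbol{B}_{i}$ alone, where the paper re-expands each $g_{j}$-integral term by term with the MVT. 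Your closing worry about the Taylor remainder is not a genuine gap: since the paper's own proof requires $f_{d}$ to be bounded at intermediate points that differ from one integral to the next, Assumption \ref{distributiondi} must be read as holding at every point of $(0,a_{n})$, under which reading your uniform bound on $f_{d}$ is immediate with no expansion needed at all. A side benefit of your version is that it makes clear the continuous differentiability of $\boldsymbol{g}$ in Assumption \ref{CRE} is not actually used in this lemma --- boundedness of $\boldsymbol{g}$ suffices.
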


\begin{proof}
By the mean value theorem (MVT), under Assumption \ref{distributiondi}, 
\begin{equation}
F_{d}(a_{n})=\int_{0}^{a_{n}}f_{d}(u)du=F_{d}(0)+f_{d}(\bar{a}%
_{n})a_{n}=f_{d}(\bar{a}_{n})a_{n}=O(a_{n}^{\alpha _{p}}),  \label{Fd}
\end{equation}%
where $\bar{a}_{n}$ lies on the line segment $(0,a_{n})$. Let $\psi
_{s}(a_{n})=\int_{0}^{a_{n}}u^{s}f_{d}(u)du$, and note that $\psi _{s}(0)=0,$
and $\psi _{s}^{\prime }(a_{n})=a_{n}^{s}f_{d}(a_{n})$. Then by the MVT, 
\begin{equation*}
\psi _{s}(a_{n})=\psi _{s}(0)+\left[ \bar{a}_{n}^{s}f_{d}(\bar{a}_{n})\right]
a_{n},
\end{equation*}%
where $\bar{a}_{n}$ lies on the line segment $(0,a_{n})$. Also from Remark %
\ref{Fu} we have $f_{d}(\bar{a}_{n})=O\left( a_{n}^{\alpha _{p}-1}\right) $,
and since $\bar{a}_{n}^{s}\leq a_{n}^{s}$ then $\psi _{s}(a_{n})=\bar{a}%
_{n}^{s}f_{d}(\bar{a}_{n})a_{n}=O\left( a_{n}^{s+\alpha _{p}}\right) $.
Using this result, it now follows that 
\begin{equation}
E\left[ d_{i}^{s}\boldsymbol{1}\{d_{i}\leq a_{n}\}\right] =%
\int_{0}^{a_{n}}u^{s}f_{d}(u)du=\psi _{s}(a_{n})=O(a_{n}^{s+\alpha _{p}})%
\text{, for }s=1,2,....  \label{Eds}
\end{equation}%
Further 
\begin{align}
E(\delta _{i})=& E\left[ \left( \frac{d_{i}-a_{n}}{a_{n}}\right) \boldsymbol{%
1}\{d_{i}\leq a_{n}\}\right] =a_{n}^{-1}E\left[ d_{i}\boldsymbol{1}%
\{d_{i}\leq a_{n}\}\right] -E\left[ \boldsymbol{1}\{d_{i}\leq a_{n}\}\right]
\notag \\
=& a_{n}^{-1}O(a_{n}^{1+\alpha _{p}})-F_{d}(a_{n})=O(a_{n}^{\alpha _{p}}).
\label{d1}
\end{align}%
Similarly 
\begin{align}
E(\delta _{i}^{2})& =E\left[ \left( \frac{d_{i}-a_{n}}{a_{n}}\right) ^{2}%
\boldsymbol{1}\{d_{i}\leq a_{n}\}\right]  \notag \\
& =a_{n}^{-2}E\left[ d_{i}^{2}\boldsymbol{1}\{d_{i}\leq a_{n}\}\right] +E%
\left[ \boldsymbol{1}\{d_{i}\leq a_{n}\}\right] -2a_{n}^{-1}E\left[ d_{i}%
\boldsymbol{1}\{d_{i}\leq a_{n}\}\right]  \notag \\
& =a_{n}^{-2}O(a_{n}^{2+\alpha
_{p}})+F_{d}(a_{n})-2a_{n}^{-1}O(a_{n}^{1+\alpha _{p}})=O(a_{n}^{\alpha
_{p}}).  \label{d2}
\end{align}%
Consider now the terms involving the products of $\delta _{i}$ and $%
\boldsymbol{\eta }_{i}$, and we have 
\begin{equation}
E\left( \delta _{i}\boldsymbol{\eta }_{i}\right) =\boldsymbol{B}_{i}E\left\{
\left( \frac{d_{i}-a_{n}}{a_{n}}\right) \boldsymbol{1}\{d_{i}\leq a_{n}\}%
\left[ \boldsymbol{g}(d_{i})-E\left[ \boldsymbol{g}(d_{i})\right] \right]
\right\} .  \label{deta1}
\end{equation}%
Since $\boldsymbol{B}_{i}$ is bounded and does not depend on $d_{i}$,
without loss of generality, we set $\boldsymbol{B}_{i}=\boldsymbol{I}%
_{k^{\prime }}$ and consider the $j^{th}$ term of (\ref{deta1}), namely 
\begin{align*}
s_{j}(a_{n})=& E\left\{ \left( \frac{d_{i}-a_{n}}{a_{n}}\right) \boldsymbol{1%
}\{d_{i}\leq a_{n}\}\left[ g_{j}(d_{i})-E\left[ g_{j}(d_{i})\right] \right]
\right\} \\
=& \frac{1}{a_{n}}\int_{0}^{a_{n}}ug_{j}(u)f_{d}(u)du-%
\int_{0}^{a_{n}}g_{j}(u)f_{d}(u)du \\
& -E\left[ g_{j}(d_{i})\right] \left[ \frac{1}{a_{n}}%
\int_{0}^{a_{n}}uf_{d}(u)du\right] +E\left[ g_{j}(d_{i})\right] \left[
\int_{0}^{a_{n}}f_{d}(u)du\right] .
\end{align*}%
By Assumption \ref{CRE}, $E\left[ g_{j}(d_{i})\right] <C$, and using (\ref%
{Fd}) and (\ref{ds}) we have%
\begin{equation*}
\int_{0}^{a_{n}}f_{d}(u)du=F_{d}(a_{n})=O(a_{n}^{\alpha _{p}})\text{, and }%
a_{n}^{-1}\int_{0}^{a_{n}}uf_{d}(u)du=a_{n}^{-1}O(a_{n}^{1+\alpha
_{p}})=O\left( a_{n}^{\alpha _{p}}\right) .
\end{equation*}%
Also by the MVT (and recalling that $F_{d}(a_{n})=f_{d}(\bar{a}%
_{n})a_{n}=O(a_{n}^{\alpha _{p}})$), 
\begin{equation*}
\int_{0}^{a_{n}}g_{j}(u)f_{d}(u)du=g_{j}(\bar{a}_{n})f_{d}(\bar{a}%
_{n})a_{n}=g_{j}(\bar{a}_{n})F_{d}\left( a_{n}\right) =O(a_{n}^{\alpha
_{p}}),
\end{equation*}%
and 
\begin{equation*}
\frac{1}{a_{n}}\int_{0}^{a_{n}}ug_{j}(u)f_{d}(u)du=\frac{1}{a_{n}}\left[ 
\bar{a}_{n}g_{j}(\bar{a}_{n})f_{d}(\bar{a}_{n})a_{n}\right] =\frac{1}{a_{n}}%
\left[ \bar{a}_{n}g_{j}(\bar{a}_{n})F_{d}\left( a_{n}\right) \right]
=O(a_{n}^{\alpha _{p}}).
\end{equation*}%
Hence, $E\left( \delta _{i}\boldsymbol{\eta }_{i}\right) =O(a_{n}^{\alpha
_{p}})$. Similarly, the $j^{th}$ term of $E\left( \delta _{i}^{2}\boldsymbol{%
\eta }_{i}\right) $ (setting $\boldsymbol{B}_{i}=\boldsymbol{I}_{k^{\prime
}} $) is given by 
\begin{eqnarray}
s_{j2}(a_{n}) &=&E\left\{ \left( \frac{d_{i}-a_{n}}{a_{n}}\right) ^{2}%
\boldsymbol{1}\{d_{i}\leq a_{n}\}\left[ g_{j}(d_{i})-E\left[ g_{j}(d_{i})%
\right] \right] \right\}  \notag \\
&=&E\left\{ \left( \frac{d_{i}^{2}}{a_{n}^{2}} + 1-2\frac{d_{i}-a_{n}}{a_{n}}%
\right) \boldsymbol{1}\{d_{i}\leq a_{n}\}\left[ g_{j}(d_{i})-E\left[
g_{j}(d_{i})\right] \right] \right\} .  \label{sj2A}
\end{eqnarray}%
Consider the first term 
\begin{eqnarray*}
&&E\left\{ \frac{d_{i}^{2}}{a_{n}^{2}}\boldsymbol{1}\{d_{i}\leq a_{n}\}\left[
g_{j}(d_{i})-E\left[ g_{j}(d_{i})\right] \right] \right\} \\
&=&\frac{1}{a_{n}^{2}}\int_{0}^{a_{n}}u^{2}g_{j}(u)f_{d}(u)du-\frac{1}{%
a_{n}^{2}}E\left[ g_{j}(d_{i})\right] E\left[ d_{i}^{2}\boldsymbol{1}%
\{d_{i}\leq a_{n}\}\right] ,
\end{eqnarray*}%
and again by the MTV, $a_{n}^{-2}%
\int_{0}^{a_{n}}u^{2}g_{j}(u)f_{d}(u)du=O(a_{n}^{\alpha _{p}})$, $E\left[
g_{j}(d_{i})\right] <C$, and using (\ref{Eds}), $E\left[ d_{i}^{2}%
\boldsymbol{1}\{d_{i}\leq a_{n}\}\right] =O(a_{n}^{2+\alpha _{p}})$. Hence,
the first term of (\ref{sj2A}) is $O(a_{n}^{\alpha _{p}})$. For its second
term, we have 
\begin{equation*}
E\left\{ \boldsymbol{1}\{d_{i}\leq a_{n}\}\left[ g_{j}(d_{i})-E\left[
g_{j}(d_{i})\right] \right] \right\} =\int_{0}^{a_{n}}g_{j}(u)f_{d}(u)du-E%
\left[ g_{j}(d_{i})\right] \int_{0}^{a_{n}}f_{d}(u)du=O(a_{n}^{\alpha _{p}}),
\end{equation*}%
and the order of the third term is already established to be $%
O(a_{n}^{\alpha _{p}})$. Hence, it follows that $E\left( \delta _{i}^{2}%
\boldsymbol{\eta }_{i}\right) =O(a_{n}^{\alpha _{p}})$. Finally, result (\ref%
{a}) follows from (\ref{ds}), and (\ref{b}) follows noting that $d_{i}^{-2}%
\boldsymbol{1}\{d_{i}>a_{n}\}\leq a_{n}^{-2}$.
\end{proof}

\begin{lemma}
\label{EVegzi} Suppose Assumptions \ref{errors}, \ref{rcm}, \ref{regressorsx}%
, \ref{distributiondi} and \ref{CRE} hold, and let%
\begin{equation*}
\boldsymbol{\bar{\xi}}_{\delta ,nT}=n^{-1}\sum_{i=1}^{n}\left( 1+\delta
_{i}\right) \boldsymbol{\xi }_{iT},
\end{equation*}%
where $\delta _{i}=\left( \frac{d_{i}-a_{n}}{a_{n}}\right) \boldsymbol{1}%
\{d_{i}\leq a_{n}\}$, $a_{n}=C_{n}n^{-\alpha }$, $C_{n}<C$, $d_{i}=\func{det}%
(\boldsymbol{X}_{i}^{\prime }\boldsymbol{M}_{T}\boldsymbol{X}_{i})$, $%
\boldsymbol{\xi }_{iT}=\boldsymbol{R}_{i}^{\prime}\boldsymbol{u}_{i}$, and $%
\boldsymbol{R}_{i}=\boldsymbol{M}_{T}\boldsymbol{X}_{i}\left( \boldsymbol{X}%
_{i}^{\prime }\boldsymbol{M}_{T}\boldsymbol{X}_{i}\right) ^{-1}$. Then 
\begin{equation}
E\left( \boldsymbol{\bar{\xi}}_{\delta ,nT}\right) =\boldsymbol{0}\text{, }
\label{Eegzi}
\end{equation}%
\begin{equation}
Var\left( \boldsymbol{\bar{\xi}}_{\delta ,nT}\right) =\frac{1}{n^{2}}%
\sum_{i=1}^{n}E\left[ \boldsymbol{1}\{d_{i}>a_{n}\}\boldsymbol{R}%
_{i}^{\prime }\boldsymbol{H}_{i}\boldsymbol{R}_{i}\right] +\frac{1}{n^{2}}%
\sum_{i=1}^{n}a_{n}^{-2}E\left[ d_{i}^{2}\boldsymbol{1}\{d_{i}\leq a_{n}\}%
\boldsymbol{R}_{i}^{\prime }\boldsymbol{H}_{i}\boldsymbol{R}_{i}\right] ,
\label{Vegzi}
\end{equation}%
\begin{equation}
Var\left( \boldsymbol{\bar{\xi}}_{\delta ,nT}\right)
=O(n^{-1}a_{n}^{-1})+O(n^{-1}a_{n}^{-1+\alpha _{p}/2}),  \label{VegziOrder}
\end{equation}%
and%
\begin{equation}
E\left[ n^{-1}\sum_{i=1}^{n}a_{n}^{-1}d_{i}^{2}\boldsymbol{1}\{d_{i}\leq
a_{n}\}\boldsymbol{R}_{i}^{\prime }\boldsymbol{H}_{i}(\boldsymbol{X}_{i})%
\boldsymbol{R}_{i}\right] =O\left( a_{n}^{\alpha _{p}/2}\right) .
\label{VegziA}
\end{equation}
\end{lemma}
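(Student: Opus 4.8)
The plan is to prove the four displayed claims in order, relying throughout on the fact that both $\left(1+\delta_i\right)$ and $\boldsymbol{R}_i$ are measurable functions of $\boldsymbol{W}_i$ (the former through $d_i=\det(\boldsymbol{W}_i^{\prime}\boldsymbol{W}_i)$), so that conditioning on $\boldsymbol{W}_i$ isolates the error moments cleanly. For \eqref{Eegzi}, I would condition on $\boldsymbol{W}_i$ and use $\boldsymbol{\xi}_{iT}=\boldsymbol{R}_i^{\prime}\boldsymbol{u}_i$ together with $E(\boldsymbol{u}_i\vert\boldsymbol{W}_i)=\boldsymbol{0}$ from Assumption \ref{errors}(b), giving $E[(1+\delta_i)\boldsymbol{\xi}_{iT}\vert\boldsymbol{W}_i]=(1+\delta_i)\boldsymbol{R}_i^{\prime}E(\boldsymbol{u}_i\vert\boldsymbol{W}_i)=\boldsymbol{0}$; the law of iterated expectations and averaging over $i$ then deliver \eqref{Eegzi}. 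For \eqref{Vegzi}, since the $\boldsymbol{u}_i$ are cross-sectionally independent conditional on the design (Assumption \ref{errors}(a)) and the $d_i$ are independent over $i$ (Assumption \ref{distributiondi}), the mean-zero summands $(1+\delta_i)\boldsymbol{\xi}_{iT}$ are uncorrelated, so conditioning on the full set of regressors kills the cross terms and leaves $Var(\boldsymbol{\bar{\xi}}_{\delta,nT})=n^{-2}\sum_{i=1}^{n}E[(1+\delta_i)^{2}\boldsymbol{R}_i^{\prime}\boldsymbol{H}_i\boldsymbol{R}_i]$, where I use $E(\boldsymbol{u}_i\boldsymbol{u}_i^{\prime}\vert\boldsymbol{W}_i)=\boldsymbol{H}_i$. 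Substituting the algebraic identity $(1+\delta_i)^{2}=\boldsymbol{1}\{d_i>a_n\}+a_n^{-2}d_i^{2}\boldsymbol{1}\{d_i\le a_n\}$, which follows directly from \eqref{deltai}, splits this into the two terms stated in \eqref{Vegzi}.

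For the order bound \eqref{VegziOrder}, I would bound both terms of \eqref{Vegzi} in spectral norm using the single key inequality
\[
\left\Vert\boldsymbol{R}_i^{\prime}\boldsymbol{H}_i\boldsymbol{R}_i\right\Vert\le\lambda_{\max}(\boldsymbol{H}_i)\left\Vert(\boldsymbol{W}_i^{\prime}\boldsymbol{W}_i)^{-1}\right\Vert\le C\,d_i^{-1}\left\Vert(\boldsymbol{W}_i^{\prime}\boldsymbol{W}_i)^{\ast}\right\Vert,
\]
which uses $\boldsymbol{R}_i^{\prime}\boldsymbol{R}_i=(\boldsymbol{W}_i^{\prime}\boldsymbol{W}_i)^{-1}=d_i^{-1}(\boldsymbol{W}_i^{\prime}\boldsymbol{W}_i)^{\ast}$ and $\sup_i\lambda_{\max}(\boldsymbol{H}_i)<C$. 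On $\{d_i>a_n\}$ one has $d_i^{-1}<a_n^{-1}$, so the first term is at most $Ca_n^{-1}E\Vert(\boldsymbol{W}_i^{\prime}\boldsymbol{W}_i)^{\ast}\Vert=O(a_n^{-1})$ per unit by Assumption \ref{regressorsw}; the $n^{-2}$ prefactor then gives $O(n^{-1}a_n^{-1})=O(n^{-1+\alpha})$. The second term has integrand bounded by $Ca_n^{-2}\,d_i\boldsymbol{1}\{d_i\le a_n\}\Vert(\boldsymbol{W}_i^{\prime}\boldsymbol{W}_i)^{\ast}\Vert$, and Cauchy--Schwarz with $E[d_i^{2}\boldsymbol{1}\{d_i\le a_n\}]=O(a_n^{3})$ from \eqref{ds} of Lemma \ref{deltaeta} yields $O(a_n^{-2}\cdot a_n^{3/2})=O(a_n^{-1/2})$ per unit, hence $O(n^{-1+\alpha/2})$, which is dominated by the first term. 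Combining the two establishes \eqref{VegziOrder}.

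Finally, \eqref{VegziA} is the same computation as the second term above but carrying one fewer factor of $a_n^{-1}$: the same norm bound gives an integrand at most $Ca_n^{-1}d_i\boldsymbol{1}\{d_i\le a_n\}\Vert(\boldsymbol{W}_i^{\prime}\boldsymbol{W}_i)^{\ast}\Vert$, and Cauchy--Schwarz against $\{E[d_i^{2}\boldsymbol{1}\{d_i\le a_n\}]\}^{1/2}=O(a_n^{3/2})$ and $\{E\Vert(\boldsymbol{W}_i^{\prime}\boldsymbol{W}_i)^{\ast}\Vert^{2}\}^{1/2}=O(1)$ gives $O(a_n^{-1}\cdot a_n^{3/2})=O(a_n^{1/2})$ uniformly in $i$, so the average over $i$ is also $O(a_n^{1/2})$. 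The main obstacle is not any single inequality but the exponent bookkeeping in the last two steps: the truncated moment $E[d_i^{2}\boldsymbol{1}\{d_i\le a_n\}]=O(a_n^{3})$ must be paired correctly (via Cauchy--Schwarz) with the $d_i^{-1}$ generated by $\boldsymbol{R}_i^{\prime}\boldsymbol{R}_i$ and with the explicit $a_n^{-2}$ prefactor, so that the $d_i^{2}$ in the numerator and the $d_i^{-1}$ from the inverse Gram matrix combine to a single power of $d_i$ before the truncation bound is applied; mishandling this cancellation would spuriously inflate the order.
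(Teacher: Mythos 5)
Your proof is correct and follows essentially the same route as the paper's: conditioning on $\boldsymbol{W}_i$ for the mean and the variance, the identity $(1+\delta_i)^2=\boldsymbol{1}\{d_i>a_n\}+a_n^{-2}d_i^2\boldsymbol{1}\{d_i\leq a_n\}$, the norm bound $\left\Vert \boldsymbol{R}_i^{\prime}\boldsymbol{H}_i\boldsymbol{R}_i\right\Vert \leq C d_i^{-1}\left\Vert (\boldsymbol{W}_i^{\prime}\boldsymbol{W}_i)^{\ast}\right\Vert$, and Cauchy--Schwarz paired with the truncated moments of Lemma \ref{deltaeta}. The only cosmetic difference is that for the untrimmed term in (\ref{VegziOrder}) you use the pointwise bound $d_i^{-1}\boldsymbol{1}\{d_i>a_n\}\leq a_n^{-1}$ together with $E\left\Vert (\boldsymbol{W}_i^{\prime}\boldsymbol{W}_i)^{\ast}\right\Vert<C$, whereas the paper applies Cauchy--Schwarz first and then bounds $E\left[d_i^{-2}\boldsymbol{1}\{d_i>a_n\}\right]\leq a_n^{-2}$; the two arguments are equivalent in effect and yield the same $O(n^{-1}a_n^{-1})$ order.
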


\begin{proof}
Under Assumption \ref{errors} and conditional on $\boldsymbol{X}_{i}$ (and
hence on $d_{i}$), $\left( 1+\delta _{i}\right) \boldsymbol{\xi }_{iT}$ are
distributed independently over $i$, 
\begin{equation*}
E\left( \boldsymbol{\bar{\xi}}_{\delta ,nT}\left\vert \boldsymbol{X}%
_{i}\right. \right) =n^{-1}\sum_{i=1}^{n}\left( 1+\delta _{i}\right) 
\boldsymbol{R}_{i}^{\prime }E\left( \boldsymbol{u}_{i}\left\vert \boldsymbol{%
X}_{i}\right. \right) =\boldsymbol{0}\text{,}
\end{equation*}%
and 
\begin{align*}
Var\left( \boldsymbol{\bar{\xi}}_{\delta ,nT}\left\vert \boldsymbol{X}%
_{i}\right. \right) & =n^{-2}\sum_{i=1}^{n}\left( 1+\delta _{i}\right)
^{2}E\left( \boldsymbol{\xi }_{iT}\boldsymbol{\xi }_{iT}^{\prime }\left\vert 
\boldsymbol{X}_{i}\right. \right) =n^{-2}\sum_{i=1}^{n}\left( 1+\delta
_{i}\right) ^{2}\boldsymbol{R}_{i}^{\prime }E\left( \boldsymbol{u}_{i}%
\boldsymbol{u}_{i}^{\prime }\left\vert \boldsymbol{X}_{i}\right. \right) 
\boldsymbol{R}_{i} \\
& =n^{-2}\sum_{i=1}^{n}\left( 1+\delta _{i}\right) ^{2}\boldsymbol{R}%
_{i}^{\prime }\boldsymbol{H}_{i}\boldsymbol{R}_{i},
\end{align*}%
where $\boldsymbol{H}_{i}=E\left( \boldsymbol{u}_{i}\boldsymbol{u}%
_{i}^{\prime }\left\vert \boldsymbol{X}_{i}\right. \right) $. We have
suppressed the dependence of $\boldsymbol{H}_{i}$ on $\boldsymbol{X}_{i}$ to
simplify the exposition. Hence, $E\left( \boldsymbol{\bar{\xi}}_{\delta
,nT}\right) =\boldsymbol{0}$, and $Var\left( \boldsymbol{\bar{\xi}}_{\delta
,nT}\right) =E\left[ Var\left( \boldsymbol{\bar{\xi}}_{\delta ,nT}\left\vert 
\boldsymbol{X}_{i}\right. \right) \right] $. To establish (\ref{Vegzi}),
note that 
\begin{equation}
\left( 1+\delta _{i}\right) ^{2}=\boldsymbol{1}\{d_{i}>a_{n}%
\}+a_{n}^{-2}d_{i}^{2}\boldsymbol{1}\{d_{i}\leq a_{n}\},  \label{1d2}
\end{equation}%
and 
\begin{equation*}
Var\left( \boldsymbol{\bar{\xi}}_{\delta ,nT\ }\right) =n^{-2}\sum_{i=1}^{n}E%
\left[ \boldsymbol{1}\{d_{i}>a_{n}\}\boldsymbol{R}_{i}^{\prime }\boldsymbol{H%
}_{i}\boldsymbol{R}_{i}\right] +n^{-2}E\left[
\sum_{i=1}^{n}a_{n}^{-2}d_{i}^{2}\boldsymbol{1}\{d_{i}\leq a_{n}\}%
\boldsymbol{R}_{i}^{\prime }\boldsymbol{H}_{i}\boldsymbol{R}_{i}\right] .
\end{equation*}%
Since $\boldsymbol{H}_{i}$ is positive definite and by Assumption \ref%
{errors} $\func{sup}_{i}\lambda _{max}\left( \boldsymbol{H}_{i}\right) <C$, 
\begin{equation}
\left\Vert \boldsymbol{R}_{i}^{\prime }\boldsymbol{H}_{i}\boldsymbol{R}%
_{i}\right\Vert \leq \lambda _{max}\left( \boldsymbol{H}_{i}\right)
\left\Vert \boldsymbol{R}_{i}^{\prime }\boldsymbol{R}_{i}\right\Vert
=\lambda _{max}\left( \boldsymbol{H}_{i}\right) \left\Vert (\boldsymbol{X}%
_{i}^{\prime }\boldsymbol{M}_{T}\boldsymbol{X}_{i})^{-1}\right\Vert
<Cd_{i}^{-1}\left\Vert \func{adj}(\boldsymbol{X}_{i}^{\prime }\boldsymbol{M}%
_{T}\boldsymbol{X}_{i})\right\Vert  \label{NormRHR}
\end{equation}%
and%
\begin{eqnarray*}
\left\Vert Var\left( \boldsymbol{\bar{\xi}}_{\delta ,nT}\right) \right\Vert
&\leq &Cn^{-2}\sum_{i=1}^{n}E\left[ \boldsymbol{1}\{d_{i}>a_{n}\}d_{i}^{-1}%
\left\Vert \func{adj}(\boldsymbol{X}_{i}^{\prime }\boldsymbol{M}_{T}%
\boldsymbol{X}_{i})\right\Vert \right] \\
&&+Cn^{-2}E\left[ \sum_{i=1}^{n}a_{n}^{-2}d_{i}\boldsymbol{1}\{d_{i}\leq
a_{n}\}\left\Vert \func{adj}(\boldsymbol{X}_{i}^{\prime }\boldsymbol{M}_{T}%
\boldsymbol{X}_{i})\right\Vert \right] .
\end{eqnarray*}%
By the Cauchy-Schwarz inequality,%
\begin{equation*}
E\left[ \boldsymbol{1}\{d_{i}>a_{n}\}d_{i}^{-1}\left\Vert \func{adj}(%
\boldsymbol{X}_{i}^{\prime }\boldsymbol{M}_{T}\boldsymbol{X}_{i})\right\Vert %
\right] \leq \left\{ E\left[ d_{i}^{-2}\boldsymbol{1}\{d_{i}>a_{n}\}\right]
\right\} ^{\frac{1}{2}}\left\{ E\left[ \left\Vert \func{adj}\left( 
\boldsymbol{X}_{i}^{\prime }\boldsymbol{M}_{T}\boldsymbol{X}_{i}\right)
\right\Vert ^{2}\right] \right\} ^{\frac{1}{2}},
\end{equation*}%
and 
\begin{align*}
& E\left[ d_{i}\boldsymbol{1}\{d_{i}\leq a_{n}\}\left\Vert \func{adj}(%
\boldsymbol{X}_{i}^{\prime }\boldsymbol{M}_{T}\boldsymbol{X}_{i})\right\Vert %
\right] \\
\leq & \left\{ E\left[ d_{i}^{2}\boldsymbol{1}\{d_{i}\leq a_{n}\}\right]
\right\} ^{1/2}\left\{ E\left[ \left\Vert \func{adj}\left( \boldsymbol{X}%
_{i}^{\prime }\boldsymbol{M}_{T}\boldsymbol{X}_{i}\right) \right\Vert ^{2}%
\right] \right\} ^{1/2}.
\end{align*}%
Also, by Assumption \ref{regressorsx} $\func{sup}_{i}E\left[ \left\Vert 
\func{adj}\left( \boldsymbol{X}_{i}^{\prime }\boldsymbol{M}_{T}\boldsymbol{X}%
_{i}\right) \right\Vert ^{2}\right] <C$, then%
\begin{equation*}
\left\Vert Var\left( \boldsymbol{\bar{\xi}}_{\delta ,nT\ }\right)
\right\Vert \leq C\left[ n^{-2}\sum_{i=1}^{n}\left\{ E\left[ d_{i}^{-2}%
\boldsymbol{1}\{d_{i}>a_{n}\}\right] \right\}
^{1/2}+a_{n}^{-2}n^{-2}\sum_{i=1}^{n}\left\{ E\left[ d_{i}^{2}\boldsymbol{1}%
\{d_{i}\leq a_{n}\}\right] \right\} ^{1/2}\right] .
\end{equation*}%
Now using results (\ref{b}) and (\ref{a}) of Lemma \ref{deltaeta}, we have%
\begin{equation*}
n^{-2}\sum_{i=1}^{n}\left\{ E\left[ d_{i}^{-2}\boldsymbol{1}\{d_{i}>a_{n}\}%
\right] \right\} ^{1/2}=O(n^{-1}a_{n}^{-1}),
\end{equation*}%
and 
\begin{equation*}
a_{n}^{-2}n^{-2}\sum_{i=1}^{n}\left\{ E\left[ d_{i}^{2}\boldsymbol{1}\{d_{i}
\leq a_{n}\}\right] \right\} ^{1/2}=a_{n}^{-2}O(n^{-1}a_{n}^{1+\alpha
_{p}/2})=O(n^{-1}a_{n}^{-1+\alpha _{p}/2}).
\end{equation*}%
Hence, result (\ref{VegziOrder}) follows, namely, $\left\Vert Var\left( 
\boldsymbol{\bar{\xi}}_{\delta ,nT}\right) \right\Vert
=O(n^{-1}a_{n}^{-1})+O(n^{-1}a_{n}^{-1+\alpha _{p}/2})$. To establish (\ref%
{VegziA}), using (\ref{NormRHR}) we have 
\begin{equation}
\left\Vert n^{-1}\sum_{i=1}^{n}a_{n}^{-1}d_{i}^{2}\boldsymbol{1}\{d_{i}\leq
a_{n}\}\boldsymbol{R}_{i}^{\prime }\boldsymbol{H}_{i}\boldsymbol{R}%
_{i}\right\Vert \leq Cn^{-1}\sum_{i=1}^{n}a_{n}^{-1}d_{i}\boldsymbol{1}%
\{d_{i}\leq a_{n}\}\left\Vert \func{adj}(\boldsymbol{X}_{i}^{\prime }%
\boldsymbol{M}_{T}\boldsymbol{X}_{i})\right\Vert ,  \label{Vi2}
\end{equation}%
and by the Cauchy-Schwarz inequality, 
\begin{eqnarray*}
&&E\left\Vert n^{-1}\sum_{i=1}^{n}a_{n}^{-1}d_{i}^{2}\boldsymbol{1}%
\{d_{i}\leq a_{n}\}\boldsymbol{R}_{i}^{\prime }\boldsymbol{H}_{i}\boldsymbol{%
R}_{i}\right\Vert \\
&\leq &Cn^{-1}\sum_{i=1}^{n}a_{n}^{-1}\left\{ E\left[ d_{i}^{2}\boldsymbol{1}%
\{d_{i}\leq a_{n}\}\right] \right\} ^{1/2}\left[ E\left\Vert \func{adj}%
\left( \boldsymbol{X}_{i}^{\prime }\boldsymbol{M}_{T}\boldsymbol{X}%
_{i}\right) \right\Vert ^{2}\right] ^{1/2}.
\end{eqnarray*}%
Under Assumption \ref{regressorsx}, $\func{sup}_{i}E\left\Vert \func{adj}%
\left( \boldsymbol{X}_{i}^{\prime }\boldsymbol{M}_{T}\boldsymbol{X}%
_{i}\right) \right\Vert ^{2}<C$, and we have 
\begin{equation*}
E\left\Vert n^{-1}\sum_{i=1}^{n}a_{n}^{-1}d_{i}^{2}\boldsymbol{1}\{d_{i}\leq
a_{n}\}\boldsymbol{R}_{i}^{\prime }\boldsymbol{H}_{i}(\boldsymbol{X}_{i})%
\boldsymbol{R}_{i}\right\Vert \leq C\left[ n^{-1}\sum_{i=1}^{n}\left\{ E%
\left[ a_{n}^{-2}d_{i}^{2}\boldsymbol{1}\{d_{i}\leq a_{n}\}\right] \right\}
^{1/2}\right] .
\end{equation*}%
Now using (\ref{ds}) $E\left[ a_{n}^{-2}d_{i}^{2}\boldsymbol{1}\{d_{i}\leq
a_{n}\}\right] =a_{n}^{-2}O(a_{n}^{2+\alpha _{p}})=O\left( a_{n}^{\alpha
_{p}}\right) $, and result (\ref{VegziA}) follows.
\end{proof}

\begin{lemma}
\label{asysit}Let 
\begin{align*}
v_{it}-\bar{v}_{i\circ } &=u_{it}-u_{i\circ }+\left( \boldsymbol{x}_{it}-%
\boldsymbol{\bar{x}}_{i\circ }\right) ^{\prime }\boldsymbol{\eta }_{i },%
\text{ for }i=1,2,...,n;t=1,2,...,T, \\
\text{and }\bar{v}_{\circ t}-\bar{v}_{\circ \circ }
&=n^{-1}\sum_{i=1}^{n}\left( v_{it}-\bar{v}_{i\circ }\right) =\left( \bar{u}%
_{\circ t}-\bar{u}_{\circ \circ }\right) +n^{-1}\sum_{i=1}^{n}\left( 
\boldsymbol{x}_{it}-\boldsymbol{\bar{x}}_{i\circ }\right) ^{\prime }%
\boldsymbol{\eta }_{i }, \text{ for }t=1,2,...,T,
\end{align*}%
where $\boldsymbol{\eta }_{i }=\boldsymbol{\beta }_{i}-\boldsymbol{\beta }%
_{0}$, and suppose that Assumptions \ref{errors}, \ref{CRE} and \ref%
{stationaryCRE} hold. Then%
\begin{equation}
E\left( v_{it}-\bar{v}_{i\circ }\right) =0\text{, for }%
i=1,2,...,n;t=1,2,...,T,  \label{Lem3-1}
\end{equation}%
\begin{equation}
\bar{v}_{\circ t}-\bar{v}_{\circ \circ }=O_{p}(n^{-1/2}),\text{ for }%
t=1,2,...,T,  \label{Lem3-2}
\end{equation}%
and (noting that $T$ is fixed as $n\rightarrow \infty $)%
\begin{equation}
\sqrt{n}\left( \boldsymbol{\bar{v}}_{T}-\bar{v}_{\circ \circ }\boldsymbol{%
\tau }_{T}\right) \rightarrow _{d}N(\boldsymbol{0},\boldsymbol{\Omega }_{\nu
}),  \label{Lem3-3}
\end{equation}%
where $\boldsymbol{\bar{v}}_{T}=\left( \bar{v}_{\circ 1},\bar{v}_{\circ
2},...,\bar{v}_{\circ T}\right) ^{\prime }=n^{-1}\sum_{i=1}^{n}\boldsymbol{%
\nu }_{i\circ }$, $\boldsymbol{\nu }_{i\circ }=(\nu _{i1},\nu _{i2},...,\nu
_{iT})^{\prime }$,%
\begin{equation}
\boldsymbol{\Omega }_{\nu }=\boldsymbol{M}_{T}\left[ \lim_{n\rightarrow
\infty }n^{-1}\sum_{i=1}^{n}E\left( \boldsymbol{\nu }_{i\circ }\boldsymbol{%
\nu }_{i\circ }^{\prime }\right) \right] \boldsymbol{M}_{T},  \label{Omeganu}
\end{equation}%
and $\boldsymbol{M}_{T}=\boldsymbol{I}_{T}-T^{-1}\boldsymbol{\tau }_{T}%
\boldsymbol{\tau }_{T}^{\prime }$.
\end{lemma}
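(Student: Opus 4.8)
The plan is first to reduce all three claims to a single statement about the cross-sectional average of the independent vectors $\boldsymbol{M}_{T}\boldsymbol{\nu}_{i\circ}$. Writing $\boldsymbol{\bar{v}}_{T}=n^{-1}\sum_{i=1}^{n}\boldsymbol{\nu}_{i\circ}$ and noting that $\bar{v}_{\circ\circ}=T^{-1}\boldsymbol{\tau}_{T}^{\prime}\boldsymbol{\bar{v}}_{T}$, the centered vector in (\ref{Lem3-3}) is exactly $\boldsymbol{\bar{v}}_{T}-\bar{v}_{\circ\circ}\boldsymbol{\tau}_{T}=(\boldsymbol{I}_{T}-T^{-1}\boldsymbol{\tau}_{T}\boldsymbol{\tau}_{T}^{\prime})\boldsymbol{\bar{v}}_{T}=\boldsymbol{M}_{T}\boldsymbol{\bar{v}}_{T}=n^{-1}\sum_{i=1}^{n}\boldsymbol{M}_{T}\boldsymbol{\nu}_{i\circ}$, whose $t$-th coordinate is $\bar{v}_{\circ t}-\bar{v}_{\circ\circ}$. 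Everything therefore reduces to the behaviour of the independent summands $\boldsymbol{M}_{T}\boldsymbol{\nu}_{i\circ}$, the $t$-th coordinate of each being $v_{it}-\bar{v}_{i\circ}$.

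For (\ref{Lem3-1}) I would split $v_{it}-\bar{v}_{i\circ}=(u_{it}-\bar{u}_{i\circ})+(\boldsymbol{x}_{it}-\boldsymbol{\bar{x}}_{i\circ})^{\prime}\boldsymbol{\eta}_{i\beta}$ and treat the two pieces separately. The error piece has mean zero because, conditioning on $\boldsymbol{W}_{i}$, Assumption \ref{errors}(b) gives $E(u_{it}\,|\,\boldsymbol{W}_{i})=0$, so $E(u_{it}-\bar{u}_{i\circ})=0$ by iterated expectations. For the regressor piece I would invoke the time-invariance of the correlated-heterogeneity structure (Assumption \ref{stationaryCRE}), which renders $E(\boldsymbol{x}_{it}\boldsymbol{\eta}_{i\beta}^{\prime})$ independent of $t$; subtracting the time average then yields $E[(\boldsymbol{x}_{it}-\boldsymbol{\bar{x}}_{i\circ})^{\prime}\boldsymbol{\eta}_{i\beta}]=0$. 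This is the step that carries the real content: without time-invariance the within transformation does not annihilate the dependence between $\boldsymbol{\eta}_{i\beta}$ and the regressors, so I expect this to be the main obstacle and the precise point where Assumption \ref{stationaryCRE} is indispensable.

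Result (\ref{Lem3-2}) then follows from a weak law of large numbers applied coordinatewise: $\bar{v}_{\circ t}-\bar{v}_{\circ\circ}=n^{-1}\sum_{i=1}^{n}(\boldsymbol{M}_{T}\boldsymbol{\nu}_{i\circ})_{t}$ is an average of $n$ mean-zero terms that are cross-sectionally independent by Assumptions \ref{errors}(a) and \ref{CRE}, and have uniformly bounded variances (the bounds $\sup_{i}E\|\boldsymbol{\epsilon}_{i}\|^{4}<C$ and $\sup_{i}E\|\boldsymbol{\theta}_{i}\|^{4}<C$, together with the bounded conditional error covariance of Assumption \ref{errors}(c), control the second moments of $\boldsymbol{\nu}_{i\circ}$). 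Chebyshev's inequality then delivers the $O_{p}(n^{-1/2})$ rate.

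Finally, for (\ref{Lem3-3}) I would write $\sqrt{n}(\boldsymbol{\bar{v}}_{T}-\bar{v}_{\circ\circ}\boldsymbol{\tau}_{T})=n^{-1/2}\sum_{i=1}^{n}\boldsymbol{M}_{T}\boldsymbol{\nu}_{i\circ}$ and apply a Lindeberg--Feller (or Lyapunov) central limit theorem for independent, non-identically distributed summands. The $\boldsymbol{M}_{T}\boldsymbol{\nu}_{i\circ}$ are independent across $i$, have mean zero by (\ref{Lem3-1}) --- equivalently $\boldsymbol{M}_{T}E(\boldsymbol{\nu}_{i\circ})=\boldsymbol{0}$, so that $\boldsymbol{M}_{T}\boldsymbol{\nu}_{i\circ}=\boldsymbol{M}_{T}(\boldsymbol{\nu}_{i\circ}-E\boldsymbol{\nu}_{i\circ})$ is already centered --- and satisfy the Lyapunov condition via a uniform $2+\delta$ moment bound $\sup_{i}E\|\boldsymbol{\nu}_{i\circ}\|^{2+\delta}<C$ obtained from the maintained moment conditions on $\boldsymbol{u}_{i}$, $\boldsymbol{X}_{i}$ and $\boldsymbol{\eta}_{i\beta}$. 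Because $\boldsymbol{M}_{T}E(\boldsymbol{\nu}_{i\circ})=\boldsymbol{0}$, the cross term vanishes and the limiting covariance is $\lim_{n\rightarrow\infty}n^{-1}\sum_{i=1}^{n}Var(\boldsymbol{M}_{T}\boldsymbol{\nu}_{i\circ})=\boldsymbol{M}_{T}\big[\lim_{n\rightarrow\infty}n^{-1}\sum_{i=1}^{n}E(\boldsymbol{\nu}_{i\circ}\boldsymbol{\nu}_{i\circ}^{\prime})\big]\boldsymbol{M}_{T}=\boldsymbol{\Omega}_{\nu}$, which is precisely (\ref{Omeganu}); note that this limit is singular in the $\boldsymbol{\tau}_{T}$ direction, consistent with the $\boldsymbol{M}_{T}$ projection, and the Cram\'{e}r--Wold device handles the degenerate directions without difficulty.
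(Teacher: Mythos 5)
Your proposal is correct and follows essentially the same route as the paper's proof: the same split of $v_{it}-\bar{v}_{i\circ}$ into the error part (mean zero by Assumption \ref{errors}) and the regressor-heterogeneity part (mean zero by the time-invariance in Assumption \ref{stationaryCRE}), the same use of cross-sectional independence and bounded variances for the $O_{p}(n^{-1/2})$ rate, and the same representation $\sqrt{n}\left( \boldsymbol{\bar{v}}_{T}-\bar{v}_{\circ \circ }\boldsymbol{\tau }_{T}\right) =n^{-1/2}\sum_{i=1}^{n}\boldsymbol{M}_{T}\boldsymbol{\nu }_{i\circ }$ combined with a CLT for independent, non-identically distributed summands. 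Your added detail on the centering $\boldsymbol{M}_{T}E(\boldsymbol{\nu }_{i\circ })=\boldsymbol{0}$ and the resulting form of $\boldsymbol{\Omega }_{\nu }$ is a slightly more explicit verification of what the paper leaves implicit.
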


\begin{proof}
Under Assumptions \ref{errors} and \ref{stationaryCRE}, $E(u_{it})=0$ and $%
E\left( \boldsymbol{x}_{it}^{\prime }\boldsymbol{\eta }_{i }\right) =E\left( 
\boldsymbol{x}_{is}^{\prime }\boldsymbol{\eta }_{i }\right) $ for all $t$
and $s$. Hence, 
\begin{equation*}
E\left( u_{it}-u_{i\circ }\right) =0 \text{, and } E\left[ \left( 
\boldsymbol{x}_{it}-\boldsymbol{\bar{x}}_{i\circ }\right) ^{\prime }%
\boldsymbol{\eta }_{i }\right] =E\left( \boldsymbol{x}_{it}^{\prime }%
\boldsymbol{\eta }_{i }\right) -T^{-1} \sum_{t^{\prime }=1}^{T}E\left( 
\boldsymbol{x}_{it^{\prime }}^{\prime }\boldsymbol{\eta }_{i }\right) =0,
\end{equation*}%
then result (\ref{Lem3-1}) follows. Result (\ref{Lem3-2}) also follows
noting that under Assumptions \ref{errors} and \ref{CRE}, $\left\{ v_{it}-%
\bar{v}_{i\circ }\text{, for }i=1,2,...,n\right\} $ are cross-sectionally
independent with mean zero and finite variances. To establish (\ref{Lem3-3}%
), we first note that $\bar{v}_{\circ \circ }=T^{-1}\left( \boldsymbol{\tau }%
_{T}^{\prime }\boldsymbol{\bar{v}}_{T}\right) $, and hence 
\begin{equation*}
\sqrt{n}\left( \boldsymbol{\bar{v}}_{T}-\bar{v}_{\circ \circ }\boldsymbol{%
\tau }_{T}\right) =\boldsymbol{M}_{T }\sqrt{n}\boldsymbol{\bar{v}}%
_{T}=n^{-1/2}\sum_{i=1}^{n}\boldsymbol{M}_{T}\boldsymbol{\nu }_{i\circ },
\end{equation*}%
where $\boldsymbol{M}_{T }\boldsymbol{\nu }_{i\circ }$ is a $T\times 1$
vector ($T$ is fixed) with zero means and finite variances, and by
Assumption \ref{CRE} is cross-sectionally independent. Therefore, the result
(\ref{Lem3-3}) follows by standard central limit theorems for independent
but not identically distributed random variables.
\end{proof}

\section{Proof of Propositions and Theorems}

\subsection{Proof of Proposition \protect\ref{prop_mexist}}

\label{pf_prop_mexist}

\begin{proof}
Using (\ref{mge2}) and (\ref{Truebeta}) we note that 
\begin{equation}
\boldsymbol{\hat{\beta}}_{MG}-\boldsymbol{\beta }_{0}=\boldsymbol{\bar{\xi}}%
_{nT}+O_{p}\left( n^{-1/2}\right) .  \label{MGgap}
\end{equation}%
By the Markov inequality, for any fixed $\epsilon >0$,%
\begin{equation}
Pr\left( \left\Vert \boldsymbol{\bar{\xi}}_{nT}\right\Vert \geq \epsilon
\right) \leq \frac{E\left\Vert \boldsymbol{\bar{\xi}}_{nT}\right\Vert ^{2}}{%
\epsilon ^{2}}.  \label{MI}
\end{equation}%
Further, 
\begin{equation*}
\left\Vert \boldsymbol{\bar{\xi}}_{nT}\right\Vert ^{2}=n^{-2}\left\Vert
\sum_{i=1}^{n}\boldsymbol{\xi }_{iT}\right\Vert ^{2}=n^{-2}\left(
\sum_{i=1}^{n}\boldsymbol{\xi }_{iT}\right) ^{\prime }\left( \sum_{i=1}^{n}%
\boldsymbol{\xi }_{iT}\right) =n^{-2}\sum_{i=1}^{n}\sum_{j=1}^{n}\boldsymbol{%
\xi }_{iT}^{\prime }\boldsymbol{\xi }_{jT}.
\end{equation*}%
Hence, $E\left\Vert \boldsymbol{\bar{\xi}}_{nT}\right\Vert
^{2}=n^{-2}\sum_{i=1}^{n}\sum_{j=1}^{n}E\left( \boldsymbol{\xi }%
_{iT}^{\prime }\boldsymbol{\xi }_{jT}\right) $. Since under Assumption \ref%
{errors}, $u_{it}$'s are cross-sectionally independent, then 
\begin{equation}
E\left\Vert \boldsymbol{\bar{\xi}}_{nT}\right\Vert
^{2}=n^{-2}\sum_{i=1}^{n}E\left( \boldsymbol{\xi }_{iT}^{\prime }\boldsymbol{%
\xi }_{iT}\right) .  \label{egzibar2}
\end{equation}%
Also using $\boldsymbol{\xi }_{iT}=\boldsymbol{R}_{i}^{\prime }\boldsymbol{u}%
_{i}=\left( \boldsymbol{X}_{i}^{\prime }\boldsymbol{M}_{T}\boldsymbol{X}%
_{i}\right) ^{-1}\boldsymbol{X}_{i}^{\prime }\boldsymbol{M}_{T}\boldsymbol{u}%
_{i}$, we have 
\begin{equation*}
E\left( \boldsymbol{\xi }_{iT}^{\prime }\boldsymbol{\xi }_{iT}|\boldsymbol{X}%
_{i}\right) =E\left( \boldsymbol{u}_{i}^{\prime }\boldsymbol{R}_{i}%
\boldsymbol{R}_{i}^{\prime }\boldsymbol{u}_{i}|\boldsymbol{X}_{i}\right) =E%
\left[ Tr\left( \boldsymbol{R}_{i}^{\prime }\boldsymbol{u}_{i}\boldsymbol{u}%
_{i}^{\prime }\boldsymbol{R}_{i}\right) |\boldsymbol{X}_{i}\right] =Tr\left( 
\boldsymbol{R}_{i}^{\prime }\boldsymbol{H}_{i}(\boldsymbol{X}_{i})%
\boldsymbol{R}_{i}\right) ,
\end{equation*}%
where by Assumption \ref{errors}, $\boldsymbol{H}_{i}(\boldsymbol{X}_{i})=E(%
\boldsymbol{u}_{i}\boldsymbol{u}_{i}^{\prime }|\boldsymbol{X}_{i})$. Also 
\begin{align*}
Tr\left( \boldsymbol{R}_{i}^{\prime }\boldsymbol{H}_{i}(\boldsymbol{X}_{i})%
\boldsymbol{R}_{i}\right) & \leq T\lambda _{max}\left[ \boldsymbol{H}_{i}(%
\boldsymbol{X}_{i})\right] Tr\left( \boldsymbol{R}_{i}^{\prime }\boldsymbol{R%
}_{i}\right) =T\lambda _{max}\left[ \boldsymbol{H}_{i}(\boldsymbol{X}_{i})%
\right] Tr\left[ \left( \boldsymbol{X}_{i}^{\prime }\boldsymbol{M}_{T}%
\boldsymbol{X}_{i}\right) ^{-1}\right] \\
& \leq T\lambda _{max}\left[ \boldsymbol{H}_{i}(\boldsymbol{X}_{i})\right]
\left\{ k\lambda _{max}\left[ \left( \boldsymbol{X}_{i}^{\prime }\boldsymbol{%
M}_{T}\boldsymbol{X}_{i}\right) ^{-1}\right] \right\} .
\end{align*}%
Since $T$ and $k$ are finite, and under Assumption \ref{errors}, $\func{sup}%
_{i}\lambda _{max}\left[ \boldsymbol{H}_{i}(\boldsymbol{X}_{i})\right] <C$,
it follows that $Tr\left( \boldsymbol{R}_{i}^{\prime }\boldsymbol{H}_{i}(%
\boldsymbol{X}_{i})\boldsymbol{R}_{i}\right) \leq C\lambda _{max}\left( 
\boldsymbol{X}_{i}^{\prime }\boldsymbol{M}_{T}\boldsymbol{X}_{i}\right)
^{-1} $. Then given (\ref{egzibar2}) we have 
\begin{eqnarray}
E\left\Vert \boldsymbol{\bar{\xi}}_{nT}\right\Vert ^{2}
&=&n^{-2}\sum_{i=1}^{n}E\left[ Tr\left( \boldsymbol{R}_{i}^{\prime }%
\boldsymbol{H}_{i}(\boldsymbol{X}_{i})\boldsymbol{R}_{i}\right) \right] \leq
Cn^{-2}\sum_{i=1}^{n}E\left\{ \lambda _{max}\left[ \left( \boldsymbol{X}%
_{i}^{\prime }\boldsymbol{M}_{T}\boldsymbol{X}_{i}\right) ^{-1}\right]
\right\}  \notag \\
&\leq &Cn^{-1}\func{sup}_{i}E\left\{ \lambda _{max}\left[ \left( \boldsymbol{%
X}_{i}^{\prime }\boldsymbol{M}_{T}\boldsymbol{X}_{i}\right) ^{-1}\right]
\right\} .  \label{cb1}
\end{eqnarray}%
Also $(\boldsymbol{X}_{i}^{\prime }\boldsymbol{M}_{T}\boldsymbol{X}%
_{i})^{-1}=d_{i}^{-1}\func{adj}(\boldsymbol{X}_{i}^{\prime }\boldsymbol{M}%
_{T}\boldsymbol{X}_{i})$, where $d_{i}=\func{det}(\boldsymbol{X}_{i}^{\prime
}\boldsymbol{M}_{T}\boldsymbol{X}_{i})$ and $\func{adj}(\boldsymbol{X}%
_{i}^{\prime }\boldsymbol{M}_{T}\boldsymbol{X}_{i})$ is the adjugate of $%
\boldsymbol{X}_{i}^{\prime }\boldsymbol{M}_{T}\boldsymbol{X}_{i}$. Since $%
\boldsymbol{X}_{i}^{\prime }\boldsymbol{M}_{T}\boldsymbol{X}_{i}$ is a
symmetric matrix, $\lambda _{\max }\left[ (\boldsymbol{X}_{i}^{\prime }%
\boldsymbol{M}_{T}\boldsymbol{X}_{i})^{-1}\right] \leq \left\Vert (%
\boldsymbol{X}_{i}^{\prime }\boldsymbol{M}_{T}\boldsymbol{X}%
_{i})^{-1}\right\Vert _{1}$ , where $\left\Vert \boldsymbol{A}\right\Vert
_{1}$ denotes the column norm of $\boldsymbol{A}$, and $\lambda _{\max }%
\left[ (\boldsymbol{X}_{i}^{\prime }\boldsymbol{M}_{T}\boldsymbol{X}%
_{i})^{-1}\right] \leq d_{i}^{-1}\left\Vert \func{adj}(\boldsymbol{X}%
_{i}^{\prime }\boldsymbol{M}_{T}\boldsymbol{X}_{i})\right\Vert _{1}$.
Further, by the Cauchy-Schwarz inequality, 
\begin{equation*}
E\left\{ \lambda _{\max }\left[ (\boldsymbol{X}_{i}^{\prime }\boldsymbol{M}%
_{T}\boldsymbol{X}_{i})^{-1}\right] \right\} \leq \left[ E\left(
d_{i}^{-2}\right) \right] ^{1/2}\left\{ E\left[ \left\Vert \func{adj}(%
\boldsymbol{X}_{i}^{\prime }\boldsymbol{M}_{T}\boldsymbol{X}_{i})\right\Vert
_{1}^{2}\right] \right\} ^{1/2},
\end{equation*}%
and by conditions (\ref{sufficient}), it follows that $E\left\{ \lambda
_{\max }\left[ (\boldsymbol{X}_{i}^{\prime }\boldsymbol{M}_{T}\boldsymbol{X}%
_{i})^{-1}\right] \right\} <C$. Using this result in (\ref{cb1}) now yields 
\begin{equation}
E\left\Vert \boldsymbol{\bar{\xi}}_{nT}\right\Vert ^{2}=O(n^{-1}).
\label{cb}
\end{equation}%
Hence, from (\ref{MI}), it follows that $\boldsymbol{\bar{\xi}}%
_{nT}=O_{p}(n^{-1/2})$. In conjunction with (\ref{MGgap}), this establishes
that $\boldsymbol{\hat{\beta}}_{MG}-\boldsymbol{\beta }_{0}=O_{p}\left(
n^{-1/2}\right) $, namely $\boldsymbol{\hat{\beta}}_{MG}$ converges in
probability to $\boldsymbol{\beta }_{0}$ at the regular rate of $n^{-1/2}$,
irrespective of whether $\boldsymbol{\beta }_{i}$ are correlated with the
regressors or not, and this result is robust to error serial correlation and
conditional heteroskedasticity.
\end{proof}

\subsection{Proof of Proposition \protect\ref{prop_mgvsfe}\label{Proofmgvsfe}%
}

\begin{proof}
Consider 
\begin{equation*}
\boldsymbol{\bar{\Psi}}_{n}\boldsymbol{A}_{n}\boldsymbol{\bar{\Psi}}_{n}=%
\boldsymbol{\bar{\Psi}}_{n}\boldsymbol{\Omega }_{\beta }\boldsymbol{\bar{\Psi%
}}_{n}-\left( n^{-1}\sum_{i=1}^{n}\boldsymbol{\Psi }_{i}\boldsymbol{\Omega }%
_{\beta }\boldsymbol{\Psi }_{i}\right) ,
\end{equation*}%
where as before $\boldsymbol{\Psi }_{i}=\boldsymbol{X}_{i}^{\prime }%
\boldsymbol{M}_{T}\boldsymbol{X}_{i}$ and $\boldsymbol{\bar{\Psi}}%
_{n}=n^{-1}\sum_{i=1}^{n}\boldsymbol{\Psi }_{i}$. Without loss of generality, 
suppose that $\boldsymbol{\Omega }_{\beta }$ is positive definite, then 
\vspace{-1mm} 
\begin{equation*}
\boldsymbol{\bar{\Psi}}_{n}\boldsymbol{A}_{n}\boldsymbol{\bar{\Psi}}_{n}=-%
\left[ n^{-1}\sum_{i=1}^{n}\boldsymbol{P}_{i}\boldsymbol{P}_{i}^{\prime }-%
\boldsymbol{\bar{P}}_{n}\boldsymbol{\bar{P}}_{n}^{\prime }\right]
=-n^{-1}\sum_{i=1}^{n}\left( \boldsymbol{P}_{i}-\boldsymbol{\bar{P}}%
_{n}\right) \left( \boldsymbol{P}_{i}-\boldsymbol{\bar{P}}_{n}\right)
^{\prime },
\end{equation*}%
where $\boldsymbol{P}_{i}=\boldsymbol{\Psi }_{i}\boldsymbol{\Omega }_{\beta
}^{1/2}$ and $\boldsymbol{\bar{P}}_{n}=n^{-1}\sum_{i=1}^{n}\boldsymbol{P}%
_{i} $. Hence $\boldsymbol{A}_{n}=-\boldsymbol{\bar{\Psi}}_{n}^{-1}%
\boldsymbol{V}_{n}^{P}\boldsymbol{\bar{\Psi}}_{n}^{-1}$, where 
\begin{equation*}
\boldsymbol{V}_{n}^{P}=\left[ n^{-1}\sum_{i=1}^{n}\left( \boldsymbol{P}_{i}-%
\boldsymbol{\bar{P}}_{n}\right) \left( \boldsymbol{P}_{i}-\boldsymbol{\bar{P}%
}_{n}\right) ^{\prime }\right] .
\end{equation*}%
It is clear that $\boldsymbol{V}_{n}^{P}$ is positive semi-definite and by
Assumption \ref{PoolA}, $\boldsymbol{\bar{\Psi}}_{n}$ is positive definite.
Then it follows that $\boldsymbol{\bar{\Psi}}_{n}^{-1}\boldsymbol{V}_{n}^{P}%
\boldsymbol{\bar{\Psi}}_{n}^{-1}$ is also positive semi-definite, and hence, 
$\boldsymbol{A}_{n}$ is negative semi-definite, $\boldsymbol{A}_{n}\preceq 
\boldsymbol{0}$. For $\boldsymbol{B}_{n}$, we have 
\begin{eqnarray*}
\boldsymbol{\bar{\Psi}}_{n}\boldsymbol{B}_{n}\boldsymbol{\bar{\Psi}}_{n} &=&%
\boldsymbol{\bar{\Psi}}_{n}\left[ n^{-1}\sum_{i=1}^{n}\boldsymbol{\Psi }%
_{i}^{-1}\boldsymbol{X}_{i}^{\prime }\boldsymbol{M}_{T}\boldsymbol{H}_{i}(%
\boldsymbol{X}_{i})\boldsymbol{M}_{T}\boldsymbol{X}_{i}\boldsymbol{\Psi }%
_{i}^{-1}\right] \boldsymbol{\bar{\Psi}}_{n} \\
&&-\left[ n^{-1}\sum_{i=1}^{n}\boldsymbol{X}_{i}^{\prime }\boldsymbol{M}_{T}%
\boldsymbol{H}_{i}(\boldsymbol{X}_{i})\boldsymbol{M}_{T}\boldsymbol{X}_{i}%
\right],
\end{eqnarray*}%
and in general it is not possible to sign $\boldsymbol{\bar{\Psi}}_{n}%
\boldsymbol{B}_{n}\boldsymbol{\bar{\Psi}}_{n}$. The outcome depends on the
heterogeneity of error variances and their interactions with the
heterogeneity of regressors. We have already seen that $\boldsymbol{B}%
_{n}\succeq \boldsymbol{0}$ when $\boldsymbol{H}_{i}(\boldsymbol{X}%
_{i})=\sigma ^{2}\boldsymbol{I}_{T}$, but this result need not hold in a
more general setting where $\boldsymbol{H}_{i}(\boldsymbol{X}_{i})$ varies
across $i$.
\end{proof}

\subsection{Proof of Theorem \protect\ref{thm_asytmg}}

\label{Proofthm1}

\begin{proof}
To establish Theorem \ref{thm_asytmg}, we first write (\ref{Asy3}) as 
\begin{equation*}
n^{(1-\alpha )/2}\left( \boldsymbol{\hat{\beta}}_{TMG}-\boldsymbol{\beta }%
_{0}\right) =n^{(1-\alpha )/2}\boldsymbol{b}_{n}+\boldsymbol{z}_{p,n}+%
\boldsymbol{z}_{q,nT}+o_{p}(1),
\end{equation*}%
where $\boldsymbol{b}_{n}$ is given by (\ref{bn}),
\begin{equation*}
\boldsymbol{z}_{p,n}=n^{-(1+\alpha )/2}\sum_{i=1}^{n}\left[ \boldsymbol{p}%
_{i}-E\left( \boldsymbol{p}_{i}\right) \right] \text{, and } \boldsymbol{z}%
_{q,nT}=n^{-(1+\alpha )/2}\sum_{i=1}^{n}\boldsymbol{q}_{iT},
\end{equation*}
with $\boldsymbol{p}_{i}=\frac{1+\delta _{i}}{1+E\left( \bar{\delta}%
_{n}\right)} \boldsymbol{\eta }_{i}$, $\boldsymbol{q}_{iT}=\frac{1+\delta
_{i}}{1+E\left( \bar{\delta}_{n}\right)} \boldsymbol{\xi }_{iT}$, and $%
\boldsymbol{\xi }_{iT}=\left(\boldsymbol{X}_{i}^{\prime }\boldsymbol{M}_{T}%
\boldsymbol{X}_{i}\right) ^{-1} \boldsymbol{X}_{i}^{\prime}\boldsymbol{M}_{T}%
\boldsymbol{u}_{i} $. Recall that $n^{(1-\alpha )/2}\boldsymbol{b}%
_{n}=O\left( n^{\frac{1-\alpha (1+2 \alpha_{p})}{2}}\right) $, which becomes
negligible as $\alpha >\frac{1}{1+2 \alpha_{p}}$.

Under Assumptions \ref{distributiondi} and \ref{CRE}, $\boldsymbol{p}_{i}$
are cross-sectionally independent, and we have $Var\left( \boldsymbol{z}%
_{p,n}\right) =n^{-\alpha }\left[ n^{-1}\sum_{i=1}^{n}Var\left( \boldsymbol{p%
}_{i}\right) \right] =O(n^{-\alpha })$. Since $E\left( \boldsymbol{z}%
_{p,n}\right) =\boldsymbol{0}$, it follows that $\boldsymbol{z}%
_{p,n}=O_{p}(n^{-\alpha /2})$. Hence, 
\begin{equation*}
n^{(1-\alpha )/2}\left( \boldsymbol{\hat{\beta}}_{TMG}-\boldsymbol{\beta }%
_{0}\right) =\boldsymbol{z}_{q,nT}+O_{p}(n^{-\alpha /2})+o_{p}(1).
\end{equation*}%
Using (\ref{qbarnt}) and recalling that $E\left( \bar{\delta}_{n}\right)
=O\left( a_{n}^{\alpha _{p}}\right) =o(1)$, the first term of the above can
be written as 
\begin{equation*}
\boldsymbol{z}_{q,nT}=n^{-(1+\alpha )/2}\sum_{i=1}^{n}\boldsymbol{q}%
_{iT}=n^{(1-\alpha )/2}\boldsymbol{\bar{\xi}}_{\delta ,nT}+o_{p}(1),
\end{equation*}%
where $\boldsymbol{\bar{\xi}}_{\delta ,nT}=n^{-1}\sum_{i=1}^{n}\left(
1+\delta _{i}\right) \boldsymbol{\xi }_{iT}$. By (\ref{VegziOrder}) of Lemma %
\ref{EVegzi}, we have $Var\left( \boldsymbol{\bar{\xi}}_{\delta ,nT}\right)
= $ $O(n^{-1}a_{n}^{-1})$. Hence, $Var\left( \boldsymbol{z}_{q,nT}\right)
=n^{(1-\alpha )}O(n^{-1}a_{n}^{-1})=n^{-\alpha }O(a_{n}^{-1})=O(1)$, and the
asymptotic distribution of $\boldsymbol{\hat{\beta}}_{TMG}$ is determined by
that of $\boldsymbol{z}_{q,nT}$. Under Assumption \ref{errors}, conditional
on $\boldsymbol{X}_{i}$, $\boldsymbol{q}_{iT}$ are independently distributed
over $i$ with zero means and finite variances, and thus, $\boldsymbol{z}_{q,nT}$ tends to a
normal distribution. Using (\ref{Vegzi}) of Lemma \ref{EVegzi}, we note that%
\begin{eqnarray*}
Var\left( \boldsymbol{z}_{q,nT}\right) &=&n^{-1-\alpha }\sum_{i=1}^{n}E\left[
\boldsymbol{1}\{d_{i}>a_{n}\}\boldsymbol{R}_{i}^{\prime }\boldsymbol{H}_{i}%
\boldsymbol{R}_{i}\right] \\
&&+n^{-1-\alpha }\sum_{i=1}^{n}a_{n}^{-2}E\left[ d_{i}^{2}\boldsymbol{1}%
\{d_{i}\leq a_{n}\}\boldsymbol{R}_{i}^{\prime }\boldsymbol{H}_{i}\boldsymbol{%
R}_{i}\right] +o(1),
\end{eqnarray*}%
which can be written equivalently as (noting that $a_{n}=C_{n}n^{-\alpha }$) 
\begin{eqnarray*}
Var\left( \boldsymbol{z}_{q,nT}\right) &=&C_{n}^{-1}\left[
n^{-1}\sum_{i=1}^{n}E\left[ a_{n}\boldsymbol{1}\{d_{i}>a_{n}\}\boldsymbol{R}%
_{i}^{\prime }\boldsymbol{H}_{i}\boldsymbol{R}_{i}\right] \right] \\
&&+C_{n}^{-1}\left[ n^{-1}\sum_{i=1}^{n}a_{n}^{-1}E\left[ d_{i}^{2}%
\boldsymbol{1}\{d_{i}\leq a_{n}\}\boldsymbol{R}_{i}^{\prime }\boldsymbol{H}%
_{i}\boldsymbol{R}_{i}\right] \right] +o(1).
\end{eqnarray*}%
By (\ref{VegziA}) of Lemma \ref{EVegzi}, $E\left[ n^{-1}%
\sum_{i=1}^{n}a_{n}^{-1}d_{i}^{2}\boldsymbol{1}\{d_{i}\leq a_{n}\}%
\boldsymbol{R}_{i}^{\prime }\boldsymbol{H}_{i}(\boldsymbol{X}_{i})%
\boldsymbol{R}_{i}\right] =O\left( a_{n}^{\alpha _{p}/2}\right) =O\left(
n^{-\alpha \alpha _{p}/2}\right) =o(1)$, and it follows that 
\begin{equation}
\lim_{n\rightarrow \infty }Var\left( \boldsymbol{z}_{q,nT}\right)
=C^{-1}\lim_{n\rightarrow \infty }n^{-1}\sum_{i=1}^{n}E\left[ a_{n}%
\boldsymbol{1}\{d_{i}>a_{n}\}\boldsymbol{R}_{i}^{\prime }\boldsymbol{H}_{i}%
\boldsymbol{R}_{i}\right] ,  \label{ConTMG}
\end{equation}%
with $C=\lim_{n\rightarrow \infty }C_{n}>0$.
\end{proof}

\subsection{Proof of Theorem \protect\ref{VarCon}\label{ProofVarCon}}

\begin{proof}
Using (\ref{betaihat}), (\ref{betai2}) and (\ref{etai}), we have 
\begin{equation}
\boldsymbol{\tilde{\beta}}_{i}-\boldsymbol{\beta }_{0}=\delta _{i}%
\boldsymbol{\beta }_{0}+\boldsymbol{\zeta }_{iT},  \label{V1}
\end{equation}%
where $\boldsymbol{\zeta }_{iT}=(1+\delta _{i})\boldsymbol{\eta }%
_{i}+(1+\delta _{i})\boldsymbol{\xi }_{iT}$, $\boldsymbol{\xi }_{iT}=\left( 
\boldsymbol{X}_{i}^{\prime }\boldsymbol{M}_{T}\boldsymbol{X}_{i}\right) ^{-1}%
\boldsymbol{X}_{i}^{\prime }\boldsymbol{M}_{T}\boldsymbol{u}_{i}$, and using
(\ref{TMG_egzi}) 
\begin{equation}
\boldsymbol{\hat{\beta}}_{TMG}-\boldsymbol{\beta }_{0}=\left( \frac{1}{1+%
\bar{\delta}_{n}}\right) \boldsymbol{\bar{\zeta}}_{nT},  \label{V2}
\end{equation}%
where $\boldsymbol{\bar{\zeta}}_{nT}=\frac{1}{n}\sum_{i=1}^{n}\boldsymbol{%
\eta }_{i}+\frac{1}{n}\sum_{i=1}^{n}\delta _{i}\boldsymbol{\eta }_{i}+\frac{1%
}{n}\sum_{i=1}^{n}(1+\delta _{i})\boldsymbol{\xi }_{iT}$. Subtracting (\ref%
{V2}) from (\ref{V1}) now yields 
\begin{equation*}
\boldsymbol{\tilde{\beta}}_{i}-\boldsymbol{\hat{\beta}}_{TMG}=\boldsymbol{%
\zeta }_{iT}+\delta _{i}\boldsymbol{\beta }_{0}-\left( \frac{1}{1+\bar{\delta%
}_{n}}\right) \boldsymbol{\bar{\zeta}}_{nT},
\end{equation*}%
and we have%
\begin{align}
& n^{-1}\sum_{i=1}^{n}\left( \boldsymbol{\tilde{\beta}}_{i}-\boldsymbol{\hat{%
\beta}}_{TMG}\right) \left( \boldsymbol{\tilde{\beta}}_{i}-\boldsymbol{\hat{%
\beta}}_{TMG}\right) ^{\prime } & &  \notag \\
=\,\,& n^{-1}\sum_{i=1}^{n}\boldsymbol{\zeta }_{iT}\boldsymbol{\zeta }%
_{iT}^{\prime }+\left( n^{-1}\sum_{i=1}^{n}\delta _{i}^{2}\right) 
\boldsymbol{\beta }_{0}\boldsymbol{\beta }_{0}^{^{\prime }}+\left(
n^{-1}\sum_{i=1}^{n}\delta _{i}\boldsymbol{\zeta }_{iT}\right) \boldsymbol{%
\beta }_{0}^{\prime }+\boldsymbol{\beta }_{0}\left(
n^{-1}\sum_{i=1}^{n}\delta _{i}\boldsymbol{\zeta }_{iT}^{\prime }\right) & &
\notag \\
\,\,& +\left[ \left( \frac{1}{1+\bar{\delta}_{n}}\right) ^{2}-2\left( \frac{1%
}{1+\bar{\delta}_{n}}\right) \right] \boldsymbol{\bar{\zeta}}_{nT}%
\boldsymbol{\bar{\zeta}}_{nT}^{\prime }-\left( \frac{\bar{\delta}_{n}}{1+%
\bar{\delta}_{n}}\right) \boldsymbol{\beta }_{0}\boldsymbol{\bar{\zeta}}%
_{nT}^{\prime }-\left( \frac{\bar{\delta}_{n}}{1+\bar{\delta}_{n}}\right) 
\boldsymbol{\bar{\zeta}}_{nT}\boldsymbol{\beta }_{0}^{\prime }. & &
\label{VTMG0}
\end{align}%
By the results in Lemma \ref{deltaeta}, $E\left( \bar{\delta}_{n}\right)
=O(a_{n}^{\alpha_{p}})$, and $E\left( \boldsymbol{\bar{\zeta}}_{nT}\right)
=E\left( \delta _{i}\boldsymbol{\eta }_{i}\right) =O(a_{n}^{\alpha_{p}})$, $%
\bar{\delta}_{n}=O(a_{n}^{\alpha_{p}})+o_{p}(1)$ and $n^{-1}\sum_{i=1}^{n}%
\delta _{i}^{2}=O(a_{n}^{\alpha_{p}})+o_{p}(1)$. Also using (\ref{deltaOrder}%
) and (\ref{TMGgap2}), we have $\boldsymbol{\bar{\zeta}}_{nT}=O(n^{-\alpha
\alpha _{p}})+O_{p}\left( n^{-\frac{(1-\alpha )}{2}}\right) $, and since $%
\alpha >1/(1+2\alpha _{p})$, 
\begin{align}
\left[ \left( \frac{1}{1+\bar{\delta}_{n}}\right) ^{2}-2\left( \frac{1}{1+%
\bar{\delta}_{n}}\right) \right] \boldsymbol{\bar{\zeta}}_{nT}\boldsymbol{%
\bar{\zeta}}_{nT}^{\prime }& =O\left( n^{-2\alpha \alpha _{p}}\right)
+O_{p}\left( n^{-\frac{1-\alpha }{2}-\alpha \alpha _{p}}\right) +O_{p}\left(
n^{-(1-\alpha )}\right)  \notag \\
& =O_{p}\left( n^{-(1-\alpha )}\right) \text{, }  \label{VTMG01}
\end{align}%
and 
\begin{equation}
\left( \frac{\bar{\delta}_{n}}{1+\bar{\delta}_{n}}\right) \boldsymbol{\bar{%
\zeta}}_{nT}\boldsymbol{\beta }_{0}^{\prime }=O_{p}(a_{n}^{\alpha_{p}}n^{-\alpha
\alpha _{p}})+O_{p}\left( a_{n}^{\alpha_{p}}n^{-\frac{(1-\alpha )}{2}%
}\right) =O_{p}\left( n^{-\frac{(1-\alpha )}{2} - \alpha \alpha_{p}}\right) .
\label{VTMG02}
\end{equation}%
Consider now%
\begin{equation}
\boldsymbol{\bar{\zeta}}_{\delta ,nT}=n^{-1}\sum_{i=1}^{n}\delta _{i}%
\boldsymbol{\zeta }_{iT}=n^{-1}\sum_{i=1}^{n}\delta _{i}\left( 1+\delta
_{i}\right) \boldsymbol{\eta }_{i}+n^{-1}\sum_{i=1}^{n}\delta _{i}\left(
1+\delta _{i}\right) \boldsymbol{\xi }_{iT}.  \label{egzibard}
\end{equation}%
By (\ref{detax}) of Lemma \ref{deltaeta}, $E\left[ \delta _{i}\left(
1+\delta _{i}\right) \boldsymbol{\eta }_{i}\right] =O(a_{n}^{\alpha_{p}})$,
and since $\delta _{i}\left( 1+\delta _{i}\right) \boldsymbol{\eta }_{i}$
are distributed independently over $i$, we have 
\begin{equation}
n^{-1}\sum_{i=1}^{n}\delta _{i}\left( 1+\delta _{i}\right) \boldsymbol{\eta }%
_{i}=O_{p}(a_{n}^{\alpha_{p}})\text{.}  \label{egzibard1}
\end{equation}%
Since conditional on $\boldsymbol{X}_{i}$, $\delta _{i}\left( 1+\delta
_{i}\right) \boldsymbol{\xi }_{iT}$ are distributed over $i$ with zero
means, following the same line of argument as in the proof of Lemma \ref%
{EVegzi}, $E\left[ \delta _{i}\left( 1+\delta _{i}\right) \boldsymbol{\xi }%
_{iT}\right] =\boldsymbol{0}$, and 
\begin{eqnarray*}
Var\left[ n^{-1}\sum_{i=1}^{n}\delta _{i}\left( 1+\delta _{i}\right) 
\boldsymbol{\xi }_{iT}\right] &=&n^{-2}\sum_{i=1}^{n}E\left[ \delta
_{i}^{2}\left( 1+\delta _{i}\right) ^{2}\boldsymbol{R}_{i}^{\prime }%
\boldsymbol{H}_{i}\boldsymbol{R}_{i}\right] \\
&\leq &Cn^{-2}\sum_{i=1}^{n}E\left[ \delta _{i}^{2}\left( 1+\delta
_{i}\right) ^{2}d_{i}^{-1}\left\Vert \func{adj}(\boldsymbol{X}_{i}^{\prime }%
\boldsymbol{M}_{T}\boldsymbol{X}_{i})\right\Vert \right] .
\end{eqnarray*}%
Further using (\ref{1d2}), we have 
\begin{eqnarray*}
\left( 1+\delta _{i}\right) ^{2}\delta _{i}^{2} &=&\left[ \boldsymbol{1}%
\{d_{i}>a_{n}\}+a_{n}^{-2}d_{i}^{2}\boldsymbol{1}\{d_{i}\leq a_{n}\}\right]
\left( \frac{d_{i}-a_{n}}{a_{n}}\right) ^{2}\boldsymbol{1}\{d_{i}\leq a_{n}\}
\\
&=&a_{n}^{-2}d_{i}^{2}\left( \frac{d_{i}-a_{n}}{a_{n}}\right) ^{2}%
\boldsymbol{1}\{d_{i}\leq a_{n}\},
\end{eqnarray*}%
and 
\begin{equation*}
Var\left[ n^{-1}\sum_{i=1}^{n}\delta _{i}\left( 1+\delta _{i}\right) 
\boldsymbol{\xi }_{iT}\right] \leq Cn^{-2}\sum_{i=1}^{n}E\left[
a_{n}^{-2}d_{i}\left( \frac{d_{i}-a_{n}}{a_{n}}\right) ^{2}\boldsymbol{1}%
\{d_{i}\leq a_{n}\}\left\Vert \func{adj}(\boldsymbol{X}_{i}^{\prime }%
\boldsymbol{M}_{T}\boldsymbol{X}_{i})\right\Vert \right] .
\end{equation*}%
By the Cauchy-Schwarz inequality, 
\begin{eqnarray*}
&&E\left[ a_{n}^{-2}d_{i}\left( \frac{d_{i}-a_{n}}{a_{n}}\right) ^{2}%
\boldsymbol{1}\{d_{i}\leq a_{n}\}\left\Vert \func{adj}(\boldsymbol{X}%
_{i}^{\prime }\boldsymbol{M}_{T}\boldsymbol{X}_{i})\right\Vert \right] \\
&\leq &a_{n}^{-2}\left\{ E\left[ d_{i}^{2}\left( \frac{d_{i}-a_{n}}{a_{n}}%
\right) ^{4}\boldsymbol{1}\{d_{i}\leq a_{n}\}\right] \right\} ^{1/2}\left[
E\left\Vert \func{adj}\left( \boldsymbol{X}_{i}^{\prime }\boldsymbol{M}_{T}%
\boldsymbol{X}_{i}\right) \right\Vert ^{2}\right] ^{1/2},
\end{eqnarray*}%
and since under Assumption \ref{regressorsx}, $\func{sup}_{i}E\left\Vert 
\func{adj}\left( \boldsymbol{X}_{i}^{\prime }\boldsymbol{M}_{T}\boldsymbol{X}%
_{i}\right) \right\Vert ^{2}<C$, we have%
\begin{equation*}
Var\left[ n^{-1}\sum_{i=1}^{n}\delta _{i}\left( 1+\delta _{i}\right) 
\boldsymbol{\xi }_{iT}\right] \leq Cn^{-2}a_{n}^{-2}\sum_{i=1}^{n}\left\{ E%
\left[ d_{i}^{2}\left( \frac{d_{i}-a_{n}}{a_{n}}\right) ^{4}\boldsymbol{1}%
\{d_{i}\leq a_{n}\}\right] \right\} ^{1/2}.
\end{equation*}%
Also using (\ref{ds}) of Lemma \ref{deltaeta}, 
\begin{equation*}
E\left[ d_{i}^{2}\left( \frac{d_{i}-a_{n}}{a_{n}}\right) ^{4}\boldsymbol{1}%
\{d_{i}\leq a_{n}\}\right] =a_{n}^{-4}E\left[ \left(
d_{i}^{6}-3d_{i}^{5}a_{n}+3a_{n}^{3}d_{i}^{3}-a_{n}^{4}d_{i}^{2}\right) 
\boldsymbol{1}\{d_{i}\leq a_{n}\}\right] =O\left(
a_{n}^{2+\alpha_{p}}\right) ,
\end{equation*}%
which yields%
\begin{equation*}
Var\left[ n^{-1}\sum_{i=1}^{n}\delta _{i}\left( 1+\delta _{i}\right) 
\boldsymbol{\xi }_{iT}\right] =O\left(
n^{-1}a_{n}^{-2}a_{n}^{(2+\alpha_{p})/2}\right) =O\left(
n^{-1}a_{n}^{-(2-\alpha_{p})/2}\right) ,
\end{equation*}%
and by the Markov inequality, 
\begin{equation}
n^{-1}\sum_{i=1}^{n}\delta _{i}\left( 1+\delta _{i}\right) \boldsymbol{\xi }%
_{iT}=O_{p}\left( n^{-1/2}a_{n}^{-(2-\alpha_{p})/4}\right) =O_{p}\left(
n^{-1/2+\alpha(2-\alpha_{p}) /4}\right) .  \label{egzibard2}
\end{equation}%
Using (\ref{egzibard1}) and (\ref{egzibard2}) in (\ref{egzibard}), we have $%
\boldsymbol{\bar{\zeta}}_{\delta ,nT}=O_{p}(n^{-\alpha
\alpha_{p}})+O_{p}\left( n^{-1/2+\alpha(2-\alpha_{p}) /4}\right) $, which if
used with (\ref{VTMG01}) and (\ref{VTMG02}) in (\ref{VTMG0}) now yields 
\begin{align*}
& \frac{1}{n}\sum_{i=1}^{n}(\boldsymbol{\tilde{\beta}}_{i}-\boldsymbol{\hat{%
\beta}}_{TMG})(\boldsymbol{\tilde{\beta}}_{i}-\boldsymbol{\hat{\beta}}%
_{TMG})^{\prime } \\
=& \frac{1}{n}\sum_{i=1}^{n}\boldsymbol{\zeta }_{iT}\boldsymbol{\zeta }%
_{iT}^{\prime }+O_{p}\left( n^{-(1-\alpha )}\right) +O_{p}\left( n^{-\frac{%
(1-\alpha )}{2}- \alpha \alpha_{p}} \right) +O_{p}(n^{-\alpha
\alpha_{p}})+O_{p}\left( n^{-1/2+\alpha(2-\alpha_{p}) /4}\right) .
\end{align*}%
Since $\boldsymbol{\zeta }_{iT}$ are independently distributed over $i$, 
\begin{equation*}
\func{plim}_{n\rightarrow \infty }n^{-1}\sum_{i=1}^{n}\left( \boldsymbol{%
\tilde{\beta}}_{i}-\boldsymbol{\hat{\beta}}_{TMG}\right) \left( \boldsymbol{%
\tilde{\beta}}_{i}-\boldsymbol{\hat{\beta}}_{TMG}\right) ^{\prime
}=\lim_{n\rightarrow \infty }\left[ n^{-1}\sum_{i=1}^{n}E\left( \boldsymbol{%
\zeta }_{iT}\boldsymbol{\zeta }_{iT}^{\prime }\right) \right] .
\end{equation*}%
Using (\ref{V2}) and recalling that $\bar{\delta}_{n}=O(a_{n}^{\alpha_{p}})$%
, we have%
\begin{equation*}
Avar\left( n^{(1-\alpha )/2}\boldsymbol{\hat{\beta}}_{TMG}\right)
=\lim_{n\rightarrow \infty }Var\left( n^{(1-\alpha )/2}\boldsymbol{\bar{\zeta%
}}_{nT}\right) =Avar\left( n^{(1-\alpha )/2}\boldsymbol{\bar{\zeta}}%
_{nT}\right) .
\end{equation*}%
Also (recall that $E\left( \boldsymbol{\bar{\zeta}}_{nT}\right)
=O(a_{n}^{\alpha_{p}}))$, 
\begin{equation*}
Avar\left( n^{(1-\alpha )/2}\boldsymbol{\bar{\zeta}}_{nT}\right) =E\left\{ 
\frac{1}{n}\sum_{i=1}^{n}\left[ \boldsymbol{\zeta }_{iT}-E\left( \boldsymbol{%
\zeta }_{iT}\right) \right] \left[ \boldsymbol{\zeta }_{iT}-E\left( 
\boldsymbol{\zeta }_{iT}\right) \right] ^{\prime }\right\} =\frac{1}{n}%
\sum_{i=1}^{n}E\left( \boldsymbol{\zeta }_{iT}\boldsymbol{\zeta }%
_{iT}^{\prime }\right) +O(a_{n}^{2 \alpha_{p}}).
\end{equation*}%
Hence%
\begin{equation*}
Avar\left( n^{(1-\alpha )/2}\boldsymbol{\hat{\beta}}_{TMG}\right)
=\lim_{n\rightarrow \infty }\frac{1}{n}\sum_{i=1}^{n}E\left( \boldsymbol{%
\zeta }_{iT}\boldsymbol{\zeta }_{iT}^{\prime }\right) =\func{plim}%
_{n\rightarrow \infty }\frac{1}{n}\sum_{i=1}^{n}\left( \boldsymbol{\tilde{%
\beta}}_{i}-\boldsymbol{\hat{\beta}}_{TMG}\right) \left( \boldsymbol{\tilde{%
\beta}}_{i}-\boldsymbol{\hat{\beta}}_{TMG}\right) ^{\prime }.
\end{equation*}
\end{proof}

\subsection{Proof of Theorem \protect\ref{AsyDTest}}

\label{ProofAsyDTest}

\begin{proof}
To derive the asymptotic distribution of $\sqrt{n}\boldsymbol{\hat{\Delta}}%
_{\beta }=\sqrt{n}$ $\left( \boldsymbol{\hat{\beta}}_{FE}-\boldsymbol{\hat{%
\beta}}_{TMG}\right) $, we first note from (\ref{FE1}) and (\ref{TMGb}) that 
\begin{equation*}
\sqrt{n}\left( \boldsymbol{\hat{\beta}}_{FE}-\boldsymbol{\beta }_{0}\right) =%
\boldsymbol{\bar{\Psi}}_{n}^{-1}\left( n^{-1/2}\sum_{i=1}^{n}\boldsymbol{X}%
_{i}^{\prime }\boldsymbol{M}_{T}\boldsymbol{\nu }_{i}\right) ,
\end{equation*}%
and%
\begin{equation*}
\sqrt{n}\left( \boldsymbol{\hat{\beta}}_{TMG}-\boldsymbol{\beta }_{0}\right)
=n^{-1/2}\sum_{i=1}^{n}\left( \frac{1+\delta _{i}}{1+\bar{\delta}_{n}}%
\right) \boldsymbol{\Psi }_{i}^{-1}\boldsymbol{X}_{i}^{\prime }\boldsymbol{M}%
_{T}\boldsymbol{\nu }_{i},
\end{equation*}%
where $\boldsymbol{\bar{\Psi}}_{n}=n^{-1}\sum_{i=1}^{n}\boldsymbol{\Psi }%
_{i} $, $\boldsymbol{\Psi }_{i}=\boldsymbol{X}_{i}^{\prime }\boldsymbol{M}%
_{T}\boldsymbol{X}_{i}\succ \boldsymbol{0}$, $\delta _{i}$ is given by (\ref%
{deltai}), and $\boldsymbol{\nu }_{i}=\boldsymbol{u}_{i}+\boldsymbol{X}_{i}%
\boldsymbol{\eta }_{i}$. Using the above expressions, 
\begin{equation}
\sqrt{n}\boldsymbol{\hat{\Delta}}_{\beta }=\frac{1}{\sqrt{n}}\sum_{i=1}^{n}%
\boldsymbol{G}_{i}^{\prime }\boldsymbol{M}_{T}\boldsymbol{\nu }_{i},
\label{DeltaBeta}
\end{equation}%
where 
\begin{equation}
\boldsymbol{G}_{i}= \boldsymbol{X}_{i}\left[ \left( n^{-1}\sum_{i=1}^{n}%
\boldsymbol{X}_{i}^{\prime }\boldsymbol{M}_{T}\boldsymbol{X}_{i}\right)
^{-1}-\left( \frac{1+\delta _{i}}{1+\bar{\delta}_{n}}\right) \left( 
\boldsymbol{X}_{i}^{\prime }\boldsymbol{M}_{T}\boldsymbol{X}_{i}\right) ^{-1}%
\right].  \label{GiApp}
\end{equation}%
Under $H_{0}$, defined by (\ref{null}), and Assumption \ref{errors}, $E(%
\boldsymbol{\nu }_{i}|\boldsymbol{G}_{i})=\boldsymbol{0}$, and $E\left( 
\boldsymbol{\nu }_{i}\boldsymbol{\nu }_{i}^{\prime }\left\vert \boldsymbol{X}%
_{i}\right. \right) =\boldsymbol{H}_{i}+\boldsymbol{X}_{i}\boldsymbol{\Omega 
}_{\beta }\boldsymbol{X}_{i}^{\prime }$, for all $i$. Also, since by
Assumption \ref{errors} and part (c) of Assumption \ref{CRE}, $\boldsymbol{u}%
_{i}$ and $\boldsymbol{\eta }_{i}$ are cross-sectionally independent, then
conditional on $\boldsymbol{X}_{i}$, $\boldsymbol{\nu }_{i}$ are also
cross-sectionally independent. Hence under $H_{0}$, $\sqrt{n}\boldsymbol{%
\hat{\Delta}}_{\beta }\rightarrow _{d}N(\boldsymbol{0,}\boldsymbol{V}%
_{\Delta })$, where%
\begin{equation*}
\boldsymbol{V}_{\Delta }=Avar\left( \sqrt{n}\boldsymbol{\hat{\Delta}}_{\beta
}\left\vert H_{0}\right. \right) =\lim_{n\rightarrow \infty }\frac{1}{n}%
\sum_{i=1}^{n}E\left( \boldsymbol{G}_{i}^{\prime }\boldsymbol{M}_{T}%
\boldsymbol{\nu }_{i}\boldsymbol{\nu }_{i}^{\prime }\boldsymbol{M}_{T}%
\boldsymbol{G}_{i}\right) .
\end{equation*}%
Since under $H_{0}$, $\boldsymbol{\eta }_{i}$ and $\boldsymbol{X}_{i}$ are
independently distributed, then $\boldsymbol{V}_{\Delta }=\boldsymbol{V}%
_{1\Delta }+\boldsymbol{V}_{2\Delta }$, where 
\begin{equation*}
\boldsymbol{V}_{1\Delta }=\lim_{n\rightarrow \infty }\frac{1}{n}%
\sum_{i=1}^{n}E\left( \boldsymbol{G}_{i}^{\prime }\boldsymbol{M}_{T}%
\boldsymbol{H}_{i}\boldsymbol{M}_{T}\boldsymbol{G}_{i}\right) \text{, and }%
\boldsymbol{V}_{2\Delta }=\lim_{n\rightarrow \infty }\frac{1}{n}%
\sum_{i=1}^{n}E\left( \boldsymbol{\Gamma }_{i}\boldsymbol{\Psi }_{i}%
\boldsymbol{\Omega }_{\beta }\boldsymbol{\Psi }_{i}\boldsymbol{\Gamma }%
_{i}\right),
\end{equation*}%
where $\boldsymbol{\Gamma }_{i}=\boldsymbol{\bar{\Psi}}_{n}^{-1}-\left( 
\frac{1+\delta _{i}}{1+\bar{\delta}_{n}}\right) \boldsymbol{\Psi }_{i}^{-1}$%
, and $\boldsymbol{G}_{i}=\boldsymbol{X}_{i}\boldsymbol{\Gamma }_{i}$. We
first note that $\boldsymbol{V}_{1\Delta }\succeq \boldsymbol{0}$, which
follows since for any non-zero $k^{\prime }\times 1$ vector $\boldsymbol{b}$%
, 
\begin{equation*}
\frac{1}{n}\sum_{i=1}^{n}\boldsymbol{b}^{\prime }\boldsymbol{\Gamma }_{i}%
\boldsymbol{X}_{i}^{\prime }\boldsymbol{M}_{T}\boldsymbol{H}_{i}\boldsymbol{M%
}_{T}\boldsymbol{X}_{i}\boldsymbol{\Gamma }_{i}\boldsymbol{b}\geq \frac{1}{n}%
\sum_{i=1}^{n}\lambda _{\min }\left( \boldsymbol{H}_{i}\right) \lambda
_{\min }\left( \boldsymbol{\Psi }_{i}\right) \boldsymbol{b}^{\prime }%
\boldsymbol{\Gamma }_{i}^{2}\boldsymbol{b}.
\end{equation*}%
But under Assumptions \ref{errors} and \ref{PoolA}, $\lambda _{\min }\left( 
\boldsymbol{H}_{i}\right) >c_{1}>0$ and $\lambda _{\min }\left( \boldsymbol{%
\Psi }_{i}\right) >c_{2}>0$, and then 
\begin{equation*}
\frac{1}{n}\sum_{i=1}^{n}\boldsymbol{b}^{\prime }\boldsymbol{\Gamma }_{i}%
\boldsymbol{X}_{i}^{\prime }\boldsymbol{M}_{T}\boldsymbol{H}_{i}\boldsymbol{M%
}_{T}\boldsymbol{X}_{i}\boldsymbol{\Gamma }_{i}\boldsymbol{b}>c_{1}c_{2}%
\boldsymbol{b}^{\prime }\left( \frac{1}{n}\sum_{i=1}^{n}\boldsymbol{\Gamma }%
_{i}^{2}\right) \boldsymbol{b}.
\end{equation*}%
Similarly, 
\begin{equation*}
\frac{1}{n}\sum_{i=1}^{n}\boldsymbol{b}^{\prime }\boldsymbol{\Gamma }_{i}%
\boldsymbol{\Psi }_{i}\boldsymbol{\Omega }_{\beta }\boldsymbol{\Psi }_{i}%
\boldsymbol{\Gamma }_{i}\boldsymbol{b}\geq \lambda _{\min }\left( 
\boldsymbol{\Omega }_{\beta }\right) \boldsymbol{b}^{\prime }\left( \frac{1}{%
n}\sum_{i=1}^{n}\boldsymbol{\Gamma }_{i}\boldsymbol{\Psi }_{i}\boldsymbol{%
\Psi }_{i}\boldsymbol{\Gamma }_{i}\right) \boldsymbol{b}.
\end{equation*}%
Overall, $\boldsymbol{V}_{\Delta }=\boldsymbol{V}_{1\Delta }+\boldsymbol{V}%
_{2\Delta }\succeq \boldsymbol{0}$. $\boldsymbol{V}_{\Delta }$ becomes
positive definite if either $\lim\limits_{n\rightarrow \infty }\frac{1}{n}%
\sum_{i=1}^{n}E\left( \boldsymbol{\Gamma }_{i}^{2}\right) \succ \boldsymbol{0%
}$, or $\lim\limits_{n\rightarrow \infty }\frac{1}{n}\sum_{i=1}^{n}E\left( 
\boldsymbol{\Gamma }_{i}\boldsymbol{\Psi }_{i}\boldsymbol{\Psi }_{i}%
\boldsymbol{\Gamma }_{i}\right) \succ \boldsymbol{0}$ and $\lambda _{\min
}\left( \boldsymbol{\Omega }_{\beta }\right) >c>0$. Thus under $H_{0}$, $%
H_{\beta }=n\boldsymbol{\hat{\Delta}}_{\beta }^{\prime }\boldsymbol{V}%
_{\Delta }^{-1}\boldsymbol{\hat{\Delta}}_{\beta }\rightarrow _{d}\chi
_{k^{\prime }}^{2}$ as $n\rightarrow \infty $, where $\chi _{k^{\prime
}}^{2} $ is a chi-squared distribution with $k^{\prime }=\func{dim}(%
\boldsymbol{\beta })$ degree of freedom.

To derive the limiting value of $H_{\beta }$ under the alternative
hypothesis defined by (\ref{H1}), we first write (\ref{DeltaBeta}) as 
\begin{equation*}
\sqrt{n}\left( \boldsymbol{\hat{\Delta}}_{\beta }-\boldsymbol{\mu }%
_{n}\right) =\frac{1}{\sqrt{n}}\sum_{i=1}^{n}\left[ \boldsymbol{z}%
_{i}-E\left( \boldsymbol{z}_{i}\right) \right] ,
\end{equation*}%
where $\boldsymbol{z}_{i}=\boldsymbol{\Gamma }_{i}\boldsymbol{X}_{i}^{\prime
}\boldsymbol{M}_{T}\boldsymbol{\nu }_{i}-E\left( \boldsymbol{\Gamma }_{i}%
\boldsymbol{X}_{i}^{\prime }\boldsymbol{M}_{T}\boldsymbol{\nu }_{i}\right) $%
, and $\boldsymbol{\mu }_{n}=n^{-1}\sum_{i=1}^{n}E\left( \boldsymbol{\Gamma }%
_{i}\boldsymbol{X}_{i}^{\prime }\boldsymbol{M}_{T}\boldsymbol{\nu }%
_{i}\right) $. Under $H_{1}$%
\begin{eqnarray*}
\boldsymbol{\mu }_{n} &=&n^{-1}\sum_{i=1}^{n}E\left[ \left( \boldsymbol{\bar{%
\Psi}}_{n}^{-1}\boldsymbol{X}_{i}^{\prime }\boldsymbol{M}_{T}\boldsymbol{u}%
_{i}-\left( \frac{1+\delta _{i}}{1+\bar{\delta}_{n}}\right) \boldsymbol{\Psi 
}_{i}^{-1}\boldsymbol{X}_{i}^{\prime }\boldsymbol{M}_{T}\boldsymbol{u}%
_{i}\right) \right] \\
&&+n^{-1}\sum_{i=1}^{n}E\left[ \left( \boldsymbol{\bar{\Psi}}_{n}^{-1}%
\boldsymbol{X}_{i}^{\prime }\boldsymbol{M}_{T}\boldsymbol{X}_{i}\boldsymbol{%
\eta }_{i}-\left( \frac{1+\delta _{i}}{1+\bar{\delta}_{n}}\right) 
\boldsymbol{\Psi }_{i}^{-1}\boldsymbol{X}_{i}^{\prime }\boldsymbol{M}_{T}%
\boldsymbol{X}_{i}\boldsymbol{\eta }_{i}\right) \right] .
\end{eqnarray*}%
Recalling that $\boldsymbol{u}_{i}$ is distributed indepedently of $%
\boldsymbol{X}_{j}$, for all $i$ and $j$, then $E\left( \boldsymbol{\bar{\Psi%
}}_{n}^{-1}\boldsymbol{X}_{i}^{\prime }\boldsymbol{M}_{T}\boldsymbol{u}%
_{i}\right) =\boldsymbol{0}$, and $E\left[ \left( \frac{1+\delta _{i}}{1+%
\bar{\delta}_{n}}\right) \boldsymbol{\Psi }_{i}^{-1}\boldsymbol{X}%
_{i}^{\prime }\boldsymbol{M}_{T}\boldsymbol{u}_{i}\right] =\boldsymbol{0}$.
Also, $E\left[ \left( \frac{1+\delta _{i}}{1+\bar{\delta}_{n}}\right) 
\boldsymbol{\Psi }_{i}^{-1}\boldsymbol{X}_{i}^{\prime }\boldsymbol{M}_{T}%
\boldsymbol{X}_{i}\boldsymbol{\eta }_{i}\right] =E\left[ \left( \frac{\delta
_{i}-\bar{\delta}_{n}}{1+\bar{\delta}_{n}}\right) \boldsymbol{\eta }_{i}%
\right]$, and using results in Lemma \ref{deltaeta}, we have 
\begin{equation}
E\left[ \left( \frac{\delta _{i}-\bar{\delta}_{n}}{1+\bar{\delta}_{n}}%
\right) \boldsymbol{\eta }_{i}\right] =O\left( a_{n}^{\alpha _{p}}\right)
=O(n^{-\alpha \alpha _{p}}).  \label{meuB}
\end{equation}%
Hence 
\begin{equation}
\boldsymbol{\mu }_{n}=n^{-1}\sum_{i=1}^{n}E\left( \boldsymbol{\bar{\Psi}}%
_{n}^{-1}\boldsymbol{X}_{i}^{\prime }\boldsymbol{M}_{T}\boldsymbol{X}_{i}%
\boldsymbol{\eta }_{i}\right) +O(n^{-\alpha \alpha _{p}}),  \label{meu}
\end{equation}%
where $\ \alpha >\frac{1}{(1+2\alpha _{p})}$. Now write $H_{\beta }=n%
\boldsymbol{\hat{\Delta}}_{\beta }^{\prime }\boldsymbol{V}_{\Delta }^{-1}%
\boldsymbol{\hat{\Delta}}_{\beta }$ as 
\begin{align}
H_{\beta } = & n\boldsymbol{\hat{\Delta}}_{\beta }^{\prime }\boldsymbol{V}%
_{\Delta }^{-1}\boldsymbol{\hat{\Delta}}_{\beta }=\left[ \sqrt{n}\left( 
\boldsymbol{\hat{\Delta}}_{\beta }-\boldsymbol{\mu }_{n}\right) ^{\prime }+%
\sqrt{n}\boldsymbol{\mu }_{n}^{\prime }\right] \boldsymbol{V}_{\Delta }^{-1}%
\left[ \sqrt{n}\left( \boldsymbol{\hat{\Delta}}_{\beta }-\boldsymbol{\mu }%
_{n}\right) +\sqrt{n}\boldsymbol{\mu }_{n}\right]  \notag \\
= &  \sqrt{n}\left( \boldsymbol{\hat{\Delta}}_{\beta }-\boldsymbol{\mu }%
_{n}\right) ^{\prime }\boldsymbol{V}_{\Delta }^{-1}\sqrt{n}\left( 
\boldsymbol{\hat{\Delta}}_{\beta }-\boldsymbol{\mu }_{n}\right) 
+ \sqrt{n}\boldsymbol{\mu }_{n}^{\prime }\boldsymbol{V}_{\Delta }^{-1}\sqrt{n}\left( 
\boldsymbol{\hat{\Delta}}_{\beta }-\boldsymbol{\mu }_{n}\right)  \notag \\
& + \sqrt{n}\left( 
\boldsymbol{\hat{\Delta}}_{\beta }-\boldsymbol{\mu }_{n}\right) ^{\prime }\boldsymbol{V}_{\Delta }^{-1}\sqrt{n} \boldsymbol{\mu }_{n}
+n\left( 
\boldsymbol{\mu }_{n}^{\prime }\boldsymbol{\mu }_{n}\right) .  \label{Hbeta2}
\end{align}%
Since $\boldsymbol{z}_{i}$ has mean zero and finite variance and is indepedently distributed
across $i$, then $\sqrt{n}\left( \boldsymbol{\hat{\Delta}}_{\beta }-%
\boldsymbol{\mu }_{n}\right) \rightarrow _{d}N\left( 0,\boldsymbol{V}%
_{\Delta }^{\ast }\right) $, where $\boldsymbol{V}_{\Delta }^{\ast
}=Avar\left( \sqrt{n}\boldsymbol{\hat{\Delta}}_{\beta }\left\vert
H_{1}\right. \right) $, which is bounded but need not be the same as $%
\boldsymbol{V}_{\Delta }$. Therefore, $\sqrt{n}\left( \boldsymbol{\hat{\Delta%
}}_{\beta }-\boldsymbol{\mu }_{n}\right) =O_{p}(1)$, and the first two terms
of the above are dominated by $n\left( \boldsymbol{\mu }_{n}^{\prime }%
\boldsymbol{\mu }_{n}\right) $. Hence $H_{\beta }=\ominus \left[ n\left( 
\boldsymbol{\mu }_{n}^{\prime }\boldsymbol{\mu }_{n}\right) \right] $, and
using (\ref{meu}) and noting that under $H_{1}$, defined by (\ref{H1}), $%
E\left( n^{-1}\sum_{i=1}^{n}\boldsymbol{\bar{\Psi}}_{n}^{-1}\boldsymbol{X}%
_{i}^{\prime }\boldsymbol{M}_{T}\boldsymbol{X}_{i}\boldsymbol{\eta }%
_{i}\right) =\ominus \left( n^{a_{\eta }-1}\right) $, we have $H_{\beta
}=\ominus (n^{2a_{\eta }-1})+O(n^{a_{\eta }-\alpha \alpha _{p}})+O\left(
n^{1-2\alpha \alpha _{p}}\right) $, as required.
\end{proof}

\section{TMG-TE estimator with $T=k$ \label{Teqk}}

For panels with time effects when $T=k$, we make the following additional
assumption:

\begin{assumption}
\label{stationaryCRE} 
\begin{equation}
E\left( \boldsymbol{x}_{it}^{\prime }\boldsymbol{\eta }_{i }\right) =E\left( 
\boldsymbol{x}_{is}^{\prime }\boldsymbol{\eta }_{i }\right) \text{, for all }%
t,s=1,2,...,T,  \label{IDtime}
\end{equation}%
where $\boldsymbol{\eta }_{i }=\boldsymbol{\beta }_{i}-\boldsymbol{\beta }%
_{0}$, and $\Vert \boldsymbol{\beta }_{0}\Vert <C$.
\end{assumption}

\begin{remark}
Assumption \ref{stationaryCRE} allows for dependence between $\boldsymbol{x}%
_{it} $ and $\boldsymbol{\eta }_{i}$, but requires this dependence to be
time-invariant.
\end{remark}

\begin{remark}
The irregular identification of $\boldsymbol{\phi }$ when $T=k$ in \cite%
{GrahamPowell2012} is based on moments conditional on the sub-population of
\textquotedblleft stayers\textquotedblright. Under Assumption \ref%
{regressorsx} $d_{i}>0$ for all $i$, i.e., there are no \textquotedblleft
stayers\textquotedblright in the population, this identification strategy
cannot be used. Moreover, GP assume that the joint distribution of $(u_{it},
\alpha_{i}, \boldsymbol{\beta}_{i}^{\prime })^{\prime }$ given $\boldsymbol{X%
}_{i}$ does not depend on $t$, which is similar to Assumption \ref%
{stationaryCRE}. See interpretations of Assumption 1.1 part (ii) on p. 2111
in \cite{GrahamPowell2012}.
\end{remark}

Averaging (\ref{panTE1}) over $i$, 
\begin{equation}
\bar{y}_{\circ t}=\bar{\alpha}_{n}+\phi _{t}+\boldsymbol{\bar{x}}_{\circ
t}^{\prime }\boldsymbol{\beta }_{0}+\bar{\nu}_{\circ t},  \label{ybar1}
\end{equation}%
where $\bar{\nu}_{\circ t}=n^{-1}\sum_{i=1}^{n}\nu _{it}$, $\nu _{it}=%
\boldsymbol{x}_{it}^{\prime }\boldsymbol{\eta }_{i }+u_{it}$, $\bar{y}%
_{\circ t}=n^{-1}\sum_{i=1}^{n}y_{it}$, $\boldsymbol{\bar{x}}_{\circ
t}=n^{-1}\sum_{i=1}^{n}\boldsymbol{x}_{it}$, $\bar{u}_{\circ
t}=n^{-1}\sum_{i=1}^{n}u_{it}$, and $\bar{\alpha}_{n}=n^{-1}\sum_{i=1}^{n}%
\alpha _{i}$. Averaging over $t$, under the normalization $%
\sum_{t=1}^{T}\phi _{t}=0$, 
\begin{equation}
\bar{y}_{\circ \circ }=\bar{\alpha}_{n}+\boldsymbol{\bar{x}}_{\circ \circ
}^{\prime }\boldsymbol{\beta }_{0}+n^{-1}\sum_{i=1}^{n}\boldsymbol{\bar{x}}%
_{i\circ }^{\prime }\boldsymbol{\eta }_{i }+\bar{u}_{\circ \circ },
\label{ybar0}
\end{equation}%
where $\bar{y}_{\circ \circ }=T^{-1}\sum_{t=1}^{T}\bar{y}_{\circ t}$, $%
\boldsymbol{\bar{x}}_{\circ \circ }=T^{-1}\sum_{t=1}^{T}\boldsymbol{\bar{x}}%
_{\circ t}$ and $\bar{u}_{\circ \circ }=T^{-1}\sum_{t=1}^{T}\bar{u}_{\circ
t} $. Subtracting (\ref{ybar0}) from (\ref{ybar1}), yields 
\begin{equation}
\phi _{t}=\left( \bar{y}_{\circ t}-\bar{y}_{\circ \circ }\right) -\left( 
\boldsymbol{\bar{x}}_{\circ t}-\boldsymbol{\bar{x}}_{\circ \circ }\right)
^{\prime }\boldsymbol{\beta }_{0}-\left( \bar{\nu}_{\circ t}-\bar{\nu}%
_{\circ \circ }\right) \text{, for }t=1,2,...,T,  \label{phit0}
\end{equation}%
where $\bar{\nu}_{\circ t}-\bar{\nu}_{\circ \circ }=\left( \bar{u}_{\circ t}-%
\bar{u}_{\circ \circ }\right) +n^{-1}\sum_{i=1}^{n}\left( \boldsymbol{x}%
_{it}-\boldsymbol{\bar{x}}_{i\circ }\right) ^{\prime }\boldsymbol{\eta }_{i
} $. Under Assumptions \ref{errors}, \ref{CRE} and \ref{stationaryCRE}, $%
\bar{\nu}_{\circ t}-\bar{\nu}_{\circ \circ }=O_{p}(n^{-1/2})$,\footnote{%
For a proof, see Lemma \ref{asysit}.} which suggests the following estimator
of $\phi _{t}$ 
\begin{equation}
\hat{\phi}_{t}=\left( \bar{y}_{\circ t}-\bar{y}_{\circ \circ }\right)
-\left( \boldsymbol{\bar{x}}_{\circ t}-\boldsymbol{\bar{x}}_{\circ \circ
}\right) ^{\prime }\boldsymbol{\hat{\beta}}_{TMG-TE},\text{ for }t=1,2,...,T,
\label{phihat}
\end{equation}%
where 
\begin{equation}
\boldsymbol{\hat{\beta}}_{TMG-TE}=\boldsymbol{\hat{\beta}}_{TMG}-\boldsymbol{%
\bar{Q}}_{n}^{\prime }\boldsymbol{\boldsymbol{\hat{\phi}}},  \label{TMG-TE1}
\end{equation}%
and $\boldsymbol{\bar{Q}}_{n}$ is given by (\ref{Qn}). Stacking the
equations in (\ref{phihat}) over $t=1,2,...,T$ we have 
\begin{equation}
\boldsymbol{\boldsymbol{\hat{\phi}}=M}_{T}\left(\boldsymbol{\bar{y}-%
\boldsymbol{\bar{X}}\hat{\beta}}_{TMG-TE}\right),  \label{phihat_a}
\end{equation}%
where $\boldsymbol{M}_{T}=\boldsymbol{I}_{T}-T^{-1}\boldsymbol{\tau }_{T}%
\boldsymbol{\tau }_{T}^{\prime }$, $\boldsymbol{\bar{y}}=n^{-1}\sum_{i=1}^{n}%
\boldsymbol{y}_{i}$, and$\text{ }\boldsymbol{\bar{X}}=n^{-1}\sum_{i=1}^{n}%
\boldsymbol{X}_{i}$. The above system of equations can now be solved in
terms of $\boldsymbol{\hat{\beta}}_{TMG}$ if $( \boldsymbol{I}_{T}-%
\boldsymbol{M}_{T}\boldsymbol{\bar{X}}\boldsymbol{\bar{Q}}_{n}^{\prime}) $
is non-singular. Under this condition, we have 
\begin{equation}
\boldsymbol{\boldsymbol{\hat{\phi}}}=\left( \boldsymbol{I}_{T}-\boldsymbol{M}%
_{T}\boldsymbol{\bar{X}\bar{Q}}_{n}^{\prime }\right) ^{-1}\boldsymbol{M}%
_{T}\left( \boldsymbol{\bar{y}}-\boldsymbol{\bar{X}\hat{\beta}}_{TMG}\right), 
\label{phihat_b}
\end{equation}%
and substituting $\boldsymbol{\boldsymbol{\hat{\phi}}}$ from (\ref{phihat_a}%
) in (\ref{TMG-TE1}) we have 
\begin{equation}
\boldsymbol{\hat{\beta}}_{TMG-TE}=\left( \boldsymbol{I}_{k^{\prime}}-%
\boldsymbol{\bar{Q}}_{n}^{\prime }\boldsymbol{M}_{T}\boldsymbol{\bar{X}}%
\right) ^{-1}\left( \boldsymbol{\hat{\beta}}_{TMG}-\boldsymbol{\bar{Q}}%
_{n}^{\prime }\boldsymbol{M}_{T}\boldsymbol{\bar{y}}\right) .
\label{TMG_TE_2}
\end{equation}

\begin{remark}
Note that $\left( \boldsymbol{M}_{T}\boldsymbol{\bar{X}}\right) \boldsymbol{%
\bar{Q}}_{n}^{\prime }$ and $\boldsymbol{\bar{Q}}_{n}^{\prime }\left( 
\boldsymbol{M}_{T}\boldsymbol{\bar{X}}\right) $ have the same $k^{\prime}$ ($%
k^{\prime}\leq T-1$) non-zero eigenvalues, $\func{det}\left( \boldsymbol{I}%
_{T}-\boldsymbol{M}_{T}\boldsymbol{\bar{X}\bar{Q}}_{n}^{\prime }\right) =%
\func{det}\left( \boldsymbol{I}_{k^{\prime}}-\boldsymbol{\bar{Q}}%
_{n}^{\prime }\boldsymbol{M}_{T}\boldsymbol{\bar{X}}\right) $, and if $%
\left( \boldsymbol{I}_{T}-\boldsymbol{M}_{T}\boldsymbol{\bar{X}}\boldsymbol{%
\bar{Q}}_{n}^{\prime}\right) $ is invertible so will $\left( \boldsymbol{I}%
_{k^{\prime}}-\boldsymbol{\bar{Q}}_{n}^{\prime }\boldsymbol{M}_{T}%
\boldsymbol{\bar{X}}\right) $.
\end{remark}

The following theorem provides a summary of the results for estimation of $%
\boldsymbol{\phi }_{0}$ and $\boldsymbol{\beta }_{0}$, and their asymptotic
distributions when $T\geq k$. Using results similar to the ones employed to
establish Theorem \ref{VarCon}, robust covariance matrices for $\boldsymbol{%
\hat{\beta}}_{TMG-TE}$ and $\boldsymbol{\hat{\phi}}$ are given by (\ref%
{varbetaTE}) and (\ref{VarPhicombined}), respectively. In particular, the
asymptotic covariance of $\boldsymbol{\hat{\phi}}$ is applicable to both
cases (a) and (b) of Theorem \ref{thm_asytmgte}, and does not require
knowing if $\func{plim}_{n\rightarrow \infty }\boldsymbol{M}_{T}\boldsymbol{%
\bar{X}}=\boldsymbol{0}$, or not.

\begin{theorem}[Asymptotic distributions of $\boldsymbol{\hat{\protect\beta}}%
_{TMG-TE}$ and $\boldsymbol{\hat{\protect\phi}}$ when $T\geq k$]
\label{thm_asytmgte} Suppose that for $i=1,2,...,n$ and $t=1,2,...,T$, $%
y_{it}$ are generated by (\ref{panTE1}), $T\geq k$, Assumptions \ref{errors}, \ref{rcm}, \ref%
{regressorsx}-\ref{CRE} and \ref{stationaryCRE} hold, and $\boldsymbol{I}_{k^{\prime }}-\boldsymbol{\bar{%
Q}}_{n}^{\prime }\boldsymbol{M}_{T}\boldsymbol{\bar{X}}$ is invertible, where 
$\boldsymbol{\bar{Q}}_{n}$ is defined by (\ref{Qn}) and $\boldsymbol{\bar{X}}%
=n^{-1}\sum_{i=1}^{n}\boldsymbol{X}_{i}$. Then as $%
n\rightarrow \infty $, for $\alpha >\frac{1}{1+2\alpha _{p}}$, where $\alpha
_{p}$ is the tail index of the distribution of $\left\{
1/d_{i},i=1,2,...,n\right\} $ and $d_{i}=\det (\boldsymbol{X}%
_{i}^{\prime }\boldsymbol{M}_{T}\boldsymbol{X}_{i}\boldsymbol{)}$,
\begin{equation}
n^{(1-\alpha )/2}\left( \boldsymbol{\hat{\beta}}_{TMG-TE}-\boldsymbol{\beta }%
_{0}\right) \rightarrow _{d}N\left( \boldsymbol{0}_{k^{\prime }},\boldsymbol{%
V}_{\beta ,TMG-TE}\right) ,  \label{Asytmgte}
\end{equation}%
where $\boldsymbol{\hat{\beta}}_{TMG-TE}$ is given by (\ref{TMG_TE_2}), 
\begin{equation*}
\boldsymbol{V}_{\beta ,TMG-TE}=\left( \boldsymbol{I}_{k^{\prime }}-%
\boldsymbol{G}_{x}\right) ^{-1}\boldsymbol{V}_{\beta }(\boldsymbol{\phi }_{0}%
)\left( \boldsymbol{I}_{k^{\prime }}-\boldsymbol{G}_{x}^{\prime }\right)
^{-1},
\end{equation*}%
$\boldsymbol{G}_{x}=\lim_{n\rightarrow \infty }\left( \boldsymbol{\bar{Q}}%
_{n}^{\prime }\boldsymbol{M}_{T}\boldsymbol{\bar{X}}\right) $, and $%
\boldsymbol{V}_{\beta }(\boldsymbol{\phi }_{0})=\lim_{n\rightarrow \infty }Var%
\left[ n^{(1-\alpha )/2}\boldsymbol{\hat{\beta}}_{TMG-TE}(\boldsymbol{\phi }_{0})%
\right] $. Also, 
\begin{equation*}
\boldsymbol{\hat{\phi}}=\boldsymbol{M}_{T}\left( \boldsymbol{\bar{y}-\bar{X}%
\hat{\beta}}_{TMG-TE}\right) .
\end{equation*}
\begin{itemize}
\item[(a)] If $\func{plim}_{n\rightarrow \infty }\boldsymbol{M}_{T}%
\boldsymbol{\bar{X}}=\boldsymbol{0}$, we have 
\begin{equation}
\sqrt{n}\left( \boldsymbol{\hat{\phi}}-\boldsymbol{\phi }_{0}\right)
\rightarrow _{d}N(\boldsymbol{0}_{T},\boldsymbol{M}_{T}\boldsymbol{\Omega }%
_{\nu }\boldsymbol{M}_{T}),  \label{Asyphi1}
\end{equation}%
where $\boldsymbol{\Omega }_{\nu }=\lim_{n\rightarrow \infty
}n^{-1}\sum_{i=1}^{n}E\left( \boldsymbol{\nu }_{i}\boldsymbol{\nu }%
_{i}^{\prime }\right) $, $\boldsymbol{\nu }_{i}=\boldsymbol{(}%
\nu_{i1},\nu_{i2},...,\nu_{iT})^{\prime }$, and $\nu _{it}=u_{it}+%
\boldsymbol{x}_{it}^{\prime }\boldsymbol{\eta}_{i}$.

\item[(b)] If $\func{plim}_{n\rightarrow \infty }\boldsymbol{M}_{T}%
\boldsymbol{\bar{X}}\neq \boldsymbol{0}$, for $\alpha >\frac{1}{1+2\alpha
_{p}}$, we have 
\begin{equation}
n^{(1-\alpha )/2}\left( \boldsymbol{\hat{\phi}}-\boldsymbol{\phi }%
_{0}\right) \rightarrow _{d}N\left( \boldsymbol{0}_{T},\boldsymbol{V}_{\phi
}\right) ,  \label{Asyphi2}
\end{equation}%
where $\boldsymbol{V}_{\phi }=\func{plim}_{n\rightarrow \infty }\boldsymbol{M%
}_{T}\boldsymbol{\bar{X}}Var\left( n^{(1-\alpha )/2}\boldsymbol{\hat{\beta}}%
_{TMG-TE}\right) \boldsymbol{\bar{X}}^{\prime }\boldsymbol{M}_{T}$.
\end{itemize}
\end{theorem}

\begin{proof}
To derive the asymptotic distribution of $\boldsymbol{\hat{\beta}}_{TMG-TE}$%
, we first note that $\boldsymbol{\hat{\beta}}_{TMG-TE}(\boldsymbol{\phi }%
_{0})=\boldsymbol{\hat{\beta}}_{TMG}-\boldsymbol{\bar{Q}}_{n}^{\prime }%
\boldsymbol{\phi }_{0}$, and $\boldsymbol{\hat{\beta}}_{TMG-TE}=\boldsymbol{%
\hat{\beta}}_{TMG}-\boldsymbol{\bar{Q}}_{n}^{\prime }\boldsymbol{\hat{\phi}}$,
where $\boldsymbol{\bar{Q}}_{n}$ is given by (\ref{Qn}). Hence%
\begin{equation}
\left( \boldsymbol{\hat{\beta}}_{TMG-TE}-\boldsymbol{\beta }_{0}\right)
-\left( \boldsymbol{\hat{\beta}}_{TMG-TE}(\boldsymbol{\phi }_{0})-%
\boldsymbol{\beta }_{0}\right) =-\boldsymbol{\bar{Q}}_{n}^{\prime }\left( 
\boldsymbol{\hat{\phi}-\phi }_{0}\right) .  \label{thetagap}
\end{equation}%
Also stacking (\ref{phit0}) over $t$ and subtracting the results from (\ref%
{phihat_a}) yields 
\begin{equation}
\boldsymbol{\hat{\phi}}-\boldsymbol{\phi }_{0}=-\boldsymbol{M}_{T}%
\boldsymbol{\bar{X}}\left( \boldsymbol{\hat{\beta}}_{TMG-TE}-\boldsymbol{%
\beta }_{0}\right) +\boldsymbol{M}_{T}\boldsymbol{\bar{\nu}},  \label{phigap}
\end{equation}%
where $\boldsymbol{\bar{\nu}=}n^{-1}\sum_{i=1}^{n}\boldsymbol{\nu }_{i}$,
$\boldsymbol{\nu }_{i}=(\nu _{i1},\nu _{i2},...,\nu _{iT})^{\prime }$,
and $\nu _{it}=u_{it}+\boldsymbol{x}_{it}^{\prime }\boldsymbol{\eta }_{i}$.
Using this result in (\ref{thetagap}), we have 
\begin{equation*}
\left( \boldsymbol{I}_{k^{\prime }}-\boldsymbol{\bar{Q}}_{n}^{\prime }%
\boldsymbol{M}_{T}\boldsymbol{\bar{X}}\right) \left( \boldsymbol{\hat{\beta}}%
_{TMG-TE}-\boldsymbol{\beta }_{0}\right) =\left( \boldsymbol{\hat{\beta}}%
_{TMG-TE}(\boldsymbol{\phi }_{0})-\boldsymbol{\beta }_{0}\right) -%
\boldsymbol{\bar{Q}}_{n}^{\prime }\boldsymbol{M}_{T}\boldsymbol{\bar{\nu}}.
\end{equation*}%
For a known value of $\boldsymbol{\phi }_{0}$, the asymptotic distribution
of $\left( \boldsymbol{\hat{\beta}}_{TMG-TE}(\boldsymbol{\phi }_{0})-%
\boldsymbol{\beta }_{0}\right) $ is the same as $\boldsymbol{\hat{\beta}}%
_{TMG}$ with $\boldsymbol{y}_{i}$ replaced by $\boldsymbol{y}_{i}-%
\boldsymbol{\phi }_{0}$. Under the assumption that $\boldsymbol{I}%
_{k^{\prime }}-\boldsymbol{\bar{Q}}_{n}^{\prime }\boldsymbol{M}_{T}%
\boldsymbol{\bar{X}}$ is invertible, we have 
\begin{equation*}
\boldsymbol{\hat{\beta}}_{TMG-TE}-\boldsymbol{\beta }_{0}=\left( \boldsymbol{%
I}_{k^{\prime }}-\boldsymbol{\bar{Q}}_{n}^{\prime }\boldsymbol{M}_{T}%
\boldsymbol{\bar{X}}\right) ^{-1}\left( \boldsymbol{\hat{\beta}}_{TMG-TE}(%
\boldsymbol{\phi }_{0})-\boldsymbol{\beta }_{0}\right) -\left( \boldsymbol{I}%
_{k^{\prime }}-\boldsymbol{\bar{Q}}_{n}^{\prime }\boldsymbol{M}_{T}%
\boldsymbol{\bar{X}}\right) ^{-1}\boldsymbol{\bar{Q}}_{n}^{\prime }%
\boldsymbol{M}_{T}\boldsymbol{\bar{\nu}}.
\end{equation*}%
Hence using Lemma \ref{asysit}, $\boldsymbol{\bar{\nu}}=O_{p}\left(
n^{-1/2}\right) $, and we have 
\begin{equation*}
n^{\frac{1-\alpha}{2}}\left( \boldsymbol{\hat{\beta}}_{TMG-TE}-\boldsymbol{\beta }%
_{0}\right) =\left( \boldsymbol{I}_{k^{\prime }}-\boldsymbol{\bar{Q}}%
_{n}^{\prime }\boldsymbol{M}_{T}\boldsymbol{\bar{X}}\right) ^{-1}\left[
n^{\frac{1-\alpha}{2}}\left( \boldsymbol{\hat{\beta}}_{TMG-TE}(\boldsymbol{\phi }%
_{0})-\boldsymbol{\beta }_{0}\right) \right] +O_{p}(n^{-\alpha /2}),
\end{equation*}%
where for a known $\boldsymbol{\phi }_{0}$ we have already established in
Theorem \ref{thm_asytmg} that 
\begin{equation*}
n^{(1-\alpha )/2}\left( \boldsymbol{\hat{\beta}}_{TMG-TE}(\boldsymbol{\phi }%
_{0})-\boldsymbol{\beta }_{0}\right) \rightarrow _{d}N\left( \boldsymbol{%
0,V_{\beta }}(\boldsymbol{\phi }_{0})\right) ,
\end{equation*}%
with $\ \boldsymbol{V}_{\beta }(\boldsymbol{\phi }_{0})=\lim_{n\rightarrow
\infty }Var\left[ n^{(1-\alpha )/2}\boldsymbol{\hat{\beta}}_{TMG-TE}(%
\boldsymbol{\phi }_{0})\right] $. Suppose that $\func{plim}_{n\rightarrow
\infty }\left( \boldsymbol{\bar{Q}}_{n}^{\prime }\boldsymbol{M}_{T}%
\boldsymbol{\bar{X}}\right) =\boldsymbol{G}_{x}$, where $\boldsymbol{I}%
_{k^{\prime }}-\boldsymbol{G}_{x}$ is non-singular. For $\alpha >\frac{1}{%
1+2\alpha _{p}}$, we have $n^{(1-\alpha )/2}\left( \boldsymbol{\hat{\beta}}%
_{TMG-TE}-\boldsymbol{\beta }_{0}\right) \rightarrow _{d}N\left( \boldsymbol{%
0},\boldsymbol{V}_{\beta ,TMG-TE}\right) $, where 
\begin{equation}
\boldsymbol{V}_{\beta ,TMG-TE}=\left( \boldsymbol{I}_{k^{\prime }}-%
\boldsymbol{G}_{x}\right) ^{-1}\boldsymbol{V}_{\beta }(\boldsymbol{\phi }%
_{0})\left[ \left( \boldsymbol{I}_{k^{\prime }}-\boldsymbol{G}_{x}\right)
^{-1}\right] ^{\prime }.  \label{Vtmgte}
\end{equation}%
A consistent estimator is given by 
\begin{equation}
\widehat{Var(\boldsymbol{\hat{\beta}}_{TMG-TE})}=\left( \boldsymbol{I}%
_{k^{\prime }}-\boldsymbol{\bar{Q}}_{n}^{\prime }\boldsymbol{M}_{T}%
\boldsymbol{\bar{X}}\right) ^{-1}\boldsymbol{\widehat{\Sigma}}_{\beta }\left[
\left( \boldsymbol{I}_{k^{\prime }}-\boldsymbol{\bar{Q}}_{n}^{\prime }%
\boldsymbol{M}_{T}\boldsymbol{\bar{X}}\right) ^{-1}\right] ^{\prime },
\label{varbetaTE}
\end{equation}%
where 
\begin{equation*}
\boldsymbol{\widehat{\Sigma}}_{\beta }=\frac{1}{n(n-1)(1+\bar{\delta}%
_{n})^{2}}\sum_{i=1}^{n}(\boldsymbol{\tilde{\beta}}_{i}-\boldsymbol{Q}%
_{i}^{\prime }\boldsymbol{\hat{\phi}}-\boldsymbol{\hat{\beta}}_{TMG-TE})(%
\boldsymbol{\tilde{\beta}}_{i}-\boldsymbol{Q}_{i}^{\prime }\boldsymbol{\hat{%
\phi}}-\boldsymbol{\hat{\beta}}_{TMG-TE})^{\prime }.
\end{equation*}%
Consider now the asymptotic distribution of $\boldsymbol{\hat{\phi}}$. Given
(\ref{phigap}), two cases can arise depending on whether the probability
limit of $\boldsymbol{M}_{T}\boldsymbol{\bar{X}}$ tends to zero as $%
n\rightarrow \infty $, or not. Under (a) $\func{plim}_{n\rightarrow \infty }%
\boldsymbol{M}_{T}\boldsymbol{\bar{X}}=\boldsymbol{0}$, we have $n^{1/2}(%
\boldsymbol{\hat{\phi}}-\boldsymbol{\phi }_{0})\rightarrow _{d}N(\boldsymbol{%
0},\boldsymbol{M}_{T}\boldsymbol{\Omega }_{\nu }\boldsymbol{M}_{T})$, where $%
\boldsymbol{\Omega }_{\nu }$ is given by (\ref{Omeganu}), namely $%
\boldsymbol{\hat{\phi}}\rightarrow _{p}\boldsymbol{\phi }_{0}$ at the
regular rate of $n^{-1/2}$. Also since 
\begin{equation*}
\nu _{it}-\bar{\nu}_{i\circ }=(u_{it}-\bar{u}_{i\circ })+\left( \boldsymbol{x%
}_{it}-\boldsymbol{\bar{x}}_{i\circ }\right) ^{\prime }\boldsymbol{\eta }%
_{i}=y_{it}-\bar{y}_{i\circ }-(\boldsymbol{x}_{it}-\boldsymbol{x}_{i\circ
})^{\prime }\boldsymbol{\beta -}\phi _{t},
\end{equation*}%
$\boldsymbol{\Omega }_{\nu }$ can be consistently estimated by%
\begin{equation}
\boldsymbol{\widehat{\Omega }}_{\nu }=\frac{1}{n-1}\sum_{i=1}^{n}\left( 
\boldsymbol{y}_{i}-\boldsymbol{X}_{i}\boldsymbol{\hat{\beta}}_{TMG-TE}-%
\boldsymbol{\hat{\phi}}\right) \left( \boldsymbol{y}_{i}-\boldsymbol{X}_{i}%
\boldsymbol{\hat{\beta}}_{TMG-TE}-\boldsymbol{\hat{\phi}}\right) ^{\prime }.
\label{Omeganuhat}
\end{equation}%
Under case $(b)$, $\func{plim}_{n\rightarrow \infty }\boldsymbol{M}_{T}%
\boldsymbol{\bar{X}}\neq \boldsymbol{0}$, and convergence of $\boldsymbol{%
\hat{\phi}}$ to $\boldsymbol{\phi }_{0}$ cannot achieve the regular rate. To
see this, note that%
\begin{equation*}
n^{(1-\alpha )/2}\left( \boldsymbol{\hat{\phi}}-\boldsymbol{\phi }%
_{0}\right) =-\boldsymbol{M}_{T}\boldsymbol{\bar{X}}\left[ n^{(1-\alpha
)/2}\left( \boldsymbol{\hat{\beta}}_{TMG-TE}-\boldsymbol{\beta }_{0}\right) %
\right] +n^{-\alpha /2}\boldsymbol{M}_{T}\left( n^{1/2}\boldsymbol{\bar{\nu}}%
\right),
\end{equation*}%
where $\boldsymbol{M}_{T}\left( n^{1/2}\boldsymbol{\bar{\nu}}\right)
=O_{p}(1)$ and and since $\alpha >0$ the second term tends to zero, but
rather slowly. In practice, where it is not known whether $\boldsymbol{M}_{T}%
\boldsymbol{\bar{X}}\rightarrow \boldsymbol{0}$ or not, one can consistently
estimate the asymptotic variance of $\boldsymbol{\hat{\phi}}$ by%
\begin{equation}
\widehat{Var\left( \boldsymbol{\hat{\phi}}\right) }=\boldsymbol{\boldsymbol{M%
}}_{T}\left[ \boldsymbol{\bar{X}}\widehat{Var\left( \boldsymbol{\hat{\beta}}%
_{TMG-TE}\right) }\boldsymbol{\bar{X}}^{\prime }+n^{-1}\boldsymbol{\widehat{%
\Omega }}_{\nu }\right] \boldsymbol{M}_{T},  \label{VarPhicombined}
\end{equation}%
where $\widehat{Var\left( \boldsymbol{\hat{\beta}}_{TMG-TE}\right) }$ and $%
\boldsymbol{\widehat{\Omega }}_{\nu }$ are given by (\ref{varbetaTE}) and (%
\ref{Omeganuhat}), respectively. Note that $\widehat{Var\left( \boldsymbol{%
\hat{\phi}}\right) }$ is singular as $\widehat{Var\left( \boldsymbol{\hat{%
\phi}}\right) }\boldsymbol{\tau }_{T}=\boldsymbol{0}$, but its diagonal
elements can be used to test if $\hat{\phi}_{t}$ for $t=1,2,..,T$ are
individually or jointly statistically significant subject to $\boldsymbol{%
\phi }^{\prime }\boldsymbol{\tau }_{T}=0$.
\end{proof}

\begin{example}
As an example of case (a) in Theorem \ref{thm_asytmgte}, suppose $%
\boldsymbol{x}_{it}=\boldsymbol{\alpha }_{ix}+\boldsymbol{u}_{x,it}$, where $%
\boldsymbol{u}_{x,it}$ are distributed independently over $i$ with zero
means. Then \thinspace $\boldsymbol{\bar{x}}_{\circ t}-\boldsymbol{\bar{x}}%
_{\circ \circ }=\boldsymbol{\bar{u}}_{x,\circ t}-\boldsymbol{\bar{u}}%
_{x,\circ \circ }\rightarrow _{p}\boldsymbol{0}$, and we have $\func{plim}%
_{n\rightarrow \infty }\boldsymbol{M}_{T}\boldsymbol{\bar{X}}=\boldsymbol{0}$%
. An example of case (b) arises when $\boldsymbol{x}_{it}$ contains latent
factors, $\boldsymbol{x}_{it}=\boldsymbol{\alpha }_{ix}+\boldsymbol{\Gamma }%
_{ix}\boldsymbol{f}_{t}+\boldsymbol{u}_{x,it}$. In this case $\,\boldsymbol{%
\bar{x}}_{\circ t}-\boldsymbol{\bar{x}}_{\circ \circ }=\boldsymbol{\bar{%
\Gamma}}\left( \boldsymbol{f}_{t}-\boldsymbol{\bar{f}}\right) +\boldsymbol{%
\bar{u}}_{x,\circ t}-\boldsymbol{\bar{u}}_{x,\circ \circ }$, where $%
\boldsymbol{\bar{\Gamma}}=\frac{1}{n}\sum_{i=1}^{n}\boldsymbol{\Gamma }%
_{ix}\rightarrow _{p}\boldsymbol{\Gamma}$. It follows that $\boldsymbol{\bar{%
x}}_{\circ t}-\boldsymbol{\bar{x}}_{\circ \circ }\rightarrow _{p}\boldsymbol{%
\Gamma}\left( \boldsymbol{f}_{t}-\boldsymbol{\bar{f}}\right) $ which is
non-zero if $\boldsymbol{f}_{t}$ varies over time and $\boldsymbol{\Gamma }%
\neq \boldsymbol{0}$, namely at least one of the factors has loadings with
non-zero means.
\end{example}

\newpage

\begin{center}
\thispagestyle{empty}

{\large Online Supplement} \smallbreak
{\large to Estimation of Average Effects in Short $T$ Heterogeneous Panels} 
{\Large \bigskip \bigskip }

{\normalsize {M. Hashem Pesaran }}

{\normalsize University of Southern California, and Trinity College,
Cambridge }

{\normalsize \vskip 0.5em }{\normalsize Liying Yang }

{\normalsize Shenzhen Audencia Financial Technology Institute, Shenzhen
University }

{\normalsize \vskip 1.5em \today
}

{\normalsize \vskip 30pt }
\end{center}

\makeatletter\setcounter{page}{1}\renewcommand{\thepage}{S\arabic{page}} %
\setcounter{table}{0} \renewcommand{\thetable}{S.\arabic{table}} %
\setcounter{section}{0}\renewcommand{\thesection}{S.\arabic{section}} %
\setcounter{section}{0}\renewcommand{\theHsection}{appendixsection.S.%
\arabic{section}} 
\setcounter{figure}{0} \renewcommand{\thefigure}{S.\arabic{figure}} %
\setcounter{footnote}{0} \renewcommand{\thefootnote}{S\arabic{footnote}} %
\renewcommand{\thetheorem}{S.\arabic{theorem}}\setcounter{theorem}{0} %
\renewcommand{\theproposition}{S.\arabic{proposition}}%
\setcounter{proposition}{0} \renewcommand{\theassumption}{S.%
\arabic{assumption}}\setcounter{assumption}{0} \renewcommand{\thelemma}{S.%
\arabic{lemma}}\setcounter{lemma}{0} \renewcommand{\theremark}{S.%
\arabic{remark}}\setcounter{remark}{0} \makeatother

This online supplement is structured as follows. Section \ref{mathsup}
provides additional mathematical derivations for the estimation and testing
procedures. Section \ref{mcsup} details the design of the Monte Carlo (MC)
experiments and reports additional MC results. Section \ref{appt3} presents
the estimation results for the cash transfer program using panels with $%
T=3>k $, where $k=k^{\prime}+1$ and $k^{\prime}$ is the number of regressors.

In Section \ref{mcsup} of MC simulations, sub-section \ref{secMCDGP}
provides details of the MC design. Sub-section \ref{MCap} summarizes MC
results for our estimation of $\alpha _{p}$, the tail index of the
distribution of $1/d_{i}$ ($d_{i}=\func{det}(\boldsymbol{X}_{i}^{\prime }%
\boldsymbol{M}_{T}\boldsymbol{X}_{i})$) using \cite{Hill1975}'s estimation
procedure, where $\boldsymbol{X}_{i}$ is the $T$ by $k^{\prime }$ matrix of
unit $i$'s observations including $k^{\prime }$ regressors and $\boldsymbol{M%
}_{T}=\boldsymbol{I}_{T}-\boldsymbol{\tau }_{T}\boldsymbol{\tau }%
_{T}^{\prime }/T$ with a $T\times T$ identity matrix $\boldsymbol{I}_{T}$
and a $T\times 1$ vector of ones, $\boldsymbol{\tau }_{T}$. Sub-section \ref%
{MCk2} presents MC results for TMG and TMG-TE estimators with Gaussian
distributed errors in the regressor ($x_{it}$) process, where sub-section %
\ref{MCthresh} provides MC evidence on TMG and GP estimators with different
choices of the trimming threshold parameters, $\alpha $ and $\alpha _{GP}$, respectively.
Sub-section \ref{MCk34} presents MC results for estimations without time
effects, focusing on cases with two and three regressors. Sub-section \ref%
{MCte} provides MC results for estimations that include time effects.
Sub-section \ref{MCtest} provides and discusses MC results on the Hausman
test of uncorrelated heterogeneity in panel data models with time effects.

\section{Mathematical supplement}

\label{mathsup}

\textbf{Notations:} Generic positive finite constants are denoted by $C$
when large, and $c$ when small. They can take different values at different
instances. $\lambda _{\max }\left( \boldsymbol{A}\right) $ and $\lambda
_{\min }\left( \boldsymbol{A}\right) $ denote the maximum and minimum
eigenvalues of matrix $\boldsymbol{A}$. $\boldsymbol{A}\succ \boldsymbol{0}$
and $\boldsymbol{A}\succeq \boldsymbol{0}$ denote that matrix $\boldsymbol{A}
$ is positive definite and is positive semi-definite, respectively. When
matrix $\boldsymbol{A}$ is square, its adjugate (adjoint) and determinant
are denoted by $\func{adj}(\boldsymbol{A)}$ and $\det (\boldsymbol{A})$,
respectively. If $\det (\boldsymbol{A})\neq 0$, then the inverse of $%
\boldsymbol{A}$ is given by $\boldsymbol{A}^{-1}=\func{adj}(\boldsymbol{A)}%
/\det (\boldsymbol{A})$. $\left\Vert \boldsymbol{A}\right\Vert =\lambda
_{\max }^{1/2}(\boldsymbol{A}^{\prime }\boldsymbol{A)}$ and $\left\Vert 
\boldsymbol{A}\right\Vert _{1}$ denote the spectral and column norms of
matrix $\boldsymbol{A}$, respectively. $\left\Vert \boldsymbol{x}\right\Vert
_{p}=\left[ E\left( \left\Vert \boldsymbol{x}\right\Vert ^{p}\right) \right]
^{1/p}$. If $\left\{ f_{n}\right\} _{n=1}^{\infty }$ is any real sequence
and $\left\{ g_{n}\right\} _{n=1}^{\infty }$ is a sequence of positive real
numbers, then $f_{n}=O(g_{n})$ if there exists $C$ such that $\left\vert
f_{n}\right\vert /g_{n}\leq C$ for all $n$, and $f_{n}=o(g_{n})$ if $%
f_{n}/g_{n}\rightarrow 0$ as $n\rightarrow \infty $. Similarly, $%
f_{n}=O_{p}(g_{n})$ if $f_{n}/g_{n}$ is stochastically bounded, and $%
f_{n}=o_{p}(g_{n})$, if $f_{n}/g_{n}\rightarrow _{p}0$. $f_{n}=\ominus
(g_{n})$ if there exist $n_{0}\geq 1$ and positive finite constants $C_{0}$
and $C_{1}$, such that $\inf_{n\geq n_{0}}\left( \left\vert f_{n}\right\vert
/g_{n}\right) \geq C_{0}$, and $\sup_{n\geq n_{0}}\left( \left\vert
f_{n}\right\vert /g_{n}\right) \leq C_{1}$. The operator $\rightarrow _{p}$
denotes convergence in probability, and $\rightarrow _{d}$ denotes
convergence in distribution. $IID$ stands for independently and identically
distributed. $\boldsymbol{u}\perp \boldsymbol{v}$ is used to show that
vectors of random variables $\boldsymbol{u}$ and $\boldsymbol{v}$ are
independently distributed.

\subsection{The Hausman test of correlated heterogeneity with time effects 
\label{TestTE}}

Given the panel data model with time effects in (\ref{panTE1}), a Hausman
test can be constructed based on the difference between the trimmed mean
group estimator with time effects (TMG-TE) and two-way fixed effects (TWFE)
estimator when $T\geq k$. The TWFE estimator is given by 
\begin{equation}
\boldsymbol{\hat{\beta}}_{TWFE}=\boldsymbol{\bar{\Psi}}_{n,TE}^{-1}\left[ 
\frac{1}{n}\sum_{i=1}^{n}\left( \boldsymbol{X}_{i}-\boldsymbol{\bar{X}}%
\right) ^{\prime }\boldsymbol{M}_{T}\left( \boldsymbol{y}_{i}-\boldsymbol{%
\bar{y}}\right) \right] ,  \label{TWFEhat}
\end{equation}%
and 
\begin{equation}
\boldsymbol{\hat{\phi}}_{TWFE}=\boldsymbol{M}_{T}(\boldsymbol{\bar{y}}-%
\boldsymbol{\bar{X}}\boldsymbol{\hat{\beta}}_{TWFE}),  \label{TWFEphihat}
\end{equation}%
where 
\begin{equation}
\boldsymbol{\bar{\Psi}}_{n,TE}=\frac{1}{n}\sum_{i=1}^{n}\left( \boldsymbol{X}%
_{i}-\boldsymbol{\bar{X}}\right) ^{\prime }\boldsymbol{M}_{T}\left( 
\boldsymbol{X}_{i}-\boldsymbol{\bar{X}}\right).  \label{Psi_nte}
\end{equation}
A consistent estimator of the asymptotic variance of the TWFE estimator is
given by 
\begin{equation}
\widehat{Var(\hat{\beta}_{TWFE})}=\frac{1}{n}\boldsymbol{\bar{\Psi}}%
_{n,TE}^{-1}\left[ \frac{1}{n}\sum_{i=1}^{n}\left( \boldsymbol{X}_{i}-%
\boldsymbol{\bar{X}}\right) ^{\prime }\boldsymbol{M}_{T}\boldsymbol{\hat{u}}%
_{i}^{\ast }\boldsymbol{\hat{u}}_{i}^{\ast \prime }\boldsymbol{M}_{T}\left( 
\boldsymbol{X}_{i}-\boldsymbol{\bar{X}}\right) \right] \boldsymbol{\bar{\Psi}%
}_{n,TE}^{-1}  \label{CvarTWFE}
\end{equation}%
where $\boldsymbol{\hat{u}}_{i}^{\ast }=\boldsymbol{\hat{u}}_{i,TWFE}-\frac{1%
}{n}\sum_{i=1}^{n}\boldsymbol{\hat{u}}_{i,TWFE}$, $\boldsymbol{\hat{u}}%
_{i,TWFE} = \boldsymbol{M}_{T}(\boldsymbol{y}_{i} - \boldsymbol{X}_{i} 
\boldsymbol{\hat{\beta}}_{TWFE})$, and $\boldsymbol{\bar{\Psi}}_{n,TE}$ is
given by (\ref{Psi_nte}).\footnote{%
The estimator in (\ref{CvarTWFE}) is a straightforward extension of the
variance estimator given for FE estimators in Section 26.7 of \cite%
{Pesaran2015}, and for a fixed $T$, it is robust to error serial correlation
and error variance heteroskedasticity.}

Then, 
\begin{equation}
\boldsymbol{\hat{\beta}}_{TWFE}-\boldsymbol{\beta }_{0}=\boldsymbol{\bar{\Psi%
}}_{n,TE}^{-1}\left[ \frac{1}{n}\sum_{i=1}^{n}\left( \boldsymbol{X}_{i}-%
\boldsymbol{\bar{X}}\right) ^{\prime }\boldsymbol{M}_{T}\boldsymbol{\tilde{%
\nu}}_{i}\right] ,  \label{TWFE}
\end{equation}%
where $\boldsymbol{\tilde{\nu}}_{i}=\boldsymbol{\nu }_{i}-\boldsymbol{\bar{%
\nu}}$, $\boldsymbol{\bar{\nu}}=n^{-1}\sum_{i=1}^{n}\boldsymbol{\nu }_{i}$,
and $\boldsymbol{\nu }_{i}=\boldsymbol{u}_{i}+\boldsymbol{X}_{i}\boldsymbol{%
\eta }_{i }$.

Consider $\boldsymbol{\hat{\Delta}}_{\beta ,TE}=\boldsymbol{\hat{\beta}}%
_{TWFE}-\boldsymbol{\hat{\beta}}_{TMG-TE}$. We derive the test statistics
under the null hypothesis in (\ref{null}) for two cases: the TMG-TE
estimator is given by (\ref{TMG_TE_2}) when $T= k$ and (\ref{BetaTE2}) when $%
T>k$. The implicit null is given by $n^{-1/2}\sum_{i=1}^{n}E\left[ \left( 
\boldsymbol{X}_{i}-\boldsymbol{\bar{X}}\right) ^{\prime }\boldsymbol{M}%
_{T}\left( \boldsymbol{X}_{i}-\boldsymbol{\bar{X}}\right) \boldsymbol{\eta }%
_{i }\right] \rightarrow \boldsymbol{0}$, which is implied by (\ref{null})
but not \textit{vice versa}. We make the following assumption that
corresponds to the pooling Assumption \ref{PoolA}:

\begin{assumption}[TWFE pooling assumption]
\label{PoolB}Let $\boldsymbol{\bar{\Psi}}_{n,TE}=\frac{1}{n}\sum_{i=1}^{n}%
\boldsymbol{\Psi }_{i,TE}$ with $\boldsymbol{\Psi }_{i,TE}=\left( 
\boldsymbol{X}_{i}-\boldsymbol{\bar{X}}\right) ^{\prime }\boldsymbol{M}%
_{T}\left( \boldsymbol{X}_{i}-\boldsymbol{\bar{X}}\right) \boldsymbol{\succ
0,}$ for $i=1,2,...,n$. For a fixed $T\geq k$, as $n\rightarrow \infty $, 
\begin{equation*}
\boldsymbol{\bar{\Psi}}_{n,TE}\rightarrow _{p}\lim_{n\rightarrow \infty
}n^{-1}\sum_{i=1}^{n}E\left( \boldsymbol{\Psi }_{i,TE}\right) =\boldsymbol{%
\bar{\Psi}}_{TE}\boldsymbol{\succ 0}\text{, and }\boldsymbol{\bar{\Psi}}%
_{n,TE}^{-1}=\boldsymbol{\bar{\Psi}}_{TE}^{-1}+o_{p}(1).
\end{equation*}
\end{assumption}

\subsubsection{Panels with $T = k$}

When $T=k$, we consider $\boldsymbol{\hat{\beta}}_{TMG-TE}$ given by (\ref%
{TMG_TE_2}). Given (\ref{thetagap}), (\ref{phigap}) and $\boldsymbol{\hat{%
\phi}}=\boldsymbol{M}_{T}(\boldsymbol{\bar{y}}-\boldsymbol{\bar{X}}%
\boldsymbol{\hat{\beta}}_{TMG-TE})$, we have 
\begin{equation}
\boldsymbol{\hat{\beta}}_{TMG-TE}-\boldsymbol{\beta }_{0}=(\boldsymbol{I}%
_{k^{\prime }}-\boldsymbol{\bar{Q}}_{n}^{\prime }\boldsymbol{M}_{T}%
\boldsymbol{\bar{X}})^{-1}\left[ \frac{1}{n(1+\bar{\delta}_{n})}%
\sum_{i=1}^{n}\boldsymbol{Q}_{i}^{\prime }\boldsymbol{M}_{T}\boldsymbol{%
\tilde{\nu}}_{i}\right] ,  \label{TMG-TE}
\end{equation}%
where $\delta _{i}$ is given by (\ref{deltai}), $\boldsymbol{Q}%
_{i}=(1+\delta _{i})\boldsymbol{M}_{T}\boldsymbol{X}_{i}(\boldsymbol{X}%
_{i}^{\prime }\boldsymbol{M}_{T}\boldsymbol{X}_{i})^{-1}$ given by (\ref{Qi}%
), and $\boldsymbol{\bar{Q}}_{n}=n^{-1}(1+\bar{\delta}_{n})^{-1}%
\sum_{i=1}^{n}\boldsymbol{Q}_{i}$ given by (\ref{Qn}). Using (\ref{TWFE})
and under Assumption \ref{PoolB}%
\begin{equation}
\boldsymbol{\hat{\beta}}_{TWFE}-\boldsymbol{\beta }_{0}=\boldsymbol{\bar{\Psi%
}}_{n,TE}^{-1}\left[ \frac{1}{n}\sum_{i=1}^{n}\left( \boldsymbol{X}_{i}-%
\boldsymbol{\bar{X}}\right) ^{\prime }\boldsymbol{M}_{T}\boldsymbol{\tilde{%
\nu}}_{i}\right] +o_{p}(1),  \label{TWFE2}
\end{equation}%
which in conjunction with (\ref{TMG-TE}) yields 
\begin{equation*}
\boldsymbol{\hat{\Delta}}_{\beta ,TE}=\boldsymbol{\hat{\beta}}_{TWFE}-%
\boldsymbol{\hat{\beta}}_{TMG-TE}=n^{-1}\sum_{i=1}^{n}\boldsymbol{G}%
_{i,TE}^{\prime }\boldsymbol{M}_{T}\boldsymbol{\tilde{\nu}}_{i},
\end{equation*}%
where $\boldsymbol{G}_{i,TE}$ is a $T\times k^{\prime }$ matrix given by 
\begin{equation}
\boldsymbol{G}_{i,TE}=\left( \boldsymbol{X}_{i}-\boldsymbol{\bar{X}}\right) 
\boldsymbol{\bar{\Psi}}_{n,TE}^{-1}-(1+\bar{\delta}_{n})^{-1}\boldsymbol{Q}%
_{i}\left[ \left( \boldsymbol{I}_{k^{\prime }}-\boldsymbol{\bar{Q}}%
_{n}^{\prime }\boldsymbol{M}_{T}\boldsymbol{\bar{X}}\right) ^{-1}\right]
^{\prime }.  \label{GiTE}
\end{equation}%
Under Assumption \ref{errors} and the null hypothesis given by (\ref{null}), 
$E(\tilde{\nu}_{it}|\boldsymbol{G}_{i,TE})=0$ for all $i$ and $t$, where $\tilde{\nu}_{it}=\nu _{it}-\bar{%
\nu}_{i\circ }$, and $\bar{\nu}_{i\circ } = \sum_{t=1}^{T}%
\sum_{t^{\prime }=1}^{T}\nu_{it}$. Also by
Assumptions \ref{errors} and \ref{CRE}, $u_{it}$ and $\boldsymbol{\eta }_{i}$
are cross-sectionally independent so that $\tilde{\nu}_{it}$ conditional on $%
\boldsymbol{X}_{i}$ are also cross-sectionally independent. Then, following
a similar line of reasoning as in the proof of Theorem \ref{AsyDTest}, for a
fixed $T=k$, as $n\rightarrow \infty $, 
\begin{equation*}
\sqrt{n}\boldsymbol{\hat{\Delta}}_{\beta ,TE}\rightarrow _{d}N(\boldsymbol{0}%
,\boldsymbol{V}_{\Delta ,TE}),
\end{equation*}%
and 
\begin{equation}
\boldsymbol{V}_{\Delta ,TE}=\lim_{n\rightarrow \infty }\frac{1}{n}%
\sum_{i=1}^{n}E\left( \boldsymbol{G}_{i,TE}^{\prime }\boldsymbol{M}_{T}%
\boldsymbol{\tilde{\nu}}_{i}\boldsymbol{\tilde{\nu}}_{i}^{\prime }%
\boldsymbol{M}_{T}\boldsymbol{G}_{i,TE}\right) \succ \boldsymbol{0},
\label{Vdte}
\end{equation}%
where positive definiteness holds under conditions analogous to those in
Theorem \ref{AsyDTest}. The Hausman test statistic for panels with time
effects is given by 
\begin{equation}
H_{\beta ,TE}=n\boldsymbol{\hat{\Delta}}_{\beta ,TE}^{\prime }\boldsymbol{V}%
_{\Delta ,TE}^{-1}\boldsymbol{\hat{\Delta}}_{\beta ,TE}\text{, for }T=k.
\label{hte0}
\end{equation}%
As $n\rightarrow \infty $, $H_{\beta ,TE}\rightarrow _{d}\chi _{k^{\prime
}}^{2}$. For fixed $T$, $\boldsymbol{V}_{\Delta ,TE}$ can be consistently
estimated by 
\begin{equation}
\boldsymbol{\widehat{V}}_{\Delta ,TE}=\frac{1}{n}\sum_{i=1}^{n}\left( 
\boldsymbol{G}_{i,TE}^{\prime }\boldsymbol{M}_{T}\boldsymbol{\hat{\tilde{\nu}%
}}_{i,FE}\boldsymbol{\hat{\tilde{\nu}}}_{i,FE}^{\prime }\boldsymbol{M}_{T}%
\boldsymbol{G}_{i,TE}\right) ,  \label{varhte}
\end{equation}%
where 
\begin{equation}
\boldsymbol{\hat{\tilde{\nu}}}_{i,FE}=\widehat{\boldsymbol{\nu }_{i}-%
\boldsymbol{\bar{\nu}}}=(\boldsymbol{y}_{i}-\boldsymbol{\bar{y}})-(%
\boldsymbol{X}_{i}-\boldsymbol{\bar{X}})\boldsymbol{\hat{\beta}}_{TWFE},
\label{siTWFE}
\end{equation}
and $G_{i,TE}$ is defined by (\ref{GiTE}). Using the above estimate of $%
\boldsymbol{\widehat{V}}_{\Delta ,TE}$, a feasible statistic for testing the
null hypothesis given by (\ref{null}) in the case of panel regression models
with time effects and $T=k$ is given by 
\begin{equation}
\hat{H}_{\beta ,TE}=n\left( \boldsymbol{\hat{\beta}}_{TWFE}-\boldsymbol{\hat{%
\beta}}_{TMG-TE}\right) ^{\prime }\boldsymbol{\widehat{V}}_{\Delta
,TE}^{-1}\left( \boldsymbol{\hat{\beta}}_{TWFE}-\boldsymbol{\hat{\beta}}%
_{TMG-TE}\right) .  \label{htetest}
\end{equation}

\subsubsection{Panels with $T >k$}

In this case, we consider the TMG-TE estimator given by (\ref{BetaTE2}),
computed with $\boldsymbol{\hat{\phi}}_{C}$ given by (\ref{TE2}), and 
\begin{equation*}
\boldsymbol{\hat{\Delta}}_{\beta ,TE}=\boldsymbol{\hat{\beta}}_{TWFE}-%
\boldsymbol{\hat{\beta}}_{C,TMG-TE}\text{, for }T>k.
\end{equation*}%
Using (\ref{TE-Ca}) and noting that $\boldsymbol{M}_{i}\boldsymbol{M}_{T}%
\boldsymbol{X}_{i}=\boldsymbol{0}$, we have 
\begin{equation}
\boldsymbol{\hat{\phi}}_{C}-\boldsymbol{\phi }=\boldsymbol{\bar{M}}%
_{n}^{-1}\left( \frac{1}{n}\sum_{i=1}^{n}\boldsymbol{M}_{i}\boldsymbol{M}_{T}%
\boldsymbol{\nu }_{i}\right) ,  \label{phiC}
\end{equation}%
where $\boldsymbol{\nu }_{i}=\boldsymbol{X}_{i}\boldsymbol{\eta }_{i}+%
\boldsymbol{u}_{i}$, and $\boldsymbol{\bar{M}}_{n}=n^{-1}\sum_{i=1}^{n}%
\boldsymbol{M}_{i}$. Using (\ref{BetaTE2}), we have%
\begin{equation*}
\boldsymbol{\hat{\beta}}_{C,TMG-TE}=\frac{1}{1+\bar{\delta}_{n}}\left[
n^{-1}\sum_{i=1}^{n}\boldsymbol{Q}_{i}^{\prime }\boldsymbol{M}_{T}(%
\boldsymbol{y}_{i}-\boldsymbol{\hat{\phi}}_{C})\right] \text{, for }T>k,
\end{equation*}%
where $\boldsymbol{Q}_{i}$ is defined by (\ref{Qi}). Also, since $%
\boldsymbol{y}_{i}=\alpha _{i}\boldsymbol{\tau }_{T}+\boldsymbol{\phi }+%
\boldsymbol{X}_{i}\boldsymbol{\beta }_{0}+\boldsymbol{\nu }_{i}$, then
noting that $n^{-1}\sum_{i=1}^{n}\left( 1+\bar{\delta}_{n}\right) ^{-1}%
\boldsymbol{Q}_{i}^{\prime }\boldsymbol{M}_{T}\boldsymbol{X}_{i}=\boldsymbol{%
I}_{k^{\prime }}$, we have 
\begin{equation*}
\boldsymbol{\hat{\beta}}_{C,TMG-TE}-\boldsymbol{\beta }_{0}=\frac{1}{1+\bar{%
\delta}_{n}}\left[ n^{-1}\sum_{i=1}^{n}\boldsymbol{Q}_{i}^{\prime }%
\boldsymbol{M}_{T}(\boldsymbol{\nu }_{i}+\boldsymbol{\phi }-\boldsymbol{\hat{%
\phi}}_{C})\right] .
\end{equation*}%
Also using (\ref{phiC}), 
\begin{equation*}
\frac{1}{n(1+\bar{\delta}_{n})}\sum_{i=1}^{n}\boldsymbol{Q}_{i}^{\prime }%
\boldsymbol{M}_{T}\left( \boldsymbol{\hat{\phi}}_{C}-\boldsymbol{\phi }%
\right) =\boldsymbol{\bar{Q}}_{n}^{\prime }\boldsymbol{M}_{T}\boldsymbol{%
\bar{M}}_{n}^{-1}\left( \frac{1}{n}\sum_{i=1}^{n}\boldsymbol{M}_{i}%
\boldsymbol{M}_{T}\boldsymbol{\nu }_{i}\right) ,
\end{equation*}%
where $\boldsymbol{\bar{Q}}_{n}$ is given by (\ref{Qn}). Hence 
\begin{align*}
\boldsymbol{\hat{\beta}}_{C,TMG-TE}-\boldsymbol{\beta }_{0}& =\frac{1}{1+%
\bar{\delta}_{n}}n^{-1}\sum_{i=1}^{n}\boldsymbol{Q}_{i}^{\prime }\boldsymbol{%
M}_{T}\boldsymbol{\nu }_{i}-\boldsymbol{\bar{Q}}_{n}^{\prime }\boldsymbol{M}%
_{T}\boldsymbol{\bar{M}}_{n}^{-1}\left( \frac{1}{n}\sum_{i=1}^{n}\boldsymbol{%
M}_{i}\boldsymbol{M}_{T}\boldsymbol{\nu }_{i}\right) \\
& =n^{-1}\sum_{i=1}^{n}\left[ (1+\bar{\delta}_{n})^{-1}\boldsymbol{Q}%
_{i}^{\prime }-\boldsymbol{\bar{Q}}_{n}^{\prime }\boldsymbol{M}_{T}%
\boldsymbol{\bar{M}}_{n}^{-1}\boldsymbol{M}_{i}\right] \boldsymbol{M}_{T}%
\boldsymbol{\nu }_{i},
\end{align*}%
or equivalently in terms of $\boldsymbol{\tilde{\nu}}_{i}=\boldsymbol{\nu }%
_{i}-\boldsymbol{\bar{\nu}}$, 
\begin{equation*}
\boldsymbol{\hat{\beta}}_{C,TMG-TE}-\boldsymbol{\beta }_{0}=n^{-1}%
\sum_{i=1}^{n}\left[ (1+\bar{\delta}_{n})^{-1}\boldsymbol{Q}_{i}^{\prime }-%
\boldsymbol{\bar{Q}}_{n}^{\prime }\boldsymbol{M}_{T}\boldsymbol{\bar{M}}%
_{n}^{-1}\boldsymbol{M}_{i}\right] \boldsymbol{M}_{T}\boldsymbol{\tilde{\nu}}%
_{i},
\end{equation*}%
since $\frac{1}{n}\sum_{i=1}^{n}\left[ (1+\bar{\delta}_{n})^{-1}\boldsymbol{Q%
}_{i}^{\prime }-\boldsymbol{\bar{Q}}_{n}^{\prime }\boldsymbol{M}_{T}%
\boldsymbol{\bar{M}}_{n}^{-1}\boldsymbol{M}_{i}\right] \boldsymbol{M}_{T}%
\boldsymbol{\bar{\nu}}=\boldsymbol{0}$ given $\frac{1}{n}\sum_{i=1}^{n}%
\boldsymbol{\bar{M}}_{n}^{-1}\boldsymbol{M}_{i}=\boldsymbol{I}_{T}$. Using
this result together with (\ref{TWFE2}), we now have for $T>k$, 
\begin{equation*}
\boldsymbol{\hat{\Delta}}_{\beta ,TE}=\frac{1}{n}\sum_{i=1}^{n}\boldsymbol{G}%
_{i,C}^{\prime }\boldsymbol{M}_{T}\boldsymbol{\tilde{\nu}}_{i},
\end{equation*}%
where 
\begin{equation}
\boldsymbol{G}_{i,C}=\left( \boldsymbol{X}_{i}-\boldsymbol{\bar{X}}\right) 
\boldsymbol{\bar{\Psi}}_{n,TE}^{-1}-\left[ (1+\bar{\delta}_{n})^{-1}%
\boldsymbol{Q}_{i}-\boldsymbol{M}_{i}\boldsymbol{\bar{M}}_{n}^{-1}%
\boldsymbol{M}_{T}\boldsymbol{\bar{Q}}_{n}\right].  \label{GiC}
\end{equation}%
Under the null hypothesis given by (\ref{null}), we have $E(\tilde{v}_{it}|%
\boldsymbol{G}_{i,C})=0$, for all $i$ and $t$, where $\tilde{\nu}_{it}=\nu _{it}-\bar{%
\nu}_{i\circ }$, and $\bar{\nu}_{i\circ } = \sum_{t=1}^{T}%
\sum_{t^{\prime }=1}^{T}\nu_{it}$. Also by Assumptions \ref%
{errors} and \ref{CRE}, $u_{it}$ and $\boldsymbol{\eta }_{i}$ are
cross-sectionally independent so that $\tilde{\nu}_{it}$ conditional on $%
\boldsymbol{X}_{i}$ are also cross-sectionally independent. Then, following
a similar line of reasoning as in the proof of Theorem \ref{AsyDTest}, for a
fixed $T>k$, as $n\rightarrow \infty $, $\sqrt{n}\boldsymbol{\hat{\Delta}}%
_{\beta ,TE}\rightarrow _{d}N(\boldsymbol{0},\boldsymbol{V}_{\Delta ,C})$,
with 
\begin{equation}
\boldsymbol{V}_{\Delta ,C}=\lim_{n\rightarrow \infty }\frac{1}{n}%
\sum_{i=1}^{n}E\left( \boldsymbol{G}_{i,C}^{\prime }\boldsymbol{M}_{T}%
\boldsymbol{\tilde{\nu}}_{i}\boldsymbol{\tilde{\nu}}_{i}^{\prime }%
\boldsymbol{M}_{T}\boldsymbol{G}_{i,C}\right) \succ \boldsymbol{0},
\label{Vdc}
\end{equation}%
where positive definiteness holds under conditions analogous to those in
Theorem \ref{AsyDTest}.

Thus, when $T>k$, the Hausman test statistic for panels with time effects is
given by 
\begin{equation}
H_{\beta ,TE}=n\boldsymbol{\hat{\Delta}}_{\beta ,TE}^{\prime }\boldsymbol{V}%
_{\Delta ,C}^{-1}\boldsymbol{\hat{\Delta}}_{\beta ,TE},  \label{hte2}
\end{equation}%
and as $n\rightarrow \infty $, $H_{\beta ,TE}\rightarrow _{d}\chi
_{k^{\prime }}^{2}$. A consistent estimator of $\boldsymbol{V}_{\Delta ,C}$
for a fixed $T$ is given by 
\begin{equation}
\boldsymbol{\widehat{V}}_{\Delta ,C}=\frac{1}{n}\sum_{i=1}^{n}\left( 
\boldsymbol{G}_{i,C}^{\prime }\boldsymbol{M}_{T}\boldsymbol{\hat{\tilde{\nu}}%
}_{i,FE}\boldsymbol{\hat{\tilde{\nu}}}_{i,FE}^{\prime }\boldsymbol{M}_{T}%
\boldsymbol{G}_{i,C}\right) ,  \label{varhc}
\end{equation}%
where $\boldsymbol{G}_{i,C}$ and $\boldsymbol{\hat{\tilde{\nu}}}_{i,FE}$ are
given by (\ref{GiC}) and (\ref{siTWFE}), respectively. Then the test
statistics given by (\ref{hte2}) for panel regressions with time effects and 
$T>k$ can be consistently estimated by 
\begin{equation}
\hat{H}_{\beta ,TE}=n\left( \boldsymbol{\hat{\beta}}_{TWFE}-\boldsymbol{\hat{%
\beta}}_{C,TMG-TE}\right) ^{\prime }\boldsymbol{\widehat{V}}_{\Delta
,C}^{-1}\left( \boldsymbol{\hat{\beta}}_{TWFE}-\boldsymbol{\hat{\beta}}%
_{C,TMG-TE}\right) .  \label{htetest2}
\end{equation}

\subsection{The mean group estimator with time effects}

The mean group estimator of $\boldsymbol{\beta }_{0}=E\left( \boldsymbol{%
\beta }_{i}\right) $ in panel data models with time effects, (\ref{panTE1}),
can also be used when $T$ is sufficiently large such that the underlying
individual estimates have second-order moments. However, as discussed in
Pesaran and Yang (2026), the standard results established in the literature
for MG estimators do not apply to the mean group estimator of $\boldsymbol{%
\beta }_{0}$ in the case of panels with time effects (MG-TE). In this more
general setting, the MG-TE estimator is given by 
\begin{equation}
\boldsymbol{\hat{\beta}}_{MG-TE}=n^{-1}\sum_{i=1}^{n}\boldsymbol{\hat{\beta}}%
_{i,TE},  \label{MGTE}
\end{equation}%
where 
\begin{equation}
\boldsymbol{\hat{\beta}}_{i,TE}=(\boldsymbol{X}_{i}^{\prime }\boldsymbol{M}%
_{T}\boldsymbol{X}_{i})^{-1}\boldsymbol{X}_{i}^{\prime }\boldsymbol{M}_{T}(%
\boldsymbol{y}_{i}-\boldsymbol{\hat{\phi}}_{C}),  \label{biTE}
\end{equation}%
$\boldsymbol{\hat{\phi}}_{C}=\boldsymbol{\bar{M}}_{n}^{-1}\left( \frac{1}{n}%
\sum_{i=1}^{n}\boldsymbol{M}_{i}\boldsymbol{M}_{T}\boldsymbol{y}_{i}\right) $%
, $\boldsymbol{\bar{M}}_{n}=\frac{1}{n}\sum_{i=1}^{n}\boldsymbol{M}_{i}$, $%
\boldsymbol{M}_{i}=\boldsymbol{I}_{T}-\boldsymbol{M}_{T}\boldsymbol{X}_{i}(%
\boldsymbol{X}_{i}^{\prime }\boldsymbol{M}_{T}\boldsymbol{X}_{i})^{-1}%
\boldsymbol{X}_{i}^{\prime }\boldsymbol{M}_{T}$, $\boldsymbol{I}_{T}$ is a $%
T\times T$ identity matrix, and $\boldsymbol{M}_{T}=\boldsymbol{I}_{T}-%
\boldsymbol{\tau }_{T}\boldsymbol{\tau }_{T}^{\prime }/T$. In this case the
derivation and estimation of the asymptotic variance of $\boldsymbol{\hat{%
\beta}}_{MG-TE}$ is complicated due to the sampling uncertainty associated
with the estimated time effects, $\boldsymbol{\hat{\phi}}_{C}$. Pesaran and
Yang (2026) propose the following estimator of $Var(\sqrt{n}\boldsymbol{\hat{%
\beta}}_{MG-TE})$ 
\begin{align}
\widehat{Var(\sqrt{n}\boldsymbol{\hat{\beta}}}_{MG-TE})=& \frac{1}{\left(
n-1\right) }\sum_{i=1}^{n}\left( \boldsymbol{\hat{\beta}}_{i,TE}-\boldsymbol{%
\hat{\beta}}_{MG-TE}\right) \left( \boldsymbol{\hat{\beta}}_{i,TE}-%
\boldsymbol{\hat{\beta}}_{MG-TE}\right) ^{\prime }  \notag \\
& +\boldsymbol{\bar{R}}_{n}^{\prime }\boldsymbol{\hat{A}}_{n}\boldsymbol{%
\bar{R}}_{n}-\left( \boldsymbol{\hat{B}}_{n}^{\prime }\boldsymbol{\bar{R}}%
_{n}+\boldsymbol{\bar{R}}_{n}^{\prime }\boldsymbol{\hat{B}}_{n}\right) ,
\label{Avarhat}
\end{align}%
where 
\begin{eqnarray*}
\boldsymbol{\bar{R}}_{n}^{\prime } &=&n^{-1}\sum_{i=1}^{n}\boldsymbol{R}%
_{i}^{\prime },\text{ \ }\boldsymbol{R}_{i}^{\prime }=(\boldsymbol{X}%
_{i}^{\prime }\boldsymbol{M}_{T}\boldsymbol{X}_{i})^{-1}\boldsymbol{X}%
_{i}^{\prime }\boldsymbol{M}_{T}, \\
\boldsymbol{\hat{A}}_{n} &=&\frac{1}{n}\sum_{i=1}^{n}\boldsymbol{S}%
_{i}^{\prime }\boldsymbol{\hat{v}_{i}}\boldsymbol{\hat{v}_{i}}^{\prime }%
\boldsymbol{S}_{i},\text{ }\boldsymbol{\hat{B}}_{n}=\frac{1}{n}\sum_{i=1}^{n}%
\boldsymbol{S}_{i}^{\prime }\boldsymbol{\hat{v}_{i}}(\boldsymbol{\hat{\beta}}%
_{i,TE}-\boldsymbol{\hat{\beta}}_{MG-TE})^{\prime },
\end{eqnarray*}%
$\boldsymbol{S}_{i}^{\prime }=\boldsymbol{\bar{M}}_{n}^{-1}\boldsymbol{M}_{i}%
\boldsymbol{M}_{T}$, and $\boldsymbol{\hat{v}}_{i}=\boldsymbol{y}_{i}-%
\boldsymbol{\hat{\phi}}_{C}$. \ It is further established that 
\begin{equation*}
E\left[ \widehat{Var(\sqrt{n}\boldsymbol{\hat{\beta}}}_{MG-TE})\right] -Var(%
\sqrt{n}\boldsymbol{\hat{\beta}}_{MG-TE})=O\left( n^{-1}\right) .
\end{equation*}%
Namely, the proposed estimator is asymptotically unbiased.

Following a similar line, the asymptotic variance of the TMG-TE estimator, $%
\boldsymbol{\hat{\beta}}_{C,TMG-TE}$, given by (\ref{BetaTE2}), can be
estimated by 
\begin{align}
\widehat{Var\left( \sqrt{n}\boldsymbol{\hat{\beta}}_{C,TMG-TE}\right) }=& 
\frac{1}{(n-1)(1+\bar{\delta}_{n})^{2}}\sum_{i=1}^{n}\left( \boldsymbol{%
\tilde{\beta}}_{i,C}-\boldsymbol{\hat{\beta}}_{C,TMG-TE}\right) \left( 
\boldsymbol{\tilde{\beta}}_{i,C}-\boldsymbol{\hat{\beta}}_{C,TMG-TE}\right)
^{\prime }  \notag \\
& +\boldsymbol{\bar{Q}}_{n}^{\prime }\boldsymbol{\hat{A}}_{n}\boldsymbol{%
\bar{Q}}_{n}-\left( \boldsymbol{\hat{B}}_{n}^{\prime }\boldsymbol{\bar{Q}}%
_{n}+\boldsymbol{\bar{Q}}_{n}^{\prime }\boldsymbol{\hat{B}}_{n}\right) ,
\label{VarbetaC2}
\end{align}%
where $\boldsymbol{\tilde{\beta}}_{i,C}=\boldsymbol{Q}_{i}^{\prime }(%
\boldsymbol{y}_{i}-\boldsymbol{\hat{\phi}}_{C})$, $\boldsymbol{\hat{B}}%
_{n}=n^{-1}(1+\bar{\delta}_{n})^{-1}\sum_{i=1}^{n}\boldsymbol{S}_{i}^{\prime
}\boldsymbol{\hat{v}_{i}}(\boldsymbol{\tilde{\beta}}_{i,C}-\boldsymbol{\hat{%
\beta}}_{C,TMG-TE})^{\prime }$, $\boldsymbol{Q}_{i}^{\prime }=(1+\delta _{i})%
\boldsymbol{R}_{i}^{\prime }$, $\boldsymbol{\bar{Q}}_{n}^{\prime }=n^{-1}(1+%
\bar{\delta}_{n})^{-1}\sum_{i=1}^{n}\boldsymbol{Q}_{i}^{\prime }$, $\bar{%
\delta}_{n}=n^{-1}\sum_{i=1}^{n}\delta _{i}$, $\delta _{i}=\left( \frac{%
d_{i}-a_{n}}{a_{n}}\right) \boldsymbol{1}\{d_{i}\leq a_{n}\}\leq 0$, and $%
a_{n}=Cn^{-\alpha }$.

We use MG and MG-TE estimators when $T \geq 2k+1$, otherwise we use their
trimmed counterparts. This applies to the estimates as well as to the
estimators of their variances. This simple rule of thumb is backed by the MC
experiments reported in Section \ref{MCalphap} of the main paper, and our
conjecture that when $T\geq 2k+1$, the second-order moments of $\hat{\beta}%
_{i,TE}$ exist and trimming will be unnecessary.

\newpage\clearpage

\section{Monte Carlo supplement}

\label{mcsup}

\subsection{Design of Monte Carlo experiments}

\label{secMCDGP}

The DGP for $y_{it}$ and $x_{it}$ and the baseline experiments are described
in Section \ref{DGP} of the main paper. Section \ref{dgprobust} describes
the experimental design used for robustness analysis. Section \ref{Para}
describes different DGPs considered in the MC experiments, with their key
parameters summarized in Table \ref{tab:paramcs}. Section \ref{Simuk}
describes how the value of $\kappa _{T}$ in the $y_{it}$ process has been
calibrated by stochastic simulations to achieve a given level of overall
fit, $PR^{2}$ for a given value of $T$, when $k^{\prime}=1$.

\subsubsection{Data generating processes for robustness checks \label%
{dgprobust}}

To check the robustness of the TMG estimator, the following variations in
the DGP of the errors and regressors are considered. When the errors in $%
y_{it}$ are serially correlated, we generate $\rho _{ie}\sim IIDU(0,0.95)$
and $e_{i0}\sim IIDN(0,1)$ for all $i$. When the regressors $\left\{
x_{j,it}\right\} $, for $j=1,2,...,k^{\prime }$ are autocorrelated, the
associated coefficients are generated as $\rho _{j,ix}\sim IIDU(0,0.95)$.
When there is an interactive effect in $\left\{ x_{j,it}\right\} $, we
generate the factor loadings as $\gamma _{j,ix}\sim IIDU(0,2)$, and the
factors as $f_{j,t}=0.9f_{j,t-1}+(1-0.9^{2})^{1/2}v_{j,t}$, for $%
t=-49,-48,...,-1,0,1,...,T$, where $v_{j,t}\sim IIDN(0,1)$, with $f_{j,-50}=0$.

To examine the relative efficiency of TMG and FE estimators, we set $\psi
_{\beta _{1}}=0$ (uncorrelated heterogeneity) in (\ref{eta_i}) in the main
paper but allow error heteroskedasticity to be correlated with the processes
generating $x_{it}$. We consider the following two scenarios: (a)
cross-sectional heteroskedasticity: 
\begin{equation*}
\sigma _{it}^{2}=\lambda _{i}^{2}=\frac{\boldsymbol{e}_{1,ix}^{\prime }%
\boldsymbol{M}_{T}\boldsymbol{e}_{1,ix}-E\left( \boldsymbol{e}%
_{1,ix}^{\prime }\boldsymbol{M}_{T}\boldsymbol{e}_{1,ix}\right) }{\sqrt{%
Var\left( \boldsymbol{e}_{1,ix}^{\prime }\boldsymbol{M}_{T}\boldsymbol{e}%
_{1,ix}\right) }},\text{ for all }i\text{ and }t,
\end{equation*}
where $\boldsymbol{e}_{1,ix}=(e_{x1,i1},e_{x1,i2},...,e_{x1,iT})^{\prime
}\sim IID(\boldsymbol{0},\boldsymbol{I}_{T})$, and it follows that $\lambda
_{i}$ is $IID(0,1)$. And (b) cross-sectional and time series
heteroskedasticity, $\sigma _{it}^{2}=e_{x1,it}^{2}$, for all $i$ and $t$,
where $e_{x1,it}$ is the innovation to the $x_{1,it}$ process. In both
cases, we have $E(\sigma _{it}^{2})=1$, which matches the case of randomly
generated heteroskedasticity.

\subsubsection{Calibration of the parameters with one regressor\label{Para}}

\begin{enumerate}
\item Generation of $y_{it}$ and $x_{j,it}$ , for $i=1,2,...,n$, $%
t=1,2,...,T $, and $j=1,2,...,k^{\prime}$

\begin{enumerate}
\item $x_{j,it}$ are generated with $\rho _{j,ix}=0$ for all $i$ (in the
static case) and as heterogeneous AR(1) processes with $\rho _{j,ix}\sim
IIDU(0,0.95)$ for the dynamic case. See (\ref{xdgp}) in the main paper. The
errors $e_{xj,it}$ of the $x_{j,it}$ equation are generated with $%
E(e_{xj,it})=0$ and $E(e_{xj,it}^{2})=1$ according to the following two
distributions:

\begin{enumerate}
\item Gaussian with $e_{xj,it}\sim IIDN(0,1)$,

\item Uniform distribution with $e_{xj,it}=\sqrt{12}(\mathfrak{z}%
_{j,it}-1/2) $ and $\mathfrak{z}_{j,it}$ $\sim IIDU(0,1)$.
\end{enumerate}

\item $\alpha _{j,ix}\sim IIDN(1,1)$, and $\sigma _{j,ix}^{2}\sim IID\frac{1%
}{2}(z_{j,ix}^{2}+1)$, with $z_{j,ix}\sim IIDN(0,1)$.

\item The errors in the $y_{it}$ equation are composed of three components, $%
\kappa \sigma _{it}e_{it}$. See (\ref{ydgp}) in the main paper. $e_{it}$ are
generated as heterogeneous AR(1) processes given by (\ref{yerror}) in the
main paper, with $\rho _{ie}=0$ for all $i$ (serially uncorrelated case) and 
$\rho _{ie}\sim IIDU(0,0.95)$ (serially correlated case). The innovations to
the $y_{it}$, $\varsigma _{it}$, are generated as $\varsigma _{it}\sim
IIDN(0,1)$, or $IID\frac{1}{2}(\chi _{2}^{2}-2)$. $\sigma _{it}^{2}$ are
generated based on different cases described in Section \ref{DGP} in the
main paper and Section \ref{dgprobust} with $E(\sigma _{it}^{2})=1$. The
scalar, $\kappa $, is calibrated for each $T$ to achieve a given level of
fit, $PR^{2}\in \{0.2,0.4\}$, See sub-section \ref{Simuk} below.
\end{enumerate}

\item Generation of heterogeneous coefficients, $(\alpha _{i}, \beta _{i1},
\beta_{i2}, ...,\beta_{ik^{\prime}})^{\prime }$ for $i=1,2,...,n$.

\begin{enumerate}
\item $(\alpha _{i},\beta _{i1})^{\prime }$ are generated using (\ref%
{coefdgp}) in sub-section \ref{DGP} of the main paper, with $\alpha
_{0}=E(\alpha _{i})=1$ and $\beta _{01}=E(\beta _{i1})=1$.

\item $\sigma _{\alpha }^{2}=0.5$ and $\sigma _{\beta_{1}}^{2} = 0.75$.

\item $\psi_{\alpha} = 1$ and $\psi_{\beta_{1}} \in \{0, 0.5, 0.8\}$.

\item $\epsilon _{i\alpha }\sim IID(0,\sigma _{\epsilon \alpha }^{2})$ and $%
\epsilon _{i\beta_{1}}\sim IID(0,\sigma _{\epsilon \beta_{1} }^{2})$, where $%
\sigma _{\epsilon \alpha }^{2}=\sigma _{\alpha }^{2} - \psi_{\alpha}^{2} =
0.25$ and $\sigma _{\epsilon \beta }^{2}=\sigma _{\beta_{1}}^{2} -
\psi_{\beta_{1}}^{2} \in \{0.75, 0.5, 0.11\}$.

\item $\beta _{ij}=\beta _{0j}+\epsilon _{i\beta _{j}}$ with $\epsilon
_{i\beta _{j}}\sim IIDN (0,\sigma _{\epsilon \beta _{j}}^{2})$, for $%
j=2,3,...,k^{\prime}$.
\end{enumerate}
\end{enumerate}

\begin{table}[h!]
\caption{Summary of key parameters in the Monte Carlo experiments with one
regressor}
\label{tab:paramcs}\vspace{-6mm}
\par
\begin{center}
\scalebox{0.9}{
\renewcommand{\arraystretch}{1.05}
\begin{tabular}{lcr@{}lr@{}lr@{}lr@{}lr@{}l}
\hline\hline
Case &  & \multicolumn{2}{l}{(1)} & \multicolumn{2}{l}{(2)} & \multicolumn{2}{l}{(3)} & \multicolumn{2}{l}{(4)} & \multicolumn{2}{l}{(5)} \\ \hline
$E(\alpha_{i})$ &  & 1 &  & 1 &  & 1 &  & 1 & & 1 &  \\
$E(\beta_{i1})$ &  & 1 &  & 1 &  & 1 &  & 1 & & 1 &  \\
$E(\sigma_{it}^2)$ &  & 1 &  & 1 &  & 1 &  & 1 & & 1 &  \\
$E(\sigma_{1,ix}^2)$ &  & 1 &  & 1 &  & 1 &  & 1 &  & 1 &  \\
$PR^2$ &  & 0 & .2 & 0 & .2 & 0 & .2 & 0 & .4 & 0 & .2 \\
$\sigma_{\alpha}^2 $ &  & 0 & .5 & 0 & .5 & 0 & .5 & 0 & .5 & 0 & .5 \\
$\sigma_{\beta_{1}}^2 $ &  & 0 & .75 & 0 & .75 & 0 & .75 & 0 & .75 & 0 &  \\
$\psi_{ \alpha}$ &  & 0 & .5 & 0 & .5 & 0 & .5 & 0 & .5 & 0 & .5 \\
$\psi_{ \beta_{1}}$ &  & 0 &  & 0 & .5 & 0 & .8 & 0 & .5 & 0 & \\
$\sigma_{\epsilon\alpha}^2$ &  & 0 & .25 & 0 & .25 & 0 & .25 & 0 & .25 & 0 & .25 \\
$\sigma_{\epsilon\beta_{1}}^2 $ &  & 0 & .75 & 0 & .5 & 0 & .11 & 0 & .5 & 0 &  \\
$Corr(\alpha_{i}, \beta_{i1})$ &  & 0 &  & 0 & .25 & 0 & .4 & 0 & .25 & 0 & 
 \\\hline\hline
\end{tabular}}
\end{center}
\par
\vspace{-2mm} 
\begin{spacing}{1}
{\footnotesize 
Notes: The values of key parameters under columns (1), (2) and (3) are
set according to the description in Section \ref{DGP} with
zero, medium and high degrees of correlated heterogeneity ($\psi_{\beta_{1} }$ defined in (\ref{eta_i}) in the main paper) with $PR^{2} = 0.2$ and one regressor. Column (4) corresponds to the case of medium correlated heterogeneity with $PR^{2} = 0.4$ and one regressor. Column (5) corresponds to the case of homogeneous slope coefficients. 
For further details see Section \ref{DGP} of the main paper.}
\end{spacing}
\end{table}

\subsubsection{Calibration of $\protect\kappa^{2}$ by stochastic simulation
with one regressor \label{Simuk}}

The scaling parameter $\kappa $ in (\ref{ydgp}) in the main paper is set to
achieve a given level of fit as measured by the pooled $R^{2}$ ($PR^{2}$)
given by 
\begin{eqnarray}
PR^{2} &=&\lim_{n\rightarrow \infty }PR_{n}^{2}=1-\frac{\lim_{n\rightarrow
\infty }n^{-1}T^{-1}\sum_{i=1}^{n}\sum_{t=1}^{T}Var(u_{it})}{%
\lim_{n\rightarrow \infty
}n^{-1}T^{-1}\sum_{i=1}^{n}\sum_{t=1}^{T}Var(y_{it}-\alpha _{i}-\phi _{t})} 
\notag \\
&=&1-\frac{\kappa ^{2}}{\lim_{n\rightarrow \infty
}n^{-1}T^{-1}\sum_{i=1}^{n}\sum_{t=1}^{T}Var(\beta _{i1}x_{1,it})+\kappa ^{2}%
}.  \label{PR2}
\end{eqnarray}%
Since $T$ is fixed, the value of $\kappa $ in general depends on $T$, and we
have (noting that $Var(\beta _{i1}x_{1,it})=E(\beta _{1i}^{2}x_{1,it}^{2})-%
\left[ E(\beta _{i1}x_{1,it})\right] ^{2}$) 
\begin{equation}
\kappa _{T}^{2}=\left( \frac{1-PR^{2}}{PR^{2}}\right) \lim_{n\rightarrow
\infty }\frac{1}{nT}\sum_{i=1}^{n}\sum_{t=1}^{T}\left\{ E(\beta
_{i1}^{2}x_{1,it}^{2})-\left[ E(\beta _{i1}x_{1,it})\right] ^{2}\right\} .
\label{s-kappa2T}
\end{equation}%
Due to the non-linear dependence of $\beta _{i1}$ on $x_{1,it}$ (through $%
\sigma _{1,ix}^{2}$ ) we use stochastic simulations to compute $E(\beta
_{i1}^{2}x_{1,it}^{2})$ and $E(\beta _{i1}x_{1,it})$, which can be carried
out in a straightforward manner since the values of $x_{1,it}$ and $\beta
_{i1}$ do not depend on $\kappa $ and can be jointly simulated using the
equations (\ref{xdgp}) and (\ref{coefdgp}) in the main paper.

The total number of simulations is $R_{\kappa }=1,000$ with $n=5,000$ and $%
T=2,3,4,5,6,8$. For each replication $r=1,2,...,R_{\kappa}$, we generate a
new sample of $\{\beta _{i1}^{(r)}\}$ and $\{x_{1,it}^{(r)}\}$ given the DGP set
up in our paper. The random variables that are drawn independently across
replications are denoted with a superscript $(r)$. The random variables that
are drawn once and used for all replications are denoted without a
superscript $(r)$.

\begin{enumerate}
\item Generate $x_{1,it}^{(r)}$:

\begin{enumerate}
\item First generate $e_{x1,it}^{(r)}$ as $IID(0,1)$ according to the two
distributions specified in Section \ref{Para}, namely Gaussian or uniform
distributions, and generate $(\sigma _{1,ix}^{2})^{(r)}$ as $IID \frac{1}{2}%
\left[\left(z_{1,ix}^{(r)}\right)^{2}+1\right]$, where $z_{1,ix}^{(r)} $ are
generated as $IIDN(0,1)$.

\item Generate $\rho_{1,ix}^{(r)} = 0\, \forall i$ for static $x_{1,it}$, or 
$\rho_{1,ix}^{(r)} \sim IIDU(0,0.95)$ for dynamic $x_{1,it}$. Then generate $%
\alpha_{1,ix}^{(r)}$ as $IIDN(1,1)$, and $x_{1,it}^{(r)}$ iteratively for the dynamic case 
for $t=-49,-48,... ,-1,0,1,...,T$ 
\begin{equation*}
x_{1,it}^{(r)}=\alpha_{1,ix}^{(r)}\left(1-\rho_{1,ix}^{(r)}\right) +
\gamma_{1,ix}^{(r)} f_{1,t} +\rho_{1,ix}^{(r)}x_{1,i,t-1}^{(r)}+\left[%
1-\left(\rho_{1,ix}^{(r)}\right)^{2}\right]^{1/2}\sigma
_{1,ix}^{(r)}e_{x1,it}^{(r)},
\end{equation*}%
without or with interactive effects, $\gamma_{1,ix}^{(r)} \sim IIDU(0,2)$, $%
f_{1,t}=0.9 f_{1,t-1} + (1-0.9^{2})^{1/2} v_{1,t}$, and $v_{1,t} \sim
IIDN(0,1)$, where $x_{1,i,-50} =0$ and $f_{1,-50}=0$. The first $50$
observations are dropped, and $\{x_{1,i1},x_{1,i2},...,x_{1,iT}\}$ are used
in the simulations.
\end{enumerate}

\item Generate $\beta_{i1}^{(r)}$

\begin{enumerate}
\item Generate $\epsilon _{i\beta_{1} }^{(r)}$ as $IIDN(0, \sigma _{\epsilon
\beta_{1} }^{2})$ where $\sigma _{\epsilon \beta_{1} }^{2} =
\sigma_{\beta_{1}}^{2} - \psi_{\beta_{1}}^{2}$.

\item Given $\psi _{\beta_{1}}$, $\sigma _{1,ix}^{(r)}$ and $\epsilon
_{i\beta_{1} }^{(r)}$, generate $\eta _{i \beta_{1}}^{(r)}$ given by (\ref%
{eta_i}). Then $\beta_{i1}^{(r)}=\beta_{01}+\eta _{i \beta_{1}}^{(r)}$.
\end{enumerate}

\item Given $\beta _{i1}^{(r)}$ and $x_{1,it}^{(r)}$, we then simulate 
\begin{equation*}
A_{RT}=R^{-1}T^{-1}n^{-1}\sum_{r=1}^{R}\sum_{i=1}^{n}\sum_{t=1}^{T}\left(
\beta _{i1}^{(r)}\right) ^{2}\left( x_{1,it}^{(r)}\right) ^{2},
\end{equation*}%
\begin{equation*}
B_{RT}=R^{-1}T^{-1}n^{-1}\sum_{r=1}^{R}\sum_{i=1}^{n}\sum_{t=1}^{T}\beta
_{i1}^{(r)}x_{1,it}^{(r)},
\end{equation*}%
and 
\begin{equation*}
\widehat{Var(\beta _{i1}x_{1,it})}_{RT}=A_{RT}-B_{RT}^{2}.
\end{equation*}%
Then for given values of $PR^{2}=0.2\text{ or }0.4$, compute $\kappa
_{T}^{2} $ as 
\begin{equation*}
\kappa _{T}^{2}=\left( \frac{1-PR^{2}}{PR^{2}}\right) \widehat{Var(\beta
_{i1}x_{1,it})}_{RT}.
\end{equation*}
\end{enumerate}

The simulated values of $\kappa _{T}^{2}$ for different DGPs are reported in
Table \ref{tab:simuk2}. 
\begin{table}[!t]
\caption{Simulated values of $\protect\kappa_{T}^{2}$ for $T=2,3,4,5,6,8$
with one regressor}
\label{tab:simuk2}
\begin{center}
\vspace{-6mm} 
\scalebox{0.9}{
\begin{tabular}{rrrrrrrrrrrrr}
\hline\hline
 &  & \multicolumn{5}{c}{Gaussian} &  & \multicolumn{5}{c}{Uniform} \\ \cline{3-7} \cline{9-13}
 &  & (1) & (2) & (3) & (4) & (5) &  & (1) & (2) & (3) & (4) & (5) \\\hline
 &  & \multicolumn{11}{l}{There is no autoregressions or interactive effects in the $x_{1,it}$ equation.} \\\hline
$T=2$ &  & 14.77 & 18.86 & 25.48 & 7.07 & 8.01 &  & 14.77 & 18.87 & 25.54 & 7.08 & 8.01 \\
$T=3$ &  & 14.75 & 18.89 & 25.61 & 7.08 & 8.00 &  & 14.75 & 18.84 & 25.49 & 7.07 & 8.00 \\
$T=4$ &  & 14.75 & 18.84 & 25.51 & 7.07 & 8.00 &  & 14.76 & 18.86 & 25.51 & 7.07 & 8.00 \\
$T=5$ &  & 14.75 & 18.83 & 25.50 & 7.06 & 8.01 &  & 14.75 & 18.83 & 25.48 & 7.06 & 8.01 \\
$T=6$ &  & 14.75 & 18.85 & 25.52 & 7.07 & 8.01 &  & 14.75 & 18.87 & 25.56 & 7.08 & 8.01 \\
$T=8$ &  & 14.76 & 18.82 & 25.46 & 7.06 & 8.00 &  & 14.74 & 18.84 & 25.50 & 7.06 & 8.01 \\
[1mm]
 &  & \multicolumn{11}{l}{$x_{1,it}$ are generated with interactive effects.} \\ \hline
$T=2$ &  & 19.51 & 23.18 & 30.92 & 8.69 & 8.85 &  & 19.49 & 23.14 & 30.84 & 8.68 & 8.84 \\
$T=3$ &  & 20.14 & 23.75 & 31.67 & 8.91 & 8.90 &  & 20.15 & 23.77 & 31.64 & 8.91 & 8.90 \\
$T=4$ &  & 19.24 & 22.94 & 30.62 & 8.60 & 8.85 &  & 19.24 & 22.93 & 30.58 & 8.60 & 8.85 \\
$T=5$ &  & 19.29 & 22.95 & 30.60 & 8.60 & 8.78 &  & 19.27 & 22.98 & 30.68 & 8.62 & 8.79 \\
$T=6$ &  & 18.69 & 22.45 & 29.94 & 8.42 & 8.84 &  & 18.69 & 22.44 & 29.95 & 8.42 & 8.84 \\
$T=8$ &  & 18.35 & 22.19 & 29.35 & 8.32 & 9.62 &  & 18.35 & 22.22 & 29.42 & 8.33 & 9.62 \\
[1mm]
 &  & \multicolumn{11}{l}{$x_{1,it}$ are generated as heterogeneous AR(1) processes.} \\ \hline
$T=2$ &  & 14.73 & 18.80 & 25.43 & 7.05 & 7.99 &  & 14.76 & 18.81 & 25.43 & 7.05 & 7.98 \\
$T=3$ &  & 14.74 & 18.86 & 25.58 & 7.07 & 7.99 &  & 14.73 & 18.82 & 25.49 & 7.06 & 7.99 \\
$T=4$ &  & 14.72 & 18.83 & 25.51 & 7.06 & 8.00 &  & 14.74 & 18.81 & 25.42 & 7.05 & 7.99 \\
$T=5$ &  & 14.73 & 18.81 & 25.46 & 7.06 & 7.99 &  & 14.72 & 18.80 & 25.43 & 7.05 & 7.99 \\
$T=6$ &  & 14.76 & 18.83 & 25.46 & 7.06 & 7.99 &  & 14.74 & 18.82 & 25.47 & 7.06 & 7.99 \\
$T=8$ &  & 14.73 & 18.88 & 25.60 & 7.08 & 8.00 &  & 14.71 & 18.87 & 25.61 & 7.08 & 8.00 \\
[1mm]
 &  & \multicolumn{11}{l}{$x_{1,it}$ are generated as heterogeneous AR(1) processes with interactive effects.} \\ \hline
$T=2$ &  & 25.58 & 28.79 & 37.58 & 10.80 & 10.86 &  & 25.57 & 28.81 & 37.69 & 10.81 & 10.86 \\
$T=3$ &  & 26.79 & 29.91 & 39.01 & 11.21 & 11.08 &  & 26.82 & 29.87 & 38.88 & 11.20 & 11.08 \\
$T=4$ &  & 26.23 & 29.38 & 38.36 & 11.02 & 10.97 &  & 26.25 & 29.36 & 38.28 & 11.01 & 10.97 \\
$T=5$ &  & 26.62 & 29.77 & 38.85 & 11.16 & 11.03 &  & 26.64 & 29.72 & 38.77 & 11.15 & 11.03 \\
$T=6$ &  & 26.03 & 29.14 & 37.97 & 10.93 & 11.01 &  & 26.01 & 29.19 & 38.05 & 10.95 & 11.00 \\
$T=8$ &  & 24.62 & 28.07 & 36.37 & 10.53 & 11.84 &  & 24.62 & 28.07 & 36.37 & 10.53 & 11.84
\\ \hline\hline
\end{tabular}}
\end{center}
\par
\vspace{-2mm} 
\begin{spacing}{1}
{\footnotesize 
Notes: The values of $\kappa_T^2$ are computed based on the stochastic simulations described in Section \ref{Simuk}, using 1,000 replications for $k^{\prime}= 1$. The key parameter values for the different cases are summarized in Table \ref{tab:paramcs}; columns (1)--(5) correspond to columns (1)--(5) of Table \ref{tab:paramcs}. See also footnotes to Table \ref{tab:paramcs}.}
\end{spacing}
\end{table}

\clearpage

\subsection{Monte Carlo results\textbf{\ for the} tail index of the
distribution of $1/d_{i}$}

\label{MCap}

Tables \ref{tab:ap_mc_2} and \ref{tab:ap_mc_3} summarize the estimates of $%
\alpha _{P}$, the tail index of the distribution of $1/d_{i}$ for different
cut-off values using Hill's estimation procedure with one, two, and three
regressors, respectively. See also sub-section \ref{MCalphap} of the main
paper.

\begin{table}[h]
\caption{The estimates of $\protect\alpha _{p}$, the tail index of the
distribution of $1/d_{i}$, where $d_{i}=\func{det}(\boldsymbol{X}%
_{i}^{\prime }\boldsymbol{M}_{T}\boldsymbol{X}_{i})$, by Hill's method for
two choices of cut-off values}
\label{tab:ap_mc_2}\vspace{-6mm}
\par
\begin{center}
\scalebox{0.7}{
\begin{tabular}{ccccccccccccccccc}
\hline \hline
\multicolumn{5}{c}{One regressor $(k^{\prime}=1)$} &  & \multicolumn{5}{c}{Two regressors $(k^{\prime}=2)$} &  & \multicolumn{5}{c}{Three regressors $(k^{\prime}=3)$} \\ \cline{1-5} \cline{7-11} \cline{13-17}
Cut-off & \multicolumn{2}{c}{$n^{1/2}$} & \multicolumn{2}{c}{$n^{1/3}$} &  &  & \multicolumn{2}{c}{$n^{1/2}$} & \multicolumn{2}{c}{$n^{1/3}$} &  &  & \multicolumn{2}{c}{$n^{1/2}$} & \multicolumn{2}{c}{$n^{1/3}$} \\ \cline{1-5} \cline{8-11} \cline{14-17}
$T$ & $\hat{\alpha}_{p}$ &  & $\hat{\alpha}_{p}$ &  &  & $T$ & $\hat{\alpha}_{p}$ &  & $\hat{\alpha}_{p}$ &  &  & $T$ & $\hat{\alpha}_{p}$ &  & $\hat{\alpha}_{p}$ &  \\ \hline 
 & \multicolumn{16}{c}{$n=1,000$} \\ \hline
2 & 0.53 & (0.09) & 0.62 & (0.19) &  & 3 & 0.53 & (0.09) & 0.61 & (0.18) &  & 4 & 0.53 & (0.09) & 0.61 & (0.18) \\
3 & 1.05 & (0.18) & 1.21 & (0.36) &  & 4 & 0.99 & (0.17) & 1.16 & (0.35) &  & 5 & 0.96 & (0.17) & 1.14 & (0.34) \\
4 & 1.54 & (0.27) & 1.82 & (0.55) &  & 5 & 1.37 & (0.24) & 1.64 & (0.50) &  & 6 & 1.28 & (0.22) & 1.55 & (0.47) \\
5 & 1.98 & (0.34) & 2.36 & (0.71) &  & 6 & 1.68 & (0.29) & 2.06 & (0.62) &  & 7 & 1.53 & (0.27) & 1.90 & (0.57) \\
6 & 2.36 & (0.41) & 2.83 & (0.85) &  & 7 & 1.97 & (0.34) & 2.43 & (0.73) &  & 8 & 1.75 & (0.30) & 2.18 & (0.66) \\
8 & 3.05 & (0.53) & 3.68 & (1.11) &  & 8 & 2.21 & (0.38) & 2.79 & (0.84) &  & 9 & 1.94 & (0.34) & 2.45 & (0.74) \\
10 & 3.65 & (0.64) & 4.47 & (1.35) &  & 10 & 2.63 & (0.46) & 3.33 & (1.00) &  & 10 & 2.09 & (0.36) & 2.70 & (0.82) \\
15 & 4.83 & (0.84) & 6.06 & (1.83) &  & 15 & 3.44 & (0.60) & 4.39 & (1.32) &  & 15 & 2.76 & (0.48) & 3.61 & (1.09) \\
[1mm]  & \multicolumn{16}{c}{$n=2,000$} \\ \hline
2 & 0.52 & (0.08) & 0.59 & (0.16) &  & 3 & 0.52 & (0.08) & 0.58 & (0.16) &  & 4 & 0.52 & (0.08) & 0.58 & (0.16) \\
3 & 1.03 & (0.15) & 1.14 & (0.31) &  & 4 & 0.98 & (0.14) & 1.13 & (0.30) &  & 5 & 0.96 & (0.14) & 1.10 & (0.29) \\
4 & 1.52 & (0.22) & 1.75 & (0.47) &  & 5 & 1.37 & (0.20) & 1.61 & (0.43) &  & 6 & 1.29 & (0.19) & 1.54 & (0.41) \\
5 & 1.97 & (0.29) & 2.28 & (0.61) &  & 6 & 1.70 & (0.25) & 2.02 & (0.54) &  & 7 & 1.56 & (0.23) & 1.88 & (0.50) \\
6 & 2.37 & (0.35) & 2.72 & (0.73) &  & 7 & 1.99 & (0.29) & 2.40 & (0.64) &  & 8 & 1.79 & (0.26) & 2.17 & (0.58) \\
8 & 3.10 & (0.46) & 3.67 & (0.98) &  & 8 & 2.25 & (0.33) & 2.73 & (0.73) &  & 9 & 1.97 & (0.29) & 2.44 & (0.65) \\
10 & 3.70 & (0.54) & 4.42 & (1.18) &  & 10 & 2.69 & (0.40) & 3.28 & (0.88) &  & 10 & 2.15 & (0.32) & 2.65 & (0.71) \\
15 & 4.99 & (0.74) & 6.01 & (1.61) &  & 15 & 3.55 & (0.52) & 4.42 & (1.18) &  & 15 & 2.82 & (0.42) & 3.54 & (0.95) \\
[1mm]  & \multicolumn{16}{c}{$n=5,000$} \\ \hline
2 & 0.51 & (0.06) & 0.56 & (0.13) &  & 3 & 0.51 & (0.06) & 0.56 & (0.13) &  & 4 & 0.51 & (0.06) & 0.57 & (0.13) \\
3 & 1.02 & (0.12) & 1.13 & (0.27) &  & 4 & 0.98 & (0.12) & 1.10 & (0.26) &  & 5 & 0.95 & (0.11) & 1.08 & (0.25) \\
4 & 1.51 & (0.18) & 1.67 & (0.39) &  & 5 & 1.39 & (0.16) & 1.59 & (0.37) &  & 6 & 1.31 & (0.15) & 1.52 & (0.36) \\
5 & 1.96 & (0.23) & 2.19 & (0.52) &  & 6 & 1.73 & (0.20) & 1.98 & (0.47) &  & 7 & 1.59 & (0.19) & 1.87 & (0.44) \\
6 & 2.37 & (0.28) & 2.68 & (0.63) &  & 7 & 2.03 & (0.24) & 2.37 & (0.56) &  & 8 & 1.83 & (0.22) & 2.20 & (0.52) \\
8 & 3.11 & (0.37) & 3.59 & (0.85) &  & 8 & 2.29 & (0.27) & 2.71 & (0.64) &  & 9 & 2.02 & (0.24) & 2.42 & (0.57) \\
10 & 3.73 & (0.44) & 4.32 & (1.02) &  & 10 & 2.76 & (0.33) & 3.29 & (0.77) &  & 10 & 2.22 & (0.26) & 2.68 & (0.63) \\
15 & 5.10 & (0.60) & 6.04 & (1.42) &  & 15 & 3.67 & (0.43) & 4.52 & (1.07) &  & 15 & 2.95 & (0.35) & 3.66 & (0.86) \\
[1mm]  & \multicolumn{16}{c}{$n=10,000$} \\ \hline
2 & 0.51 & (0.05) & 0.54 & (0.11) &  & 3 & 0.51 & (0.05) & 0.55 & (0.11) &  & 4 & 0.51 & (0.05) & 0.55 & (0.11) \\
3 & 1.02 & (0.10) & 1.09 & (0.23) &  & 4 & 0.98 & (0.10) & 1.08 & (0.23) &  & 5 & 0.96 & (0.10) & 1.06 & (0.22) \\
4 & 1.51 & (0.15) & 1.63 & (0.34) &  & 5 & 1.39 & (0.14) & 1.55 & (0.32) &  & 6 & 1.32 & (0.13) & 1.50 & (0.31) \\
5 & 1.95 & (0.19) & 2.14 & (0.45) &  & 6 & 1.74 & (0.17) & 1.99 & (0.42) &  & 7 & 1.61 & (0.16) & 1.87 & (0.39) \\
6 & 2.39 & (0.24) & 2.64 & (0.55) &  & 7 & 2.05 & (0.20) & 2.35 & (0.49) &  & 8 & 1.86 & (0.19) & 2.17 & (0.45) \\
8 & 3.14 & (0.31) & 3.56 & (0.74) &  & 8 & 2.33 & (0.23) & 2.69 & (0.56) &  & 9 & 2.08 & (0.21) & 2.45 & (0.51) \\
10 & 3.79 & (0.38) & 4.33 & (0.90) &  & 10 & 2.81 & (0.28) & 3.27 & (0.68) &  & 10 & 2.27 & (0.23) & 2.69 & (0.56)
\\\hline\hline
\end{tabular}}
\end{center}
\par
\vspace{-2mm} 
\begin{spacing}{0.95}
{\footnotesize 
Notes: The estimates of $\alpha_{p}$ and their standard errors are computed using Hill's estimation procedure given by (\ref{aphill}) in the main paper, with the cut-off values $n^{1/2}$ and $n^{1/3}$. $\boldsymbol{X}_{i} = (\boldsymbol{x}_{i1}, \boldsymbol{x}_{i2}, ..., \boldsymbol{x}_{iT})^{\prime}$ where the $k^{\prime} \times 1$ regressors, $\boldsymbol{x}_{it}$, are generated as specified in the baseline DGP. $\boldsymbol{M}_{T} = \boldsymbol{I}_{T} - \boldsymbol{\tau}_{T}\boldsymbol{\tau}_{T}^{\prime}/T$, where $ \boldsymbol{I}_{T}$ is a $T \times T$ identity matrix and $\boldsymbol{\tau}_{T}$ is a $T \times 1$ vector of ones. For details of the DGP, see sub-section \ref{DGP} in the main paper and Section \ref{secMCDGP}. The numbers in brackets are standard errors. }
\end{spacing}
\end{table}

\begin{table}[t]
\caption{The estimates of $\protect\alpha _{p}$, the tail index of the
distribution of $1/d_{i}$, where $d_{i}=\func{det}(\boldsymbol{X}%
_{i}^{\prime }\boldsymbol{M}_{T}\boldsymbol{X}_{i})$, by Hill's method for
two choices of cut-off values with one regressor and heterogeneous
autoregressions in the regressor process}
\label{tab:ap_mc_3}\vspace{-6mm}
\par
\begin{center}
\scalebox{0.68}{
\begin{tabular}{cccccccccccccccccccc}
\hline\hline
 & \multicolumn{4}{c}{$n=1,000$} &  & \multicolumn{4}{c}{$n=2,000$} &  & \multicolumn{4}{c}{$n=5,000$} &  & \multicolumn{4}{c}{$n=10,000$} \\ \cline{2-5} \cline{7-10} \cline{12-15} \cline{17-20}
Cut-off & \multicolumn{2}{c}{$n^{1/2}$} & \multicolumn{2}{c}{$n^{1/3}$} &  & \multicolumn{2}{c}{$n^{1/2}$} & \multicolumn{2}{c}{$n^{1/3}$} &  & \multicolumn{2}{c}{$n^{1/2}$} & \multicolumn{2}{c}{$n^{1/3}$} &  & \multicolumn{2}{c}{$n^{1/2}$} & \multicolumn{2}{c}{$n^{1/3}$} \\ \cline{1-5} \cline{7-10} \cline{12-15} \cline{17-20}
$T$ & $\hat{\alpha}_{p}$ & & $\hat{\alpha}_{p}$ &  &  & $\hat{\alpha}_{p}$ &  & $\hat{\alpha}_{p}$ &  &  & $\hat{\alpha}_{p}$ &  & $\hat{\alpha}_{p}$ &  &  & $\hat{\alpha}_{p}$ &  & $\hat{\alpha}_{p}$ &  \\ \hline
2 & 0.53 & (0.09) & 0.61 & (0.18) &  & 0.52 & (0.08) & 0.58 & (0.15) &  & 0.51 & (0.06) & 0.56 & (0.13) &  & 0.51 & (0.05) & 0.55 & (0.11) \\
3 & 1.05 & (0.18) & 1.22 & (0.37) &  & 1.04 & (0.15) & 1.17 & (0.31) &  & 1.03 & (0.12) & 1.12 & (0.26) &  & 1.02 & (0.10) & 1.09 & (0.23) \\
4 & 1.53 & (0.27) & 1.78 & (0.54) &  & 1.53 & (0.22) & 1.74 & (0.46) &  & 1.51 & (0.18) & 1.67 & (0.39) &  & 1.50 & (0.15) & 1.63 & (0.34) \\
5 & 1.99 & (0.35) & 2.36 & (0.71) &  & 1.97 & (0.29) & 2.25 & (0.60) &  & 1.97 & (0.23) & 2.21 & (0.52) &  & 1.97 & (0.20) & 2.17 & (0.45) \\
6 & 2.37 & (0.41) & 2.85 & (0.86) &  & 2.37 & (0.35) & 2.75 & (0.74) &  & 2.37 & (0.28) & 2.68 & (0.63) &  & 2.38 & (0.24) & 2.65 & (0.55) \\
8 & 2.95 & (0.51) & 3.67 & (1.11) &  & 2.99 & (0.44) & 3.57 & (0.95) &  & 3.03 & (0.36) & 3.54 & (0.83) &  & 3.06 & (0.30) & 3.48 & (0.72) \\
10 & 3.49 & (0.61) & 4.35 & (1.31) &  & 3.57 & (0.53) & 4.30 & (1.15) &  & 3.64 & (0.43) & 4.31 & (1.02) &  & 3.69 & (0.37) & 4.28 & (0.89) \\
15 & 4.61 & (0.80) & 5.85 & (1.76) &  & 4.74 & (0.70) & 5.87 & (1.57) &  & 4.87 & (0.57) & 5.88 & (1.39) &  & 4.98 & (0.50) & 5.80 & (1.21)
\\\hline\hline
\end{tabular}}
\end{center}
\par
\vspace{-2mm} 
\begin{spacing}{1}
{\footnotesize
Notes: The estimates of $\alpha_{p}$ and their standard errors are computed using Hill's estimation procedure given by (\ref{aphill}) in the main paper, with the cut-off values $n^{1/2}$ and $n^{1/3}$. $\boldsymbol{x}_{1,i} = (x_{1,i1}, x_{1,i2}, ..., x_{1,iT})^{\prime}$, where $x_{1,it}$ is generated with heterogeneous autoregressions without interactive effects. $\boldsymbol{M}_{T} = \boldsymbol{I}_{T} - \boldsymbol{\tau}_{T}\boldsymbol{\tau}_{T}^{\prime}/T$, where $ \boldsymbol{I}_{T}$ is a $T \times T$ identity matrix and $\boldsymbol{\tau}_{T}$ is a $T \times 1$ vector of ones. For details of the DGP, see sub-section \ref{DGP} in the main paper and Section \ref{secMCDGP}. The numbers in brackets are standard errors.}
\end{spacing}
\end{table}

\subsection{Monte Carlo evidence of estimation in the baseline DGP with one
regressor, without time effects \label{MCk2}}

\subsubsection{Comparing TMG, FE and MG estimators \label{sectmgvsfe}}

Figure \ref{fig:fe_tmg_k2_base_2} displays the empirical power functions for
FE, MG and TMG estimators in the case of the baseline DGP (with one
regressor but without time effects) with the sample sizes $n=10,000$, and $%
T=5,6,8$ ($T\geq 2k+1$). The corresponding MC results are summarized in
Table \ref{tab:T_d1_c12_chi2_tex0} in the main paper, while Figure \ref%
{fig:fe_tmg_k2_base_1} in the main paper shows the empirical power for $%
T=2,3,4$ ($T<2k+1$). See sub-section \ref{MCfe} in the main paper for a
discussion of the results.

Table \ref{tab:T_d1_c2_hk23_chi2_tex0} summarizes the MC results for FE, MG
and TMG estimators in panel data models under uncorrelated heterogeneity ($%
\psi_{\beta_{1}}=0$), but with correlated heteroskedastic errors (in the $%
y_{it}$ equation). It gives bias, RMSE and size for case (a) $\sigma
_{it}^{2}=\lambda _{i}^{2}$, on the left panel and for case (b) $\sigma
_{it}^{2}=e_{x1,it}^{2} $, on the right panel of the table.\footnote{%
For details on the DGP and the rationale behind these specifications, see
Section \ref{DGP} in the main paper.} As expected, the MG estimator performs
very poorly when $T$ is ultra short and suffers from substantial bias. In
contrast, the bias of the TMG estimator remains small even when $T=2 $.
Turning to the comparison of FE and TMG estimators, we note that under both
specifications of error heteroskedasticity, the bias of FE and TMG
estimators are close to zero, and both estimators have the correct size for
all $T$ and $n$ combinations. The main difference between FE and TMG
estimators lies in their relative efficiency (in the RMSE sense), when $T $
is ultra short. For example, when $T=2$ and $n=1,000$, the FE estimator is
more efficient than the TMG estimator under case (b), whilst the reverse is
true under case (a). This ranking of the two estimators is also reflected in
their empirical power functions shown on the left and right panels of Figure %
\ref{fig:fe_tmg_k2_hetrosk}, for $T=2,3,4$ and $n=10,000$. The empirical
power functions for both estimators are correctly centered around $\beta
_{01}=1$. But under heteroskedasticity of type (a), the empirical power
function of the TMG estimator is steeper and for $T=2$ lies within that of
the FE estimator, with the reverse being true when error heteroskedasticity
is generated under case (b).\footnote{%
These results are in line with Proposition \ref{prop_mgvsfe} and Example \ref%
{ExampleMG-FE} in the main paper.} However, differences between FE, MG and
TMG estimators vanish very rapidly as $n$ and $T$ are increased. The
respective empirical power functions for $T=2,3,4,5,6$, and 8 are shown in
Figure \ref{fig:fe_tmg_k2_hetrosk}.

As a general rule, the FE estimator performs well when heterogeneity is
uncorrelated. But in line with our theoretical results, the FE estimator
suffers from substantial bias and size distortions under correlated
heterogeneity, irrespective of whether the errors are heteroskedastic. The
degree of bias and size distortion of the FE estimator rises with the degree
of heterogeneity, $\psi_{\beta_{1}}$. Table \ref{tab:T_d1_c34_chi2_tex0}
provides additional MC results for $\psi_{\beta_{1}}=0.8$ and $PR^{2}=0.2$
on the left panel and for $\psi_{\beta_{1}}=0.5$ and $PR^{2}=0.4$ on the
right panel for sample size combinations $T=2,3,4,5,6,8$, and $n=1,000$, $%
2,000$, $5,000$, $10,000$. Comparing these results with those already
reported in Table \ref{tab:T_d1_c12_chi2_tex0} we also note that the FE
estimator shows a higher degree of distortion when $PR^{2}$ is increased
from $0.2$ to $0.4$, with $\psi_{\beta_{1}}$ fixed at $0.5$.

Table \ref{tab:T_d1_c23_chi2_tex2} reports bias, RMSE and size of FE, MG and
TMG estimators of $\beta_{01}$ for $T=2,3,4,5,6,8$, and $n=1,000$, $2,000 $, 
$5,000$ and $10,000$. With the interactive effect in $\{x_{1,it}\}$
contains, the bias magnitude in the FE estimator varies across the finite
values of $T$. The comparative performances of TMG, FE and MG estimators are
similar to those in the baseline DGP.

\begin{figure}[h!]
\caption{Empirical power functions for FE, MG and TMG estimators of $\protect%
\beta_{01}$ $(E(\protect\beta_{i1}) = \protect\beta_{01}=1)$ in the baseline
DGP with one regressor, without time effects for $n=10,000$ and $T=5,6,8 $}
\label{fig:fe_tmg_k2_base_2}\vspace{-6mm}
\par
\begin{center}
\includegraphics[scale=0.28]{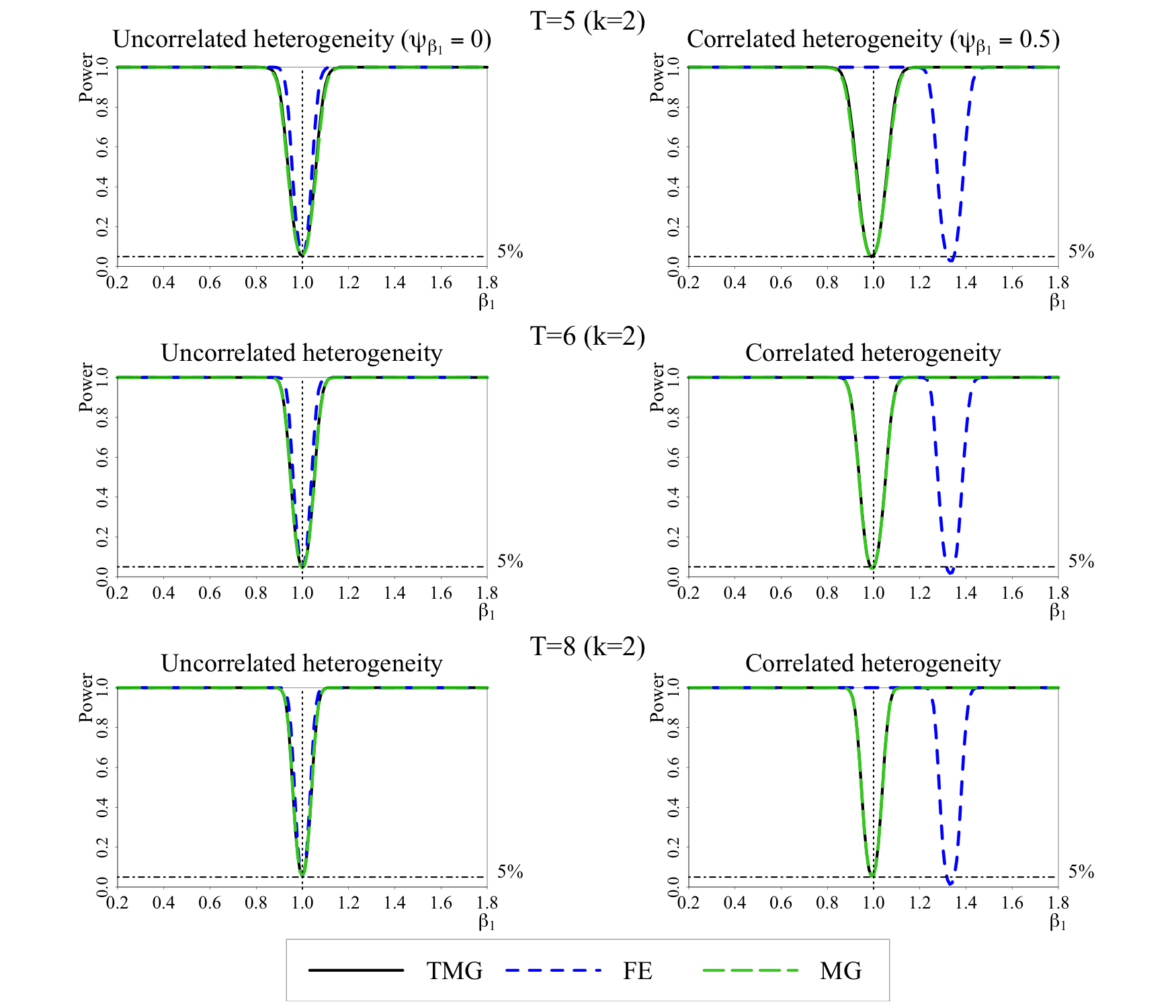}
\end{center}
\par
\vspace{-2mm} 
\begin{spacing}{1}
{\footnotesize
Notes: (i) In the baseline DGP, the outcome variable is generated as $y_{it}=\alpha_{i} + \beta_{i1} x_{1,it} + u_{it}$.  
For further details see Section \ref{DGP} in the main paper and Section \ref{secMCDGP}. 
(ii) FE and MG estimators are given by (\ref{fee}) and (\ref{mge}) in the main paper, respectively. 
The TMG estimator and its asymptotic variance are given by (\ref{TMGb}) and (\ref{varC}) in the main paper, respectively. 
(iii) The trimming threshold value for the TMG estimator is given by $a_{n}=\bar{d}_{n} n^{-\alpha}$, where $\bar{d}_{n} =\frac{1}{n} \sum_{i}^{n} d_{i}$, $d_{i} =\func{det}(\boldsymbol{X}_{i}^{\prime} \boldsymbol{M}_{T} \boldsymbol{X}_{i})$, $\boldsymbol{X}_{i}=(\boldsymbol{x}_{i1},\boldsymbol{x}_{i2},...,\boldsymbol{x}_{iT})^{\prime}$ and $\boldsymbol{M}_{T} = \boldsymbol{I}_{T} - \boldsymbol{\tau}_{T}\boldsymbol{\tau}_{T}^{\prime}/T$ where $ \boldsymbol{I}_{T}$ is a $T \times T$ identity matrix and $\boldsymbol{\tau}_{T}$ is a $T \times 1$ vector of ones. 
$\alpha$ is set to $1/3$. }
\end{spacing}
\end{figure}

\begin{sidewaystable}
\caption{Bias, RMSE and size of FE, MG and TMG estimators of $\beta_{01}$ $(E(\beta_{i1}) = \beta_{01}=1)$ in panel data models with one regressor, no time effects, uncorrelated heterogeneity, $\psi_{\beta_{1}}=0$, but correlated heteroskedasticity (cases (a) and (b))} 
\label{tab:T_d1_c2_hk23_chi2_tex0}
\vspace{-5mm}
\begin{center}
\scalebox{0.65}{
\begin{tabular}{rrcrrrrrrrrrrrrrrrrrrrrrrrrr}
 \hline\hline &  \multicolumn{13}{c}{Case (a) of correlated heteroskedasticity} &  & \multicolumn{13}{c}{Case (b) of correlated heteroskedasticity}  \\ \cline{2-14} \cline{16-28}   
& $\hat{\pi}$ $(\times 100)$  & & \multicolumn{3}{c}{Bias}  & & \multicolumn{3}{c}{RMSE} && \multicolumn{3}{c}{Size $(\times 100)$} && $\hat{\pi}$ $(\times 100)$ & & \multicolumn{3}{c}{Bias}  & & \multicolumn{3}{c}{RMSE} && \multicolumn{3}{c}{Size $(\times 100)$} \\ \cline{2-2} \cline{4-6} \cline{8-10} \cline{12-14} \cline{16-16} \cline{18-20} \cline{22-24} \cline{26-28}  
 $T$ & TMG & & FE & MG & TMG && FE & MG & TMG && FE & MG & TMG &&  TMG & & FE & MG & TMG && FE & MG & TMG && FE & MG & TMG    \\ \hline 
&   \multicolumn{27}{c}{$n=1,000$} \\ \hline
        2 & 27.30 & ~ & 0.006 & 13.584 & -0.002 & ~ & 0.278 & 721.618 & 0.149 & ~ & 5.3 & 2.5 & 5.2 & ~ & 27.30 & ~ & -0.002 & 2.846 & -0.004 & ~ & 0.131 & 282.082 & 0.241 & ~ & 4.7 & 2.2 & 5.1 \\ 
        3 & 12.00 & ~ & 0.003 & -0.002 & 0.003 & ~ & 0.155 & 0.344 & 0.116 & ~ & 5.5 & 4.0 & 4.8 & ~ & 12.00 & ~ & 0.001 & -0.002 & -0.001 & ~ & 0.096 & 0.354 & 0.147 & ~ & 5.4 & 4.0 & 5.1 \\ 
        4 & 5.90 & ~ & -0.001 & -0.002 & -0.002 & ~ & 0.114 & 0.138 & 0.096 & ~ & 5.0 & 5.1 & 5.1 & ~ & 5.90 & ~ & -0.002 & -0.002 & -0.004 & ~ & 0.080 & 0.136 & 0.111 & ~ & 5.1 & 4.2 & 5.1 \\ 
        5 & 3.10 & ~ & 0.000 & -0.001 & 0.000 & ~ & 0.092 & 0.102 & 0.085 & ~ & 4.7 & 5.0 & 5.6 & ~ & 3.10 & ~ & 0.001 & 0.001 & 0.001 & ~ & 0.070 & 0.100 & 0.091 & ~ & 5.4 & 5.3 & 5.0 \\ 
        6 & 1.70 & ~ & -0.001 & 0.000 & 0.000 & ~ & 0.080 & 0.084 & 0.077 & ~ & 4.5 & 5.2 & 5.1 & ~ & 1.70 & ~ & 0.001 & 0.001 & 0.001 & ~ & 0.064 & 0.083 & 0.080 & ~ & 5.0 & 4.7 & 4.6 \\ 
        8 & 0.60 & ~ & 0.000 & 0.001 & 0.001 & ~ & 0.066 & 0.065 & 0.064 & ~ & 5.3 & 5.1 & 5.1 & ~ & 0.60 & ~ & 0.000 & 0.002 & 0.002 & ~ & 0.057 & 0.067 & 0.066 & ~ & 4.7 & 5.1 & 5.1 \\ 
        10 & 0.20 & ~ & 0.001 & 0.002 & 0.002 & ~ & 0.057 & 0.056 & 0.056 & ~ & 5.3 & 4.9 & 4.9 & ~ & 0.20 & ~ & 0.000 & 0.001 & 0.001 & ~ & 0.051 & 0.057 & 0.057 & ~ & 4.7 & 4.8 & 4.7 \\ 
        15 & 0.00 & ~ & 0.000 & 0.001 & 0.001 & ~ & 0.049 & 0.047 & 0.047 & ~ & 5.8 & 4.9 & 4.9 & ~ & 0.00 & ~ & 0.000 & 0.001 & 0.001 & ~ & 0.044 & 0.046 & 0.046 & ~ & 4.7 & 4.7 & 4.7 \\ 
&   \multicolumn{27}{c}{$n=2,000$} \\ \hline
        2 & 24.50 & ~ & 0.000 & -31.098 & -0.001 & ~ & 0.193 & 1302.333 & 0.112 & ~ & 4.7 & 1.4 & 4.7 & ~ & 24.50 & ~ & 0.005 & -35.941 & 0.005 & ~ & 0.095 & 1534.161 & 0.182 & ~ & 4.9 & 1.1 & 4.8 \\ 
        3 & 9.60 & ~ & 0.004 & -0.006 & -0.001 & ~ & 0.110 & 0.245 & 0.087 & ~ & 5.8 & 4.8 & 5.6 & ~ & 9.60 & ~ & 0.000 & -0.005 & -0.003 & ~ & 0.069 & 0.250 & 0.110 & ~ & 6.0 & 3.9 & 4.3 \\ 
        4 & 4.30 & ~ & 0.000 & -0.001 & 0.000 & ~ & 0.082 & 0.097 & 0.073 & ~ & 5.3 & 4.9 & 5.1 & ~ & 4.30 & ~ & -0.003 & -0.002 & -0.001 & ~ & 0.058 & 0.100 & 0.084 & ~ & 5.6 & 5.3 & 5.8 \\ 
        5 & 2.00 & ~ & 0.000 & 0.003 & 0.002 & ~ & 0.067 & 0.073 & 0.063 & ~ & 5.6 & 5.4 & 4.7 & ~ & 2.00 & ~ & 0.001 & 0.002 & 0.002 & ~ & 0.050 & 0.071 & 0.066 & ~ & 5.0 & 4.9 & 5.2 \\ 
        6 & 1.00 & ~ & 0.001 & 0.000 & 0.001 & ~ & 0.057 & 0.060 & 0.056 & ~ & 4.8 & 5.1 & 5.0 & ~ & 1.00 & ~ & -0.001 & -0.001 & -0.001 & ~ & 0.046 & 0.061 & 0.059 & ~ & 5.4 & 5.9 & 5.4 \\ 
        8 & 0.30 & ~ & -0.002 & 0.000 & 0.000 & ~ & 0.047 & 0.047 & 0.046 & ~ & 5.4 & 5.1 & 5.1 & ~ & 0.30 & ~ & -0.001 & 0.001 & 0.001 & ~ & 0.040 & 0.046 & 0.046 & ~ & 5.2 & 4.2 & 4.1 \\ 
        10 & 0.10 & ~ & -0.001 & -0.001 & -0.001 & ~ & 0.041 & 0.041 & 0.041 & ~ & 4.9 & 4.9 & 4.9 & ~ & 0.10 & ~ & -0.001 & -0.002 & -0.002 & ~ & 0.037 & 0.041 & 0.041 & ~ & 5.0 & 4.5 & 4.4 \\ 
        15 & 0.00 & ~ & -0.001 & 0.000 & 0.000 & ~ & 0.034 & 0.032 & 0.032 & ~ & 5.6 & 4.6 & 4.6 & ~ & 0.00 & ~ & -0.001 & -0.001 & -0.001 & ~ & 0.031 & 0.033 & 0.033 & ~ & 4.8 & 5.5 & 5.5 \\ 
&   \multicolumn{27}{c}{$n=5,000$} \\ \hline
        2 & 21.10 & ~ & -0.002 & -2.890 & -0.001 & ~ & 0.118 & 83.014 & 0.077 & ~ & 4.4 & 1.8 & 5.3 & ~ & 21.10 & ~ & 0.000 & -4.330 & 0.001 & ~ & 0.059 & 116.948 & 0.124 & ~ & 4.6 & 1.8 & 5.6 \\ 
        3 & 7.20 & ~ & -0.001 & -0.004 & 0.001 & ~ & 0.071 & 0.210 & 0.058 & ~ & 5.5 & 4.6 & 5.0 & ~ & 7.20 & ~ & 0.001 & -0.005 & 0.001 & ~ & 0.044 & 0.262 & 0.072 & ~ & 5.7 & 4.3 & 5.1 \\ 
        4 & 2.80 & ~ & 0.000 & 0.003 & 0.002 & ~ & 0.052 & 0.062 & 0.046 & ~ & 5.2 & 4.9 & 4.0 & ~ & 2.80 & ~ & 0.000 & 0.002 & 0.001 & ~ & 0.036 & 0.061 & 0.052 & ~ & 5.4 & 4.8 & 4.4 \\ 
        5 & 1.20 & ~ & -0.001 & -0.001 & -0.001 & ~ & 0.043 & 0.046 & 0.042 & ~ & 5.4 & 5.8 & 5.7 & ~ & 1.20 & ~ & -0.001 & -0.002 & -0.002 & ~ & 0.031 & 0.045 & 0.043 & ~ & 5.3 & 5.4 & 5.4 \\ 
        6 & 0.50 & ~ & 0.000 & 0.001 & 0.000 & ~ & 0.036 & 0.038 & 0.037 & ~ & 6.1 & 5.6 & 5.4 & ~ & 0.50 & ~ & 0.000 & 0.000 & 0.000 & ~ & 0.029 & 0.038 & 0.037 & ~ & 5.4 & 4.8 & 5.1 \\ 
        8 & 0.10 & ~ & -0.001 & -0.001 & -0.001 & ~ & 0.030 & 0.030 & 0.030 & ~ & 5.1 & 5.1 & 5.2 & ~ & 0.10 & ~ & -0.001 & -0.001 & -0.001 & ~ & 0.026 & 0.030 & 0.030 & ~ & 5.3 & 5.7 & 5.6 \\ 
        10 & 0.00 & ~ & -0.001 & -0.001 & -0.001 & ~ & 0.027 & 0.026 & 0.026 & ~ & 5.1 & 5.4 & 5.3 & ~ & 0.00 & ~ & -0.001 & -0.001 & -0.001 & ~ & 0.024 & 0.026 & 0.026 & ~ & 5.2 & 4.9 & 4.9 \\ 
        15 & 0.00 & ~ & 0.000 & 0.001 & 0.001 & ~ & 0.021 & 0.020 & 0.020 & ~ & 4.6 & 5.2 & 5.2 & ~ & 0.00 & ~ & 0.000 & 0.000 & 0.000 & ~ & 0.020 & 0.021 & 0.021 & ~ & 5.1 & 5.3 & 5.3 \\ 
&   \multicolumn{27}{c}{$n=10,000$} \\ \hline
        2 & 18.90 & ~ & 0.000 & -0.415 & -0.001 & ~ & 0.087 & 48.575 & 0.059 & ~ & 4.6 & 2.4 & 4.8 & ~ & 18.90 & ~ & 0.000 & -2.810 & -0.002 & ~ & 0.042 & 79.340 & 0.094 & ~ & 5.6 & 2.1 & 5.4 \\ 
        3 & 5.80 & ~ & 0.000 & -0.001 & 0.000 & ~ & 0.048 & 0.146 & 0.043 & ~ & 4.2 & 4.6 & 4.9 & ~ & 5.80 & ~ & -0.001 & -0.005 & -0.002 & ~ & 0.030 & 0.170 & 0.053 & ~ & 4.2 & 4.7 & 5.0 \\ 
        4 & 2.00 & ~ & -0.001 & 0.000 & 0.000 & ~ & 0.036 & 0.044 & 0.035 & ~ & 5.0 & 4.8 & 5.3 & ~ & 2.00 & ~ & 0.000 & 0.000 & 0.000 & ~ & 0.025 & 0.045 & 0.038 & ~ & 4.6 & 4.4 & 4.9 \\ 
        5 & 0.80 & ~ & 0.000 & -0.002 & -0.002 & ~ & 0.029 & 0.032 & 0.030 & ~ & 4.8 & 5.4 & 5.5 & ~ & 0.80 & ~ & -0.001 & -0.002 & -0.002 & ~ & 0.022 & 0.032 & 0.031 & ~ & 5.1 & 5.1 & 4.9 \\ 
        6 & 0.30 & ~ & 0.001 & 0.001 & 0.001 & ~ & 0.025 & 0.026 & 0.026 & ~ & 4.9 & 4.6 & 4.8 & ~ & 0.30 & ~ & 0.000 & 0.000 & 0.000 & ~ & 0.020 & 0.026 & 0.026 & ~ & 5.0 & 5.0 & 4.9 \\ 
        8 & 0.00 & ~ & 0.001 & 0.000 & 0.000 & ~ & 0.022 & 0.022 & 0.022 & ~ & 5.6 & 5.8 & 5.7 & ~ & 0.00 & ~ & 0.000 & -0.001 & -0.001 & ~ & 0.018 & 0.022 & 0.022 & ~ & 4.8 & 5.5 & 5.5 \\ 
        10 & 0.00 & ~ & 0.000 & 0.000 & 0.000 & ~ & 0.018 & 0.018 & 0.018 & ~ & 4.9 & 4.9 & 4.9 & ~ & 0.00 & ~ & 0.000 & 0.000 & 0.000 & ~ & 0.016 & 0.018 & 0.018 & ~ & 5.8 & 5.0 & 5.1 \\ 
        15 & 0.00 & ~ & 0.000 & 0.000 & 0.000 & ~ & 0.015 & 0.015 & 0.015 & ~ & 5.4 & 5.1 & 5.1 & ~ & 0.00 & ~ & 0.000 & 0.000 & 0.000 & ~ & 0.014 & 0.015 & 0.015 & ~ & 4.9 & 5.1 & 5.1 \\ 
\hline
\hline
\end{tabular}
}
\end{center}
\vspace{-3mm}
\begin{spacing}{1}
{\footnotesize 
Notes: 
(i) The data generating process is given by $y_{it}=\alpha_{i}
+ \beta_{i1} x_{1,it} + \sigma_{it} e_{it}$,  where $\sigma_{it}^{2}$ are 
generated as case (a): $ \sigma_{it}^{2} =\lambda_{i}^{2}$, and case (b): $\sigma_{it}^{2}=e_{x1,it}^{2}$, for all $i$ and $t$. 
For further details see Section \ref{DGP} in the main paper and Section \ref{secMCDGP}. 
(ii) For FE, MG and TMG estimators, see footnotes to Figure \ref{fig:fe_tmg_k2_base_2}.
$\hat{\pi}$ is the simulated fraction of individual estimates being trimmed, defined by (\ref{pin}) in the main paper. 
} 
\end{spacing}
\end{sidewaystable}

\clearpage \newpage 
\begin{figure}[h]
\caption{Empirical power functions for FE, MG and TMG estimators of $\protect%
\beta_{01}$ $(E(\protect\beta_{i1})=\protect\beta_{01}=1)$ in panel data
models with one regressor, no time effects, uncorrelated heterogeneity, $%
\protect\psi _{\protect\beta_{1}}=0$, but correlated heteroskedasticity
(cases (a) and (b)), for $n=10,000$ and $T=2,3,4,5,6,8$}
\label{fig:fe_tmg_k2_hetrosk}\vspace{-5mm}
\par
\begin{center}
\includegraphics[scale=0.21]{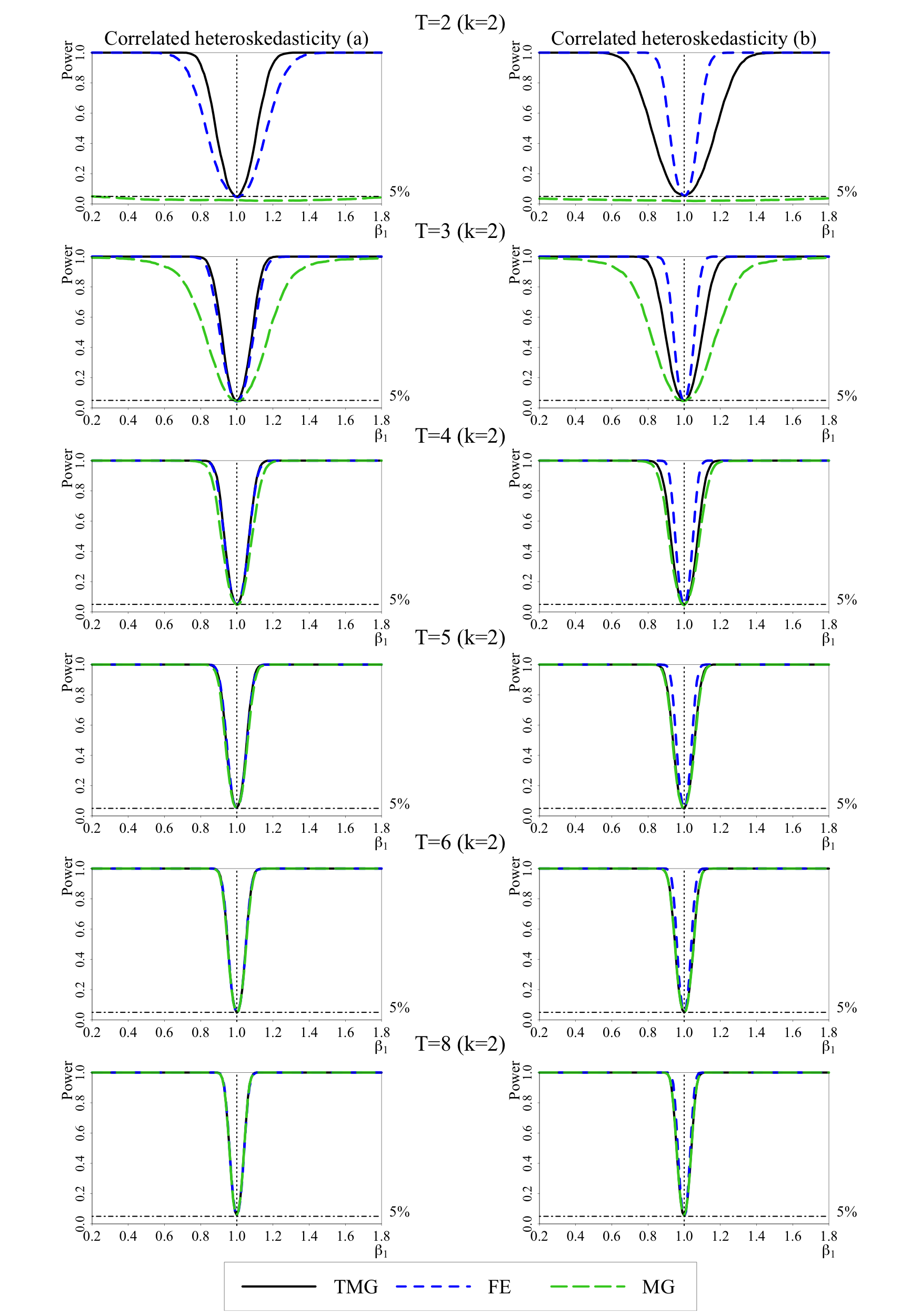}
\end{center}
\par
\vspace{-2mm} 
\begin{spacing}{1}
{\footnotesize
Notes: See footnotes to Table \ref{tab:T_d1_c2_hk23_chi2_tex0}.
}
\end{spacing}
\end{figure}

\clearpage\newpage 
\begin{sidewaystable}
\caption{Bias, RMSE and size of FE, MG and TMG estimators of $\beta_{01}$ $(E(\beta_{i1})=\beta_{01}=1)$ in panel data models with one regressor, without time effects, but with different degrees of correlated heterogeneity, $\psi_{\beta_{1}}$, and overall fit, $PR^{2}$} 
\label{tab:T_d1_c34_chi2_tex0}
\vspace{-5mm}
\begin{center}
\scalebox{0.65}{
\begin{tabular}{rrrrrrrrrrrrrrrrrrrrrrrrrrrr}
 \hline\hline &  \multicolumn{13}{c}{ High correlated heterogeneity $\psi_{\beta_{1}}=0.8$ with $PR^{2}=0.2$} &  & \multicolumn{13}{c}{ Medium correlated heterogeneity $\psi_{\beta_{1}}=0.5$ with $PR^{2}=0.4$}  \\ \cline{2-14} \cline{16-28}   
&\multicolumn{1}{c}{$\hat{\pi}$ $(\times 100)$}   & & \multicolumn{3}{c}{Bias}  & & \multicolumn{3}{c}{RMSE} && \multicolumn{3}{c}{Size $(\times 100)$} &&\multicolumn{1}{c}{$\hat{\pi}$ $(\times 100)$} & & \multicolumn{3}{c}{Bias}  & & \multicolumn{3}{c}{RMSE} && \multicolumn{3}{c}{Size $(\times 100)$} \\ \cline{2-2} \cline{4-6} \cline{8-10} \cline{12-14} \cline{16-16} \cline{18-20} \cline{22-24} \cline{26-28}  
 $T$ & TMG & & FE & MG & TMG && FE & MG & TMG && FE & MG & TMG &&  TMG & & FE & MG & TMG && FE & MG & TMG && FE & MG & TMG    \\ \hline 
& \multicolumn{27}{c}{$n=1,000$} \\ \hline
        2 & 27.30 & ~ & 0.566 & 16.894 & 0.020 & ~ & 0.612 & 977.805 & 0.312 & ~ & 72.4 & 2.8 & 5.3 & ~ & 27.30 & ~ & 0.354 & 8.900 & 0.014 & ~ & 0.381 & 515.168 & 0.165 & ~ & 77.6 & 2.8 & 5.0 \\ 
        3 & 12.00 & ~ & 0.559 & -0.011 & 0.009 & ~ & 0.582 & 0.449 & 0.192 & ~ & 93.4 & 4.2 & 5.2 & ~ & 12.00 & ~ & 0.350 & -0.006 & 0.006 & ~ & 0.363 & 0.236 & 0.102 & ~ & 95.3 & 4.0 & 5.1 \\ 
        4 & 5.90 & ~ & 0.556 & -0.010 & -0.002 & ~ & 0.571 & 0.175 & 0.141 & ~ & 98.7 & 4.3 & 4.8 & ~ & 5.90 & ~ & 0.348 & -0.006 & -0.001 & ~ & 0.356 & 0.093 & 0.075 & ~ & 99.2 & 4.3 & 4.8 \\ 
        5 & 3.10 & ~ & 0.563 & -0.008 & -0.003 & ~ & 0.575 & 0.129 & 0.118 & ~ & 99.4 & 4.9 & 5.0 & ~ & 3.10 & ~ & 0.352 & -0.005 & -0.002 & ~ & 0.359 & 0.069 & 0.063 & ~ & 99.4 & 4.9 & 4.9 \\ 
        6 & 1.70 & ~ & 0.562 & -0.006 & -0.004 & ~ & 0.571 & 0.103 & 0.100 & ~ & 99.8 & 4.2 & 4.0 & ~ & 1.70 & ~ & 0.350 & -0.004 & -0.003 & ~ & 0.357 & 0.057 & 0.055 & ~ & 99.9 & 4.4 & 4.6 \\ 
        8 & 0.60 & ~ & 0.561 & -0.006 & -0.005 & ~ & 0.568 & 0.079 & 0.078 & ~ & 100.0 & 3.5 & 3.9 & ~ & 0.60 & ~ & 0.350 & -0.004 & -0.003 & ~ & 0.355 & 0.044 & 0.044 & ~ & 99.9 & 3.9 & 4.0 \\ 
        10 & 0.20 & ~ & 0.562 & -0.004 & -0.004 & ~ & 0.567 & 0.067 & 0.067 & ~ & 100.0 & 3.6 & 3.8 & ~ & 0.20 & ~ & 0.350 & -0.003 & -0.003 & ~ & 0.354 & 0.038 & 0.038 & ~ & 100.0 & 3.5 & 3.5 \\ 
        15 & 0.00 & ~ & 0.560 & -0.006 & -0.006 & ~ & 0.564 & 0.052 & 0.052 & ~ & 100.0 & 2.6 & 2.6 & ~ & 0.00 & ~ & 0.351 & -0.003 & -0.003 & ~ & 0.353 & 0.032 & 0.032 & ~ & 100.0 & 3.2 & 3.2 \\ 
& \multicolumn{27}{c}{$n=2,000$} \\ \hline
        2 & 24.50 & ~ & 0.528 & -42.360 & 0.006 & ~ & 0.549 & 1716.517 & 0.235 & ~ & 95.4 & 1.8 & 5.0 & ~ & 24.50 & ~ & 0.331 & -22.321 & 0.003 & ~ & 0.342 & 904.368 & 0.124 & ~ & 97.2 & 1.8 & 5.0 \\ 
        3 & 9.60 & ~ & 0.523 & -0.036 & -0.016 & ~ & 0.533 & 0.347 & 0.142 & ~ & 99.9 & 4.7 & 5.1 & ~ & 9.60 & ~ & 0.327 & -0.021 & -0.010 & ~ & 0.333 & 0.183 & 0.076 & ~ & 100.0 & 5.0 & 5.1 \\ 
        4 & 4.30 & ~ & 0.526 & -0.025 & -0.019 & ~ & 0.533 & 0.128 & 0.106 & ~ & 100.0 & 5.3 & 5.3 & ~ & 4.30 & ~ & 0.328 & -0.016 & -0.012 & ~ & 0.332 & 0.069 & 0.057 & ~ & 100.0 & 5.1 & 5.1 \\ 
        5 & 2.00 & ~ & 0.525 & -0.021 & -0.019 & ~ & 0.530 & 0.091 & 0.084 & ~ & 100.0 & 4.6 & 4.4 & ~ & 2.00 & ~ & 0.328 & -0.013 & -0.012 & ~ & 0.331 & 0.050 & 0.046 & ~ & 100.0 & 4.7 & 4.6 \\ 
        6 & 1.00 & ~ & 0.526 & -0.024 & -0.023 & ~ & 0.530 & 0.078 & 0.076 & ~ & 100.0 & 5.4 & 5.2 & ~ & 1.00 & ~ & 0.328 & -0.015 & -0.015 & ~ & 0.331 & 0.044 & 0.042 & ~ & 100.0 & 6.2 & 5.9 \\ 
        8 & 0.30 & ~ & 0.527 & -0.021 & -0.021 & ~ & 0.530 & 0.061 & 0.060 & ~ & 100.0 & 5.5 & 5.1 & ~ & 0.30 & ~ & 0.328 & -0.014 & -0.014 & ~ & 0.330 & 0.035 & 0.034 & ~ & 100.0 & 5.7 & 5.8 \\ 
        10 & 0.10 & ~ & 0.526 & -0.025 & -0.024 & ~ & 0.529 & 0.054 & 0.054 & ~ & 100.0 & 6.9 & 6.9 & ~ & 0.10 & ~ & 0.329 & -0.016 & -0.016 & ~ & 0.330 & 0.032 & 0.032 & ~ & 100.0 & 6.9 & 7.0 \\ 
        15 & 0.00 & ~ & 0.524 & -0.026 & -0.026 & ~ & 0.526 & 0.045 & 0.045 & ~ & 100.0 & 7.3 & 7.3 & ~ & 0.00 & ~ & 0.328 & -0.016 & -0.016 & ~ & 0.329 & 0.027 & 0.027 & ~ & 100.0 & 6.8 & 6.8 \\ 
& \multicolumn{27}{c}{$n=5,000$} \\ \hline
        2 & 21.10 & ~ & 0.508 & -3.325 & 0.004 & ~ & 0.515 & 151.111 & 0.162 & ~ & 100.0 & 2.0 & 5.1 & ~ & 21.10 & ~ & 0.318 & -1.754 & 0.003 & ~ & 0.322 & 79.615 & 0.085 & ~ & 100.0 & 2.0 & 5.2 \\ 
        3 & 7.20 & ~ & 0.510 & -0.023 & -0.008 & ~ & 0.513 & 0.253 & 0.095 & ~ & 100.0 & 4.3 & 5.1 & ~ & 7.20 & ~ & 0.318 & -0.014 & -0.005 & ~ & 0.321 & 0.134 & 0.051 & ~ & 100.0 & 4.5 & 5.2 \\ 
        4 & 2.80 & ~ & 0.511 & -0.014 & -0.011 & ~ & 0.513 & 0.081 & 0.067 & ~ & 100.0 & 4.5 & 3.8 & ~ & 2.80 & ~ & 0.319 & -0.009 & -0.008 & ~ & 0.321 & 0.043 & 0.036 & ~ & 100.0 & 4.4 & 3.8 \\ 
        5 & 1.20 & ~ & 0.509 & -0.020 & -0.018 & ~ & 0.511 & 0.061 & 0.059 & ~ & 100.0 & 5.9 & 5.7 & ~ & 1.20 & ~ & 0.318 & -0.012 & -0.011 & ~ & 0.319 & 0.034 & 0.032 & ~ & 100.0 & 6.2 & 6.3 \\ 
        6 & 0.50 & ~ & 0.509 & -0.019 & -0.018 & ~ & 0.511 & 0.052 & 0.051 & ~ & 100.0 & 6.1 & 5.8 & ~ & 0.50 & ~ & 0.318 & -0.012 & -0.011 & ~ & 0.319 & 0.029 & 0.028 & ~ & 100.0 & 6.2 & 6.3 \\ 
        8 & 0.10 & ~ & 0.509 & -0.019 & -0.019 & ~ & 0.510 & 0.042 & 0.041 & ~ & 100.0 & 7.1 & 6.9 & ~ & 0.10 & ~ & 0.318 & -0.012 & -0.012 & ~ & 0.319 & 0.024 & 0.024 & ~ & 100.0 & 8.6 & 8.5 \\ 
        10 & 0.00 & ~ & 0.509 & -0.020 & -0.019 & ~ & 0.509 & 0.036 & 0.036 & ~ & 100.0 & 7.0 & 6.9 & ~ & 0.00 & ~ & 0.318 & -0.012 & -0.012 & ~ & 0.319 & 0.021 & 0.021 & ~ & 100.0 & 7.8 & 7.8 \\ 
        15 & 0.00 & ~ & 0.510 & -0.018 & -0.018 & ~ & 0.511 & 0.030 & 0.030 & ~ & 100.0 & 7.8 & 7.8 & ~ & 0.00 & ~ & 0.319 & -0.011 & -0.011 & ~ & 0.319 & 0.018 & 0.018 & ~ & 100.0 & 7.9 & 7.9 \\ 
& \multicolumn{27}{c}{$n=10,000$} \\ \hline
        2 & 18.90 & ~ & 0.532 & -0.093 & 0.009 & ~ & 0.536 & 96.489 & 0.122 & ~ & 100.0 & 2.4 & 4.9 & ~ & 18.90 & ~ & 0.333 & -0.050 & 0.006 & ~ & 0.335 & 50.837 & 0.065 & ~ & 100.0 & 2.4 & 4.8 \\ 
        3 & 5.80 & ~ & 0.532 & -0.013 & -0.003 & ~ & 0.534 & 0.193 & 0.070 & ~ & 100.0 & 4.9 & 4.7 & ~ & 5.80 & ~ & 0.332 & -0.008 & -0.002 & ~ & 0.333 & 0.102 & 0.037 & ~ & 100.0 & 5.2 & 4.8 \\ 
        4 & 2.00 & ~ & 0.532 & -0.010 & -0.007 & ~ & 0.534 & 0.059 & 0.051 & ~ & 100.0 & 5.4 & 5.3 & ~ & 2.00 & ~ & 0.333 & -0.006 & -0.004 & ~ & 0.333 & 0.032 & 0.027 & ~ & 100.0 & 5.8 & 5.4 \\ 
        5 & 0.80 & ~ & 0.533 & -0.012 & -0.011 & ~ & 0.534 & 0.043 & 0.042 & ~ & 100.0 & 6.2 & 5.8 & ~ & 0.80 & ~ & 0.333 & -0.008 & -0.007 & ~ & 0.333 & 0.024 & 0.023 & ~ & 100.0 & 6.3 & 6.2 \\ 
        6 & 0.30 & ~ & 0.532 & -0.009 & -0.009 & ~ & 0.533 & 0.034 & 0.034 & ~ & 100.0 & 4.4 & 4.3 & ~ & 0.30 & ~ & 0.332 & -0.006 & -0.006 & ~ & 0.333 & 0.019 & 0.019 & ~ & 100.0 & 4.7 & 4.6 \\ 
        8 & 0.00 & ~ & 0.533 & -0.011 & -0.011 & ~ & 0.533 & 0.029 & 0.029 & ~ & 100.0 & 6.6 & 6.5 & ~ & 0.00 & ~ & 0.333 & -0.007 & -0.007 & ~ & 0.333 & 0.016 & 0.016 & ~ & 100.0 & 6.0 & 5.9 \\ 
        10 & 0.00 & ~ & 0.532 & -0.010 & -0.010 & ~ & 0.533 & 0.024 & 0.024 & ~ & 100.0 & 5.9 & 5.9 & ~ & 0.00 & ~ & 0.333 & -0.006 & -0.006 & ~ & 0.333 & 0.014 & 0.014 & ~ & 100.0 & 6.0 & 6.0 \\ 
        15 & 0.00 & ~ & 0.532 & -0.010 & -0.010 & ~ & 0.533 & 0.020 & 0.020 & ~ & 100.0 & 6.2 & 6.2 & ~ & 0.00 & ~ & 0.333 & -0.006 & -0.006 & ~ & 0.333 & 0.012 & 0.012 & ~ & 100.0 & 6.1 & 6.1 \\ 
 \hline
\hline
\end{tabular}}
\end{center}
\vspace{-3mm}
\begin{spacing}{1}
{\footnotesize
Notes: 
(i) The data generating process is given by $y_{it}=\alpha_{i} + \beta_{i1} x_{1,it} + \sigma_{it}e_{it}$ with random heteroskedasticity, different degrees of correlated heterogeneity, $\psi_{\beta_{1}}$ (defined by (\ref{eta_i}) in the main paper), and levels of overall fit, $PR^{2}$ (defined by (\ref{PR2})). For further details see Section \ref{DGP} in the main paper and Section \ref{secMCDGP}. 
(ii) For FE, MG and TMG estimators, see footnotes to Figure \ref{fig:fe_tmg_k2_base_2}. 
$\hat{\pi}$ is the simulated fraction of individual estimates being trimmed, defined by (\ref{pin}) in the main paper. 
}
\end{spacing}
\end{sidewaystable}

\clearpage \newpage 
\begin{sidewaystable}
\caption{Bias, RMSE and size of FE, MG and TMG estimators of $\protect%
\beta_{01}$ $(E(\beta_{i1})=\beta_{01}=1)$ in panel data models with one regressor, without time effects, but with correlated heterogeneity and interactive effects in the $x_{1,it}$ process} 
\label{tab:T_d1_c23_chi2_tex2}
\vspace{-6mm}
\begin{center}
\scalebox{0.7}{
\begin{tabular}{rrcrrrrrrrrrrrrrrrrrrrrrrrrr}
 \hline\hline &  \multicolumn{13}{c}{ Medium correlated heterogeneity: $\psi_{\beta_{1}} = 0.5 $ } &  & \multicolumn{13}{c}{ High correlated heterogeneity: $\psi_{\beta_{1}}=0.8$}  \\ \cline{2-14} \cline{16-28}   
& $\hat{\pi}$  $(\times 100)$  & & \multicolumn{3}{c}{Bias}  & & \multicolumn{3}{c}{RMSE} && \multicolumn{3}{c}{Size $(\times 100)$} && $\hat{\pi}$  $(\times 100)$ & & \multicolumn{3}{c}{Bias}  & & \multicolumn{3}{c}{RMSE} && \multicolumn{3}{c}{Size $(\times 100)$} \\ \cline{2-2} \cline{4-6} \cline{8-10} \cline{12-14} \cline{16-16} \cline{18-20} \cline{22-24} \cline{26-28}  
 $T$ & TMG & & FE & MG & TMG && FE & MG & TMG && FE & MG & TMG &&  TMG & & FE & MG & TMG && FE & MG & TMG && FE & MG & TMG    \\ \hline 
   \multicolumn{27}{c}{$n=1,000$} \\ \hline
   2 & 26.4 &  & 0.293 & -1.168 & 0.020 &  & 0.34 & 60.84 & 0.27 &  & 39.1 & 1.8 & 4.8 &  & 26.4 &  & 0.470 & -1.350 & 0.030 &  & 0.52 & 70.27 & 0.31 &  & 58.0 & 1.9 & 4.9 \\
3 & 11.4 &  & 0.314 & -0.008 & 0.003 &  & 0.34 & 0.40 & 0.17 &  & 64.0 & 4.0 & 4.4 &  & 11.4 &  & 0.502 & -0.011 & 0.005 &  & 0.53 & 0.46 & 0.20 &  & 85.0 & 3.9 & 4.2 \\
4 & 5.4 &  & 0.306 & -0.004 & 0.002 &  & 0.32 & 0.15 & 0.12 &  & 80.7 & 4.6 & 4.9 &  & 5.4 &  & 0.488 & -0.006 & 0.002 &  & 0.50 & 0.17 & 0.14 &  & 95.1 & 4.2 & 4.8 \\
5 & 2.9 &  & 0.314 & -0.006 & -0.003 &  & 0.33 & 0.11 & 0.10 &  & 89.7 & 4.2 & 4.1 &  & 2.9 &  & 0.504 & -0.008 & -0.003 &  & 0.52 & 0.13 & 0.12 &  & 98.7 & 3.6 & 3.6 \\
6 & 1.5 &  & 0.299 & -0.005 & -0.004 &  & 0.31 & 0.09 & 0.08 &  & 93.4 & 3.5 & 3.5 &  & 1.5 &  & 0.479 & -0.008 & -0.005 &  & 0.49 & 0.10 & 0.10 &  & 99.2 & 3.4 & 3.4 \\
8 & 0.6 &  & 0.242 & -0.003 & -0.003 &  & 0.25 & 0.06 & 0.06 &  & 94.9 & 3.2 & 3.3 &  & 0.6 &  & 0.387 & -0.006 & -0.006 &  & 0.39 & 0.07 & 0.07 &  & 99.6 & 3.0 & 3.0 \\
  &&&&&&&&&&&&&&&&&&&&&&&&&&& \\
   \multicolumn{27}{c}{$n=2,000$} \\ \hline
2 & 23.7 &  & 0.272 & -2.705 & 0.007 &  & 0.29 & 157.41 & 0.20 &  & 65.0 & 2.1 & 4.7 &  & 23.7 &  & 0.435 & -3.130 & 0.008 &  & 0.46 & 181.79 & 0.23 &  & 86.6 & 2.1 & 4.8 \\
3 & 9.2 &  & 0.292 & -0.017 & -0.011 &  & 0.30 & 0.29 & 0.13 &  & 92.0 & 5.1 & 5.6 &  & 9.2 &  & 0.467 & -0.026 & -0.017 &  & 0.48 & 0.33 & 0.15 &  & 99.4 & 5.3 & 5.3 \\
4 & 4.0 &  & 0.282 & -0.020 & -0.017 &  & 0.29 & 0.11 & 0.09 &  & 98.4 & 4.0 & 4.2 &  & 4.0 &  & 0.452 & -0.030 & -0.025 &  & 0.46 & 0.13 & 0.11 &  & 100.0 & 4.2 & 4.4 \\
5 & 1.9 &  & 0.293 & -0.017 & -0.015 &  & 0.30 & 0.08 & 0.08 &  & 99.8 & 5.6 & 5.3 &  & 1.9 &  & 0.469 & -0.027 & -0.023 &  & 0.47 & 0.10 & 0.09 &  & 100.0 & 5.8 & 5.7 \\
6 & 0.9 &  & 0.279 & -0.016 & -0.015 &  & 0.28 & 0.07 & 0.06 &  & 99.9 & 5.9 & 5.9 &  & 0.9 &  & 0.446 & -0.025 & -0.024 &  & 0.45 & 0.08 & 0.08 &  & 100.0 & 6.2 & 5.9 \\
8 & 0.3 &  & 0.221 & -0.016 & -0.016 &  & 0.22 & 0.05 & 0.05 &  & 100.0 & 5.1 & 5.0 &  & 0.3 &  & 0.354 & -0.025 & -0.024 &  & 0.36 & 0.05 & 0.05 &  & 100.0 & 5.1 & 5.0 \\
  &&&&&&&&&&&&&&&&&&&&&&&&&&& \\
   \multicolumn{27}{c}{$n=5,000$} \\ \hline
2 & 20.4 &  & 0.266 & -2.358 & 0.000 &  & 0.28 & 59.56 & 0.14 &  & 96.0 & 2.0 & 4.7 &  & 20.4 &  & 0.426 & -2.728 & 0.001 &  & 0.43 & 68.78 & 0.16 &  & 100.0 & 2.0 & 4.7 \\
3 & 6.9 &  & 0.282 & -0.008 & -0.006 &  & 0.29 & 0.21 & 0.08 &  & 100.0 & 4.1 & 4.7 &  & 6.9 &  & 0.452 & -0.015 & -0.010 &  & 0.46 & 0.24 & 0.10 &  & 100.0 & 4.2 & 4.3 \\
4 & 2.6 &  & 0.277 & -0.012 & -0.010 &  & 0.28 & 0.07 & 0.06 &  & 100.0 & 5.1 & 5.4 &  & 2.6 &  & 0.443 & -0.020 & -0.016 &  & 0.45 & 0.08 & 0.07 &  & 100.0 & 5.3 & 5.3 \\
5 & 1.1 &  & 0.287 & -0.010 & -0.009 &  & 0.29 & 0.05 & 0.05 &  & 100.0 & 4.6 & 4.3 &  & 1.1 &  & 0.458 & -0.017 & -0.016 &  & 0.46 & 0.06 & 0.06 &  & 100.0 & 4.8 & 4.4 \\
6 & 0.4 &  & 0.274 & -0.010 & -0.009 &  & 0.28 & 0.04 & 0.04 &  & 100.0 & 4.7 & 4.8 &  & 0.4 &  & 0.437 & -0.017 & -0.016 &  & 0.44 & 0.05 & 0.05 &  & 100.0 & 5.4 & 5.2 \\
8 & 0.1 &  & 0.216 & -0.011 & -0.011 &  & 0.22 & 0.03 & 0.03 &  & 100.0 & 7.0 & 7.2 &  & 0.1 &  & 0.346 & -0.018 & -0.018 &  & 0.35 & 0.04 & 0.04 &  & 100.0 & 7.8 & 7.6 \\
  &&&&&&&&&&&&&&&&&&&&&&&&&&& \\
   \multicolumn{27}{c}{$n=10,000$} \\ \hline
2 & 18.2 &  & 0.280 & 2.151 & 0.008 &  & 0.28 & 223.49 & 0.10 &  & 100.0 & 2.9 & 4.5 &  & 18.2 &  & 0.448 & 2.650 & 0.012 &  & 0.45 & 262.35 & 0.12 &  & 100.0 & 2.9 & 4.6 \\
3 & 5.6 &  & 0.296 & -0.005 & -0.001 &  & 0.30 & 0.15 & 0.06 &  & 100.0 & 3.8 & 4.7 &  & 5.6 &  & 0.475 & -0.009 & -0.002 &  & 0.48 & 0.17 & 0.07 &  & 100.0 & 3.8 & 4.7 \\
4 & 1.9 &  & 0.289 & -0.006 & -0.004 &  & 0.29 & 0.05 & 0.04 &  & 100.0 & 4.8 & 4.6 &  & 1.9 &  & 0.463 & -0.009 & -0.007 &  & 0.46 & 0.06 & 0.05 &  & 100.0 & 4.9 & 4.6 \\
5 & 0.7 &  & 0.298 & -0.006 & -0.006 &  & 0.30 & 0.04 & 0.03 &  & 100.0 & 5.0 & 5.6 &  & 0.7 &  & 0.477 & -0.010 & -0.009 &  & 0.48 & 0.04 & 0.04 &  & 100.0 & 5.1 & 5.4 \\
6 & 0.3 &  & 0.285 & -0.006 & -0.006 &  & 0.29 & 0.03 & 0.03 &  & 100.0 & 5.1 & 5.4 &  & 0.3 &  & 0.456 & -0.010 & -0.009 &  & 0.46 & 0.03 & 0.03 &  & 100.0 & 5.4 & 5.2 \\
8 & 0.1 &  & 0.227 & -0.008 & -0.007 &  & 0.23 & 0.02 & 0.02 &  & 100.0 & 5.7 & 5.6 &  & 0.1 &  & 0.363 & -0.012 & -0.011 &  & 0.36 & 0.03 & 0.02 &  & 100.0 & 6.0 & 5.8\\ 
 \hline
\hline
\end{tabular}}
\end{center}
\vspace{-3mm}
\begin{spacing}{1}
{\footnotesize
Notes:
(i) The data generating process is given by $y_{it}=\alpha_{i} + \beta_{i1} x_{1,it} + \sigma_{it}e_{it}$ with random heteroskedasticity, $x_{1,it}$ are generated with interactive effects, and $\psi_{\beta_{1}}$ is defined by (\ref{eta_i}) in the main paper. 
The error processes for $y_{it}$ and $x_{1,it}$ equations are chi-squared and Gaussian, respectively. 
For further details see Section \ref{DGP} in the main paper and Section \ref{secMCDGP}. 
(ii) For FE, MG and TMG estimators, see footnotes to Figure \ref{fig:fe_tmg_k2_base_2}. 
$\hat{\pi}$ is the simulated fraction of individual estimates being trimmed, defined by (\ref{pin}) in the main paper. 
}
\end{spacing}
\end{sidewaystable}

\clearpage \newpage

\subsubsection{Comparing the TMG estimator with different threshold exponents%
}

\label{MChatalpha}

We also carried out MC experiments to evaluate the finite-sample performance
of the TMG estimator under alternative choices of the threshold exponent, $%
\alpha $. Specifically, we consider $\alpha =1/3$ as well as $\hat{\alpha}%
=1/(1+2\hat{\alpha}_{p})+0.01$, where $\hat{\alpha}_{p}$ is the tail index
of the distribution of $1/d_{i}$, where $d_{i}=\func{det}(\boldsymbol{X}%
_{i}^{\prime }\boldsymbol{M}_{T}\boldsymbol{X}_{i})$, using \cite{Hill1975}%
's estimator given by (\ref{aphill}) in the main paper, with the cut-off
value $n^{1/3}$.

Table \ref{tab:tmg_thresh_d1_c2_chi2_tex0_b} reports the trimmed fraction,
bias, RMSE, and empirical size for the two choices of the threshold
exponents. The empirical power functions for $T=2,3,4$ are shown in Figure %
\ref{fig:tmg_alpha_hat}. When $T=k=2$, setting $\alpha =1/3$ yields
substantially larger trimmed fractions and outperforms the one based on $%
\hat{\alpha}$, yielding smaller bias and RMSE and better empirical power.
When $T>k$, the differences in RMSE and empirical power between the two
choices are generally modest. The trimmed fraction with $\alpha =1/3$
declines toward zero as $T$ increases, while using $\hat{\alpha}$, it
increases with $T$. Since the second-order moments of the MG estimator exist
when $T$ is sufficiently larger than $k$, continued trimming in this case
may not be desirable.

Taken together, these results suggest that the gains from setting $\alpha $
based on estimated $\alpha _{p}$ are limited in finite samples, while
estimation of $\alpha $ by $\hat{\alpha}=1/(1+2\hat{\alpha}_{p})+0.01$
introduces additional estimation errors. Given the uncertainty associated
with the estimates of $\alpha _{p}$, in practice it is reasonable to set $%
\alpha _{p}=1$ and adopt the simple threshold function $\boldsymbol{1}%
\{d_{i}/\bar{d}_{n}>n^{-1/3}\}$, which performs comparably while being
straightforward to implement.

\begin{table}[ht]
\caption{Bias, RMSE and size of the TMG estimator of $\protect\beta_{01}$ $%
(E(\protect\beta_{i1})=\protect\beta_{01}=1)$ for the threshold exponents $%
\protect\alpha =1/3$ and $\protect\alpha=\hat{\protect\alpha}$ in the
baseline DGP with one regressor, without time effects, but with correlated
heterogeneity, $\protect\psi _{\protect\beta_{1}}=0.5$}
\label{tab:tmg_thresh_d1_c2_chi2_tex0_b}\vspace{-6mm}
\par
\begin{center}
\scalebox{0.75}{
\begin{tabular}{rrrlrrlcclcc}
\hline\hline
 & \multicolumn{2}{c}{$\hat{\pi} (\times 100)$} &  & \multicolumn{2}{c}{Bias} &  & \multicolumn{2}{c}{RMSE} &  & \multicolumn{2}{c}{Size $(\times 100)$} \\ \cline{2-3} \cline{5-6} \cline{8-9} \cline{11-12}
$T$/$\alpha$ & $\hat{\alpha}$ & 1/3 &  & $\hat{\alpha}$ & 1/3 &  & $\hat{\alpha}$ & 1/3 &  & $\hat{\alpha}$ & 1/3 \\ \hline
 & \multicolumn{11}{c}{$n=1,000$}  \\ \hline
2 & 17.70 & 27.30 &  & 0.003 & 0.012 &  & 0.345 & 0.268 &  & 5.0 & 5.1 \\
3 & 14.30 & 12.00 &  & 0.009 & 0.006 &  & 0.163 & 0.165 &  & 5.1 & 5.2 \\
4 & 15.10 & 5.90 &  & 0.006 & -0.002 &  & 0.112 & 0.122 &  & 4.9 & 5.1 \\
5 & 16.20 & 3.10 &  & 0.010 & -0.002 &  & 0.093 & 0.102 &  & 5.0 & 5.2 \\
6 & 17.30 & 1.70 &  & 0.012 & -0.002 &  & 0.081 & 0.088 &  & 4.9 & 4.8 \\
8 & 19.50 & 0.60 &  & 0.014 & -0.003 &  & 0.065 & 0.068 &  & 4.6 & 4.3 \\
10 & 22.00 & 0.20 &  & 0.016 & -0.002 &  & 0.058 & 0.059 &  & 4.4 & 4.2 \\
15 & 26.90 & 0.00 &  & 0.019 & -0.003 &  & 0.049 & 0.047 &  & 5.2 & 3.8 \\
[2mm]
 & \multicolumn{11}{c}{$n=2,000$}  \\ \hline
2 & 14.60 & 24.50 &  & -0.008 & 0.004 &  & 0.272 & 0.202 &  & 5.0 & 5.2 \\
3 & 11.00 & 9.60 &  & -0.010 & -0.011 &  & 0.120 & 0.122 &  & 4.9 & 4.9 \\
4 & 11.90 & 4.30 &  & -0.005 & -0.012 &  & 0.083 & 0.091 &  & 4.7 & 5.4 \\
5 & 13.00 & 2.00 &  & -0.003 & -0.012 &  & 0.065 & 0.072 &  & 4.0 & 4.4 \\
6 & 13.70 & 1.00 &  & -0.004 & -0.014 &  & 0.059 & 0.065 &  & 4.4 & 5.6 \\
8 & 16.60 & 0.30 &  & 0.000 & -0.013 &  & 0.046 & 0.051 &  & 4.3 & 5.1 \\
10 & 18.80 & 0.10 &  & -0.001 & -0.016 &  & 0.041 & 0.046 &  & 4.4 & 6.2 \\
15 & 23.50 & 0.00 &  & 0.001 & -0.016 &  & 0.033 & 0.037 &  & 4.4 & 6.5 \\
[2mm]
 & \multicolumn{11}{c}{$n=5,000$}  \\ \hline
2 & 11.40 & 21.10 &  & -0.004 & 0.002 &  & 0.193 & 0.139 &  & 4.6 & 5.2 \\
3 & 8.40 & 7.20 &  & -0.004 & -0.005 &  & 0.081 & 0.082 &  & 5.2 & 5.3 \\
4 & 8.40 & 2.80 &  & -0.002 & -0.006 &  & 0.053 & 0.057 &  & 4.3 & 4.0 \\
5 & 9.30 & 1.20 &  & -0.005 & -0.012 &  & 0.045 & 0.050 &  & 5.1 & 5.9 \\
6 & 10.40 & 0.50 &  & -0.004 & -0.012 &  & 0.039 & 0.043 &  & 5.1 & 5.3 \\
8 & 12.80 & 0.10 &  & -0.002 & -0.013 &  & 0.031 & 0.035 &  & 4.7 & 6.6 \\
10 & 14.70 & 0.00 &  & -0.001 & -0.013 &  & 0.027 & 0.030 &  & 4.6 & 6.6 \\
15 & 20.00 & 0.00 &  & 0.003 & -0.011 &  & 0.022 & 0.024 &  & 4.5 & 6.0 \\
[2mm]
 & \multicolumn{11}{c}{$n=10,000$}  \\ \hline
2 & 9.40 & 18.90 &  & -0.004 & 0.005 &  & 0.152 & 0.105 &  & 4.8 & 4.8 \\
3 & 6.50 & 5.80 &  & -0.002 & -0.002 &  & 0.061 & 0.060 &  & 5.1 & 4.9 \\
4 & 6.60 & 2.00 &  & 0.000 & -0.004 &  & 0.041 & 0.044 &  & 5.2 & 5.6 \\
5 & 7.30 & 0.80 &  & -0.002 & -0.008 &  & 0.032 & 0.036 &  & 5.0 & 5.8 \\
6 & 8.40 & 0.30 &  & 0.001 & -0.005 &  & 0.026 & 0.029 &  & 4.2 & 4.4 \\
8 & 10.60 & 0.00 &  & 0.002 & -0.007 &  & 0.022 & 0.024 &  & 5.1 & 5.9 \\
10 & 12.60 & 0.00 &  & 0.004 & -0.006 &  & 0.020 & 0.020 &  & 5.4 & 5.8 \\
15 & 17.70 & 0.00 &  & 0.006 & -0.006 &  & 0.017 & 0.016 &  & 7.6 & 5.4 \\
\hline\hline
\end{tabular}}
\end{center}
\par
\vspace{-2mm} 
\begin{spacing}{1}
{\footnotesize
Notes: We set $\hat{\alpha} = \frac{1}{1+2\hat{\alpha}_{p}}+0.01$, where $\hat{\alpha}_{p}$ is computed by Hill's estimation procedure with the cut-off value $n^{1/3}$. See Table \ref{tab:ap_mc_2} for the estimates of $\hat{\alpha}_{p}$ with $k^{\prime}$=1. For details of the baseline DGP and the TMG estimator, see footnotes to Figure \ref{fig:fe_tmg_k2_base_2}.
$\hat{\pi}$ is the simulated fraction of individual estimates being trimmed, defined by (\ref{pin}) in the main paper. 
}
\end{spacing}
\end{table}

\newpage \clearpage
\begin{figure}[h]
\caption{Empirical power functions for the TMG estimator of $\protect\beta%
_{01}$ $(E(\protect\beta_{i1})=\protect\beta_{01}=1)$ for the threshold
exponents $\protect\alpha =1/3$ and $\protect\alpha=\hat{\protect\alpha}$ in
the baseline DGP with one regressor, without time effects, but with
correlated heterogeneity, $\protect\psi _{\protect\beta_{1}}=0.5$}
\label{fig:tmg_alpha_hat}\vspace{-5mm}
\par
\begin{center}
\includegraphics[scale=0.19]{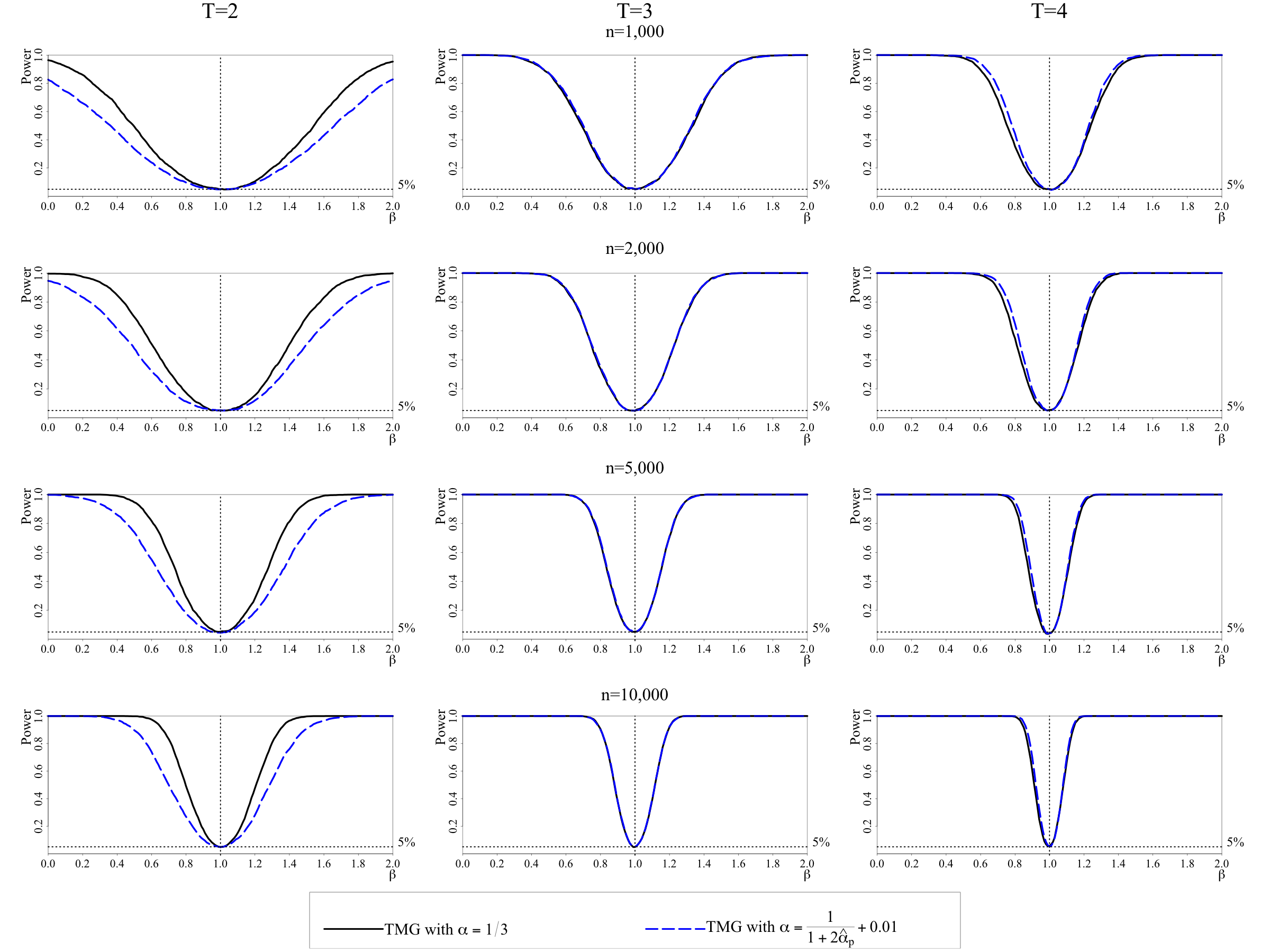}
\end{center}
\par
\vspace{-2mm} 
\begin{spacing}{1}
{\footnotesize 
Notes: See footnotes to Table \ref{tab:tmg_thresh_d1_c2_chi2_tex0_b}. 
}
\end{spacing}
\end{figure}

\clearpage

\subsubsection{Sensitivity of TMG and GP estimators to the choice of the
threshold parameters \label{MCthresh}}

Here we investigate the sensitivity of TMG and GP estimators to the choice
of threshold exponents with $k^{\prime }=1$. Recall that our threshold
function is $\boldsymbol{1}\{d_{i}>a_{n}\}$ with $a_{n}=C_{n}n^{-\alpha }$,
whereas the threshold function used by GP is $\boldsymbol{1}\{|\det (%
\boldsymbol{W}_{i})|>h_{n}\}$ with $h_{n}=C_{GP}n^{-\alpha _{GP}}$. When $%
T=k=2$, GP's threshold function can be written equivalently as $\boldsymbol{1%
}\{d_{i}>h_{n}^{2}/2\}$. Consequently, our choice of the threshold value
differs from that of GP in both the scaling constant and the exponent of the
threshold value.

For the TMG estimator, in addition to the baseline value of $\alpha =1/3$,
we consider $\alpha =0.35$ and $1/2$. For the GP estimator the baseline
value of $\alpha _{GP}$ is set to $1/3$, as recommended by GP.\footnote{%
See equation (\ref{gpe}) and the discussion for the implementation of the GP
estimator in sub-section \ref{Threshold} in the main paper.} But for the
purpose of comparisons with the TMG estimator (computed for $\alpha
=1/3,0.35 $ and $1/2)$, we also consider $\alpha _{GP}=0.35/2$ and $1/4$.
Recall that the bandwidth, $h_{n}^{2}/2$, used by GP corresponds to $a_{n}$
used in the specification of TMG. Hence, $2\alpha _{GP}$ corresponds to $%
\alpha $. For comparability, we decided to consider values of $\alpha _{GP}$
below $1/3$ required by GP's theory. This allows us to compare GP and TMG
estimators focusing on the utility of including both trimmed and untrimmed
estimates of $\boldsymbol{\beta }_{i}$ in estimation of $\boldsymbol{\beta }%
_{0}$.

As can be seen from Table \ref{tab:thresh_d1_c2_chi2_tex0_b} there is a
clear trade-off between bias and variance as $\alpha $ and $\alpha _{GP}$
are increased. For $T=2$, the TMG estimator is biased when $\alpha =1/3$ (as
predicted by the theory) but has a lower variance, with its RMSE rising 
as $\alpha $ is increased from $1/3$ to $1/2$. This trade-off is less
consequential when $T$ is increased to $T=3$. The same is also true for the
GP estimator. But for all choices of $\alpha $ and $\alpha _{GP}$, the TMG
performs better in terms of RMSE when $T=2$. For $T=3$, TMG and GP
estimators share the same threshold exponent by setting $\alpha _{GP}=\alpha
/2$, resulting in identical trimmed fractions for $\alpha =2\alpha _{GP}\in
\{0.35,1/2\}$. While RMSEs are similar, the GP estimator exhibits
significantly higher bias than the TMG estimator as data of the trimmed
units are not exploited by the GP estimator.

Figure \ref{fig:tmg_gp_alpha_1} compares power functions of TMG and GP
estimators with $\alpha =2\alpha _{GP}=0.35$. For $T=2$, a higher trimmed
fraction results in a steeper power function for the TMG estimator as
compared to that of the GP estimator. When $T=3$, with the same trimmed
fraction, the power function of the GP estimator shifts to the right, away
from the true value. The substantial differences in the power performance of
the TMG estimator with $\alpha =1/3$ and the GP estimator with $\alpha
_{GP}=1/3$ are also illustrated in Figure \ref{fig:tmg_gp_alpha_2}.

The power comparisons of TMG and GP estimators for different values of $%
\alpha $ and $\alpha _{GP}$ are given in Figures \ref{fig:tmg_alpha} and \ref%
{fig:gp_alpha}, respectively, and convey the same message, suggesting that
for the TMG estimator, the value of $\alpha =1/3$ tends to produce
the best bias-variance trade-off. Increasing $\alpha $ reduces the bias but
increases the variance, and the value $\alpha =1/3$ seems
to strike a sensible balance and is recommended.

\begin{table}[ht]
\caption{Bias, RMSE and size of TMG and GP estimators of $\protect\beta_{01}$
$(E(\protect\beta_{i1})=\protect\beta_{01}=1)$ for different threshold
exponents, $\protect\alpha $ and $\protect\alpha _{GP}$, in the baseline DGP
with one regressor, without time effects, but with correlated heterogeneity, 
$\protect\psi _{\protect\beta_{1}}=0.5$}
\label{tab:thresh_d1_c2_chi2_tex0_b}\vspace{-6mm}
\par
\begin{center}
\scalebox{0.75}{
\begin{tabular}{lcrrccrrrcc}
\hline\hline & & \multicolumn{4}{c}{$T=2$} &  & \multicolumn{4}{c}{$T=3$}   \\  \cline{3-6} \cline{8-11}
& & \multicolumn{1}{c}{$\hat{\pi}$ } & & & Size & & \multicolumn{1}{c}{$\hat{\pi}$ } & & & Size\\ 
Estimator &  $\alpha$/$\alpha_{GP}$  & $(\times 100)$ & Bias & RMSE & $(\times 100)$ & & $(\times 100)$ & Bias& RMSE & $(\times 100)$ \\ \hline   
&& \multicolumn{9}{c}{$n=1,000$}  \\ \hline  
TMG & 1/3 & 27.3 & 0.012 & 0.27 & 5.1 &  & 12.0 & 0.006 & 0.17 & 5.2 \\
TMG & 0.35 & 25.9 & 0.011 & 0.28 & 5.1 &  & 10.8 & 0.006 & 0.17 & 5.3 \\
TMG & 0.50 & 15.6 & 0.001 & 0.35 & 4.6 &  & 4.0 & 0.000 & 0.19 & 5.2 \\
GP & 0.35/2& 12.0 & 0.006 & 0.36 & 5.3 &  & 10.8 & 0.013 & 0.16 & 5.8 \\
GP & 0.25 & 7.2 & -0.003 & 0.45 & 4.6 &  & 4.0 & 0.003 & 0.19 & 5.7 \\
GP & 1/3 & 4.0 & -0.004 & 0.60 & 5.1 &  & 1.3 & -0.003 & 0.21 & 5.0 \\
   & & \multicolumn{9}{c}{$n=2,000$}   \\ \hline  
TMG & 1/3 & 24.5 & 0.004 & 0.20 & 5.2 &  & 9.6 & -0.011 & 0.12 & 4.9 \\
TMG & 0.35 & 23.0 & 0.002 & 0.21 & 5.0 &  & 8.5 & -0.012 & 0.12 & 5.0 \\
TMG & 0.50 & 13.1 & -0.009 & 0.27 & 5.3 &  & 2.8 & -0.017 & 0.15 & 5.9 \\
GP & 0.35/2 & 10.7 & -0.006 & 0.27 & 4.9 &  & 8.5 & -0.005 & 0.12 & 5.1 \\
GP & 0.25 & 6.0 & -0.015 & 0.35 & 4.4 &  & 2.8 & -0.013 & 0.14 & 5.1 \\
GP & 1/3 & 3.2 & -0.008 & 0.47 & 4.9 &  & 0.8 & -0.018 & 0.16 & 5.2 \\
   & & \multicolumn{9}{c}{$n=5,000$}   \\ \hline  
TMG & 1/3 & 21.1 & 0.002 & 0.14 & 5.2 &  & 7.2 & -0.005 & 0.08 & 5.3 \\
TMG & 0.35 & 19.7 & 0.001 & 0.14 & 5.2 &  & 6.3 & -0.006 & 0.08 & 5.1 \\
TMG & 0.50 & 10.5 & -0.007 & 0.20 & 4.6 &  & 1.8 & -0.009 & 0.10 & 5.0 \\
GP & 0.35/2 & 9.1 & -0.006 & 0.19 & 5.8 &  & 6.3 & 0.000 & 0.08 & 5.2 \\
GP & 0.25 & 4.8 & -0.009 & 0.25 & 4.5 &  & 1.8 & -0.008 & 0.09 & 5.3 \\
GP & 1/3 & 2.4 & -0.012 & 0.36 & 4.4 &  & 0.4 & -0.010 & 0.11 & 5.1 \\
   & & \multicolumn{9}{c}{$n=10,000$}  \\ \hline  
TMG & 1/3 & 18.9 & 0.005 & 0.11 & 4.8 &  & 5.8 & -0.002 & 0.06 & 4.9 \\
TMG & 0.35 & 17.6 & 0.003 & 0.11 & 5.2 &  & 5.0 & -0.003 & 0.06 & 4.9 \\
TMG & 0.50 & 8.9 & -0.004 & 0.15 & 4.9 &  & 1.3 & -0.007 & 0.07 & 5.0 \\
GP & 0.35/2 & 8.1 & -0.001 & 0.14 & 4.8 &  & 5.0 & 0.002 & 0.06 & 5.2 \\
GP & 0.25 & 4.1 & -0.007 & 0.19 & 5.1 &  & 1.3 & -0.004 & 0.07 & 4.8 \\
GP & 1/3 & 1.9 & -0.013 & 0.28 & 4.7 &  & 0.3 & -0.008 & 0.08 & 5.1\\
   \hline
\hline
\end{tabular}}
\end{center}
\par
\vspace{-2mm} 
\begin{spacing}{1}
{\footnotesize
Notes: 
(i) The GP estimator proposed by \cite{GrahamPowell2012} is given by (\ref{gpe}) in the main paper. For $T=k$, GP compare $d_{i,GP}^{1/2}$ with the bandwidth $h_{n} = C_{GP}n^{-\alpha_{GP}}$, where $d_{i,GP} = \func{det}(\boldsymbol{W}_{i}^{\prime}\boldsymbol{W}_{i})$ with $\boldsymbol{W}_{i} = (\boldsymbol{\tau}_{T}, \boldsymbol{X}_{i})$ and $\boldsymbol{\tau}_{T}$ being a $T \times 1$ vector of ones. $C_{GP}=\frac{1}{2}\min \left( \hat{\sigma}_{D},\hat{r}_{D}/1.34\right)$, where $\hat{\sigma}_{D}$ and $\hat{r}_{D}$ are the respective sample standard deviation and interquartile range of $d_{i,GP}^{1/2}$. For $T>k$, $C_{GP}=\left(n^{-1}\sum_{i=1}^{n}d_{i,GP}\right)^{1/2}$. See sub-section \ref{MCcorr} in the main paper for details. 
(ii) For details of the baseline DGP and the TMG estimator, see footnotes to Figure \ref{fig:fe_tmg_k2_base_2}.
$\hat{\pi}$ is the simulated fraction of individual estimates being trimmed, defined by (\ref{pin}) in the main paper. 
}
\end{spacing}
\end{table}

\newpage \clearpage
\begin{figure}[h]
\caption{Empirical power functions for TMG and GP estimators of $\protect%
\beta_{01}$ $(E(\protect\beta_{i1})=\protect\beta_{01}=1)$ for the threshold
exponents $\protect\alpha =0.35$ and $\protect\alpha _{GP}=0.35/2$ in the
baseline DGP with one regressor, without time effects, but with correlated
heterogeneity, $\protect\psi _{\protect\beta_{1}}=0.5$}
\label{fig:tmg_gp_alpha_1}\vspace{-5mm}
\par
\begin{center}
\includegraphics[scale=0.2]{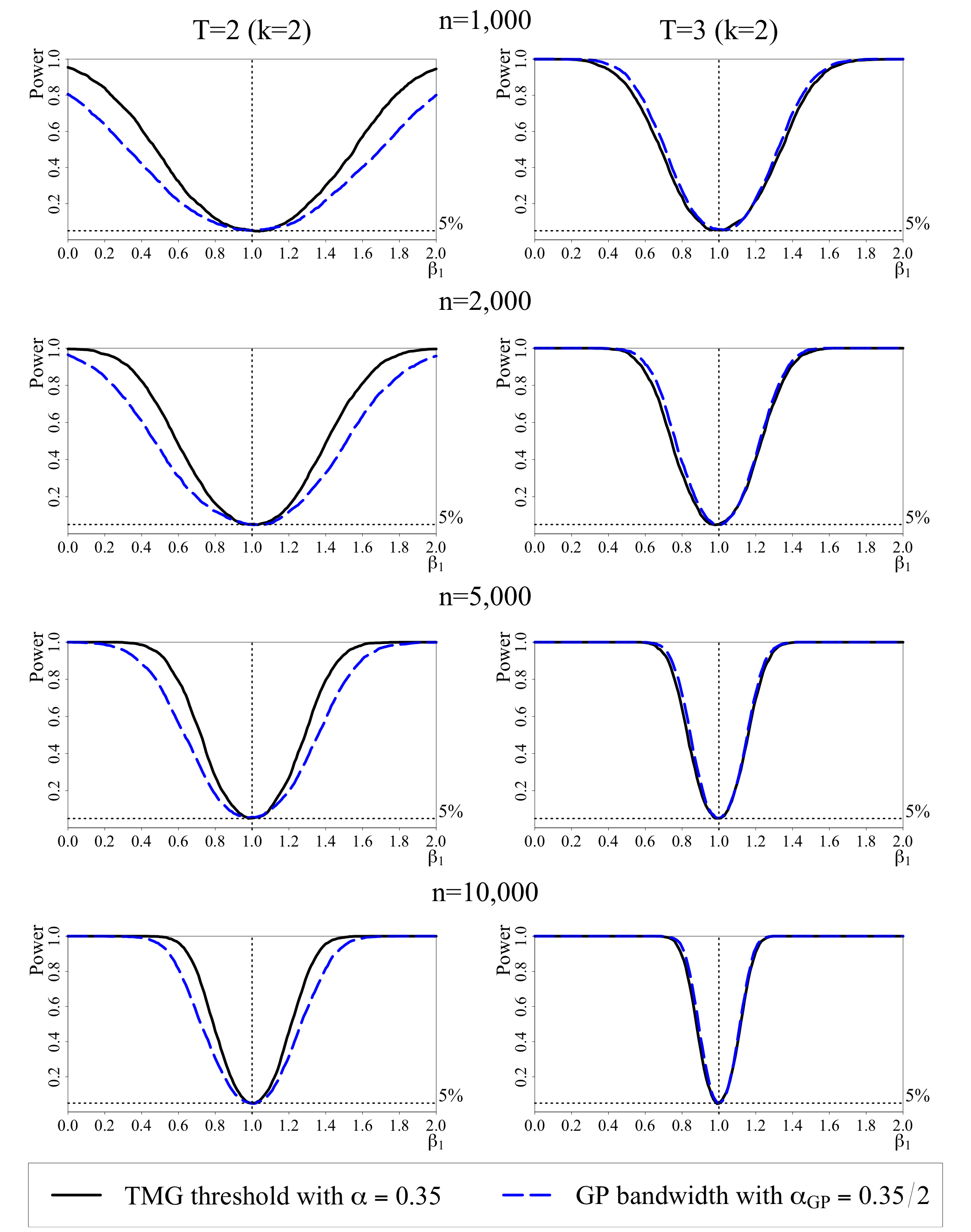}
\end{center}
\par
\vspace{-2mm} 
\begin{spacing}{1}
{\footnotesize 
Notes: See footnotes to Table \ref{tab:thresh_d1_c2_chi2_tex0_b}. 
}
\end{spacing}
\end{figure}

\begin{figure}[h!]
\caption{Empirical power functions for TMG and GP estimators of $\protect%
\beta_{01} $ $(E(\protect\beta_{i1})=\protect\beta_{01}=1)$ for the
threshold exponents $\protect\alpha=1/3$ and $\protect\alpha _{GP}=1/3$ in
the baseline DGP with one regressor, without time effects, but with
correlated heterogeneity, $\protect\psi _{\protect\beta_{1}}=0.5$}
\label{fig:tmg_gp_alpha_2}\vspace{-5mm}
\par
\begin{center}
\includegraphics[scale=0.2]{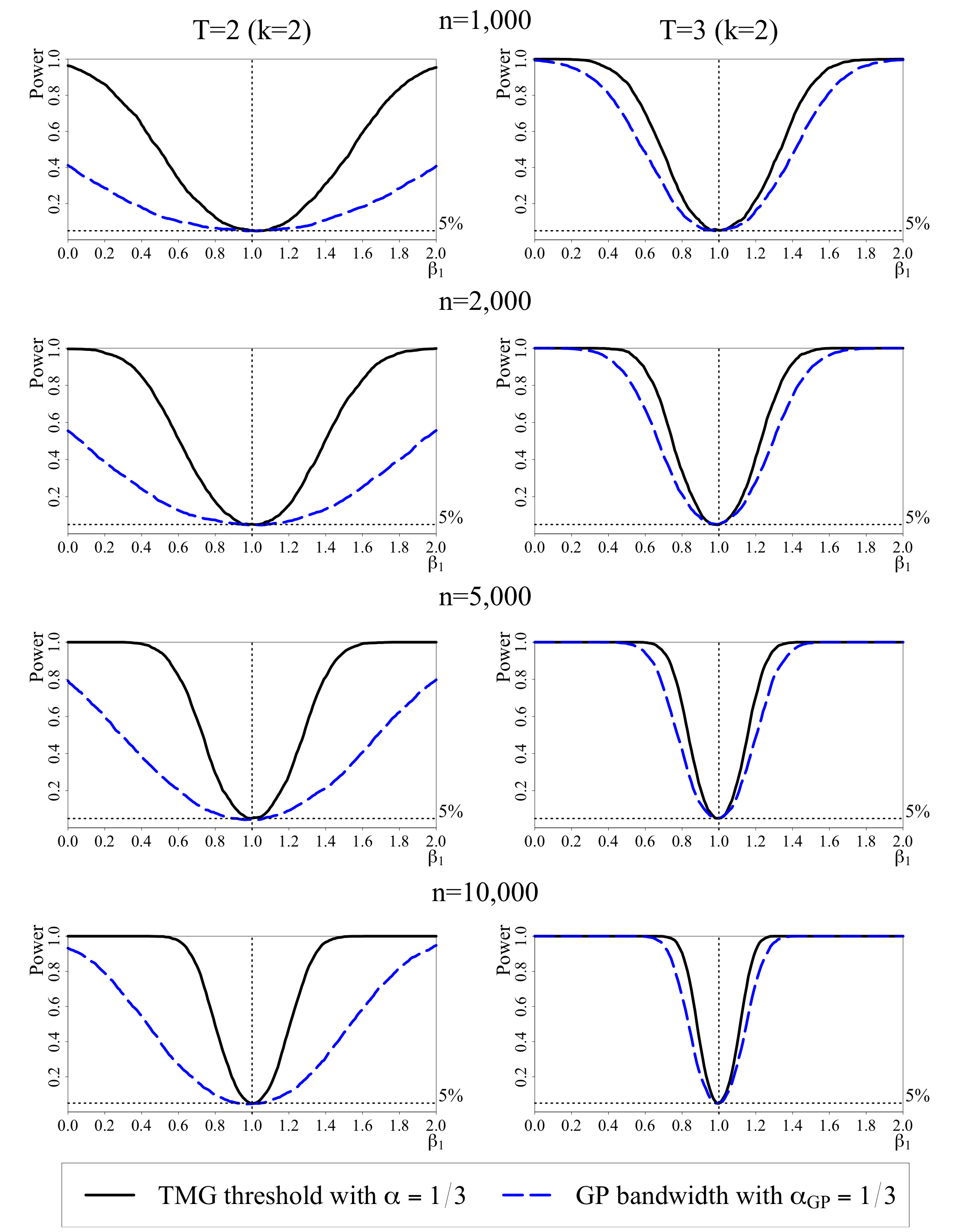}
\end{center}
\par
\vspace{-2mm} 
\begin{spacing}{1}
{\footnotesize 
Notes: See footnotes to Table \ref{tab:thresh_d1_c2_chi2_tex0_b}. 
}
\end{spacing}
\end{figure}

\begin{figure}[h!]
\caption{Empirical power functions for the TMG estimator of $\protect\beta %
_{01}$ $(E(\protect\beta_{i1})=\protect\beta_{01}=1)$ for different
threshold exponents, $\protect\alpha \in \{1/3,0.35,1/2\}$, in the baseline
DGP with one regressor, without time effects, but with correlated
heterogeneity, $\protect\psi _{\protect\beta_{1}}=0.5$}
\label{fig:tmg_alpha}\vspace{-5mm}
\par
\begin{center}
\includegraphics[scale=0.26]{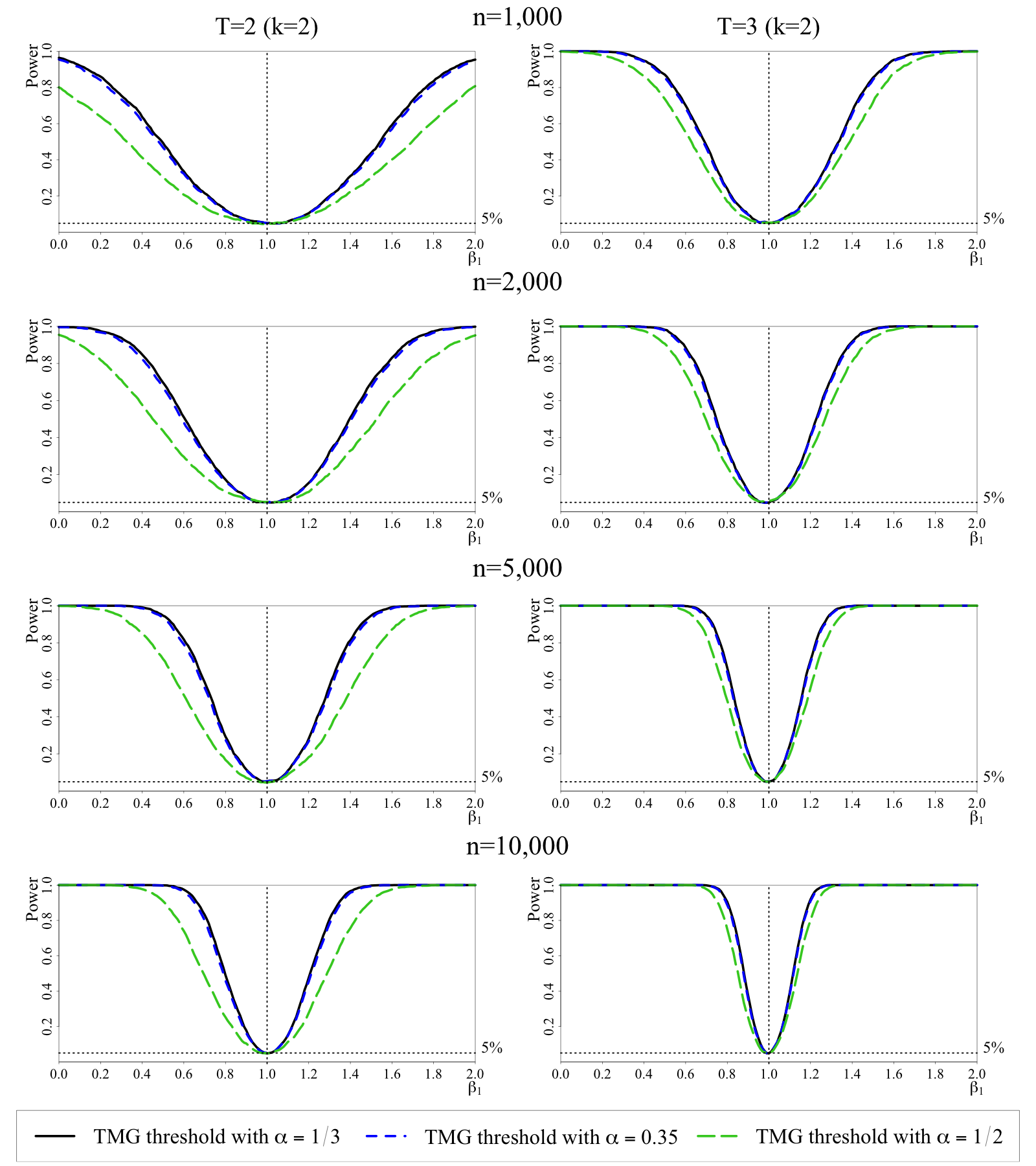}
\end{center}
\par
\vspace{-2mm} 
\begin{spacing}{1}
{\footnotesize 
Notes: See footnotes to Table \ref{tab:thresh_d1_c2_chi2_tex0_b}. 
}
\end{spacing}
\end{figure}

\begin{figure}[h]
\caption{Empirical power functions for the GP estimator of $\protect\beta %
_{01}$ $(E(\protect\beta_{i1})=\protect\beta_{01}=1)$ for different
threshold exponents, $\protect\alpha _{GP}\in \{0.35/2,1/4,1/3\}$ in the
baseline DGP with one regressor, without time effects, but with correlated
heterogeneity, $\protect\psi _{\protect\beta_{1}}=0.5$}
\label{fig:gp_alpha}\vspace{-5mm}
\par
\begin{center}
\includegraphics[scale=0.26]{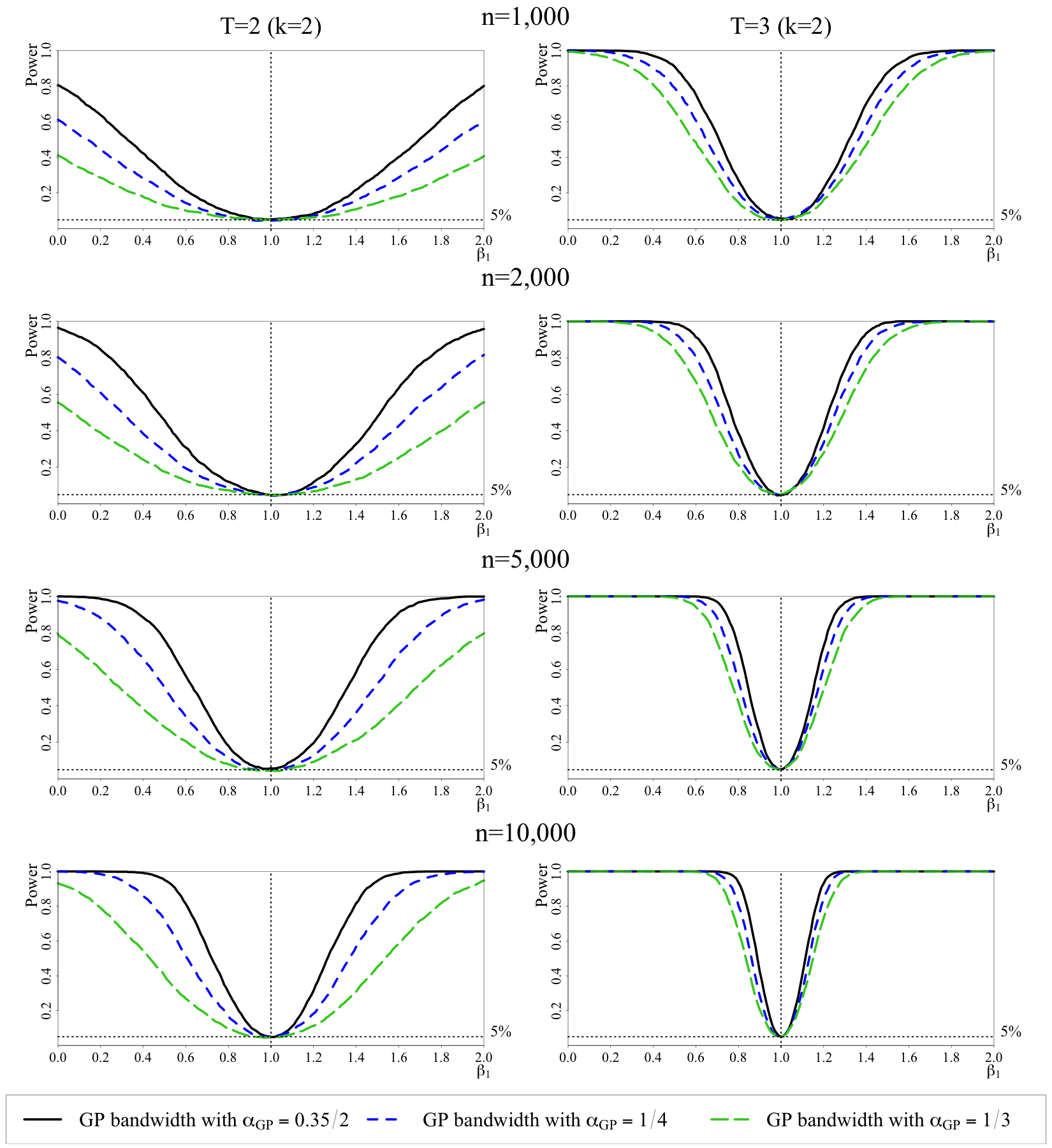}
\end{center}
\par
\vspace{-2mm} 
\begin{spacing}{1}
{\footnotesize 
Notes: See footnotes to Table \ref{tab:thresh_d1_c2_chi2_tex0_b}. 
}
\end{spacing}
\end{figure}

\newpage\clearpage

\subsubsection{Comparing TMG and GP estimators with correlated
heteroskedasticity}

Table \ref{tab:beta_d1_c2_h23_chi2_tex0_b} provides additional MC results on
small sample properties of TMG and GP estimators of $\beta_{01}$ in panel
data models with correlated heterogeneity, $\psi_{\beta_{1} }=0.5$, as well as
correlated error heteroskedasticity which are generated as case (a) $\sigma
_{it}^{2}=\lambda _{i}^{2}$, and case (b) $\sigma _{it}^{2}=e_{x1,it}^{2}$,
for all $i$ and $t$. These results are to be compared to the ones in Table %
\ref{tab:beta_d1_c2_chi2_tex0_b} in sub-section \ref{MCcorr} of the main
paper which are for random heteroskedasticity. The TMG estimator continues
to perform better than the GP estimator when $T=2$ and $3$, and allowing for
correlated heteroskedasticity does not alter this conclusion.

\begin{table}[ht]
\caption{Bias, RMSE and size of TMG and GP estimators of $\protect\beta_{01}$
$(E(\protect\beta_{i1})=\protect\beta_{01}=1)$ in panel data models with one
regressor, without time effects, but with correlated heterogeneity, $\protect%
\psi_{\protect\beta_{1}}=0.5$, and correlated heteroskedasticity (cases (a)
and (b))}
\label{tab:beta_d1_c2_h23_chi2_tex0_b}\vspace{-6mm}
\par
\begin{center}
\scalebox{0.7}{
\renewcommand{\arraystretch}{1.05}
\begin{tabular}{lcrrcccrrcc}
\hline\hline & & \multicolumn{4}{c}{$T=2$}                 &  & \multicolumn{4}{c}{$T=3$}             \\ \cline{3-6} \cline{8-11}
&  & \multicolumn{1}{c}{$\hat{\pi}$ }  & &   & Size   &  & \multicolumn{1}{c}{$\hat{\pi}$ } & & & Size   \\ 
Estimator &  & $(\times 100)$  &  Bias   & RMSE   &  $(\times 100)$  &  &$(\times 100)$ &   Bias   & RMSE  & $(\times 100)$  \\ \hline
  \multicolumn{10}{l}{\textbf{Case (a)}: $\sigma_{it}^{2}=\lambda_{i}^{2}$} \\ \hline 
  & \multicolumn{10}{c}{$n=1,000$}  \\ \hline 
TMG &  & 27.3 & 0.013 & 0.17 & 4.9 &  & 12.0 & 0.009 & 0.13 & 4.3 \\
GP &  & 4.0 & -0.005 & 0.42 & 5.9 &  & 1.3 & 0.000 & 0.19 & 4.7 \\
  & \multicolumn{10}{c}{$n=2,000$}  \\ \hline 
TMG &  & 24.5 & 0.002 & 0.13 & 4.5 &  & 9.6 & -0.008 & 0.10 & 5.4 \\
GP &  & 3.2 & -0.011 & 0.33 & 4.8 &  & 0.8 & -0.014 & 0.14 & 5.9 \\
  & \multicolumn{10}{c}{$n=5,000$}  \\ \hline 
TMG &  & 21.1 & 0.003 & 0.09 & 5.2 &  & 7.2 & -0.005 & 0.06 & 4.7 \\
GP &  & 2.4 & -0.010 & 0.25 & 4.8 &  & 0.4 & -0.010 & 0.10 & 5.2 \\
  & \multicolumn{10}{c}{$n=10,000$}  \\ \hline 
TMG &  & 18.9 & 0.007 & 0.07 & 5.1 &  & 5.8 & -0.002 & 0.05 & 4.5 \\
GP &  & 1.9 & -0.008 & 0.20 & 5.2 &  & 0.3 & -0.007 & 0.07 & 4.6 \\
[2mm]
    \multicolumn{10}{l}{\textbf{Case (b)}: $\sigma_{it}^{2}=e_{x1,it}^{2}$} \\ \hline 
  & \multicolumn{10}{c}{$n=1,000$}  \\ \hline 
TMG &  & 27.3 & 0.012 & 0.27 & 5.1 &  & 12.0 & 0.004 & 0.17 & 5.0 \\
GP &  & 4.0 & -0.007 & 0.61 & 5.5 &  & 1.3 & -0.004 & 0.21 & 4.7 \\
  & \multicolumn{10}{c}{$n=2,000$}  \\ \hline 
TMG &  & 24.5 & 0.008 & 0.21 & 5.0 &  & 9.6 & -0.010 & 0.12 & 4.6 \\
GP &  & 3.2 & -0.005 & 0.49 & 4.9 &  & 0.8 & -0.017 & 0.16 & 4.3 \\
  & \multicolumn{10}{c}{$n=5,000$}  \\ \hline 
TMG &  & 21.1 & 0.005 & 0.14 & 5.5 &  & 7.2 & -0.005 & 0.08 & 5.1 \\
GP &  & 2.4 & -0.002 & 0.36 & 4.6 &  & 0.4 & -0.009 & 0.11 & 4.6 \\
  & \multicolumn{10}{c}{$n=10,000$}  \\ \hline 
TMG &  & 18.9 & 0.005 & 0.11 & 5.7 &  & 5.8 & -0.003 & 0.06 & 5.0 \\
GP &  & 1.9 & -0.013 & 0.29 & 4.6 &  & 0.3 & -0.009 & 0.08 & 4.7 \\ 
\hline\hline
\end{tabular}
}
\end{center}
\par
\vspace{-3mm} 
\begin{spacing}{0.92}
{\footnotesize 
Notes: 
(i) The data generating process is given by $y_{it}=\alpha_{i} + \beta_{i1}
x_{1,it} + \sigma_{it} e_{it}$, where $\sigma_{it}^{2}$ are generated as case (a): $ \sigma_{it}^{2} =\lambda_{i}^{2}$, and case (b): $\sigma_{it}^{2}=e_{x1,it}^{2}$, for all $i$ and $t$. $\psi_{\beta_{1}}$ is defined in (\ref{eta_i}) in the main paper. 
For further details see Section \ref{DGP} in the main paper and Section \ref{secMCDGP}. 
(ii) For the TMG estimator, see footnotes Figure \ref{fig:fe_tmg_k2_base_2}. 
For the GP estimator, see the footnote (i) to Table \ref{tab:thresh_d1_c2_chi2_tex0_b}. 
$\hat{\pi}$ is the simulated fraction of individual estimates being trimmed, given by (\ref{pin}) in the main paper.}
\end{spacing}
\end{table}

\newpage \clearpage

\subsection{Monte Carlo evidence for panels with two and three regressors in
the baseline DGP without time effects\label{MCk34}}

For $k^{\prime}=2$ and 3, the results of FE, MG, and TMG estimators in the
baseline DGP without time effects are reported in Tables \ref%
{tab:T_d1_c12_chi2_tex0_k3} and \ref{tab:T_d1_c12_chi2_tex0_k4} and the
empirical power functions are plotted in Figures \ref{fig:fe_tmg_k3_base_1}
through \ref{fig:fe_tmg_k4_base_2}.

Tables \ref{tab:beta_d1_c2_chi2_tex0_b_k3} and \ref%
{tab:beta_d1_c2_chi2_tex0_b_k4} summarize the results of the TMG and GP
estimators in the baseline DGP without time effects and with correlated
heterogeneity and multiple regressors, $k^{\prime}=2$ and 3. The respective
empirical power functions are shown in Figures \ref{fig:tmg_gp_k3_base} and %
\ref{fig:tmg_gp_k4_base} for $n=10,000$.

\newpage \clearpage
\begin{sidewaystable}
\caption{Bias, RMSE and size of FE, MG and TMG estimators of $\beta_{01}$ $(E(\beta_{i1}) = \beta_{01}=1)$ in the baseline DGP with two regressors, without time effects} 
\label{tab:T_d1_c12_chi2_tex0_k3}
\vspace{-6mm}
\begin{center}
\scalebox{0.65}{
\begin{tabular}{rrcrrrrrrrrrrrrrrrrrrrrrrrrr}
 \hline\hline &  \multicolumn{13}{c}{ Uncorrelated heterogeneity: $\psi_{\beta_{1}} = 0 $} &  & \multicolumn{13}{c}{ Correlated heterogeneity: $\psi_{\beta_{1}}=0.5$}  \\ \cline{2-14} \cline{16-28}   
& $\hat{\pi}$ $(\times 100)$  & & \multicolumn{3}{c}{Bias}  & & \multicolumn{3}{c}{RMSE} && \multicolumn{3}{c}{Size $(\times 100)$} && $\hat{\pi}$ $(\times 100)$ & & \multicolumn{3}{c}{Bias}  & & \multicolumn{3}{c}{RMSE} && \multicolumn{3}{c}{Size $(\times 100)$} \\ \cline{2-2} \cline{4-6} \cline{8-10} \cline{12-14} \cline{16-16} \cline{18-20} \cline{22-24} \cline{26-28}  
 $T$ & TMG & & FE & MG & TMG && FE & MG & TMG && FE & MG & TMG &&  TMG & & FE & MG & TMG && FE & MG & TMG && FE & MG & TMG    \\ \hline 
&   \multicolumn{27}{c}{$n=1,000$} \\ \hline
        3 & 41.60 & ~ & 0.001 & -4.726 & 0.013 & ~ & 0.095 & 116.054 & 0.251 & ~ & 5.3 & 2.3 & 5.1 & ~ & 41.60 & ~ & 0.350 & -5.353 & 0.045 & ~ & 0.372 & 131.332 & 0.287 & ~ & 78.1 & 2.3 & 5.7 \\ 
        4 & 25.00 & ~ & -0.001 & -0.003 & 0.002 & ~ & 0.078 & 0.359 & 0.146 & ~ & 5.2 & 4.0 & 5.2 & ~ & 25.00 & ~ & 0.347 & -0.009 & 0.019 & ~ & 0.361 & 0.405 & 0.164 & ~ & 90.0 & 3.9 & 5.2 \\ 
        5 & 16.50 & ~ & 0.002 & 0.000 & 0.000 & ~ & 0.071 & 0.136 & 0.108 & ~ & 5.3 & 3.9 & 4.6 & ~ & 16.50 & ~ & 0.352 & -0.005 & 0.009 & ~ & 0.363 & 0.152 & 0.120 & ~ & 95.2 & 3.6 & 4.0 \\ 
        6 & 11.60 & ~ & -0.003 & -0.003 & -0.003 & ~ & 0.064 & 0.100 & 0.091 & ~ & 4.9 & 5.6 & 5.5 & ~ & 11.60 & ~ & 0.349 & -0.007 & 0.003 & ~ & 0.358 & 0.111 & 0.101 & ~ & 97.7 & 5.4 & 5.2 \\ 
        7 & 8.50 & ~ & -0.001 & -0.002 & -0.002 & ~ & 0.060 & 0.083 & 0.078 & ~ & 5.8 & 5.1 & 5.2 & ~ & 8.50 & ~ & 0.349 & -0.007 & 0.000 & ~ & 0.357 & 0.091 & 0.085 & ~ & 99.2 & 4.8 & 5.1 \\ 
        8 & 6.30 & ~ & 0.002 & 0.001 & 0.001 & ~ & 0.057 & 0.073 & 0.070 & ~ & 6.0 & 5.5 & 5.2 & ~ & 6.30 & ~ & 0.352 & -0.004 & 0.002 & ~ & 0.358 & 0.080 & 0.077 & ~ & 99.7 & 5.1 & 4.9 \\ 
        10 & 3.80 & ~ & 0.000 & -0.001 & -0.001 & ~ & 0.052 & 0.061 & 0.060 & ~ & 5.1 & 5.3 & 5.3 & ~ & 3.80 & ~ & 0.350 & -0.005 & -0.003 & ~ & 0.355 & 0.066 & 0.064 & ~ & 99.9 & 4.2 & 4.2 \\ 
        15 & 1.30 & ~ & 0.000 & -0.001 & 0.000 & ~ & 0.045 & 0.049 & 0.049 & ~ & 5.6 & 5.7 & 5.5 & ~ & 1.30 & ~ & 0.351 & -0.005 & -0.004 & ~ & 0.354 & 0.051 & 0.050 & ~ & 100.0 & 4.6 & 4.4 \\ 
&   \multicolumn{27}{c}{$n=2,000$} \\ \hline
        3 & 37.80 & ~ & 0.000 & 8.031 & 0.003 & ~ & 0.069 & 641.561 & 0.194 & ~ & 5.0 & 2.3 & 5.4 & ~ & 37.80 & ~ & 0.327 & 9.074 & 0.018 & ~ & 0.338 & 726.022 & 0.219 & ~ & 96.6 & 2.3 & 4.9 \\ 
        4 & 20.90 & ~ & -0.001 & -0.008 & -0.001 & ~ & 0.059 & 0.345 & 0.109 & ~ & 5.3 & 4.2 & 4.8 & ~ & 20.90 & ~ & 0.328 & -0.024 & 0.002 & ~ & 0.336 & 0.391 & 0.122 & ~ & 99.4 & 4.3 & 4.4 \\ 
        5 & 12.90 & ~ & 0.001 & 0.001 & 0.002 & ~ & 0.051 & 0.101 & 0.081 & ~ & 5.4 & 4.5 & 4.8 & ~ & 12.90 & ~ & 0.329 & -0.015 & -0.002 & ~ & 0.335 & 0.114 & 0.090 & ~ & 100.0 & 4.4 & 4.5 \\ 
        6 & 8.50 & ~ & 0.000 & 0.002 & 0.001 & ~ & 0.047 & 0.073 & 0.067 & ~ & 5.2 & 5.0 & 4.7 & ~ & 8.50 & ~ & 0.329 & -0.013 & -0.007 & ~ & 0.334 & 0.082 & 0.075 & ~ & 100.0 & 5.0 & 4.7 \\ 
        7 & 5.80 & ~ & 0.000 & -0.001 & -0.001 & ~ & 0.044 & 0.061 & 0.058 & ~ & 5.8 & 5.6 & 5.2 & ~ & 5.80 & ~ & 0.329 & -0.016 & -0.011 & ~ & 0.333 & 0.069 & 0.064 & ~ & 100.0 & 6.0 & 5.2 \\ 
        8 & 4.00 & ~ & 0.001 & 0.000 & 0.000 & ~ & 0.042 & 0.053 & 0.052 & ~ & 5.4 & 5.2 & 5.4 & ~ & 4.00 & ~ & 0.329 & -0.015 & -0.011 & ~ & 0.333 & 0.060 & 0.058 & ~ & 100.0 & 5.0 & 5.7 \\ 
        10 & 2.20 & ~ & 0.001 & 0.002 & 0.002 & ~ & 0.037 & 0.043 & 0.043 & ~ & 4.8 & 5.0 & 5.1 & ~ & 2.20 & ~ & 0.329 & -0.013 & -0.011 & ~ & 0.332 & 0.048 & 0.047 & ~ & 100.0 & 4.3 & 4.3 \\ 
        15 & 0.60 & ~ & 0.000 & 0.000 & 0.000 & ~ & 0.032 & 0.034 & 0.034 & ~ & 5.2 & 4.5 & 4.6 & ~ & 0.60 & ~ & 0.328 & -0.015 & -0.015 & ~ & 0.330 & 0.038 & 0.038 & ~ & 100.0 & 5.6 & 5.4 \\ 
&   \multicolumn{27}{c}{$n=5,000$} \\ \hline
        3 & 33.80 & ~ & 0.000 & -1.441 & 0.006 & ~ & 0.044 & 87.318 & 0.126 & ~ & 5.5 & 2.1 & 4.4 & ~ & 33.80 & ~ & 0.319 & -1.642 & 0.020 & ~ & 0.323 & 98.813 & 0.143 & ~ & 100.0 & 2.1 & 4.6 \\ 
        4 & 17.00 & ~ & 0.000 & 0.004 & 0.000 & ~ & 0.036 & 0.166 & 0.072 & ~ & 4.8 & 4.3 & 5.2 & ~ & 17.00 & ~ & 0.318 & -0.007 & 0.003 & ~ & 0.321 & 0.187 & 0.081 & ~ & 100.0 & 4.6 & 5.3 \\ 
        5 & 9.50 & ~ & 0.001 & 0.001 & 0.001 & ~ & 0.032 & 0.064 & 0.052 & ~ & 5.1 & 5.1 & 5.1 & ~ & 9.50 & ~ & 0.320 & -0.010 & -0.003 & ~ & 0.322 & 0.072 & 0.058 & ~ & 100.0 & 4.9 & 4.9 \\ 
        6 & 5.80 & ~ & 0.000 & 0.000 & 0.000 & ~ & 0.029 & 0.046 & 0.043 & ~ & 4.6 & 5.3 & 5.4 & ~ & 5.80 & ~ & 0.319 & -0.011 & -0.007 & ~ & 0.320 & 0.052 & 0.048 & ~ & 100.0 & 5.2 & 5.1 \\ 
        7 & 3.60 & ~ & 0.000 & 0.000 & 0.000 & ~ & 0.027 & 0.039 & 0.037 & ~ & 5.2 & 5.5 & 5.1 & ~ & 3.60 & ~ & 0.319 & -0.011 & -0.008 & ~ & 0.320 & 0.044 & 0.041 & ~ & 100.0 & 5.7 & 5.1 \\ 
        8 & 2.40 & ~ & 0.001 & 0.001 & 0.001 & ~ & 0.026 & 0.033 & 0.033 & ~ & 5.1 & 4.5 & 4.8 & ~ & 2.40 & ~ & 0.319 & -0.011 & -0.009 & ~ & 0.320 & 0.038 & 0.037 & ~ & 100.0 & 6.2 & 5.6 \\ 
        10 & 1.10 & ~ & -0.001 & 0.001 & 0.001 & ~ & 0.023 & 0.027 & 0.027 & ~ & 4.7 & 4.8 & 4.7 & ~ & 1.10 & ~ & 0.318 & -0.011 & -0.010 & ~ & 0.319 & 0.031 & 0.031 & ~ & 100.0 & 5.5 & 5.1 \\ 
        15 & 0.20 & ~ & -0.001 & 0.000 & 0.000 & ~ & 0.020 & 0.021 & 0.021 & ~ & 5.1 & 5.1 & 5.1 & ~ & 0.20 & ~ & 0.318 & -0.012 & -0.012 & ~ & 0.319 & 0.025 & 0.025 & ~ & 100.0 & 6.7 & 6.7 \\ 
&   \multicolumn{27}{c}{$n=10,000$} \\ \hline
        3 & 31.00 & ~ & 0.000 & -0.291 & 0.002 & ~ & 0.031 & 35.137 & 0.094 & ~ & 5.9 & 1.7 & 4.8 & ~ & 31.00 & ~ & 0.332 & -0.335 & 0.020 & ~ & 0.334 & 39.763 & 0.108 & ~ & 100.0 & 1.7 & 5.4 \\ 
        4 & 14.50 & ~ & 0.000 & 0.000 & 0.002 & ~ & 0.025 & 0.122 & 0.051 & ~ & 4.6 & 4.0 & 4.7 & ~ & 14.50 & ~ & 0.333 & -0.006 & 0.009 & ~ & 0.334 & 0.138 & 0.057 & ~ & 100.0 & 3.8 & 4.9 \\ 
        5 & 7.60 & ~ & 0.000 & -0.002 & 0.000 & ~ & 0.022 & 0.045 & 0.037 & ~ & 4.9 & 5.1 & 4.4 & ~ & 7.60 & ~ & 0.332 & -0.008 & 0.000 & ~ & 0.333 & 0.051 & 0.041 & ~ & 100.0 & 5.2 & 4.3 \\ 
        6 & 4.30 & ~ & 0.000 & 0.000 & 0.000 & ~ & 0.021 & 0.033 & 0.031 & ~ & 5.7 & 5.2 & 5.1 & ~ & 4.30 & ~ & 0.333 & -0.006 & -0.002 & ~ & 0.334 & 0.037 & 0.034 & ~ & 100.0 & 5.4 & 5.3 \\ 
        7 & 2.50 & ~ & 0.000 & 0.000 & 0.000 & ~ & 0.019 & 0.027 & 0.026 & ~ & 5.4 & 5.6 & 5.2 & ~ & 2.50 & ~ & 0.333 & -0.006 & -0.004 & ~ & 0.333 & 0.030 & 0.029 & ~ & 100.0 & 5.3 & 4.9 \\ 
        8 & 1.50 & ~ & 0.000 & 0.000 & 0.000 & ~ & 0.018 & 0.024 & 0.024 & ~ & 5.8 & 6.5 & 6.6 & ~ & 1.50 & ~ & 0.333 & -0.006 & -0.005 & ~ & 0.334 & 0.027 & 0.027 & ~ & 100.0 & 6.3 & 6.0 \\ 
        10 & 0.60 & ~ & 0.000 & 0.000 & 0.000 & ~ & 0.016 & 0.019 & 0.019 & ~ & 5.1 & 4.9 & 4.8 & ~ & 0.60 & ~ & 0.333 & -0.006 & -0.006 & ~ & 0.333 & 0.022 & 0.021 & ~ & 100.0 & 4.8 & 4.5 \\ 
        15 & 0.10 & ~ & 0.000 & 0.001 & 0.001 & ~ & 0.015 & 0.015 & 0.015 & ~ & 5.2 & 4.8 & 4.8 & ~ & 0.10 & ~ & 0.333 & -0.006 & -0.006 & ~ & 0.333 & 0.017 & 0.017 & ~ & 100.0 & 5.3 & 5.2 \\ 
   \hline
\hline
\end{tabular}
}
\end{center}
\vspace{-3mm}
{\footnotesize
Notes: 
(i) For the DGP see Section \ref{DGP} in the main paper and Section \ref{secMCDGP}. 
(ii) FE and MG estimators are given by (\ref{fee}) and (\ref{mge}), respectively, in the main paper. 
The TMG estimator and its asymptotic variance are given by (\ref{TMGb}) and (\ref{varC}) in the main paper. 
(iii) The trimming threshold value for the TMG estimator is given by $a_{n}=\bar{d}_{n} n^{-\alpha}$, where $\bar{d}_{n} =\frac{1}{n} \sum_{i}^{n} d_{i}$, $d_{i} =\func{det}(\boldsymbol{X}_{i}^{\prime} \boldsymbol{M}_{T} \boldsymbol{X}_{i})$, $\boldsymbol{X}_{i}=(\boldsymbol{x}_{i1},\boldsymbol{x}_{i2},...,\boldsymbol{x}_{iT})^{\prime}$ and $\boldsymbol{M}_{T} = \boldsymbol{I}_{T} - \boldsymbol{\tau}_{T}\boldsymbol{\tau}_{T}^{\prime}/T$. 
$\alpha$ is set to $1/3$. 
$\hat{\pi}$ is the simulated fraction of individual estimates being trimmed, defined by (\ref{pin}) in the main paper. 
}
\end{sidewaystable}

\newpage \clearpage
\begin{sidewaystable}
\caption{Bias, RMSE and size of FE, MG and TMG estimators of $\beta_{01}$ $(E(\beta_{i1}) = \beta_{01}=1)$ in the baseline DGP with three regressors, without time effects} 
\label{tab:T_d1_c12_chi2_tex0_k4}
\vspace{-6mm}
\begin{center}
\scalebox{0.65}{
\begin{tabular}{rrcrrrrrrrrrrrrrrrrrrrrrrrrr}
 \hline\hline &  \multicolumn{13}{c}{ Uncorrelated heterogeneity: $\psi_{\beta_{1}} = 0 $} &  & \multicolumn{13}{c}{ Correlated heterogeneity: $\psi_{\beta_{1}}=0.5$}  \\ \cline{2-14} \cline{16-28}   
& $\hat{\pi}$ $(\times 100)$  & & \multicolumn{3}{c}{Bias}  & & \multicolumn{3}{c}{RMSE} && \multicolumn{3}{c}{Size $(\times 100)$} && $\hat{\pi}$ $(\times 100)$ & & \multicolumn{3}{c}{Bias}  & & \multicolumn{3}{c}{RMSE} && \multicolumn{3}{c}{Size $(\times 100)$} \\ \cline{2-2} \cline{4-6} \cline{8-10} \cline{12-14} \cline{16-16} \cline{18-20} \cline{22-24} \cline{26-28}  
 $T$ & TMG & & FE & MG & TMG && FE & MG & TMG && FE & MG & TMG &&  TMG & & FE & MG & TMG && FE & MG & TMG && FE & MG & TMG    \\ \hline 
&   \multicolumn{27}{c}{$n=1,000$} \\ \hline
        4 & 50.10 & ~ & -0.001 & 1.602 & 0.006 & ~ & 0.080 & 75.915 & 0.263 & ~ & 5.1 & 2.2 & 4.3 & ~ & 50.10 & ~ & 0.349 & 1.807 & 0.043 & ~ & 0.363 & 85.810 & 0.300 & ~ & 89.6 & 2.2 & 4.8 \\ 
        5 & 34.70 & ~ & 0.000 & 0.013 & 0.002 & ~ & 0.070 & 0.420 & 0.147 & ~ & 5.0 & 4.4 & 4.7 & ~ & 34.70 & ~ & 0.351 & 0.010 & 0.029 & ~ & 0.363 & 0.473 & 0.167 & ~ & 95.2 & 4.3 & 4.5 \\ 
        6 & 26.00 & ~ & -0.001 & 0.000 & 0.000 & ~ & 0.063 & 0.133 & 0.107 & ~ & 4.6 & 4.1 & 4.3 & ~ & 26.00 & ~ & 0.349 & -0.005 & 0.020 & ~ & 0.358 & 0.149 & 0.121 & ~ & 98.0 & 4.2 & 4.4 \\ 
        7 & 20.60 & ~ & 0.000 & 0.001 & 0.000 & ~ & 0.061 & 0.101 & 0.091 & ~ & 5.3 & 4.6 & 5.1 & ~ & 20.60 & ~ & 0.350 & -0.003 & 0.014 & ~ & 0.359 & 0.112 & 0.101 & ~ & 99.2 & 4.0 & 5.1 \\ 
        8 & 16.80 & ~ & 0.000 & 0.000 & 0.000 & ~ & 0.056 & 0.081 & 0.075 & ~ & 4.7 & 4.6 & 4.6 & ~ & 16.80 & ~ & 0.350 & -0.005 & 0.011 & ~ & 0.356 & 0.089 & 0.083 & ~ & 99.4 & 4.3 & 4.5 \\ 
        9 & 14.00 & ~ & 0.000 & -0.004 & -0.004 & ~ & 0.055 & 0.076 & 0.072 & ~ & 5.9 & 6.4 & 6.2 & ~ & 14.00 & ~ & 0.350 & -0.009 & 0.004 & ~ & 0.357 & 0.084 & 0.079 & ~ & 99.4 & 5.9 & 5.5 \\ 
        10 & 11.90 & ~ & 0.000 & 0.001 & 0.001 & ~ & 0.051 & 0.067 & 0.065 & ~ & 5.2 & 5.2 & 5.7 & ~ & 11.90 & ~ & 0.350 & -0.003 & 0.007 & ~ & 0.355 & 0.073 & 0.071 & ~ & 99.7 & 4.9 & 5.3 \\ 
        15 & 6.50 & ~ & 0.000 & -0.001 & 0.000 & ~ & 0.045 & 0.049 & 0.049 & ~ & 5.2 & 5.1 & 4.8 & ~ & 6.50 & ~ & 0.349 & -0.005 & 0.000 & ~ & 0.353 & 0.052 & 0.051 & ~ & 100.0 & 4.6 & 3.8 \\ 
&   \multicolumn{27}{c}{$n=2,000$} \\ \hline
        4 & 47.00 & ~ & -0.001 & -1.193 & -0.002 & ~ & 0.058 & 66.333 & 0.191 & ~ & 5.1 & 1.8 & 4.4 & ~ & 47.00 & ~ & 0.328 & -1.364 & 0.021 & ~ & 0.336 & 74.979 & 0.216 & ~ & 99.4 & 1.8 & 4.2 \\ 
        5 & 31.20 & ~ & 0.000 & 0.005 & 0.000 & ~ & 0.049 & 0.414 & 0.106 & ~ & 4.6 & 5.0 & 4.8 & ~ & 31.20 & ~ & 0.328 & -0.009 & 0.013 & ~ & 0.333 & 0.468 & 0.120 & ~ & 99.9 & 5.2 & 4.6 \\ 
        6 & 22.60 & ~ & -0.002 & -0.002 & -0.001 & ~ & 0.046 & 0.098 & 0.078 & ~ & 4.2 & 5.1 & 5.1 & ~ & 22.60 & ~ & 0.327 & -0.017 & 0.004 & ~ & 0.332 & 0.111 & 0.087 & ~ & 100.0 & 5.3 & 4.6 \\ 
        7 & 17.20 & ~ & 0.001 & 0.000 & 0.000 & ~ & 0.043 & 0.072 & 0.065 & ~ & 4.2 & 5.5 & 5.1 & ~ & 17.20 & ~ & 0.329 & -0.015 & 0.000 & ~ & 0.333 & 0.082 & 0.072 & ~ & 100.0 & 5.5 & 5.0 \\ 
        8 & 13.60 & ~ & -0.001 & 0.000 & 0.000 & ~ & 0.041 & 0.057 & 0.054 & ~ & 4.8 & 4.8 & 4.8 & ~ & 13.60 & ~ & 0.328 & -0.015 & -0.003 & ~ & 0.331 & 0.064 & 0.059 & ~ & 100.0 & 5.2 & 4.1 \\ 
        9 & 11.00 & ~ & -0.001 & 0.001 & 0.001 & ~ & 0.039 & 0.050 & 0.048 & ~ & 5.2 & 4.2 & 4.7 & ~ & 11.00 & ~ & 0.327 & -0.014 & -0.005 & ~ & 0.330 & 0.057 & 0.053 & ~ & 100.0 & 4.6 & 4.0 \\ 
        10 & 9.20 & ~ & -0.003 & -0.003 & -0.003 & ~ & 0.037 & 0.046 & 0.046 & ~ & 5.1 & 5.4 & 5.7 & ~ & 9.20 & ~ & 0.324 & -0.018 & -0.010 & ~ & 0.327 & 0.053 & 0.050 & ~ & 100.0 & 6.6 & 5.3 \\ 
        15 & 4.40 & ~ & -0.001 & 0.000 & 0.000 & ~ & 0.032 & 0.035 & 0.035 & ~ & 4.9 & 5.5 & 5.1 & ~ & 4.40 & ~ & 0.327 & -0.015 & -0.012 & ~ & 0.329 & 0.040 & 0.038 & ~ & 100.0 & 6.1 & 5.6 \\ 
&   \multicolumn{27}{c}{$n=5,000$} \\ \hline
        4 & 42.40 & ~ & -0.001 & -2.253 & -0.002 & ~ & 0.036 & 66.048 & 0.129 & ~ & 4.9 & 2.3 & 5.2 & ~ & 42.40 & ~ & 0.317 & -2.559 & 0.019 & ~ & 0.320 & 74.656 & 0.147 & ~ & 100.0 & 2.3 & 5.1 \\ 
        5 & 25.80 & ~ & 0.000 & 0.000 & 0.001 & ~ & 0.032 & 0.171 & 0.068 & ~ & 5.4 & 4.8 & 3.9 & ~ & 25.80 & ~ & 0.318 & -0.011 & 0.012 & ~ & 0.320 & 0.193 & 0.077 & ~ & 100.0 & 4.9 & 4.4 \\ 
        6 & 17.30 & ~ & 0.000 & 0.002 & 0.003 & ~ & 0.029 & 0.062 & 0.051 & ~ & 4.6 & 4.8 & 5.1 & ~ & 17.30 & ~ & 0.319 & -0.009 & 0.007 & ~ & 0.320 & 0.070 & 0.057 & ~ & 100.0 & 4.9 & 5.4 \\ 
        7 & 12.40 & ~ & -0.002 & -0.001 & -0.001 & ~ & 0.027 & 0.046 & 0.042 & ~ & 5.1 & 5.2 & 5.3 & ~ & 12.40 & ~ & 0.317 & -0.013 & -0.002 & ~ & 0.318 & 0.052 & 0.047 & ~ & 100.0 & 6.0 & 5.0 \\ 
        8 & 9.20 & ~ & 0.001 & -0.001 & 0.000 & ~ & 0.025 & 0.038 & 0.036 & ~ & 4.7 & 5.2 & 5.2 & ~ & 9.20 & ~ & 0.319 & -0.012 & -0.003 & ~ & 0.320 & 0.043 & 0.040 & ~ & 100.0 & 6.2 & 5.0 \\ 
        9 & 7.10 & ~ & 0.000 & 0.000 & 0.000 & ~ & 0.025 & 0.033 & 0.032 & ~ & 5.4 & 4.9 & 4.9 & ~ & 7.10 & ~ & 0.319 & -0.011 & -0.006 & ~ & 0.320 & 0.038 & 0.036 & ~ & 100.0 & 6.0 & 4.8 \\ 
        10 & 5.60 & ~ & 0.000 & 0.000 & 0.000 & ~ & 0.024 & 0.030 & 0.030 & ~ & 4.7 & 5.5 & 5.3 & ~ & 5.60 & ~ & 0.318 & -0.012 & -0.007 & ~ & 0.319 & 0.035 & 0.033 & ~ & 100.0 & 6.5 & 5.9 \\ 
        15 & 2.10 & ~ & 0.000 & 0.000 & 0.000 & ~ & 0.020 & 0.022 & 0.022 & ~ & 5.2 & 5.1 & 5.1 & ~ & 2.10 & ~ & 0.319 & -0.012 & -0.010 & ~ & 0.319 & 0.025 & 0.025 & ~ & 100.0 & 6.3 & 5.4 \\ 
&   \multicolumn{27}{c}{$n=10,000$} \\ \hline
        4 & 39.50 & ~ & -0.001 & 2.691 & -0.004 & ~ & 0.026 & 113.047 & 0.098 & ~ & 4.9 & 2.1 & 5.3 & ~ & 39.50 & ~ & 0.332 & 3.036 & 0.021 & ~ & 0.333 & 127.781 & 0.112 & ~ & 100.0 & 2.1 & 5.9 \\ 
        5 & 22.70 & ~ & 0.000 & 0.005 & 0.001 & ~ & 0.023 & 0.266 & 0.050 & ~ & 4.4 & 4.6 & 4.8 & ~ & 22.70 & ~ & 0.332 & 0.000 & 0.015 & ~ & 0.333 & 0.300 & 0.058 & ~ & 100.0 & 4.7 & 5.4 \\ 
        6 & 14.60 & ~ & 0.000 & -0.001 & -0.001 & ~ & 0.020 & 0.045 & 0.037 & ~ & 4.0 & 5.2 & 5.2 & ~ & 14.60 & ~ & 0.333 & -0.007 & 0.006 & ~ & 0.333 & 0.051 & 0.041 & ~ & 100.0 & 5.2 & 4.6 \\ 
        7 & 9.90 & ~ & 0.001 & 0.000 & 0.000 & ~ & 0.019 & 0.033 & 0.030 & ~ & 4.8 & 5.3 & 4.5 & ~ & 9.90 & ~ & 0.333 & -0.006 & 0.003 & ~ & 0.334 & 0.037 & 0.034 & ~ & 100.0 & 5.3 & 5.1 \\ 
        8 & 7.10 & ~ & 0.000 & 0.000 & 0.000 & ~ & 0.018 & 0.027 & 0.026 & ~ & 5.9 & 5.2 & 5.2 & ~ & 7.10 & ~ & 0.332 & -0.007 & 0.000 & ~ & 0.332 & 0.030 & 0.028 & ~ & 100.0 & 5.4 & 4.9 \\ 
        9 & 5.20 & ~ & 0.000 & -0.001 & -0.001 & ~ & 0.017 & 0.023 & 0.022 & ~ & 3.9 & 4.7 & 4.7 & ~ & 5.20 & ~ & 0.332 & -0.007 & -0.002 & ~ & 0.333 & 0.026 & 0.025 & ~ & 100.0 & 5.1 & 4.0 \\ 
        10 & 3.90 & ~ & 0.000 & 0.000 & 0.000 & ~ & 0.016 & 0.020 & 0.020 & ~ & 5.2 & 4.6 & 4.2 & ~ & 3.90 & ~ & 0.332 & -0.007 & -0.003 & ~ & 0.333 & 0.023 & 0.022 & ~ & 100.0 & 4.3 & 4.0 \\ 
        15 & 1.20 & ~ & 0.000 & 0.001 & 0.001 & ~ & 0.014 & 0.016 & 0.016 & ~ & 4.8 & 5.2 & 5.4 & ~ & 1.20 & ~ & 0.333 & -0.006 & -0.005 & ~ & 0.333 & 0.018 & 0.017 & ~ & 100.0 & 6.1 & 5.8 \\ 
   \hline
\hline
\end{tabular}
}
\end{center}
\vspace{-3mm}
{\footnotesize
Notes: 
(i) For the DGP see Section \ref{DGP} in the main paper and Section \ref{secMCDGP}. 
(ii) FE and MG estimators are given by (\ref{fee}) and (\ref{mge}), respectively, in the main paper. 
The TMG estimator and its asymptotic variance are given by (\ref{TMGb}) and (\ref{varC}) in the main paper. 
(iii) The trimming threshold value for the TMG estimator is given by $a_{n}=\bar{d}_{n} n^{-\alpha}$, where $\bar{d}_{n} =\frac{1}{n} \sum_{i}^{n} d_{i}$, $d_{i} =\func{det}(\boldsymbol{X}_{i}^{\prime} \boldsymbol{M}_{T} \boldsymbol{X}_{i})$, $\boldsymbol{X}_{i}=(\boldsymbol{x}_{i1},\boldsymbol{x}_{i2},...,\boldsymbol{x}_{iT})^{\prime}$ and $\boldsymbol{M}_{T} = \boldsymbol{I}_{T} - \boldsymbol{\tau}_{T}\boldsymbol{\tau}_{T}^{\prime}/T$. 
$\alpha$ is set to $1/3$. 
$\hat{\pi}$ is the simulated fraction of individual estimates being trimmed, defined by (\ref{pin}) in the main paper. 
}
\end{sidewaystable}

\newpage \clearpage
\begin{figure}[h!]
\caption{Empirical power functions for FE, MG and TMG estimators of $\protect%
\beta_{01}$ $(E(\protect\beta_{i1}) = \protect\beta_{01}=1)$ in the baseline
DGP with two regressors, without time effects, for $n=10,000$ and $T=3,4,5$}
\label{fig:fe_tmg_k3_base_1}\vspace{-6mm}
\par
\begin{center}
\includegraphics[scale=0.28]{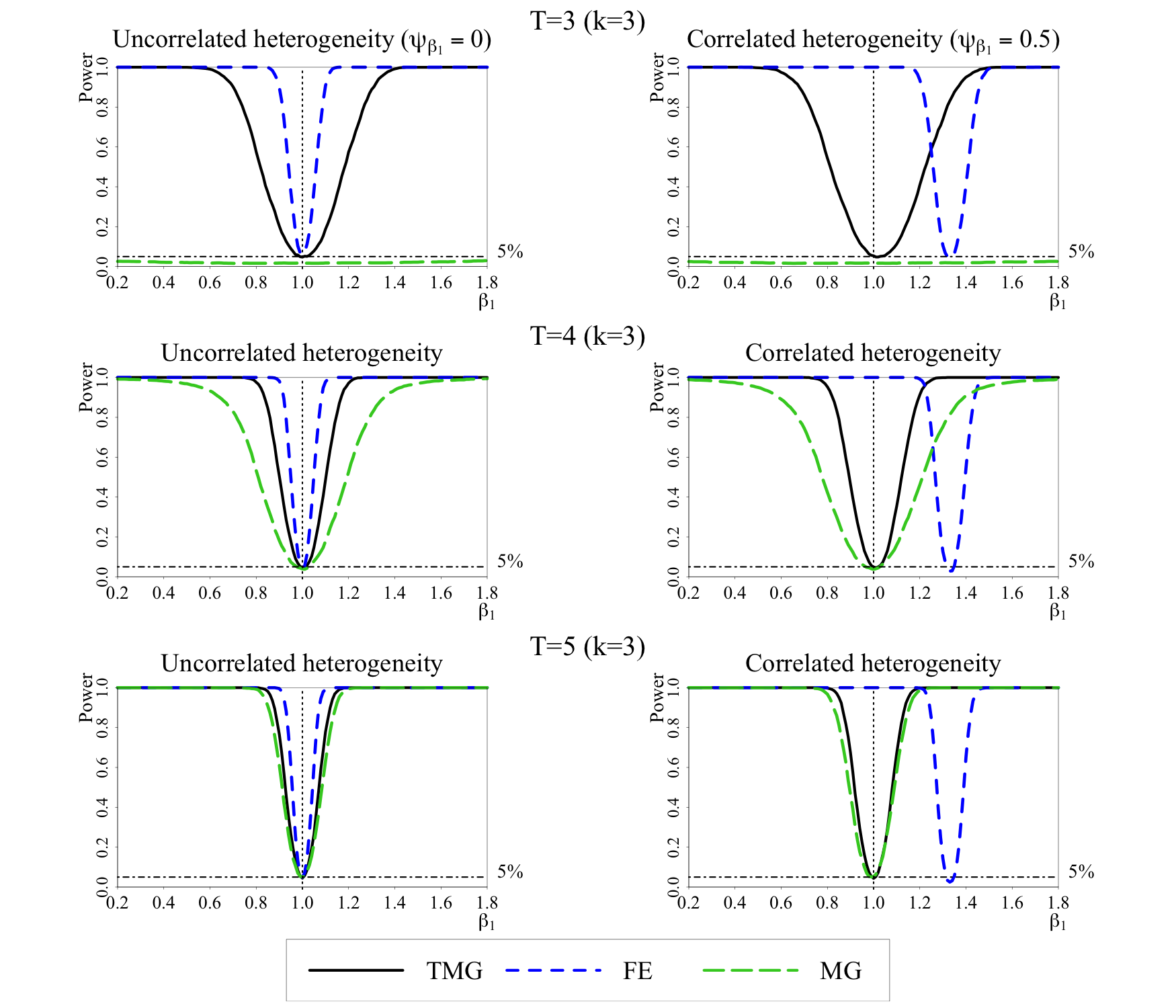}
\end{center}
\par
\vspace{-2mm} 
\begin{spacing}{1}
{\footnotesize
Notes: See footnotes to Table \ref{tab:T_d1_c12_chi2_tex0_k3}. }
\end{spacing}
\end{figure}

\begin{figure}[h!]
\caption{Empirical power functions for FE, MG and TMG estimators of $\protect%
\beta_{01}$ $(E(\protect\beta_{i1}) = \protect\beta_{01}=1)$ in the baseline
DGP with two regressors, without time effects, for $n=10,000$ and $T=6,7,8$}
\label{fig:fe_tmg_k3_base_2}\vspace{-6mm}
\par
\begin{center}
\includegraphics[scale=0.28]{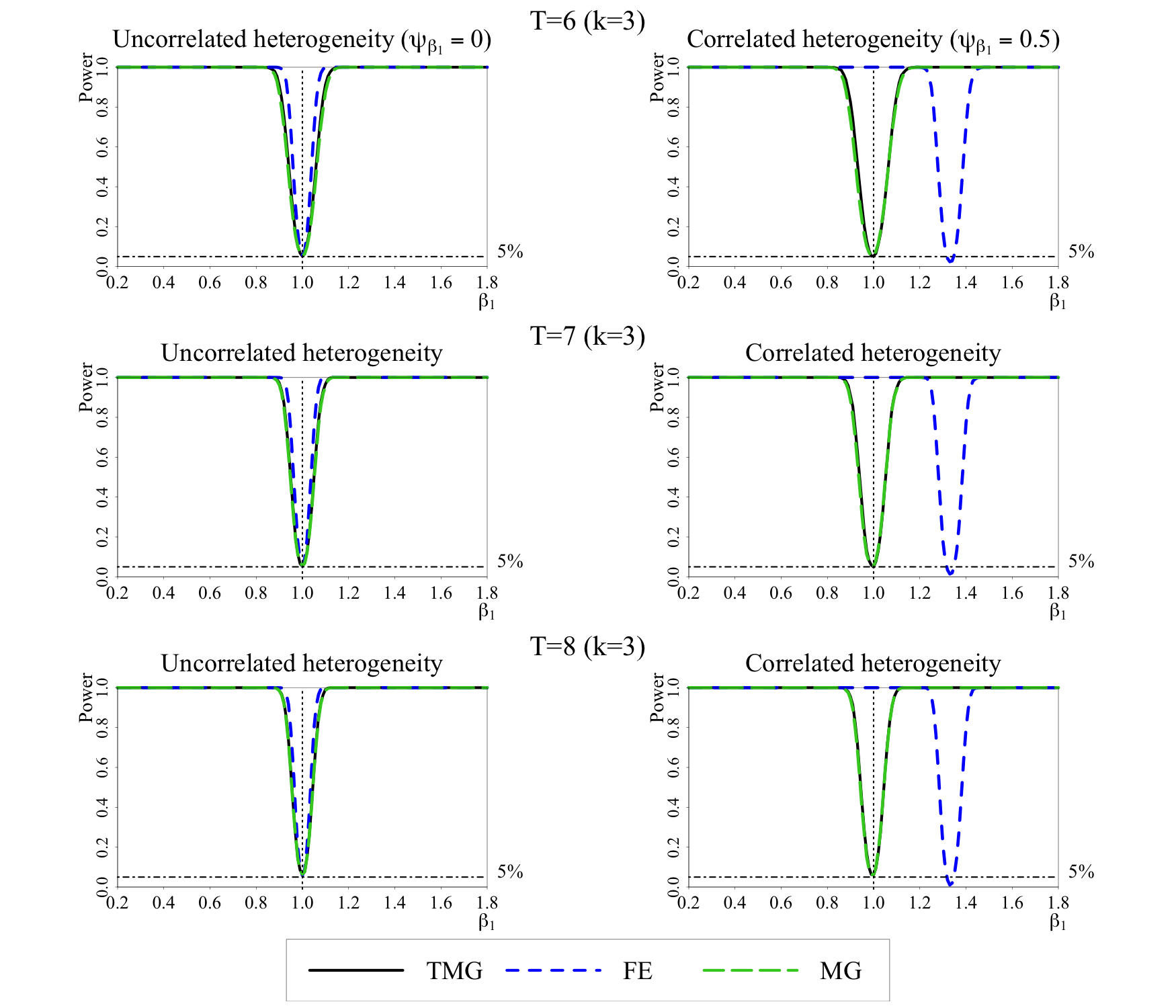}
\end{center}
\par
\vspace{-2mm} 
\begin{spacing}{1}
{\footnotesize
Notes: See footnotes to Table \ref{tab:T_d1_c12_chi2_tex0_k3}. }
\end{spacing}
\end{figure}

\newpage\clearpage

\begin{figure}[h!]
\caption{Empirical power functions for FE, MG and TMG estimators of $\protect%
\beta_{01}$ $(E(\protect\beta_{i1}) = \protect\beta_{01}=1)$ in the baseline
DGP with three regressors, without time effects, for $n=10,000$ and $T=4,5,6$%
}
\label{fig:fe_tmg_k4_base_1}\vspace{-6mm}
\par
\begin{center}
\includegraphics[scale=0.28]{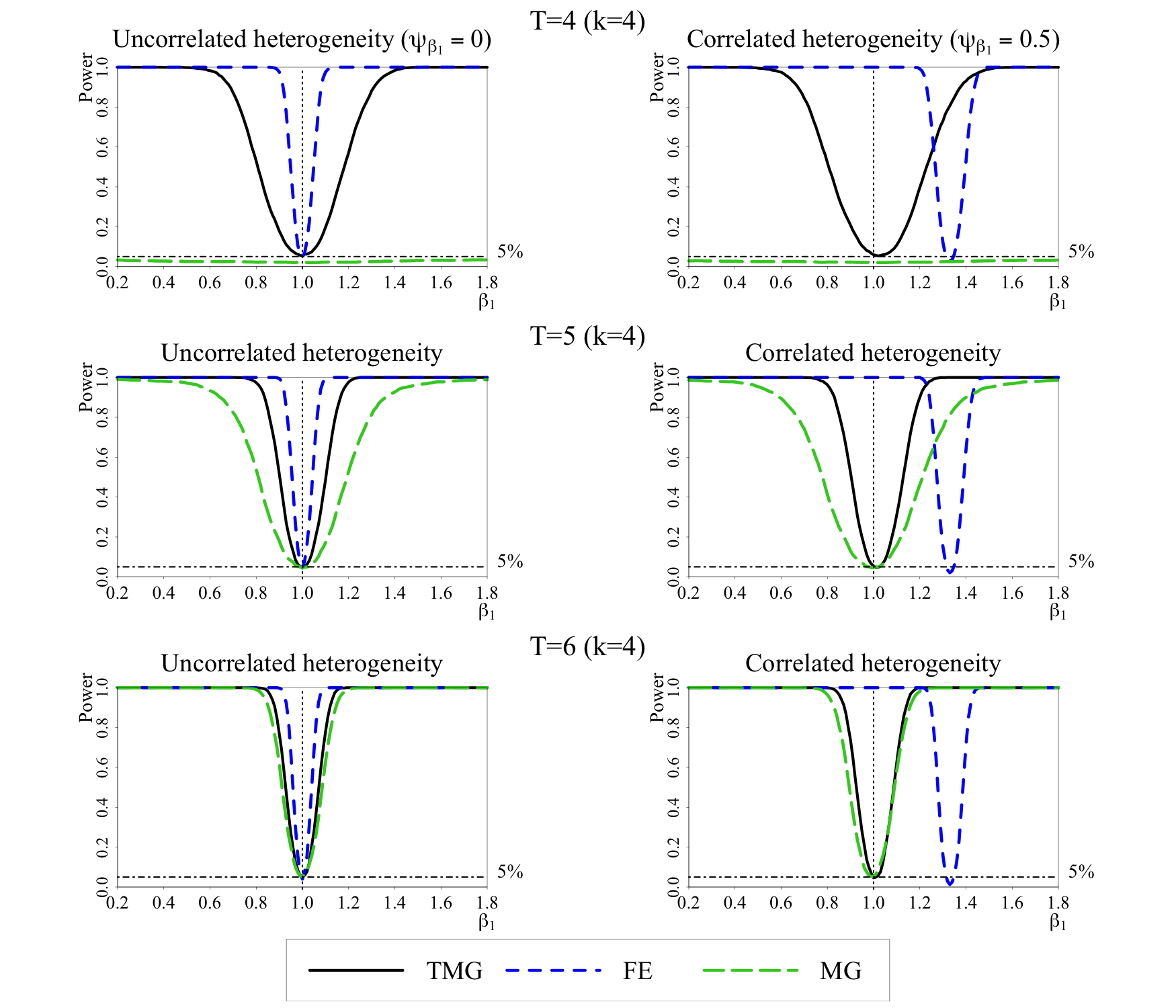}
\end{center}
\par
\vspace{-2mm} 
\begin{spacing}{1}
{\footnotesize
Notes: See footnotes to Table \ref{tab:T_d1_c12_chi2_tex0_k4}. }
\end{spacing}
\end{figure}

\begin{figure}[h!]
\caption{Empirical power functions for FE, MG and TMG estimators of $\protect%
\beta_{01}$ $(E(\protect\beta_{i1}) = \protect\beta_{01}=1)$ in the baseline
DGP with three regressors, without time effects, for $n=10,000$ and $T=7,8,9$%
}
\label{fig:fe_tmg_k4_base_2}\vspace{-6mm}
\par
\begin{center}
\includegraphics[scale=0.28]{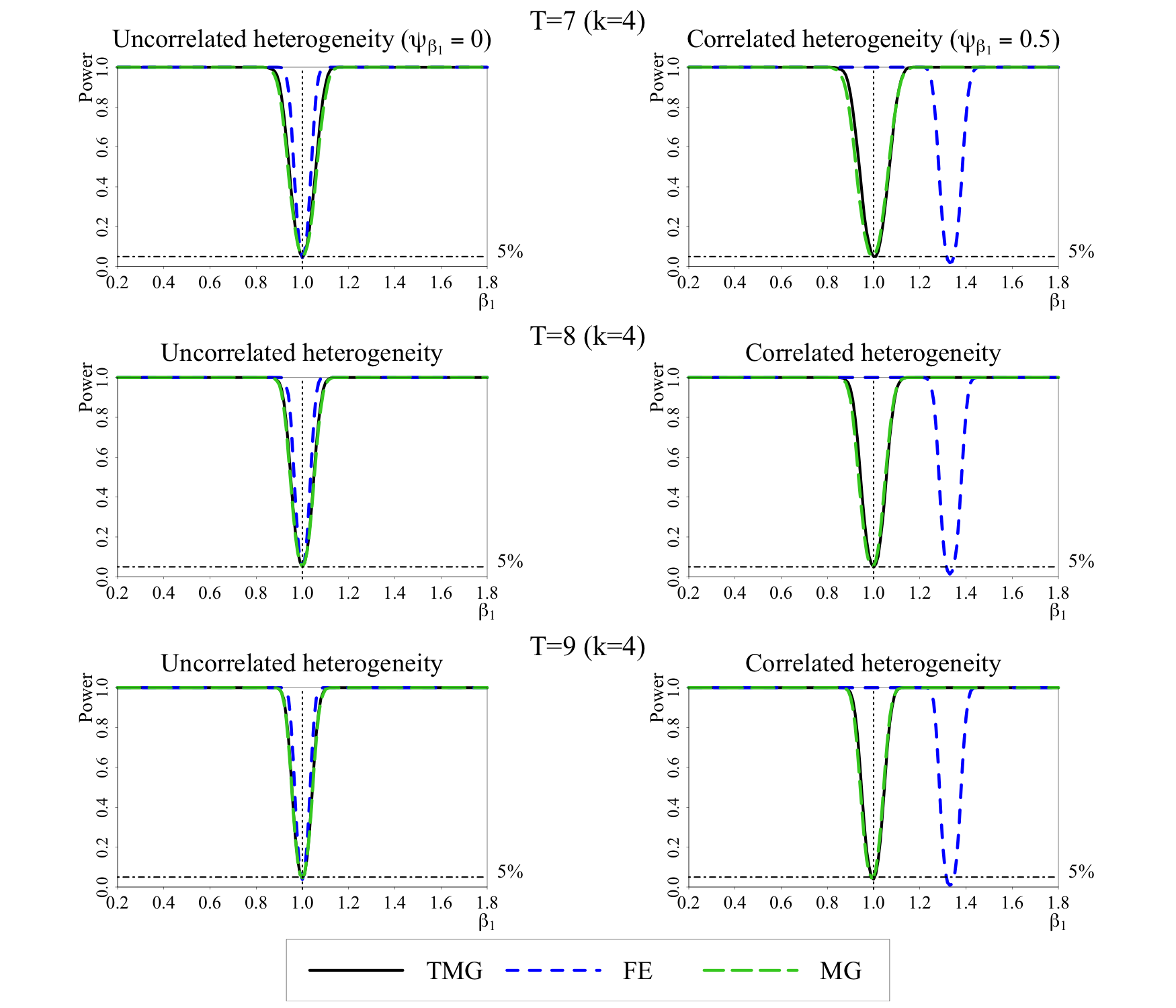}
\end{center}
\par
\vspace{-2mm} 
\begin{spacing}{1}
{\footnotesize
Notes: See footnotes to Table \ref{tab:T_d1_c12_chi2_tex0_k4}. }
\end{spacing}
\end{figure}

\newpage \clearpage

\begin{table}[h]
\caption{Bias, RMSE and size of TMG and GP estimators of $\protect\beta_{01}$
$(E(\protect\beta_{i1})=\protect\beta_{01}=1)$ in the baseline DGP with two
regressors, without time effects, but with correlated heterogeneity, $%
\protect\psi _{\protect\beta_{1}}=0.5$}
\label{tab:beta_d1_c2_chi2_tex0_b_k3}\vspace{-5mm}
\par
\begin{center}
\scalebox{0.8}{
\begin{tabular}{rrrrrrrrrrrr}
\hline\hline
 & \multicolumn{2}{c}{$\hat{\pi}$ $(\times 100)$} &  & \multicolumn{2}{c}{Bias} &  & \multicolumn{2}{c}{RMSE} &  & \multicolumn{2}{c}{Size $(\times 100)$} \\ \cline{2-3} \cline{5-6} \cline{8-9} \cline{11-12}
$T$ & TMG   & GP   &  & TMG    & GP     &  & TMG   & GP    &  & TMG & GP  \\ \hline
  & \multicolumn{11}{c}{$n=1,000$}                                        \\ \hline
3 & 41.60 & 5.20 &  & 0.045  & 0.027  &  & 0.287 & 0.701 &  & 5.7 & 5.3 \\
4 & 25.00 & 4.00 &  & 0.019  & 0.003  &  & 0.164 & 0.205 &  & 5.2 & 4.9 \\
5 & 16.50 & 1.10 &  & 0.009  & 0.000  &  & 0.120 & 0.137 &  & 4.0 & 4.1 \\
6 & 11.60 & 0.40 &  & 0.003  & -0.006 &  & 0.101 & 0.109 &  & 5.2 & 5.1 \\
7 & 8.50  & 0.10 &  & 0.000  & -0.007 &  & 0.085 & 0.090 &  & 5.1 & 4.7 \\
8 & 6.30  & 0.00 &  & 0.002  & -0.004 &  & 0.077 & 0.080 &  & 4.9 & 5.0 \\
[2mm]
  & \multicolumn{11}{c}{$n=2,000$}                                        \\ \hline
3 & 37.80 & 4.20 &  & 0.018  & -0.020 &  & 0.219 & 0.574 &  & 4.9 & 5.1 \\
4 & 20.90 & 2.60 &  & 0.002  & -0.012 &  & 0.122 & 0.157 &  & 4.4 & 4.6 \\
5 & 12.90 & 0.60 &  & -0.002 & -0.012 &  & 0.090 & 0.104 &  & 4.5 & 4.3 \\
6 & 8.50  & 0.20 &  & -0.007 & -0.012 &  & 0.075 & 0.081 &  & 4.7 & 4.8 \\
7 & 5.80  & 0.00 &  & -0.011 & -0.016 &  & 0.064 & 0.068 &  & 5.2 & 6.2 \\
8 & 4.00  & 0.00 &  & -0.011 & -0.015 &  & 0.058 & 0.060 &  & 5.7 & 5.1 \\
[2mm]
  & \multicolumn{11}{c}{$n=5,000$}                                        \\ \hline
3 & 33.80 & 3.10 &  & 0.020  & -0.003 &  & 0.143 & 0.405 &  & 4.6 & 4.6 \\
4 & 17.00 & 1.50 &  & 0.003  & -0.009 &  & 0.081 & 0.108 &  & 5.3 & 5.3 \\
5 & 9.50  & 0.30 &  & -0.003 & -0.010 &  & 0.058 & 0.068 &  & 4.9 & 5.1 \\
6 & 5.80  & 0.10 &  & -0.007 & -0.011 &  & 0.048 & 0.052 &  & 5.1 & 5.4 \\
7 & 3.60  & 0.00 &  & -0.008 & -0.011 &  & 0.041 & 0.044 &  & 5.1 & 5.6 \\
8 & 2.40  & 0.00 &  & -0.009 & -0.011 &  & 0.037 & 0.038 &  & 5.6 & 6.2 \\
[2mm]
  & \multicolumn{11}{c}{$n=10,000$}                                        \\ \hline
3 & 31.00 & 2.50 &  & 0.020  & -0.009 &  & 0.108 & 0.320 &  & 5.4 & 5.2 \\
4 & 14.50 & 1.00 &  & 0.009  & 0.000  &  & 0.057 & 0.076 &  & 4.9 & 4.6 \\
5 & 7.60  & 0.10 &  & 0.000  & -0.007 &  & 0.041 & 0.049 &  & 4.3 & 5.8 \\
6 & 4.30  & 0.00 &  & -0.002 & -0.006 &  & 0.034 & 0.036 &  & 5.3 & 5.2 \\
7 & 2.50  & 0.00 &  & -0.004 & -0.006 &  & 0.029 & 0.030 &  & 4.9 & 5.4 \\
8 & 1.50  & 0.00 &  & -0.005 & -0.006 &  & 0.027 & 0.027 &  & 6.0 & 6.3 \\
\hline\hline
\end{tabular}
}
\end{center}
\par
\vspace{-1mm} 
\begin{spacing}{1}
{\footnotesize 
Notes: 
(i) For details of the baseline DGP without time effects and the TMG estimator, see footnotes to Figure \ref{fig:fe_tmg_k2_base_2}.
(ii) For the GP estimator, see the footnote (i) to Table \ref{tab:thresh_d1_c2_chi2_tex0_b}. 
(iii) $\hat{\pi}$ is the simulated fraction of individual estimates being trimmed, defined by (\ref{pin}) in the main paper. } 
\end{spacing}
\end{table}

\begin{table}[h]
\caption{Bias, RMSE and size of TMG and GP estimators of $\protect\beta_{01}$
$(E(\protect\beta_{i1})=\protect\beta_{01}=1)$ in the baseline DGP with
three regressors, without time effects, but with correlated heterogeneity, $%
\protect\psi _{\protect\beta_{1}}=0.5$}
\label{tab:beta_d1_c2_chi2_tex0_b_k4}\vspace{-5mm}
\par
\begin{center}
\scalebox{0.8}{
\begin{tabular}{rrrrrrrrrrrr}
\hline\hline
 & \multicolumn{2}{c}{$\hat{\pi}$ $(\times 100)$} &  & \multicolumn{2}{c}{Bias} &  & \multicolumn{2}{c}{RMSE} &  & \multicolumn{2}{c}{Size $(\times 100)$} \\ \cline{2-3} \cline{5-6} \cline{8-9} \cline{11-12}
$T$ & TMG   & GP   &  & TMG    & GP     &  & TMG   & GP    &  & TMG & GP  \\ \hline
  & \multicolumn{11}{c}{$n=1,000$}                                        \\ \hline
4  & 50.10 & 5.90 &  & 0.043  & 0.020  &  & 0.300 & 0.740 &  & 4.8 & 4.0 \\
5  & 34.70 & 7.50 &  & 0.029  & 0.009  &  & 0.167 & 0.200 &  & 4.5 & 4.9 \\
6  & 26.00 & 3.00 &  & 0.020  & 0.002  &  & 0.121 & 0.132 &  & 4.4 & 3.8 \\
7  & 20.60 & 1.30 &  & 0.014  & -0.001 &  & 0.101 & 0.109 &  & 5.1 & 4.2 \\
8  & 16.80 & 0.60 &  & 0.011  & -0.003 &  & 0.083 & 0.088 &  & 4.5 & 3.9 \\
9  & 14.00 & 0.30 &  & 0.004  & -0.008 &  & 0.079 & 0.083 &  & 5.5 & 6.2 \\
10 & 11.90 & 0.10 &  & 0.007  & -0.003 &  & 0.071 & 0.072 &  & 5.3 & 4.9 \\
[2mm]
  & \multicolumn{11}{c}{$n=2,000$}                                        \\ \hline
4  & 47.00 & 4.70 &  & 0.021  & 0.005  &  & 0.216 & 0.586 &  & 4.2 & 3.5 \\
5  & 31.20 & 5.30 &  & 0.013  & -0.006 &  & 0.120 & 0.149 &  & 4.6 & 5.2 \\
6  & 22.60 & 1.80 &  & 0.004  & -0.014 &  & 0.087 & 0.099 &  & 4.6 & 4.8 \\
7  & 17.20 & 0.70 &  & 0.000  & -0.013 &  & 0.072 & 0.079 &  & 5.0 & 5.4 \\
8  & 13.60 & 0.30 &  & -0.003 & -0.015 &  & 0.059 & 0.064 &  & 4.1 & 4.9 \\
9  & 11.00 & 0.10 &  & -0.005 & -0.014 &  & 0.053 & 0.056 &  & 4.0 & 4.6 \\
10 & 9.20  & 0.10 &  & -0.010 & -0.018 &  & 0.050 & 0.053 &  & 5.3 & 6.6 \\
[2mm]
  & \multicolumn{11}{c}{$n=5,000$}                                        \\ \hline
4  & 42.40 & 3.50 &  & 0.019  & -0.009 &  & 0.147 & 0.448 &  & 5.1 & 5.3 \\
5  & 25.80 & 3.10 &  & 0.012  & -0.003 &  & 0.077 & 0.098 &  & 4.4 & 4.7 \\
6  & 17.30 & 0.90 &  & 0.007  & -0.007 &  & 0.057 & 0.064 &  & 5.4 & 5.1 \\
7  & 12.40 & 0.30 &  & -0.002 & -0.012 &  & 0.047 & 0.052 &  & 5.0 & 6.2 \\
8  & 9.20  & 0.10 &  & -0.003 & -0.012 &  & 0.040 & 0.043 &  & 5.0 & 6.2 \\
9  & 7.10  & 0.00 &  & -0.006 & -0.011 &  & 0.036 & 0.038 &  & 4.8 & 5.9 \\
10 & 5.60  & 0.00 &  & -0.007 & -0.012 &  & 0.033 & 0.035 &  & 5.9 & 6.6 \\
[2mm]
  & \multicolumn{11}{c}{$n=10,000$}                                        \\ \hline
4  & 39.50 & 2.80 &  & 0.021  & -0.004 &  & 0.112 & 0.362 &  & 5.9 & 5.8 \\
5  & 22.70 & 2.20 &  & 0.015  & -0.003 &  & 0.058 & 0.074 &  & 5.4 & 4.4 \\
6  & 14.60 & 0.50 &  & 0.006  & -0.006 &  & 0.041 & 0.047 &  & 4.6 & 5.1 \\
7  & 9.90  & 0.10 &  & 0.003  & -0.005 &  & 0.034 & 0.036 &  & 5.1 & 5.3 \\
8  & 7.10  & 0.00 &  & 0.000  & -0.007 &  & 0.028 & 0.030 &  & 4.9 & 5.3 \\
9  & 5.20  & 0.00 &  & -0.002 & -0.007 &  & 0.025 & 0.026 &  & 4.0 & 5.1 \\
10 & 3.90  & 0.00 &  & -0.003 & -0.007 &  & 0.022 & 0.023 &  & 4.0 & 4.3 \\
\hline\hline
\end{tabular}
}
\end{center}
\par
\vspace{-1mm} 
\begin{spacing}{1}
{\footnotesize 
Notes: 
(i) For details of the baseline DGP without time effects and the TMG estimator, see footnotes to Figure \ref{fig:fe_tmg_k2_base_2}.
(ii) For the GP estimator, see the footnote (i) to Table \ref{tab:thresh_d1_c2_chi2_tex0_b}. 
(iii) $\hat{\pi}$ is the simulated fraction of individual estimates being trimmed, defined by (\ref{pin}) in the main paper. } 
\end{spacing}
\end{table}

\newpage \clearpage
\begin{figure}[h!]
\caption{Empirical power functions for TMG, GP, and MG estimators of $%
\protect\beta_{01}$ $(E(\protect\beta_{i1}) = \protect\beta_{01}=1)$ in the
baseline DGP with two regressors, without time effects, but with correlated
heterogeneity, $\protect\psi_{\protect\beta_{1}}=0.5$, for $n=10,000$ and $%
T=3,4,5,6$}
\label{fig:tmg_gp_k3_base}\vspace{-7mm}
\par
\begin{center}
\includegraphics[scale=0.13]{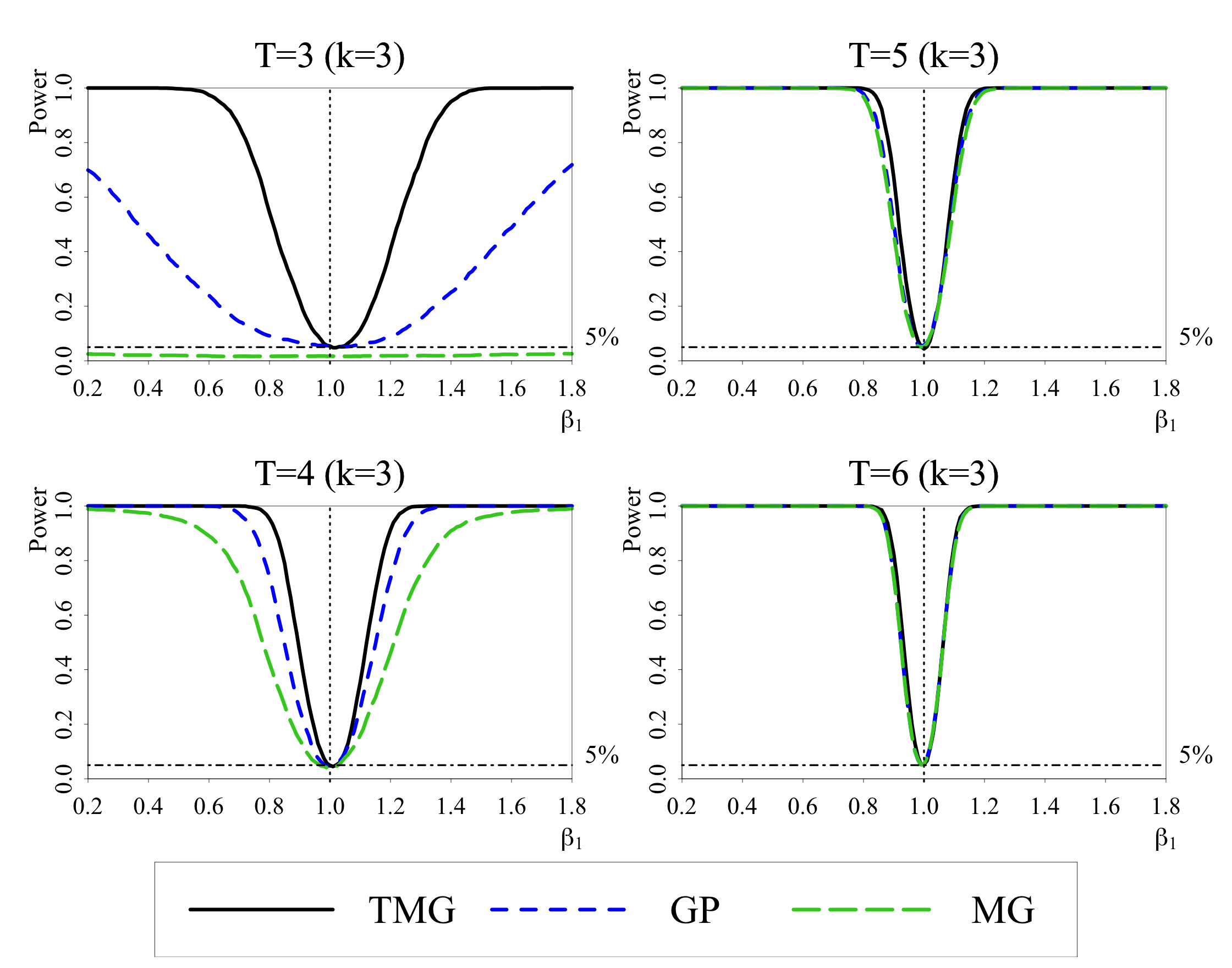}
\end{center}
\par
\vspace{-2mm} 
\begin{spacing}{1}
{\footnotesize
Notes: See footnotes to Table \ref{tab:beta_d1_c2_chi2_tex0_b_k3}. 
}
\end{spacing}
\end{figure}

\begin{figure}[h!]
\caption{Empirical power functions for TMG, GP, and MG estimators of $%
\protect\beta_{01}$ $(E(\protect\beta_{i1}) = \protect\beta_{01}=1)$ in the
baseline DGP with three regressors, without time effects, but with
correlated heterogeneity, $\protect\psi_{\protect\beta_{1}}=0.5$, for $%
n=10,000$ and $T=4,5,6,7$}
\label{fig:tmg_gp_k4_base}\vspace{-7mm}
\par
\begin{center}
\includegraphics[scale=0.13]{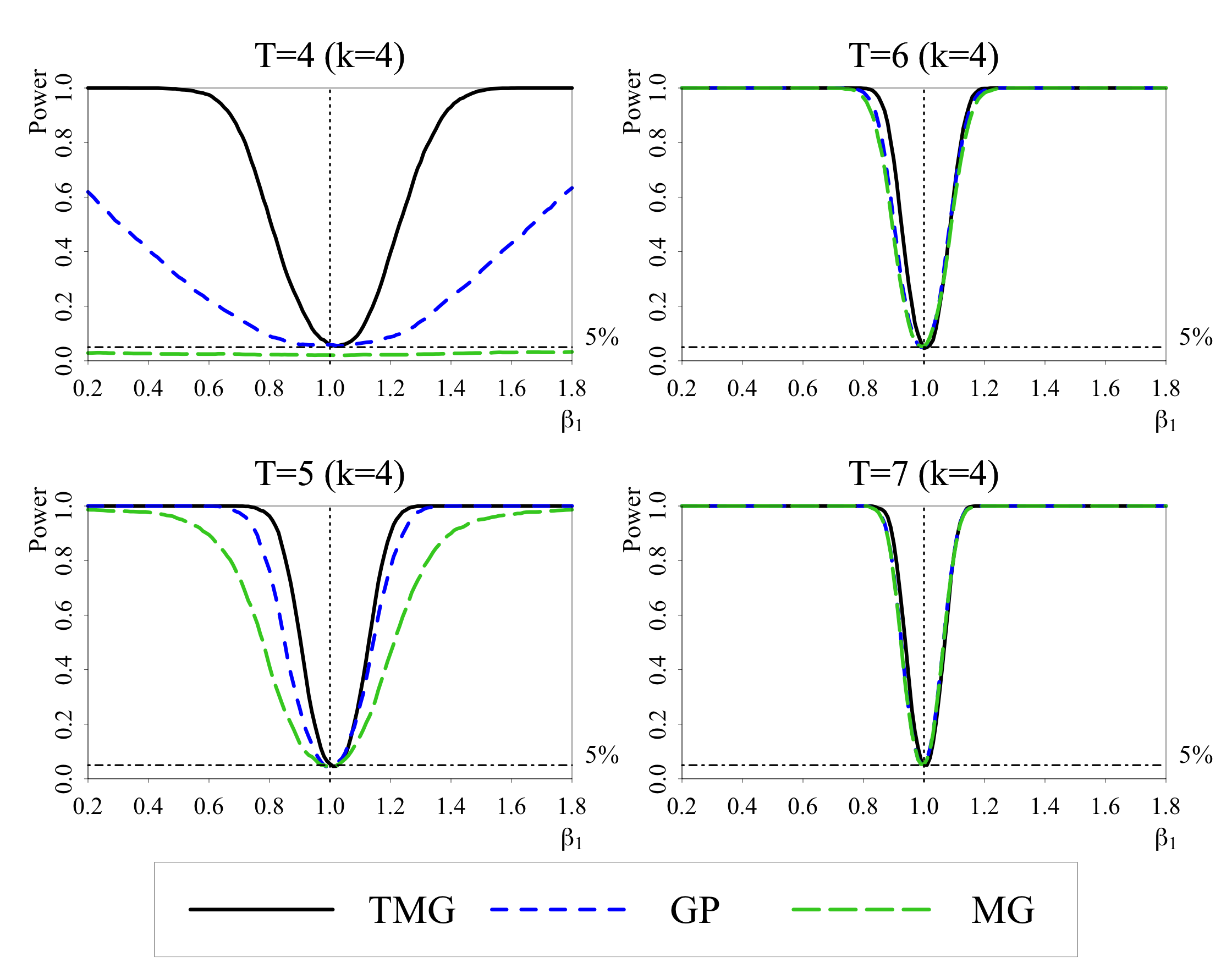}
\end{center}
\par
\vspace{-2mm} 
\begin{spacing}{1}
{\footnotesize
Notes: See footnotes to Table \ref{tab:beta_d1_c2_chi2_tex0_b_k4}. 
}
\end{spacing}
\end{figure}

\newpage \clearpage

\subsection{Monte Carlo evidence in the baseline DGP with time effects\label%
{MCte}}

Tables \ref{tab:te_d1_c12_chi2_tex0}--\ref{tab:te_d1_c2_chi2_tex2_b_k4}
present the results of TWFE, MG-TE and TMG-TE estimators of $\beta_{01}$ in
panel data models with time effects and $k^{\prime}=1$, 2 and 3.

Tables \ref{tab:thetate_d1_c2_chi2_tex0_b_k2}--\ref%
{tab:thetate_d1_c2_chi2_tex0_b_k4} reports the results of TMG-TE and GP-TE
estimators of $E(\beta _{i1})=\beta_{01}$ in the baseline DGP with time
effects and $k^{\prime}=1$, 2 and 3. Tables \ref{tab:phi1_d1_c2_chi2_tex0}
and \ref{tab:phi2_d1_c2_chi2_tex0} summarize results of the TMG-TE and GP-TE
estimators for the time effects $\phi _{1}=1$ and $\phi _{2}=2$ with $T=k$,
where the empirical power functions are shown in Figures \ref%
{fig:tmg_gp_te1_base} and \ref{fig:tmg_gp_te2_base}.

\newpage \clearpage
\begin{sidewaystable}
\caption{Bias, RMSE and size of TWFE, MG-TE, and TMG-TE estimators of $\beta_{01}$ $(E(\beta_{i1}) = \beta_{01}=1)$ in the baseline DGP with one regressor and time effects} 
\label{tab:te_d1_c12_chi2_tex0}
\vspace{-6mm}
\begin{center}
\scalebox{0.52}{
\begin{tabular}{rrcrrrrrrrrrrrrrrrrrrrrrrrrr}
 \hline\hline &  \multicolumn{13}{c}{ Uncorrelated heterogeneity: $\psi_{\beta_{1}} = 0 $, $PR^{2}=0.2$ } &  & \multicolumn{13}{c}{ Correlated heterogeneity: $\psi_{\beta_{1}}=0.5$, $PR^{2}=0.2$ }  \\ \cline{2-14} \cline{16-28}   
& $\hat{\pi}$ $(\times 100)$  & & \multicolumn{3}{c}{Bias}  & & \multicolumn{3}{c}{RMSE} && \multicolumn{3}{c}{Size $(\times 100)$} && $\hat{\pi}$ $(\times 100)$ & & \multicolumn{3}{c}{Bias}  & & \multicolumn{3}{c}{RMSE} && \multicolumn{3}{c}{Size $(\times 100)$} \\ \cline{2-2} \cline{4-6} \cline{8-10} \cline{12-14} \cline{16-16} \cline{18-20} \cline{22-24} \cline{26-28}  
 $T$ & TMG-TE & & TWFE & MG-TE & TMG-TE && TWFE & MG-TE & TMG-TE && TWFE & MG-TE & TMG-TE &&  TMG-TE & & TWFE & MG-TE & TMG-TE && TWFE & MG-TE & TMG-TE && TWFE & MG-TE & TMG-TE    \\ \hline 
&   \multicolumn{27}{c}{$n=1,000$} \\ \hline
2 & 27.30 &  & 0.001 & 0.923 & 0.000 &  & 0.132 & 109.323 & 0.233 &  & 5.1 & 3.8 & 4.0 &  & 27.30 &  & 0.345 & 1.016 & 0.016 &  & 0.386 & 124.772 & 0.263 &  & 51.3 & 3.8 & 4.2 \\
3 & 11.90 &  & -0.001 & 0.000 & -0.003 &  & 0.094 & 0.326 & 0.144 &  & 5.1 & 3.2 & 4.8 &  & 11.90 &  & 0.347 & -0.004 & 0.002 &  & 0.368 & 0.369 & 0.161 &  & 76.8 & 3.2 & 4.6 \\
4 & 5.90 &  & 0.000 & 0.002 & 0.003 &  & 0.079 & 0.138 & 0.112 &  & 5.3 & 4.3 & 5.2 &  & 5.90 &  & 0.347 & -0.002 & 0.003 &  & 0.362 & 0.154 & 0.125 &  & 89.7 & 4.2 & 5.2 \\
5 & 3.10 &  & 0.001 & -0.001 & -0.001 &  & 0.070 & 0.097 & 0.089 &  & 5.1 & 3.5 & 4.6 &  & 3.10 &  & 0.350 & -0.006 & -0.003 &  & 0.360 & 0.107 & 0.099 &  & 95.6 & 3.3 & 4.4 \\
6 & 1.70 &  & 0.001 & -0.001 & -0.001 &  & 0.064 & 0.080 & 0.078 &  & 4.9 & 4.2 & 5.1 &  & 1.70 &  & 0.351 & -0.006 & -0.004 &  & 0.360 & 0.089 & 0.086 &  & 98.3 & 3.9 & 4.8 \\
8 & 0.60 &  & 0.000 & 0.000 & 0.000 &  & 0.058 & 0.068 & 0.067 &  & 5.9 & 4.2 & 5.7 &  & 0.60 &  & 0.350 & -0.005 & -0.004 &  & 0.357 & 0.073 & 0.072 &  & 99.5 & 3.5 & 5.2 \\
10 & 0.20 &  & -0.001 & 0.000 & -0.001 &  & 0.051 & 0.057 & 0.057 &  & 5.0 & 3.9 & 5.0 &  & 0.20 &  & 0.349 & -0.005 & -0.005 &  & 0.354 & 0.061 & 0.060 &  & 99.8 & 3.2 & 4.4 \\
15 & 0.00 &  & 0.000 & 0.001 & 0.001 &  & 0.045 & 0.046 & 0.046 &  & 5.4 & 3.6 & 4.7 &  & 0.00 &  & 0.350 & -0.004 & -0.004 &  & 0.354 & 0.047 & 0.047 &  & 100.0 & 2.7 & 3.6 \\
&   \multicolumn{27}{c}{$n=2,000$} \\ \hline
2 & 24.50 &  & 0.003 & 0.046 & 0.012 &  & 0.091 & 110.025 & 0.187 &  & 4.7 & 3.0 & 6.0 &  & 24.50 &  & 0.330 & 0.040 & 0.016 &  & 0.349 & 124.289 & 0.211 &  & 80.9 & 3.0 & 5.8 \\
3 & 9.70 &  & 0.000 & -0.003 & -0.002 &  & 0.067 & 0.274 & 0.110 &  & 5.3 & 4.0 & 5.6 &  & 9.70 &  & 0.328 & -0.019 & -0.010 &  & 0.338 & 0.310 & 0.124 &  & 96.7 & 4.3 & 5.6 \\
4 & 4.30 &  & -0.001 & -0.002 & -0.003 &  & 0.056 & 0.099 & 0.082 &  & 5.1 & 3.8 & 5.0 &  & 4.30 &  & 0.326 & -0.017 & -0.015 &  & 0.333 & 0.112 & 0.092 &  & 99.7 & 4.2 & 5.4 \\
5 & 2.00 &  & -0.001 & 0.000 & 0.000 &  & 0.049 & 0.070 & 0.066 &  & 4.8 & 3.4 & 4.6 &  & 2.00 &  & 0.328 & -0.015 & -0.013 &  & 0.334 & 0.079 & 0.074 &  & 100.0 & 3.6 & 4.4 \\
6 & 1.00 &  & 0.000 & 0.001 & 0.001 &  & 0.046 & 0.059 & 0.057 &  & 5.6 & 3.7 & 5.2 &  & 1.00 &  & 0.328 & -0.013 & -0.013 &  & 0.333 & 0.066 & 0.064 &  & 100.0 & 3.6 & 5.2 \\
8 & 0.30 &  & -0.001 & -0.001 & -0.001 &  & 0.040 & 0.047 & 0.046 &  & 5.9 & 3.5 & 5.1 &  & 0.30 &  & 0.327 & -0.016 & -0.016 &  & 0.331 & 0.053 & 0.053 &  & 100.0 & 4.1 & 5.7 \\
10 & 0.10 &  & 0.000 & -0.001 & -0.001 &  & 0.037 & 0.042 & 0.042 &  & 5.2 & 4.6 & 5.8 &  & 0.10 &  & 0.328 & -0.016 & -0.016 &  & 0.331 & 0.047 & 0.047 &  & 100.0 & 5.0 & 6.4 \\
15 & 0.00 &  & 0.000 & 0.000 & 0.000 &  & 0.031 & 0.033 & 0.033 &  & 5.2 & 3.5 & 4.3 &  & 0.00 &  & 0.329 & -0.015 & -0.015 &  & 0.331 & 0.037 & 0.037 &  & 100.0 & 4.5 & 5.9 \\
&   \multicolumn{27}{c}{$n=5,000$} \\ \hline
2 & 21.20 &  & 0.000 & 5.212 & -0.003 &  & 0.058 & 589.721 & 0.127 &  & 4.8 & 2.8 & 5.1 &  & 21.20 &  & 0.319 & 5.842 & 0.000 &  & 0.327 & 664.302 & 0.143 &  & 99.6 & 2.7 & 5.3 \\
3 & 7.20 &  & 0.000 & -0.004 & 0.000 &  & 0.042 & 0.183 & 0.070 &  & 5.0 & 3.5 & 4.2 &  & 7.20 &  & 0.319 & -0.016 & -0.005 &  & 0.323 & 0.208 & 0.079 &  & 100.0 & 3.6 & 4.1 \\
4 & 2.80 &  & 0.000 & -0.001 & -0.001 &  & 0.036 & 0.065 & 0.054 &  & 5.3 & 5.2 & 5.2 &  & 2.80 &  & 0.318 & -0.012 & -0.010 &  & 0.321 & 0.074 & 0.061 &  & 100.0 & 5.4 & 5.6 \\
5 & 1.20 &  & 0.001 & 0.000 & 0.000 &  & 0.031 & 0.045 & 0.042 &  & 4.7 & 3.6 & 4.9 &  & 1.20 &  & 0.319 & -0.011 & -0.010 &  & 0.321 & 0.051 & 0.048 &  & 100.0 & 4.0 & 5.1 \\
6 & 0.50 &  & 0.001 & 0.000 & 0.001 &  & 0.029 & 0.038 & 0.037 &  & 5.1 & 4.2 & 5.1 &  & 0.50 &  & 0.319 & -0.011 & -0.010 &  & 0.321 & 0.043 & 0.043 &  & 100.0 & 4.6 & 5.6 \\
8 & 0.10 &  & 0.000 & 0.000 & 0.000 &  & 0.026 & 0.030 & 0.030 &  & 4.7 & 3.8 & 5.2 &  & 0.10 &  & 0.319 & -0.012 & -0.012 &  & 0.320 & 0.034 & 0.034 &  & 100.0 & 4.2 & 6.0 \\
10 & 0.00 &  & 0.000 & 0.000 & 0.000 &  & 0.023 & 0.026 & 0.026 &  & 5.1 & 3.2 & 4.8 &  & 0.00 &  & 0.319 & -0.012 & -0.012 &  & 0.320 & 0.030 & 0.030 &  & 100.0 & 5.1 & 6.5 \\
15 & 0.00 &  & 0.000 & 0.000 & 0.000 &  & 0.020 & 0.021 & 0.021 &  & 4.4 & 3.9 & 5.2 &  & 0.00 &  & 0.319 & -0.012 & -0.012 &  & 0.319 & 0.025 & 0.025 &  & 100.0 & 5.8 & 7.5 \\
&   \multicolumn{27}{c}{$n=10,000$} \\ \hline
2 & 18.90 &  & 0.000 & -4.885 & 0.001 &  & 0.042 & 113.365 & 0.095 &  & 5.9 & 2.6 & 5.5 &  & 18.90 &  & 0.333 & -5.527 & 0.009 &  & 0.336 & 128.118 & 0.108 &  & 100.0 & 2.6 & 5.5 \\
3 & 5.80 &  & 0.001 & 0.000 & -0.001 &  & 0.030 & 0.125 & 0.053 &  & 5.3 & 3.9 & 4.6 &  & 5.80 &  & 0.334 & -0.007 & -0.002 &  & 0.335 & 0.142 & 0.059 &  & 100.0 & 3.9 & 4.5 \\
4 & 2.00 &  & 0.001 & 0.001 & 0.001 &  & 0.025 & 0.046 & 0.039 &  & 5.4 & 4.6 & 5.6 &  & 2.00 &  & 0.334 & -0.005 & -0.004 &  & 0.335 & 0.052 & 0.044 &  & 100.0 & 4.9 & 5.4 \\
5 & 0.80 &  & 0.000 & 0.001 & 0.000 &  & 0.022 & 0.032 & 0.031 &  & 4.5 & 4.1 & 5.1 &  & 0.80 &  & 0.333 & -0.006 & -0.005 &  & 0.334 & 0.036 & 0.035 &  & 100.0 & 4.0 & 5.4 \\
6 & 0.30 &  & 0.001 & 0.000 & 0.000 &  & 0.021 & 0.027 & 0.026 &  & 5.8 & 4.2 & 4.9 &  & 0.30 &  & 0.333 & -0.006 & -0.006 &  & 0.334 & 0.030 & 0.030 &  & 100.0 & 4.2 & 5.2 \\
8 & 0.10 &  & 0.000 & 0.000 & 0.000 &  & 0.018 & 0.021 & 0.021 &  & 4.9 & 4.2 & 5.6 &  & 0.10 &  & 0.333 & -0.006 & -0.006 &  & 0.333 & 0.023 & 0.023 &  & 100.0 & 3.9 & 5.1 \\
10 & 0.00 &  & 0.000 & 0.000 & 0.000 &  & 0.017 & 0.018 & 0.018 &  & 5.6 & 4.0 & 5.1 &  & 0.00 &  & 0.333 & -0.006 & -0.006 &  & 0.333 & 0.021 & 0.021 &  & 100.0 & 4.6 & 5.6 \\
15 & 0.00 &  & 0.000 & 0.000 & 0.000 &  & 0.014 & 0.015 & 0.015 &  & 5.0 & 4.0 & 5.6 &  & 0.00 &  & 0.333 & -0.006 & -0.006 &  & 0.333 & 0.017 & 0.017 &  & 100.0 & 4.4 & 5.8\\
   \hline
\hline
\end{tabular}
}
\end{center}
\vspace{-3mm}
{\footnotesize
Notes: 
(i) In the baseline DGP, the outcome variable is generated as $y_{it}=\alpha_{i} +\phi_{t} + \beta_{i1} x_{1,it} + u_{it}$, and $\psi_{\beta_{1}}$ (the degree of correlated heterogeneity) is defined by (\ref{eta_i}) in the main paper. For further details see Section \ref{DGP} in the main paper and Section \ref{secMCDGP}. 
(ii) The TWFE estimator is given by (\ref{TWFEhat}), and its asymptotic variance is estimated by (\ref{CvarTWFE}). The MG-TE estimator uses the same estimator of time fixed effects as the TMG-TE estimator. When $T=k$, the TMG-TE estimator of $\boldsymbol{\beta}_{0}$ and $\boldsymbol{\phi}_{0}$ are given by (\ref{TMG-TE1}) and (\ref{phihat_b}), and their asymptotic variances are estimated by (\ref{varbetaTE}) and (\ref{VarPhicombined}), respectively, in the mathematical appendix. When $T>k$, the TMG-TE estimator of $\boldsymbol{\beta}_{0}$ is given by (\ref{BetaTE2}), and its asymptotic variance is estimated by (\ref{VarbetaC}) in the main paper for $T<2k+1$ and (\ref{VarbetaC2}) for $T\geq2k+1$; and the TMG-TE estimator of $\boldsymbol{\phi}_{0}$ is given by (\ref{TE2}) and its asymptotic variance is estimated by (\ref{Varphi2}) in the main paper.  (iii) The trimming threshold value for the TMG-TE estimator is given by $a_{n}=\bar{d}_{n} n^{-\alpha}$, where $\bar{d}_{n} =\frac{1}{n} \sum_{i}^{n} d_{i}$, $d_{i} =\func{det}(\boldsymbol{X}_{i}^{\prime} \boldsymbol{M}_{T} \boldsymbol{X}_{i})$, $\boldsymbol{X}_{i}=(\boldsymbol{x}_{i1},\boldsymbol{x}_{i2},...,\boldsymbol{x}_{iT})^{\prime}$, and $\boldsymbol{M}_{T} = \boldsymbol{I}_{T} - \boldsymbol{\tau}_{T}\boldsymbol{\tau}_{T}^{\prime}/T$. 
$\alpha$ is set to $1/3$. 
$\hat{\pi}$ is the simulated fraction of individual estimates being trimmed, defined by (\ref{pin}) in the main paper. }
\end{sidewaystable}

\begin{table}[h]
\caption{Bias, RMSE and size of TWFE, MG-TE, and TMG-TE estimators of $%
\protect\beta_{01}$ $(E(\protect\beta_{i1})=\protect\beta_{01}=1)$ in the
panel data model with one regressor, time effects, correlated heterogeneity, 
$\protect\psi _{\protect\beta_{1}}=0.5$, and heterogeneous autoregressions in
the regressor process}
\label{tab:te_d1_c2_chi2_tex2_b}\vspace{-5mm}
\par
\begin{center}
\scalebox{0.75}{
\begin{tabular}{rrrrrrrrrrrrrr}
\hline\hline
   & $\hat{\pi}$ $(\times 100)$ &  & \multicolumn{3}{c}{Bias} &  & \multicolumn{3}{c}{RMSE} &  & \multicolumn{3}{c}{Size $(\times 100)$} \\ \cline{2-2} \cline{4-6} \cline{8-10} \cline{12-14}
$T$  & TMG-TE  &  & TWFE  & MG-TE   & TMG-TE    &  & TWFE     & MG-TE      & TMG-TE   &  & TWFE & MG-TE & TMG-TE     \\ \hline
   & \multicolumn{13}{c}{$n=1,000$}   \\ \hline
2 & 26.40 &  & 0.330 & -3.954 & 0.006 &  & 0.374 & 119.286 & 0.300 &  & 42.8 & 11.3 & 4.0 \\
3 & 11.50 &  & 0.342 & -0.011 & 0.006 &  & 0.367 & 0.430 & 0.186 &  & 69.9 & 4.3 & 6.8 \\
4 & 5.50 &  & 0.338 & -0.005 & -0.002 &  & 0.355 & 0.156 & 0.130 &  & 83.8 & 2.4 & 6.4 \\
5 & 2.90 &  & 0.342 & -0.002 & 0.000 &  & 0.355 & 0.114 & 0.107 &  & 91.0 & 2.8 & 4.7 \\
6 & 1.50 &  & 0.335 & -0.005 & -0.004 &  & 0.345 & 0.093 & 0.089 &  & 95.3 & 1.8 & 4.4 \\
8 & 0.60 &  & 0.315 & -0.004 & -0.003 &  & 0.322 & 0.071 & 0.071 &  & 98.7 & 3.1 & 4.8 \\
10 & 0.20 &  & 0.320 & -0.004 & -0.004 &  & 0.326 & 0.059 & 0.059 &  & 99.2 & 3.1 & 4.0 \\
15 & 0.00 &  & 0.328 & -0.006 & -0.006 &  & 0.332 & 0.048 & 0.048 &  & 100.0 & 2.9 & 4.2 \\
[2mm]
& \multicolumn{13}{c}{$n=2,000$}   \\ \hline
2 & 23.60 &  & 0.308 & 10.155 & -0.006 &  & 0.332 & 280.682 & 0.235 &  & 71.2 & 15.0 & 4.6 \\
3 & 9.20 &  & 0.315 & -0.021 & -0.006 &  & 0.327 & 0.333 & 0.129 &  & 93.6 & 3.5 & 4.7 \\
4 & 4.00 &  & 0.316 & -0.014 & -0.011 &  & 0.324 & 0.112 & 0.096 &  & 99.0 & 2.2 & 6.2 \\
5 & 1.90 &  & 0.316 & -0.016 & -0.015 &  & 0.322 & 0.085 & 0.080 &  & 99.9 & 4.3 & 5.9 \\
6 & 0.90 &  & 0.314 & -0.018 & -0.017 &  & 0.319 & 0.070 & 0.068 &  & 100.0 & 2.3 & 5.7 \\
8 & 0.30 &  & 0.294 & -0.016 & -0.016 &  & 0.298 & 0.052 & 0.051 &  & 100.0 & 3.2 & 5.2 \\
10 & 0.10 &  & 0.299 & -0.015 & -0.015 &  & 0.301 & 0.046 & 0.046 &  & 100.0 & 4.0 & 5.0 \\
15 & 0.00 &  & 0.307 & -0.015 & -0.015 &  & 0.308 & 0.037 & 0.037 &  & 100.0 & 4.4 & 6.4 \\
[2mm]
& \multicolumn{13}{c}{$n=5,000$}   \\ \hline
2 & 20.40 &  & 0.303 & 3.295 & 0.006 &  & 0.313 & 175.679 & 0.164 &  & 98.1 & 11.4 & 4.6 \\
3 & 6.90 &  & 0.307 & -0.010 & -0.007 &  & 0.312 & 0.203 & 0.088 &  & 100.0 & 3.7 & 5.6 \\
4 & 2.60 &  & 0.308 & -0.011 & -0.009 &  & 0.312 & 0.073 & 0.063 &  & 100.0 & 2.2 & 5.2 \\
5 & 1.10 &  & 0.308 & -0.012 & -0.011 &  & 0.310 & 0.055 & 0.053 &  & 100.0 & 3.9 & 5.1 \\
6 & 0.40 &  & 0.306 & -0.011 & -0.011 &  & 0.308 & 0.045 & 0.044 &  & 100.0 & 2.3 & 5.9 \\
8 & 0.10 &  & 0.286 & -0.010 & -0.010 &  & 0.287 & 0.034 & 0.034 &  & 100.0 & 3.9 & 5.7 \\
10 & 0.00 &  & 0.290 & -0.012 & -0.012 &  & 0.291 & 0.030 & 0.030 &  & 100.0 & 5.2 & 6.4 \\
15 & 0.00 &  & 0.297 & -0.012 & -0.012 &  & 0.298 & 0.025 & 0.025 &  & 100.0 & 5.9 & 8.1 \\
[2mm]
& \multicolumn{13}{c}{$n=10,000$}   \\ \hline
2 & 18.20 &  & 0.317 & 5.283 & 0.013 &  & 0.321 & 325.500 & 0.122 &  & 100.0 & 12.7 & 4.3 \\
3 & 5.60 &  & 0.322 & -0.008 & -0.004 &  & 0.324 & 0.143 & 0.065 &  & 100.0 & 3.6 & 5.7 \\
4 & 1.90 &  & 0.321 & -0.006 & -0.004 &  & 0.323 & 0.052 & 0.045 &  & 100.0 & 2.5 & 5.9 \\
5 & 0.70 &  & 0.323 & -0.007 & -0.007 &  & 0.325 & 0.039 & 0.037 &  & 100.0 & 4.0 & 5.3 \\
6 & 0.30 &  & 0.320 & -0.006 & -0.005 &  & 0.321 & 0.031 & 0.031 &  & 100.0 & 2.1 & 5.0 \\
8 & 0.10 &  & 0.298 & -0.006 & -0.006 &  & 0.299 & 0.023 & 0.023 &  & 100.0 & 3.8 & 5.2 \\
10 & 0.00 &  & 0.304 & -0.007 & -0.007 &  & 0.304 & 0.021 & 0.021 &  & 100.0 & 5.1 & 6.4 \\
15 & 0.00 &  & 0.312 & -0.006 & -0.006 &  & 0.312 & 0.017 & 0.017 &  & 100.0 & 3.8 & 5.4 \\
\hline\hline
\end{tabular}
}
\end{center}
\par
\vspace{-1mm} 
\begin{spacing}{1}
{\footnotesize 
Notes: 
(i) The data generating process is given by $y_{it}=\alpha_{i} +\phi_{t} + \beta_{i1} x_{1,it} + u_{it}$ with heterogeneous autoregressions in the $\{x_{1,it}\}$ process. $\psi_{\beta_{1}} = 0.5$ (the degree of correlated heterogeneity) is defined by (\ref{eta_i}) in the main paper. For further details see Section \ref{DGP} in the main paper and Section \ref{secMCDGP}. 
(ii) For the TMG-TE estimator, see notes to Table \ref{tab:te_d1_c12_chi2_tex0}.
$\hat{\pi}$ is the simulated fraction of individual estimates being trimmed, defined by (\ref{pin}) in the main paper. } 
\end{spacing}
\end{table}

\begin{table}[h]
\caption{Bias, RMSE and size of TWFE, MG-TE, and TMG-TE estimators of $%
\protect\beta_{01}$ $(E(\protect\beta_{i1})=\protect\beta_{01}=1)$ in the
baseline DGP with two regressors, time effects, and correlated
heterogeneity, $\protect\psi _{\protect\beta_{1}}=0.5$}
\label{tab:te_d1_c2_chi2_tex2_b_k3}\vspace{-5mm}
\par
\begin{center}
\scalebox{0.75}{
\begin{tabular}{rrrrrrrrrrrrrr}
\hline\hline
   & $\hat{\pi}$ $(\times 100)$ &  & \multicolumn{3}{c}{Bias} &  & \multicolumn{3}{c}{RMSE} &  & \multicolumn{3}{c}{Size $(\times 100)$} \\ \cline{2-2} \cline{4-6} \cline{8-10} \cline{12-14}
$T$  & TMG-TE  &  & TWFE  & MG-TE   & TMG-TE    &  & TWFE     & MG-TE      & TMG-TE   &  & TWFE & MG-TE & TMG-TE     \\ \hline
   & \multicolumn{13}{c}{$n=1,000$}   \\ \hline
3 & 41.50 &  & 0.353 & 6.534 & 0.030 &  & 0.375 & 352.386 & 0.286 &  & 78.0 & 3.5 & 5.0 \\
4 & 24.90 &  & 0.355 & 0.000 & 0.014 &  & 0.370 & 0.507 & 0.168 &  & 90.2 & 4.2 & 5.6 \\
5 & 16.50 &  & 0.349 & -0.001 & 0.015 &  & 0.360 & 0.156 & 0.122 &  & 95.5 & 3.6 & 5.3 \\
6 & 11.60 &  & 0.345 & -0.009 & 0.000 &  & 0.354 & 0.109 & 0.097 &  & 97.2 & 3.2 & 3.5 \\
7 & 8.50 &  & 0.348 & -0.007 & 0.001 &  & 0.356 & 0.091 & 0.085 &  & 98.9 & 3.2 & 4.0 \\
8 & 6.30 &  & 0.347 & -0.005 & 0.000 &  & 0.354 & 0.080 & 0.076 &  & 99.2 & 3.5 & 4.2 \\
10 & 3.80 &  & 0.350 & -0.004 & -0.001 &  & 0.355 & 0.068 & 0.066 &  & 99.8 & 3.8 & 5.1 \\
15 & 1.30 &  & 0.349 & -0.004 & -0.003 &  & 0.353 & 0.049 & 0.049 &  & 100.0 & 2.9 & 3.7 \\
[2mm]   & \multicolumn{13}{c}{$n=2,000$}   \\ \hline
3 & 37.80 &  & 0.326 & 10.919 & 0.016 &  & 0.337 & 615.521 & 0.218 &  & 95.9 & 2.9 & 4.8 \\
4 & 21.00 &  & 0.328 & -0.008 & 0.005 &  & 0.336 & 0.297 & 0.124 &  & 99.3 & 4.6 & 5.3 \\
5 & 12.90 &  & 0.327 & -0.017 & -0.005 &  & 0.333 & 0.119 & 0.091 &  & 100.0 & 4.3 & 5.1 \\
6 & 8.40 &  & 0.329 & -0.016 & -0.009 &  & 0.333 & 0.083 & 0.076 &  & 100.0 & 5.1 & 5.0 \\
7 & 5.80 &  & 0.328 & -0.016 & -0.011 &  & 0.332 & 0.067 & 0.064 &  & 100.0 & 3.9 & 5.0 \\
8 & 4.10 &  & 0.328 & -0.015 & -0.012 &  & 0.331 & 0.058 & 0.056 &  & 100.0 & 3.6 & 4.7 \\
10 & 2.20 &  & 0.328 & -0.016 & -0.014 &  & 0.330 & 0.049 & 0.048 &  & 100.0 & 3.8 & 4.6 \\
15 & 0.60 &  & 0.329 & -0.014 & -0.014 &  & 0.331 & 0.038 & 0.038 &  & 100.0 & 3.6 & 5.1 \\
[2mm]   & \multicolumn{13}{c}{$n=5,000$}   \\ \hline
3 & 33.70 &  & 0.318 & 3.138 & 0.018 &  & 0.322 & 106.139 & 0.145 &  & 100.0 & 3.3 & 4.9 \\
4 & 17.00 &  & 0.319 & -0.016 & 0.005 &  & 0.321 & 0.211 & 0.081 &  & 100.0 & 3.4 & 5.9 \\
5 & 9.60 &  & 0.318 & -0.012 & -0.004 &  & 0.320 & 0.070 & 0.057 &  & 100.0 & 3.1 & 4.4 \\
6 & 5.70 &  & 0.319 & -0.011 & -0.007 &  & 0.321 & 0.052 & 0.046 &  & 100.0 & 4.0 & 4.0 \\
7 & 3.60 &  & 0.319 & -0.010 & -0.007 &  & 0.321 & 0.043 & 0.041 &  & 100.0 & 3.9 & 4.8 \\
8 & 2.40 &  & 0.318 & -0.011 & -0.009 &  & 0.320 & 0.038 & 0.037 &  & 100.0 & 4.3 & 4.9 \\
10 & 1.10 &  & 0.319 & -0.010 & -0.010 &  & 0.320 & 0.031 & 0.031 &  & 100.0 & 3.9 & 5.1 \\
15 & 0.20 &  & 0.319 & -0.011 & -0.011 &  & 0.319 & 0.025 & 0.025 &  & 100.0 & 5.2 & 6.8 \\
[2mm]   & \multicolumn{13}{c}{$n=10,000$}   \\ \hline
3 & 31.00 &  & 0.333 & -11.238 & 0.016 &  & 0.334 & 565.470 & 0.104 &  & 100.0 & 3.1 & 4.7 \\
4 & 14.50 &  & 0.333 & -0.003 & 0.008 &  & 0.335 & 0.154 & 0.058 &  & 100.0 & 3.6 & 5.1 \\
5 & 7.60 &  & 0.333 & -0.005 & 0.002 &  & 0.334 & 0.050 & 0.042 &  & 100.0 & 4.0 & 5.1 \\
6 & 4.30 &  & 0.333 & -0.006 & -0.003 &  & 0.333 & 0.036 & 0.033 &  & 100.0 & 4.3 & 4.9 \\
7 & 2.50 &  & 0.332 & -0.008 & -0.005 &  & 0.333 & 0.031 & 0.029 &  & 100.0 & 4.3 & 5.1 \\
8 & 1.50 &  & 0.333 & -0.007 & -0.006 &  & 0.333 & 0.026 & 0.026 &  & 100.0 & 3.6 & 5.4 \\
10 & 0.60 &  & 0.333 & -0.007 & -0.006 &  & 0.333 & 0.022 & 0.022 &  & 100.0 & 5.0 & 6.1 \\
15 & 0.10 &  & 0.333 & -0.006 & -0.006 &  & 0.333 & 0.017 & 0.017 &  & 100.0 & 4.3 & 6.0\\
\hline\hline
\end{tabular}
}
\end{center}
\par
\vspace{-1mm} 
\begin{spacing}{1}
{\footnotesize 
Notes: 
For details of the baseline DGP with time effects and the TWFE, MG-TE, and TMG-TE estimators, see footnotes to Table \ref{tab:te_d1_c12_chi2_tex0}. $\hat{\pi}$ is the simulated fraction of individual estimates being trimmed, defined by (\ref{pin}) in the main paper. } 
\end{spacing}
\end{table}

\begin{table}[h]
\caption{Bias, RMSE and size of TWFE, MG-TE, and TMG-TE estimators of $%
\protect\beta_{01}$ $(E(\protect\beta_{i1})=\protect\beta_{01}=1)$ in the
baseline DGP with three regressors, time effects, and correlated
heterogeneity, $\protect\psi _{\protect\beta_{1}}=0.5$}
\label{tab:te_d1_c2_chi2_tex2_b_k4}\vspace{-5mm}
\par
\begin{center}
\scalebox{0.75}{
\begin{tabular}{rrrrrrrrrrrrrr}
\hline\hline
   & $\hat{\pi}$ $(\times 100)$ &  & \multicolumn{3}{c}{Bias} &  & \multicolumn{3}{c}{RMSE} &  & \multicolumn{3}{c}{Size $(\times 100)$} \\ \cline{2-2} \cline{4-6} \cline{8-10} \cline{12-14}
$T$  & TMG-TE  &  & TWFE  & MG-TE   & TMG-TE    &  & TWFE     & MG-TE      & TMG-TE   &  & TWFE & MG-TE & TMG-TE     \\ \hline
   & \multicolumn{13}{c}{$n=1,000$}   \\ \hline
4 & 50.10 &  & 0.351 & 2.092 & 0.022 &  & 0.366 & 174.619 & 0.302 &  & 89.2 & 2.9 & 5.0 \\
5 & 34.60 &  & 0.348 & -0.012 & 0.023 &  & 0.360 & 0.372 & 0.164 &  & 94.7 & 4.0 & 4.9 \\
6 & 26.00 &  & 0.349 & -0.006 & 0.017 &  & 0.359 & 0.155 & 0.122 &  & 97.7 & 4.1 & 5.1 \\
7 & 20.70 &  & 0.349 & -0.006 & 0.014 &  & 0.357 & 0.113 & 0.099 &  & 98.9 & 4.3 & 4.3 \\
8 & 16.80 &  & 0.349 & -0.006 & 0.009 &  & 0.356 & 0.091 & 0.085 &  & 99.4 & 3.4 & 4.6 \\
9 & 14.00 &  & 0.351 & -0.005 & 0.007 &  & 0.356 & 0.079 & 0.076 &  & 99.6 & 4.7 & 4.8 \\
10 & 12.00 &  & 0.348 & -0.004 & 0.006 &  & 0.353 & 0.071 & 0.069 &  & 99.9 & 3.8 & 4.8 \\
15 & 6.40 &  & 0.349 & -0.005 & 0.001 &  & 0.353 & 0.051 & 0.050 &  & 100.0 & 2.5 & 3.4 \\
[2mm]   & \multicolumn{13}{c}{$n=2,000$}   \\ \hline
4 & 47.10 &  & 0.328 & -2.199 & 0.029 &  & 0.336 & 211.477 & 0.227 &  & 99.9 & 2.8 & 5.3 \\
5 & 31.20 &  & 0.330 & -0.031 & 0.011 &  & 0.335 & 0.301 & 0.119 &  & 99.9 & 4.4 & 4.6 \\
6 & 22.60 &  & 0.328 & -0.010 & 0.008 &  & 0.333 & 0.108 & 0.085 &  & 100.0 & 3.4 & 3.8 \\
7 & 17.20 &  & 0.329 & -0.018 & -0.002 &  & 0.333 & 0.081 & 0.070 &  & 100.0 & 4.0 & 4.7 \\
8 & 13.60 &  & 0.329 & -0.015 & -0.003 &  & 0.332 & 0.065 & 0.060 &  & 100.0 & 3.4 & 4.4 \\
9 & 11.00 &  & 0.328 & -0.016 & -0.006 &  & 0.331 & 0.059 & 0.055 &  & 100.0 & 5.1 & 5.1 \\
10 & 9.20 &  & 0.329 & -0.015 & -0.008 &  & 0.332 & 0.052 & 0.049 &  & 100.0 & 4.4 & 4.0 \\
15 & 4.40 &  & 0.328 & -0.014 & -0.011 &  & 0.330 & 0.039 & 0.037 &  & 100.0 & 4.9 & 5.5 \\
[2mm]   & \multicolumn{13}{c}{$n=5,000$}   \\ \hline
4 & 42.30 &  & 0.319 & -2.637 & 0.020 &  & 0.322 & 194.566 & 0.151 &  & 100.0 & 3.1 & 5.1 \\
5 & 25.80 &  & 0.319 & -0.007 & 0.010 &  & 0.321 & 0.203 & 0.078 &  & 100.0 & 4.1 & 4.7 \\
6 & 17.30 &  & 0.319 & -0.011 & 0.005 &  & 0.320 & 0.072 & 0.057 &  & 100.0 & 4.7 & 5.4 \\
7 & 12.40 &  & 0.319 & -0.012 & -0.001 &  & 0.320 & 0.053 & 0.046 &  & 100.0 & 5.0 & 5.0 \\
8 & 9.20 &  & 0.319 & -0.011 & -0.003 &  & 0.320 & 0.043 & 0.040 &  & 100.0 & 4.7 & 4.9 \\
9 & 7.10 &  & 0.319 & -0.010 & -0.004 &  & 0.320 & 0.038 & 0.035 &  & 100.0 & 4.4 & 5.1 \\
10 & 5.60 &  & 0.319 & -0.010 & -0.005 &  & 0.320 & 0.034 & 0.032 &  & 100.0 & 4.3 & 5.1 \\
15 & 2.10 &  & 0.319 & -0.011 & -0.010 &  & 0.320 & 0.025 & 0.025 &  & 100.0 & 4.6 & 5.4 \\
[2mm]   & \multicolumn{13}{c}{$n=10,000$}   \\ \hline
4 & 39.50 &  & 0.333 & 13.286 & 0.027 &  & 0.334 & 451.542 & 0.114 &  & 100.0 & 2.8 & 5.8 \\
5 & 22.70 &  & 0.333 & -0.010 & 0.014 &  & 0.334 & 0.181 & 0.060 &  & 100.0 & 4.8 & 5.9 \\
6 & 14.60 &  & 0.333 & -0.007 & 0.007 &  & 0.333 & 0.050 & 0.041 &  & 100.0 & 4.2 & 5.4 \\
7 & 9.90 &  & 0.333 & -0.007 & 0.003 &  & 0.333 & 0.035 & 0.032 &  & 100.0 & 3.6 & 4.2 \\
8 & 7.10 &  & 0.333 & -0.006 & 0.000 &  & 0.333 & 0.029 & 0.027 &  & 100.0 & 3.6 & 4.5 \\
9 & 5.20 &  & 0.332 & -0.007 & -0.002 &  & 0.333 & 0.026 & 0.025 &  & 100.0 & 4.6 & 5.0 \\
10 & 3.90 &  & 0.332 & -0.007 & -0.003 &  & 0.332 & 0.024 & 0.022 &  & 100.0 & 4.6 & 5.0 \\
15 & 1.20 &  & 0.332 & -0.007 & -0.006 &  & 0.333 & 0.018 & 0.017 &  & 100.0 & 5.1 & 5.8 \\
\hline\hline
\end{tabular}
}
\end{center}
\par
\vspace{-1mm} 
\begin{spacing}{1}
{\footnotesize 
Notes: 
For details of the baseline DGP with time effects and the TWFE, MG-TE, and TMG-TE estimators, see footnotes to Table \ref{tab:te_d1_c12_chi2_tex0}. $\hat{\pi}$ is the simulated fraction of individual estimates being trimmed, defined by (\ref{pin}) in the main paper. } 
\end{spacing}
\end{table}

\newpage \clearpage

\begin{table}[h]
\caption{Bias, RMSE and size of TMG-TE and GP-TE estimators of $\protect%
\beta _{01}$ $(E(\protect\beta_{i1})=\protect\beta_{01}=1)$ in the baseline
DGP with one regressor, time effects, and correlated heterogeneity, $\protect%
\psi _{\protect\beta _{1}}=0.5$}
\label{tab:thetate_d1_c2_chi2_tex0_b_k2}\vspace{-5mm}
\par
\begin{center}
\scalebox{0.8}{
\begin{tabular}{rrrrrrrrrrrr}
\hline\hline
 & \multicolumn{2}{c}{$\hat{\pi}$ $(\times 100)$} &  & \multicolumn{2}{c}{Bias} &  & \multicolumn{2}{c}{RMSE} &  & \multicolumn{2}{c}{Size $(\times 100)$} \\ \cline{2-3} \cline{5-6} \cline{8-9} \cline{11-12}
$T$ & TMG-TE   & GP-TE   &  & TMG-TE    & GP-TE     &  & TMG-TE   & GP-TE    &  & TMG-TE & GP-TE  \\ \hline
  & \multicolumn{11}{c}{$n=1,000$}                                        \\ \hline
2 & 27.30 & 4.10 &  & 0.016 & -0.016 &  & 0.263 & 0.619 &  & 4.2 & 4.8 \\
3 & 11.90 & 1.30 &  & 0.002 & -0.005 &  & 0.161 & 0.209 &  & 4.6 & 4.2 \\
4 & 5.90 & 0.20 &  & 0.003 & -0.002 &  & 0.125 & 0.145 &  & 5.2 & 5.6 \\
5 & 3.10 & 0.00 &  & -0.003 & -0.006 &  & 0.099 & 0.106 &  & 4.4 & 4.2 \\
6 & 1.70 & 0.00 &  & -0.004 & -0.006 &  & 0.086 & 0.089 &  & 4.8 & 4.6 \\
[2mm]  & \multicolumn{11}{c}{$n=2,000$}                                        \\ \hline
2 & 24.50 & 3.20 &  & 0.016 & 0.007 &  & 0.211 & 0.479 &  & 5.8 & 4.0 \\
3 & 9.70 & 0.80 &  & -0.010 & -0.015 &  & 0.124 & 0.164 &  & 5.6 & 5.3 \\
4 & 4.30 & 0.10 &  & -0.015 & -0.019 &  & 0.092 & 0.105 &  & 5.4 & 5.2 \\
5 & 2.00 & 0.00 &  & -0.013 & -0.015 &  & 0.074 & 0.078 &  & 4.4 & 4.6 \\
6 & 1.00 & 0.00 &  & -0.013 & -0.013 &  & 0.064 & 0.066 &  & 5.2 & 5.3 \\
[2mm]  & \multicolumn{11}{c}{$n=5,000$}                                        \\ \hline
2 & 21.20 & 2.40 &  & 0.000 & -0.017 &  & 0.143 & 0.375 &  & 5.3 & 5.3 \\
3 & 7.20 & 0.40 &  & -0.005 & -0.011 &  & 0.079 & 0.106 &  & 4.1 & 4.7 \\
4 & 2.80 & 0.00 &  & -0.010 & -0.012 &  & 0.061 & 0.070 &  & 5.6 & 6.2 \\
5 & 1.20 & 0.00 &  & -0.010 & -0.011 &  & 0.048 & 0.050 &  & 5.1 & 4.9 \\
6 & 0.50 & 0.00 &  & -0.010 & -0.011 &  & 0.043 & 0.043 &  & 5.6 & 6.0 \\
[2mm]  & \multicolumn{11}{c}{$n=10,000$}                                        \\ \hline
2 & 18.90 & 1.90 &  & 0.009 & 0.000 &  & 0.108 & 0.285 &  & 5.5 & 5.3 \\
3 & 5.80 & 0.30 &  & -0.002 & -0.007 &  & 0.059 & 0.080 &  & 4.5 & 4.7 \\
4 & 2.00 & 0.00 &  & -0.004 & -0.006 &  & 0.044 & 0.049 &  & 5.4 & 5.9 \\
5 & 0.80 & 0.00 &  & -0.005 & -0.006 &  & 0.035 & 0.036 &  & 5.4 & 4.9 \\
6 & 0.30 & 0.00 &  & -0.006 & -0.006 &  & 0.030 & 0.030 &  & 5.2 & 5.7 \\
\hline\hline
\end{tabular}
}
\end{center}
\par
\vspace{-1mm} 
\begin{spacing}{1}
{\footnotesize 
Notes: 
(i) For details of the baseline DGP with time effects and the TMG-TE estimator, see footnotes to Table \ref{tab:te_d1_c12_chi2_tex0}. 
(ii) The GP-TE estimator of $\boldsymbol{\beta }_{0}$ and $\boldsymbol{\phi}$ is given by equations (25) and (24) on p. 2120, and the asymptotic variance is estimated by equation (30) on p. 2126 in \cite{GrahamPowell2012}. For $T=k$, GP compare $d_{i,GP}^{1/2}$ with the bandwidth $h_{n} = C_{GP}n^{-\alpha_{GP}}$, where $d_{i,GP} = \func{det}(\boldsymbol{W}_{i}^{\prime}\boldsymbol{W}_{i})$ with $\boldsymbol{W}_{i} = (\boldsymbol{\tau}_{T}, \boldsymbol{X}_{i})$ and $\boldsymbol{\tau}_{T}$ being a $T \times 1$ vector of ones. $\alpha_{GP}$ is set to 1/3. $C_{GP}=\frac{1}{2}\min \left( \hat{\sigma}_{D},\hat{r}_{D}/1.34\right) $, where $\hat{\sigma}_{D}$ and $\hat{r}_{D}$ are the respective sample standard deviation and interquartile range of $d_{i,GP}^{1/2}$. For $T>k$, $C_{GP}=\left(n^{-1}\sum_{i=1}^{n}d_{i,GP}\right)^{1/2}$. See sub-section \ref{MCcorr} in the main paper for details. 
(iii) $\hat{\pi}$ is the simulated fraction of individual estimates being trimmed, defined by (\ref{pin}) in the main paper. } 
\end{spacing}
\end{table}

\begin{table}[h]
\caption{Bias, RMSE and size of TMG-TE and GP-TE estimators of $\protect%
\beta _{01}$ $(E(\protect\beta_{i1})=\protect\beta_{01}=1)$ in the baseline
DGP with two regressors, time effects, and correlated heterogeneity, $%
\protect\psi _{\protect\beta_{1}}=0.5$}
\label{tab:thetate_d1_c2_chi2_tex0_b_k3}\vspace{-5mm}
\par
\begin{center}
\scalebox{0.8}{
\begin{tabular}{rrrrrrrrrrrr}
\hline\hline
 & \multicolumn{2}{c}{$\hat{\pi}$ $(\times 100)$} &  & \multicolumn{2}{c}{Bias} &  & \multicolumn{2}{c}{RMSE} &  & \multicolumn{2}{c}{Size $(\times 100)$} \\ \cline{2-3} \cline{5-6} \cline{8-9} \cline{11-12}
$T$ & TMG-TE   & GP-TE   &  & TMG-TE    & GP-TE     &  & TMG-TE   & GP-TE    &  & TMG-TE & GP-TE  \\ \hline
  & \multicolumn{11}{c}{$n=1,000$}                                        \\ \hline
3 & 41.50 & 5.20 &  & 0.030 & 0.007 &  & 0.286 & 0.721 &  & 5.0 & 3.9 \\
4 & 24.90 & 4.00 &  & 0.014 & -0.004 &  & 0.168 & 0.209 &  & 5.6 & 5.4 \\
5 & 16.50 & 1.10 &  & 0.015 & 0.003 &  & 0.122 & 0.138 &  & 5.3 & 4.4 \\
6 & 11.60 & 0.40 &  & 0.000 & -0.009 &  & 0.097 & 0.106 &  & 3.5 & 4.7 \\
7 & 8.50 & 0.10 &  & 0.001 & -0.006 &  & 0.085 & 0.090 &  & 4.0 & 4.2 \\
8 & 6.30 & 0.00 &  & 0.000 & -0.005 &  & 0.076 & 0.080 &  & 4.2 & 4.8 \\
[2mm]  & \multicolumn{11}{c}{$n=2,000$}                                        \\ \hline
3 & 37.80 & 4.10 &  & 0.016 & -0.004 &  & 0.218 & 0.591 &  & 4.8 & 4.0 \\
4 & 21.00 & 2.60 &  & 0.005 & -0.007 &  & 0.124 & 0.158 &  & 5.3 & 4.9 \\
5 & 12.90 & 0.60 &  & -0.005 & -0.014 &  & 0.091 & 0.106 &  & 5.1 & 5.0 \\
6 & 8.40 & 0.20 &  & -0.009 & -0.016 &  & 0.076 & 0.082 &  & 5.0 & 5.6 \\
7 & 5.80 & 0.00 &  & -0.011 & -0.016 &  & 0.064 & 0.067 &  & 5.0 & 5.0 \\
8 & 4.10 & 0.00 &  & -0.012 & -0.015 &  & 0.056 & 0.058 &  & 4.7 & 4.7 \\
[2mm]  & \multicolumn{11}{c}{$n=5,000$}                                        \\ \hline
3 & 33.70 & 3.10 &  & 0.018 & -0.005 &  & 0.145 & 0.422 &  & 4.9 & 4.6 \\
4 & 17.00 & 1.50 &  & 0.005 & -0.007 &  & 0.081 & 0.107 &  & 5.9 & 5.1 \\
5 & 9.60 & 0.30 &  & -0.004 & -0.011 &  & 0.057 & 0.065 &  & 4.4 & 4.0 \\
6 & 5.70 & 0.10 &  & -0.007 & -0.011 &  & 0.046 & 0.051 &  & 4.0 & 4.8 \\
7 & 3.60 & 0.00 &  & -0.007 & -0.010 &  & 0.041 & 0.043 &  & 4.8 & 5.0 \\
8 & 2.40 & 0.00 &  & -0.009 & -0.011 &  & 0.037 & 0.038 &  & 4.9 & 5.2 \\
[2mm]  & \multicolumn{11}{c}{$n=10,000$}                                        \\ \hline
3 & 31.00 & 2.50 &  & 0.016 & -0.007 &  & 0.104 & 0.328 &  & 4.7 & 4.6 \\
4 & 14.50 & 1.00 &  & 0.008 & -0.004 &  & 0.058 & 0.077 &  & 5.1 & 4.9 \\
5 & 7.60 & 0.10 &  & 0.002 & -0.004 &  & 0.042 & 0.048 &  & 5.1 & 4.9 \\
6 & 4.30 & 0.00 &  & -0.003 & -0.006 &  & 0.033 & 0.036 &  & 4.9 & 5.0 \\
7 & 2.50 & 0.00 &  & -0.005 & -0.008 &  & 0.029 & 0.031 &  & 5.1 & 5.2 \\
8 & 1.50 & 0.00 &  & -0.006 & -0.007 &  & 0.026 & 0.026 &  & 5.4 & 5.6\\
\hline\hline
\end{tabular}
}
\end{center}
\par
\vspace{-1mm} 
\begin{spacing}{1}
{\footnotesize 
Notes: 
For details of the baseline DGP with time effects and the TMG-TE estimator, see footnotes to Table \ref{tab:te_d1_c12_chi2_tex0}. For the GP-TE estimator, see footnotes to Table \ref{tab:thetate_d1_c2_chi2_tex0_b_k2}.
$\hat{\pi}$ is the simulated fraction of individual estimates being trimmed, defined by (\ref{pin}) in the main paper. } 
\end{spacing}
\end{table}

\begin{table}[h]
\caption{Bias, RMSE and size of TMG-TE and GP-TE estimators of $\protect%
\beta _{01}$ $(E(\protect\beta_{i1})=\protect\beta_{01}=1)$ in the baseline
DGP with three regressors, time effects, and correlated heterogeneity, $%
\protect\psi _{\protect\beta_{1}}=0.5$}
\label{tab:thetate_d1_c2_chi2_tex0_b_k4}\vspace{-5mm}
\par
\begin{center}
\scalebox{0.8}{
\begin{tabular}{rrrrrrrrrrrr}
\hline\hline
 & \multicolumn{2}{c}{$\hat{\pi}$ $(\times 100)$} &  & \multicolumn{2}{c}{Bias} &  & \multicolumn{2}{c}{RMSE} &  & \multicolumn{2}{c}{Size $(\times 100)$} \\ \cline{2-3} \cline{5-6} \cline{8-9} \cline{11-12}
$T$ & TMG-TE   & GP-TE   &  & TMG-TE    & GP-TE     &  & TMG-TE   & GP-TE    &  & TMG-TE & GP-TE  \\ \hline
  & \multicolumn{11}{c}{$n=1,000$}                                        \\ \hline
4 & 50.10 & 5.90 &  & 0.022 & -0.015 &  & 0.302 & 0.837 &  & 5.0 & 3.5 \\
5 & 34.60 & 7.40 &  & 0.023 & 0.002 &  & 0.164 & 0.199 &  & 4.9 & 4.6 \\
6 & 26.00 & 3.00 &  & 0.017 & -0.001 &  & 0.122 & 0.137 &  & 5.1 & 5.1 \\
7 & 20.70 & 1.30 &  & 0.014 & -0.003 &  & 0.099 & 0.108 &  & 4.3 & 4.8 \\
8 & 16.80 & 0.60 &  & 0.009 & -0.005 &  & 0.085 & 0.089 &  & 4.6 & 4.7 \\
9 & 14.00 & 0.30 &  & 0.007 & -0.004 &  & 0.076 & 0.079 &  & 4.8 & 5.6 \\
10 & 12.00 & 0.20 &  & 0.006 & -0.004 &  & 0.069 & 0.070 &  & 4.8 & 4.5 \\
[2mm]  & \multicolumn{11}{c}{$n=2,000$}                                        \\ \hline
4 & 47.10 & 4.70 &  & 0.029 & -0.005 &  & 0.227 & 0.671 &  & 5.3 & 4.7 \\
5 & 31.20 & 5.40 &  & 0.011 & -0.008 &  & 0.119 & 0.144 &  & 4.6 & 4.4 \\
6 & 22.60 & 1.80 &  & 0.008 & -0.008 &  & 0.085 & 0.096 &  & 3.8 & 4.2 \\
7 & 17.20 & 0.70 &  & -0.002 & -0.016 &  & 0.070 & 0.078 &  & 4.7 & 4.8 \\
8 & 13.60 & 0.30 &  & -0.003 & -0.014 &  & 0.060 & 0.064 &  & 4.4 & 5.0 \\
9 & 11.00 & 0.10 &  & -0.006 & -0.016 &  & 0.055 & 0.059 &  & 5.1 & 6.3 \\
10 & 9.20 & 0.10 &  & -0.008 & -0.015 &  & 0.049 & 0.052 &  & 4.0 & 5.3 \\
[2mm]  & \multicolumn{11}{c}{$n=5,000$}                                        \\ \hline
4 & 42.30 & 3.50 &  & 0.020 & -0.006 &  & 0.151 & 0.473 &  & 5.1 & 4.3 \\
5 & 25.80 & 3.10 &  & 0.010 & -0.006 &  & 0.078 & 0.103 &  & 4.7 & 5.1 \\
6 & 17.30 & 0.90 &  & 0.005 & -0.009 &  & 0.057 & 0.067 &  & 5.4 & 6.0 \\
7 & 12.40 & 0.30 &  & -0.001 & -0.011 &  & 0.046 & 0.051 &  & 5.0 & 5.7 \\
8 & 9.20 & 0.10 &  & -0.003 & -0.011 &  & 0.040 & 0.043 &  & 4.9 & 5.7 \\
9 & 7.10 & 0.00 &  & -0.004 & -0.010 &  & 0.035 & 0.037 &  & 5.1 & 5.4 \\
10 & 5.60 & 0.00 &  & -0.005 & -0.010 &  & 0.032 & 0.034 &  & 5.1 & 5.5 \\
[2mm]  & \multicolumn{11}{c}{$n=10,000$}                                        \\ \hline
4 & 39.50 & 2.80 &  & 0.027 & 0.011 &  & 0.114 & 0.360 &  & 5.8 & 4.7 \\
5 & 22.70 & 2.20 &  & 0.014 & -0.002 &  & 0.060 & 0.076 &  & 5.9 & 4.7 \\
6 & 14.60 & 0.50 &  & 0.007 & -0.006 &  & 0.041 & 0.047 &  & 5.4 & 5.3 \\
7 & 9.90 & 0.10 &  & 0.003 & -0.006 &  & 0.032 & 0.034 &  & 4.2 & 4.0 \\
8 & 7.10 & 0.00 &  & 0.000 & -0.006 &  & 0.027 & 0.029 &  & 4.5 & 4.8 \\
9 & 5.20 & 0.00 &  & -0.002 & -0.007 &  & 0.025 & 0.026 &  & 5.0 & 5.4 \\
10 & 3.90 & 0.00 &  & -0.003 & -0.007 &  & 0.022 & 0.024 &  & 5.0 & 5.4 \\
\hline\hline
\end{tabular}
}
\end{center}
\par
\vspace{-1mm} 
\begin{spacing}{1}
{\footnotesize 
Notes: 
For details of the baseline DGP with time effects and the TMG-TE estimator, see footnotes to Table \ref{tab:te_d1_c12_chi2_tex0}. For the GP-TE estimator, see footnotes to Table \ref{tab:thetate_d1_c2_chi2_tex0_b_k2}.
$\hat{\pi}$ is the simulated fraction of individual estimates being trimmed, defined by (\ref{pin}) in the main paper. } 
\end{spacing}
\end{table}

\begin{table}[h]
\caption{Bias, RMSE and size of TMG-TE and GP-TE estimators of the time
effect $\protect\phi_{1} = 1$ in the baseline DGP with one, two or three
regressors and correlated heterogeneity, $\protect\psi _{\protect\beta%
_{1}}=0.5$, for $T=k$}
\label{tab:phi1_d1_c2_chi2_tex0}\vspace{-5mm}
\par
\begin{center}
\scalebox{0.8}{
\begin{tabular}{rrrrrrrrrrrrr}
\hline\hline
 &  & \multicolumn{2}{c}{$\hat{\pi}$ $(\times 100)$} &  & \multicolumn{2}{c}{Bias} &  & \multicolumn{2}{c}{RMSE} &  & \multicolumn{2}{c}{Size $(\times 100)$} \\ \cline{3-4} \cline{6-7} \cline{9-10} \cline{12-13}
$T$ & \multicolumn{1}{c}{$n$}   & TMG-TE   & GP-TE   &  & TMG-TE & GP-TE  &  & TMG-TE   & GP-TE    &  & TMG-TE & GP-TE   \\ \hline
  & \multicolumn{12}{c}{One regressor $(k^{\prime}=1)$}                                           \\ \hline
2 & 1,000  & 27.30 & 4.10 &  & 0.000  & -0.004 &  & 0.099 & 0.612 &  & 4.8 & 7.4  \\
2 & 2,000  & 24.50 & 3.20 &  & -0.003 & -0.001 &  & 0.070 & 0.499 &  & 4.7 & 4.8  \\
2 & 5,000  & 21.20 & 2.40 &  & 0.000  & 0.014  &  & 0.045 & 0.395 &  & 5.1 & 6.4  \\
2 & 10,000 & 18.90 & 1.90 &  & -0.001 & -0.005 &  & 0.031 & 0.311 &  & 4.9 & 6.0  \\
[2mm]
  & \multicolumn{12}{c}{Two regressors $(k^{\prime}=2)$}                                           \\ \hline
3 & 1,000  & 41.50 & 5.20 &  & 0.000  & 0.003  &  & 0.119 & 1.235 &  & 5.5 & 13.0 \\
3 & 2,000  & 37.80 & 4.10 &  & 0.000  & -0.026 &  & 0.083 & 1.078 &  & 4.7 & 10.0 \\
3 & 5,000  & 33.70 & 3.10 &  & 0.000  & -0.017 &  & 0.054 & 0.788 &  & 5.2 & 6.8  \\
3 & 10,000 & 31.00 & 2.50 &  & 0.000  & -0.003 &  & 0.037 & 0.655 &  & 4.9 & 6.7  \\
[2mm]
  & \multicolumn{12}{c}{Three regressors $(k^{\prime}=3)$}                                           \\ \hline
4 & 1,000  & 50.10 & 5.90 &  & 0.007  & -0.014 &  & 0.125 & 1.777 &  & 5.0 & 16.1 \\
4 & 2,000  & 47.10 & 4.70 &  & 0.002  & -0.026 &  & 0.086 & 1.535 &  & 4.3 & 13.8 \\
4 & 5,000  & 42.30 & 3.50 &  & 0.002  & 0.068  &  & 0.056 & 1.222 &  & 5.4 & 10.8 \\
4 & 10,000 & 39.50 & 2.80 &  & -0.001 & 0.009  &  & 0.041 & 0.985 &  & 5.7 & 9.3 \\
\hline\hline
\end{tabular}
}
\end{center}
\par
\vspace{-1mm} 
\begin{spacing}{1}
{\footnotesize 
Notes: 
For details of the baseline DGP with time effects and the TMG-TE estimator, see footnotes to Table \ref{tab:te_d1_c12_chi2_tex0}. For the GP-TE estimator, see footnotes to Table \ref{tab:thetate_d1_c2_chi2_tex0_b_k2}.
$\hat{\pi}$ is the simulated fraction of individual estimates being trimmed, defined by (\ref{pin}) in the main paper. } 
\end{spacing}
\end{table}

\begin{table}[h]
\caption{Bias, RMSE and size of TMG-TE and GP-TE estimators of the time
effect $\protect\phi_{2} = 2$ in the baseline DGP with two or three
regressor and correlated heterogeneity, $\protect\psi _{\protect\beta%
_{1}}=0.5$, for $T=k$}
\label{tab:phi2_d1_c2_chi2_tex0}\vspace{-5mm}
\par
\begin{center}
\scalebox{0.8}{
\begin{tabular}{rrrrrrrrrrrrr}
\hline\hline
 &  & \multicolumn{2}{c}{$\hat{\pi}$ $(\times 100)$} &  & \multicolumn{2}{c}{Bias} &  & \multicolumn{2}{c}{RMSE} &  & \multicolumn{2}{c}{Size $(\times 100)$} \\ \cline{3-4} \cline{6-7} \cline{9-10} \cline{12-13}
$T$ & \multicolumn{1}{c}{$n$} & TMG-TE   & GP-TE   &  & TMG-TE    & GP-TE  &  & TMG-TE   & GP-TE    &  & TMG-TE & GP-TE   \\ \hline
  & \multicolumn{12}{c}{Two regressors $(k^{\prime}=2)$}                                           \\ \hline
3 & 1,000  & 41.50 & 5.20 &  & -0.001 & 0.005  &  & 0.116 & 1.208 &  & 5.3 & 11.1 \\
3 & 2,000  & 37.80 & 4.10 &  & -0.001 & 0.021  &  & 0.084 & 1.082 &  & 5.3 & 9.8  \\
3 & 5,000  & 33.70 & 3.10 &  & 0.001  & -0.003 &  & 0.052 & 0.803 &  & 4.5 & 8.3  \\
3 & 10,000 & 31.00 & 2.50 &  & 0.001  & 0.007  &  & 0.037 & 0.643 &  & 5.2 & 6.2  \\
[2mm]
  & \multicolumn{12}{c}{Three regressors $(k^{\prime}=3)$}                                           \\ \hline
4 & 1,000  & 50.10 & 5.90 &  & -0.002 & -0.062 &  & 0.125 & 1.746 &  & 5.2 & 17.8 \\
4 & 2,000  & 47.10 & 4.70 &  & -0.001 & 0.014  &  & 0.087 & 1.459 &  & 4.8 & 13.2 \\
4 & 5,000  & 42.30 & 3.50 &  & -0.001 & -0.019 &  & 0.055 & 1.222 &  & 5.0 & 10.5 \\
4 & 10,000 & 39.50 & 2.80 &  & 0.000  & -0.025 &  & 0.039 & 0.999 &  & 4.9 & 9.4 \\
\hline\hline
\end{tabular}
}
\end{center}
\par
\vspace{-1mm} 
\begin{spacing}{1}
{\footnotesize 
Notes: 
For details of the baseline DGP with time effects and the TMG-TE estimator, see footnotes to Table \ref{tab:te_d1_c12_chi2_tex0}. For the GP-TE estimator, see footnotes to Table \ref{tab:thetate_d1_c2_chi2_tex0_b_k2}.
$\hat{\pi}$ is the simulated fraction of individual estimates being trimmed, defined by (\ref{pin}) in the main paper. } 
\end{spacing}
\end{table}

\clearpage \newpage 
\begin{figure}[h!]
\caption{Empirical power functions for TMG-TE and GP-TE estimators of the
time effect $\protect\phi_{1}=1$ in the baseline DGP with one, two or three
regressors and correlated heterogeneity, $\protect\psi_{\protect\beta%
_{1}}=0.5$, for $T=k = \{2,3,4\}$}
\label{fig:tmg_gp_te1_base}\vspace{-6mm}
\par
\begin{center}
\includegraphics[scale=0.17]{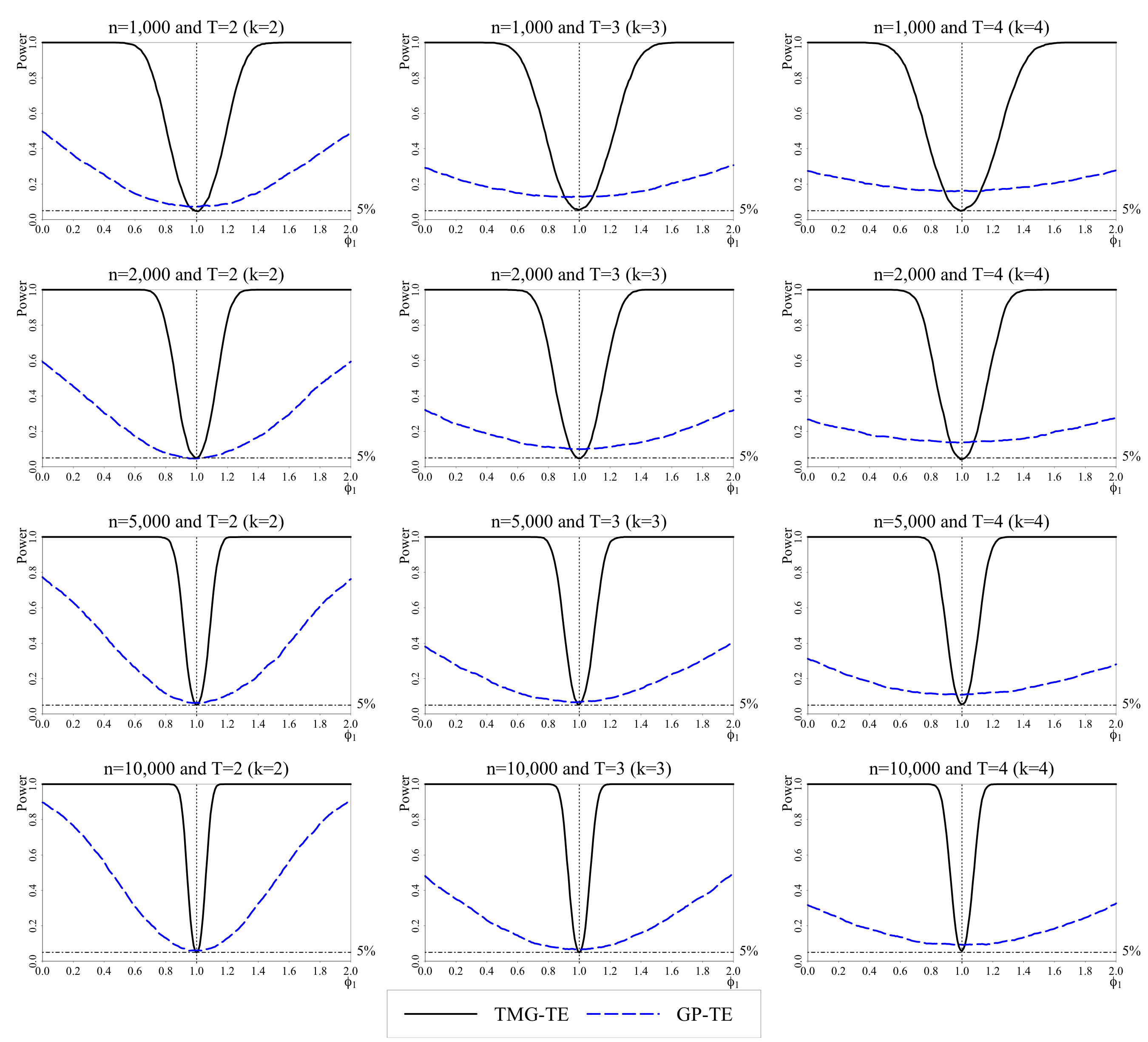}
\end{center}
\par
\vspace{-2mm} 
\begin{spacing}{1}
{\footnotesize
Notes: See footnotes to Table \ref{tab:phi1_d1_c2_chi2_tex0}. }
\end{spacing}
\end{figure}

\begin{figure}[h!]
\caption{Empirical power functions for TMG-TE and GP-TE estimators of the
time effect $\protect\phi_{2}=2$ in the baseline DGP with two or three
regressors and correlated heterogeneity, $\protect\psi_{\protect\beta%
_{1}}=0.5$, for $T=k = \{3,4\}$}
\label{fig:tmg_gp_te2_base}\vspace{-6mm}
\par
\begin{center}
\includegraphics[scale=0.2]{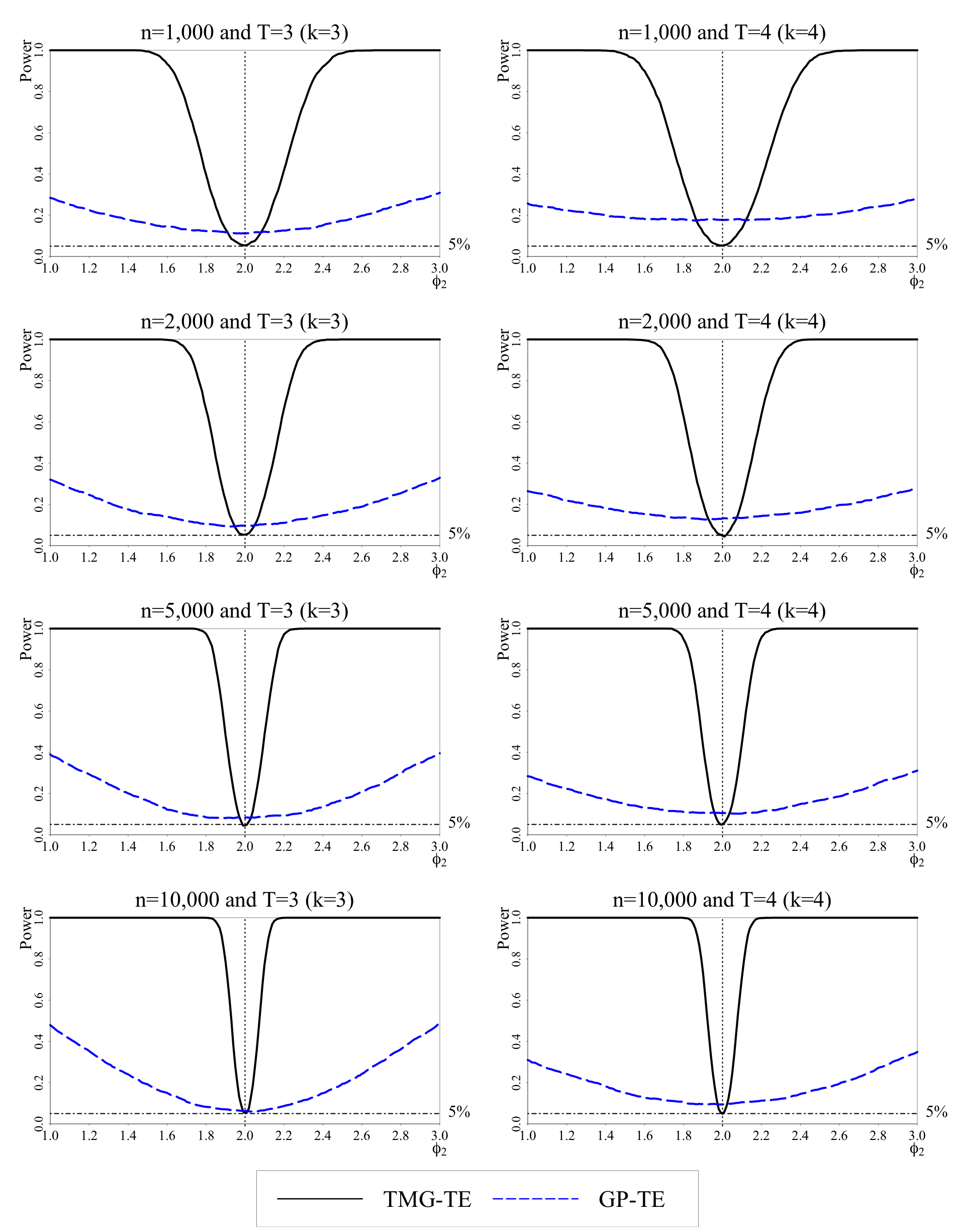}
\end{center}
\par
\vspace{-2mm} 
\begin{spacing}{1}
{\footnotesize
Notes: See footnotes to Table \ref{tab:phi2_d1_c2_chi2_tex0}. }
\end{spacing}
\end{figure}

\newpage \clearpage

\subsection{Monte Carlo evidence of the Hausman test of correlated
heterogeneity with one, two or three regressors}

\label{MCtest}

The MC evidence on the performance of our proposed test of correlated
heterogeneity in the case of panels with time effects is given in Table \ref%
{tab:Test_te_d1_arx_chi2_tex0}. Depending on how time effects are filtered
out in the case of $T=k$ and $T>k$, the test statistics are given by
equations (\ref{htetest}) and (\ref{htetest2}), respectively, in Section \ref%
{TestTE}. Allowing for time effects has negligible effects on the small
sample performance of the test, while the power of the test is slightly
lower than the power reported in Table \ref{tab:Test_d1_arx_chi2_tex0} for
models without time effects in the main paper. Increases in $n$ and/or $T$
result in a rapid rise in power, illustrating the consistent property of the
proposed test.

\begin{table}[h]
\caption{Empirical size and power of the Hausman test of correlated
heterogeneity in the baseline DGP with one regressor and time effects}
\label{tab:Test_te_d1_arx_chi2_tex0}\vspace{-6mm}
\par
\begin{center}
\scalebox{0.8}{
\begin{tabular}{lccccccccccrrrr}
\hline\hline 
     &   \multicolumn{9}{c}{Under $H_{0}$: Uncorrelated heterogeneity} &  & \multicolumn{4}{c}{Under $H_{1}$: Correlated heterogeneity}   \\ \cline{2-10} \cline{12-15}
     &   \multicolumn{4}{c}{$\sigma_{\beta_{1}}^{2}=0$}   &  & \multicolumn{4}{c}{$\sigma_{\beta_{1}}^{2}=0.75$, $\psi_{\beta_{1}} =0$} &  & \multicolumn{4}{c}{$\sigma_{\beta_{1}}^{2}=0.75$, $\psi_{\beta_{1}} =0.5$ }  \\ \cline{2-5} \cline{7-10} \cline{12-15} 
$T/n$   & 1,000   & 2,000   & 5,000   & 10,000  &  & 1,000 & 2,000 & 5,000  & 10,000  &  & $\,\,\,\,\,$1,000 & $\,\,\,\,\,$2,000 & $\,\,\,\,\,$5,000 & $\,\,\,\,$10,000 \\ \hline  
2 &   5.5 & 5.6 & 5.3 & 5.1 &  & 3.8 & 5.8 & 5.8 & 5.4 &  & 24.2 & 38.2 & 68.8 & 91.0 \\
3 &   5.2 & 5.2 & 6.1 & 4.8 &  & 5.2 & 5.3 & 5.0 & 4.5 &  & 57.9 & 84.7 & 99.4 & 100.0 \\
4 &   4.7 & 4.8 & 4.3 & 6.1 &  & 5.5 & 5.6 & 5.0 & 4.7 &  & 84.3 & 98.5 & 100.0 & 100.0 \\
5 &   5.5 & 4.9 & 4.4 & 4.5 &  & 4.1 & 4.8 & 5.3 & 4.3 &  & 97.4 & 100.0 & 100.0 & 100.0 \\
6 &   5.4 & 5.7 & 5.0 & 5.2 &  & 4.8 & 5.2 & 5.6 & 5.5 &  & 99.1 & 100.0 & 100.0 & 100.0 \\
8 &   3.9 & 5.3 & 5.1 & 5.1 &  & 5.8 & 4.7 & 5.0 & 5.6 &  & 100.0 & 100.0 & 100.0 & 100.0 \\
\hline
\hline
\end{tabular}}
\end{center}
\par
\vspace{-1mm} 
\begin{spacing}{1}
\footnotesize{
Notes: 
(i) In the baseline DGP for the test, the outcome variable is generated as $y_{it}=\alpha_{i} + \phi_{t}+ \beta_{i1} x_{1,it} + u_{it}$, with $\alpha_{i}$ correlated with $x_{1,it}$ under both the null and alternative hypotheses. 
Time effects are set as $\phi_{t}=t$ for $t=1,2,...,T-1$, and $\phi_{T}=-T(T-1)/2$. 
The error processes for $y_{it}$ and $x_{1,it}$ equations are chi-squared and Gaussian, respectively, and $x_{1,it}$ are generated without autoregressions or interactive effects. For further details see Section \ref{DGP} in the main paper and Section \ref{secMCDGP}. 
(ii) The null hypothesis is given by (\ref{null}) in the main paper, including the case of homogeneity with $\sigma_{\beta_{1}}^{2}=0$ and the case of uncorrelated heterogeneity with $\psi_{\beta_{1}}=0$ (the degree of correlated heterogeneity defined by (\ref{eta_i}) in the main paper) and $\sigma_{\beta_{1}}^{2}=0.75$. The alternative of correlated heterogeneity is generated with $\psi_{\beta_{1}}=0.5$ and $\sigma_{\beta_{1}}^{2}=0.75$. 
(iii) The test statistics for panels with time effects are based on the difference between the TWFE and TMG-TE estimators given by (\ref{htetest}) when $T = k$ and (\ref{htetest2}) when $T>k$. 
For further details see Section \ref{TestTE}. 
Size and power are in per cent. }
\end{spacing}
\end{table}

\begin{table}[h]
\caption{Empirical size and power of the Hausman test of correlated
heterogeneity in the baseline DGP with two regressors, without time effects}
\label{tab:Test_d1_arx_chi2_tex0_k3}\vspace{-6mm}
\par
\begin{center}
\scalebox{0.8}{
\begin{tabular}{rccccccccccrrrr}
\hline\hline 
     &   \multicolumn{9}{c}{Under $H_{0}$: Uncorrelated heterogeneity} &  & \multicolumn{4}{c}{Under $H_{1}$: Correlated heterogeneity}   \\ \cline{2-10} \cline{12-15}
     &   \multicolumn{4}{c}{$\sigma_{\beta_{1}}^{2}=0$}   &  & \multicolumn{4}{c}{$\sigma_{\beta_{1}}^{2}=0.75$, $\psi_{\beta_{1}} =0$} &  & \multicolumn{4}{c}{$\sigma_{\beta_{1}}^{2}=0.75$, $\psi_{\beta_{1}} =0.5$ }  \\ \cline{2-5} \cline{7-10} \cline{12-15} 
$T/n$   & 1,000   & 2,000   & 5,000   & 10,000  &  & 1,000 & 2,000 & 5,000  & 10,000  &  & $\,\,\,\,\,$1,000 & $\,\,\,\,\,$2,000 & $\,\,\,\,\,$5,000 & $\,\,\,\,$10,000 \\ \hline  
3  &   5.0 & 5.6 & 5.4 & 4.7 &  & 5.1 & 5.0 & 4.5 & 4.6 &  & 15.3  & 24.2  & 47.1  & 77.2  \\
4  &   4.8 & 4.6 & 4.8 & 4.7 &  & 5.5 & 4.5 & 5.2 & 4.6 &  & 41.4  & 70.0  & 97.1  & 100.0 \\
5  &   4.7 & 5.7 & 4.7 & 4.4 &  & 4.6 & 5.0 & 5.6 & 4.7 &  & 71.7  & 95.0  & 100.0 & 100.0 \\
6  &   4.9 & 5.4 & 4.4 & 3.9 &  & 5.2 & 4.6 & 5.1 & 5.5 &  & 87.7  & 99.8  & 100.0 & 100.0 \\
7  &   5.0 & 4.9 & 5.6 & 5.4 &  & 5.4 & 5.1 & 4.6 & 5.4 &  & 96.7  & 100.0 & 100.0 & 100.0 \\
8  &   4.5 & 6.0 & 4.8 & 4.1 &  & 5.3 & 5.3 & 4.9 & 5.5 &  & 98.6  & 100.0 & 100.0 & 100.0 \\
9  &   5.8 & 4.4 & 4.4 & 5.0 &  & 5.5 & 4.7 & 5.4 & 4.9 &  & 99.1  & 100.0 & 100.0 & 100.0 \\
10 &   4.9 & 4.9 & 4.4 & 4.6 &  & 5.2 & 5.1 & 4.8 & 4.1 &  & 99.8  & 100.0 & 100.0 & 100.0 \\
11 &   4.8 & 4.4 & 4.9 & 5.0 &  & 4.9 & 4.9 & 5.6 & 4.4 &  & 99.7  & 100.0 & 100.0 & 100.0 \\
12 &   5.1 & 5.2 & 5.6 & 4.4 &  & 4.8 & 5.0 & 4.5 & 5.4 &  & 100.0 & 100.0 & 100.0 & 100.0 \\
13 &   5.1 & 5.6 & 5.7 & 4.6 &  & 5.5 & 4.9 & 4.8 & 5.5 &  & 100.0 & 100.0 & 100.0 & 100.0 \\
14 &   5.1 & 5.5 & 3.9 & 5.3 &  & 5.5 & 5.0 & 4.5 & 5.1 &  & 100.0 & 100.0 & 100.0 & 100.0 \\
15 &   4.9 & 5.1 & 4.3 & 4.7 &  & 4.9 & 5.3 & 5.3 & 5.7 &  & 100.0 & 100.0 & 100.0 & 100.0 \\
\hline\hline
\end{tabular}}
\end{center}
\par
\vspace{-3mm} 
\begin{spacing}{1}
\footnotesize{
Notes: 
(i) In the baseline DGP for the test, the outcome variable is generated as $y_{it}=\alpha_{i} + \boldsymbol{\beta}_{i}^{\prime} \boldsymbol{x}_{it} + u_{it}$, with $\alpha_{i}$ correlated with $x_{it,1}$ under both the null and alternative hypotheses. For details of the baseline DGP without time effects, see sub-section \ref{DGP}. (ii) The null hypothesis is given by (\ref{null}), including the case of homogenity with $\sigma_{\beta_{1}}^{2}=0$ and the case of uncorrelated heterogeneity with $\psi_{\beta_{1}}=0$ (the degree of correlated heterogeneity defined by (\ref{eta_i})) and $\sigma_{\beta_{1}}^{2}=0.75$. The alternative of correlated heterogeneity is generated with $\psi_{\beta_{1}}=0.5$ and $\sigma_{\beta_{1}}^{2}=0.75$. 
(iii) The test statistic is calculated based on the difference between FE and TMG estimators, given by (\ref{htest}). 
Size and power are in per cent.}
\end{spacing}
\end{table}

\begin{table}[h]
\caption{Empirical size and power of the Hausman test of correlated
heterogeneity in the baseline DGP with two regressors and time effects}
\label{tab:Test_te_d1_arx_chi2_tex0_k3}\vspace{-6mm}
\par
\begin{center}
\scalebox{0.8}{
\begin{tabular}{rccccccccccrrrr}
\hline\hline 
     &   \multicolumn{9}{c}{Under $H_{0}$: Uncorrelated heterogeneity} &  & \multicolumn{4}{c}{Under $H_{1}$: Correlated heterogeneity}   \\ \cline{2-10} \cline{12-15}
     &   \multicolumn{4}{c}{$\sigma_{\beta_{1}}^{2}=0$}   &  & \multicolumn{4}{c}{$\sigma_{\beta_{1}}^{2}=0.75$, $\psi_{\beta_{1}} =0$} &  & \multicolumn{4}{c}{$\sigma_{\beta_{1}}^{2}=0.75$, $\psi_{\beta_{1}} =0.5$ }  \\ \cline{2-5} \cline{7-10} \cline{12-15} 
$T/n$   & 1,000   & 2,000   & 5,000   & 10,000  &  & 1,000 & 2,000 & 5,000  & 10,000  &  & $\,\,\,\,\,$1,000 & $\,\,\,\,\,$2,000 & $\,\,\,\,\,$5,000 & $\,\,\,\,$10,000 \\ \hline   
3  &  4.6 & 4.1 & 5.3 & 5.7 &  & 5.0 & 5.2 & 5.0 & 4.7 &  & 15.1  & 23.2  & 46.7  & 80.1  \\
4  &  4.7 & 5.0 & 5.2 & 4.4 &  & 5.3 & 5.1 & 5.9 & 4.2 &  & 43.5  & 68.8  & 96.4  & 100.0 \\
5  &   4.2 & 4.9 & 5.4 & 4.3 &  & 5.4 & 5.5 & 4.5 & 4.7 &  & 69.5  & 94.6  & 100.0 & 100.0 \\
6  &   4.4 & 4.2 & 6.3 & 5.3 &  & 4.6 & 5.2 & 4.8 & 5.1 &  & 89.4  & 99.8  & 100.0 & 100.0 \\
7  &   4.9 & 4.9 & 4.5 & 5.5 &  & 5.6 & 5.7 & 4.5 & 5.0 &  & 95.1  & 100.0 & 100.0 & 100.0 \\
8  &   4.7 & 5.2 & 5.4 & 4.3 &  & 4.3 & 5.2 & 5.1 & 5.1 &  & 98.6  & 100.0 & 100.0 & 100.0 \\
9  &   5.3 & 4.9 & 5.0 & 4.1 &  & 5.1 & 4.6 & 4.4 & 5.4 &  & 99.5  & 100.0 & 100.0 & 100.0 \\
10 &   5.0 & 4.9 & 4.2 & 4.7 &  & 5.8 & 5.2 & 4.7 & 4.5 &  & 99.9  & 100.0 & 100.0 & 100.0 \\
11 &   5.1 & 5.0 & 5.5 & 4.9 &  & 4.8 & 5.2 & 3.9 & 6.0 &  & 99.9  & 100.0 & 100.0 & 100.0 \\
12 &   4.0 & 4.4 & 5.5 & 4.7 &  & 4.2 & 4.9 & 5.0 & 5.4 &  & 99.9  & 100.0 & 100.0 & 100.0 \\
13 &   4.6 & 6.0 & 5.8 & 5.2 &  & 4.4 & 5.5 & 4.7 & 4.6 &  & 100.0 & 100.0 & 100.0 & 100.0 \\
14 &   4.4 & 5.2 & 4.4 & 6.2 &  & 5.3 & 4.6 & 5.1 & 5.2 &  & 100.0 & 100.0 & 100.0 & 100.0 \\
15 &   3.8 & 5.2 & 5.0 & 5.1 &  & 5.2 & 4.7 & 4.3 & 5.3 &  & 100.0 & 100.0 & 100.0 & 100.0 \\
\hline
\hline
\end{tabular}}
\end{center}
\par
\vspace{-3mm} 
\begin{spacing}{1}
\footnotesize{
Notes: 
In the baseline DGP for the test, the outcome variable is generated as $y_{it}=\alpha_{i} + \phi_{t}+ \boldsymbol{\beta}_{i}^{\prime} \boldsymbol{x}_{it} + u_{it}$, with $\alpha_{i}$ correlated with $x_{it,1}$ under both the null and alternative hypotheses. For further details of the baseline DGP with time effects see Section \ref{DGP} in the main paper and Section \ref{secMCDGP}. 
See also footnotes to Table \ref{tab:Test_te_d1_arx_chi2_tex0}. 
Size and power are in per cent. }
\end{spacing}
\end{table}

\begin{table}[h]
\caption{Empirical size and power of the Hausman test of correlated
heterogeneity in the baseline DGP with three regressors, without time
effects }
\label{tab:Test_d1_arx_chi2_tex0_k4}\vspace{-6mm}
\par
\begin{center}
\scalebox{0.8}{
\begin{tabular}{rccccccccccrrrr}
\hline\hline 
     &   \multicolumn{9}{c}{Under $H_{0}$: Uncorrelated heterogeneity} &  & \multicolumn{4}{c}{Under $H_{1}$: Correlated heterogeneity}   \\ \cline{2-10} \cline{12-15}
     &   \multicolumn{4}{c}{$\sigma_{\beta_{1}}^{2}=0$}   &  & \multicolumn{4}{c}{$\sigma_{\beta_{1}}^{2}=0.75$, $\psi_{\beta_{1}} =0$} &  & \multicolumn{4}{c}{$\sigma_{\beta_{1}}^{2}=0.75$, $\psi_{\beta_{1}} =0.5$ }  \\ \cline{2-5} \cline{7-10} \cline{12-15} 
$T/n$   & 1,000   & 2,000   & 5,000   & 10,000  &  & 1,000 & 2,000 & 5,000  & 10,000  &  & $\,\,\,\,\,$1,000 & $\,\,\,\,\,$2,000 & $\,\,\,\,\,$5,000 & $\,\,\,\,$10,000 \\ \hline  
4  &   4.9 & 5.2 & 5.2 & 5.0 &  & 4.9 & 4.7 & 5.7 & 5.2 &  & 11.0  & 18.2  & 38.1  & 69.0  \\
5  &   4.7 & 4.2 & 5.5 & 4.4 &  & 4.5 & 4.9 & 4.4 & 4.2 &  & 31.4  & 59.2  & 94.7  & 100.0 \\
6  &   5.0 & 5.2 & 4.4 & 4.8 &  & 4.8 & 4.9 & 5.3 & 4.7 &  & 56.8  & 91.5  & 100.0 & 100.0 \\
7  &   4.6 & 5.2 & 4.9 & 4.0 &  & 5.2 & 4.8 & 5.0 & 5.7 &  & 78.4  & 99.1  & 100.0 & 100.0 \\
8  &   4.6 & 4.2 & 5.2 & 4.6 &  & 5.4 & 4.9 & 5.3 & 4.9 &  & 89.8  & 100.0 & 100.0 & 100.0 \\
9  &   5.4 & 5.0 & 4.9 & 4.9 &  & 4.8 & 4.6 & 5.0 & 4.7 &  & 95.8  & 100.0 & 100.0 & 100.0 \\
10 &  5.2 & 5.0 & 4.2 & 5.4 &  & 5.4 & 5.2 & 4.7 & 4.8 &  & 98.2  & 100.0 & 100.0 & 100.0 \\
11 &   5.2 & 5.2 & 5.0 & 5.1 &  & 4.4 & 4.2 & 4.9 & 4.6 &  & 98.8  & 100.0 & 100.0 & 100.0 \\
12 &   4.2 & 4.7 & 5.3 & 4.3 &  & 5.2 & 4.9 & 4.8 & 5.2 &  & 99.5  & 100.0 & 100.0 & 100.0 \\
13 &   4.8 & 5.2 & 5.8 & 4.7 &  & 4.9 & 4.5 & 4.2 & 5.5 &  & 99.7  & 100.0 & 100.0 & 100.0 \\
14 &   4.7 & 4.4 & 4.9 & 5.5 &  & 3.9 & 5.4 & 4.9 & 4.5 &  & 99.9  & 100.0 & 100.0 & 100.0 \\
15 &   3.9 & 4.9 & 4.7 & 5.6 &  & 5.2 & 5.5 & 4.8 & 4.8 &  & 99.8  & 100.0 & 100.0 & 100.0 \\
\hline\hline
\end{tabular}}
\end{center}
\par
\vspace{-2mm} 
\begin{spacing}{1}
\footnotesize{
Notes: In the baseline DGP for the test, the outcome variable is generated as $y_{it}=\alpha_{i} + \boldsymbol{\beta}_{i}^{\prime} \boldsymbol{x}_{it} + u_{it}$, with $\alpha_{i}$ correlated with $x_{it,1}$ under both the null and alternative hypotheses. For details of the baseline DGP without time effects, see sub-section \ref{DGP}. See also footnotes to Table \ref{tab:Test_d1_arx_chi2_tex0_k3}. 
Size and power are in per cent.}
\end{spacing}
\end{table}

\begin{table}[h]
\caption{Empirical size and power of the Hausman test of correlated
heterogeneity in the baseline DGP with three regressors and time effects}
\label{tab:Test_te_d1_arx_chi2_tex0_k4}\vspace{-6mm}
\par
\begin{center}
\scalebox{0.8}{
\begin{tabular}{rccccccccccrrrr}
\hline\hline 
     &   \multicolumn{9}{c}{Under $H_{0}$: Uncorrelated heterogeneity} &  & \multicolumn{4}{c}{Under $H_{1}$: Correlated heterogeneity}   \\ \cline{2-10} \cline{12-15}
     &   \multicolumn{4}{c}{$\sigma_{\beta_{1}}^{2}=0$}   &  & \multicolumn{4}{c}{$\sigma_{\beta_{1}}^{2}=0.75$, $\psi_{\beta_{1}} =0$} &  & \multicolumn{4}{c}{$\sigma_{\beta_{1}}^{2}=0.75$, $\psi_{\beta_{1}} =0.5$ }  \\ \cline{2-5} \cline{7-10} \cline{12-15} 
$T/n$   & 1,000   & 2,000   & 5,000   & 10,000  &  & 1,000 & 2,000 & 5,000  & 10,000  &  & $\,\,\,\,\,$1,000 & $\,\,\,\,\,$2,000 & $\,\,\,\,\,$5,000 & $\,\,\,\,$10,000 \\ \hline  
4  &   4.0 & 4.0 & 4.9 & 4.3 &  & 3.9 & 4.3 & 4.7 & 5.8 &  & 12.1  & 17.9  & 37.0  & 65.8  \\
5  &   5.0 & 4.9 & 5.5 & 4.9 &  & 4.9 & 4.8 & 4.7 & 5.4 &  & 31.0  & 60.9  & 94.2  & 100.0 \\
6  &   4.4 & 4.5 & 4.7 & 5.8 &  & 4.9 & 4.9 & 5.2 & 4.5 &  & 56.8  & 91.4  & 100.0 & 100.0 \\
7  &   4.3 & 5.8 & 5.0 & 5.1 &  & 4.8 & 5.4 & 4.4 & 4.4 &  & 78.2  & 98.8  & 100.0 & 100.0 \\
8  &   4.2 & 4.9 & 5.4 & 4.7 &  & 4.8 & 5.4 & 5.9 & 4.7 &  & 91.6  & 99.9  & 100.0 & 100.0 \\
9  &   5.0 & 5.0 & 5.4 & 4.9 &  & 4.3 & 5.5 & 6.0 & 4.4 &  & 95.8  & 100.0 & 100.0 & 100.0 \\
10 &   4.7 & 3.8 & 5.5 & 5.5 &  & 5.4 & 4.3 & 4.2 & 5.4 &  & 97.8  & 100.0 & 100.0 & 100.0 \\
11 &   5.7 & 4.9 & 5.2 & 4.4 &  & 4.9 & 4.8 & 5.2 & 4.8 &  & 98.8  & 100.0 & 100.0 & 100.0 \\
12 &   4.7 & 4.8 & 4.7 & 5.0 &  & 3.9 & 4.7 & 5.8 & 5.2 &  & 99.4  & 100.0 & 100.0 & 100.0 \\
13 &   4.3 & 5.4 & 5.4 & 5.5 &  & 4.6 & 4.6 & 4.4 & 6.0 &  & 99.4  & 100.0 & 100.0 & 100.0 \\
14 &   5.0 & 5.5 & 5.4 & 4.8 &  & 5.6 & 5.5 & 4.1 & 5.2 &  & 99.8  & 100.0 & 100.0 & 100.0 \\
15 &   5.4 & 5.2 & 6.1 & 4.7 &  & 5.5 & 4.8 & 5.0 & 4.9 &  & 100.0 & 100.0 & 100.0 & 100.0 \\
\hline
\hline
\end{tabular}}
\end{center}
\par
\vspace{-2mm} 
\begin{spacing}{1}
\footnotesize{
Notes: In the baseline DGP for the test, the outcome variable for the test is generated as $y_{it}=\alpha_{i} + \phi_{t}+ \boldsymbol{\beta}_{i} ^{\prime}\boldsymbol{x}_{it} + u_{it}$, with $\alpha_{i}$ correlated with $x_{it,1}$ under both the null and alternative hypotheses. For further details of the baseline DGP with time effects see Section \ref{DGP} in the main paper and Section \ref{secMCDGP}. 
See also footnotes to Table \ref{tab:Test_te_d1_arx_chi2_tex0}. 
Size and power are in per cent. }
\end{spacing}
\end{table}

\newpage \clearpage

\section{Empirical supplement}

\label{appt3}

Table \ref{tab:RPS_cal_T3} gives the estimates of $\beta_{0}$ for the
extended panel, 2000--2002 (with $T=3$), both with and without time effects.
As in the case of the 2001--2002 panel shown in Table \ref{tab:RPS_cal_T2}
in the main paper, the FE estimates are larger than the GP and TMG
estimates, but these differences are reduced somewhat, particularly when
time effects are included in the panel regressions. Further, as expected,
increasing $T$ reduces the rate of trimming and brings the GP and TMG
estimators closer to one another. The trimming rate for the GP estimator is
very small indeed (only 1.2 per cent), as compared to around 11 per cent for
the TMG estimator in the case of the 2000--2002 panel. The TMG-TE estimates
of the time effects ($\phi _{2001}$ and $\phi _{2002}$) are highly
statistically significant and positive, capturing the upward trend in the
calorie intake.

\begin{table}[h]
\caption{Alternative estimates of the average effect of household
expenditures on calorie demand in Nicaragua over the period 2000--2002 ($T=3$%
) }
\label{tab:RPS_cal_T3}
\begin{center}
\vspace{-6mm} 
\scalebox{0.9}{
\begin{tabular}{lccccccc}
\hline\hline 
& \multicolumn{3}{c}{Without time effects} &  & \multicolumn{3}{c}{With time effects} \\ \cline{2-4} \cline{6-8}
 & (1) & (2) & (3) &  & (4) & (5) & (6)  \\ 
 & FE & GP & TMG &  & TWFE & GP-TE & TMG-TE  \\ \hline
$\hat{\beta}_{0}$ & 0.6588 & 0.6034 & 0.5900 &  & 0.6968 & 0.6448  & 0.6338 \\
 & (0.0233) & (0.0350) & (0.0284) &  & (0.0211) & (0.0330)  & (0.0259) \\
$\hat{\phi}_{2001}$ & ... & ... & ... &  & 0.0727 & 0.0682  & 0.0682 \\
 & ... & ... & ... &  & (0.0087) & (0.0123)  & (0.0123) \\
$\hat{\phi}_{2002}$ & ... & ... & ... &  & 0.1066 & 0.0954  & 0.0954 \\
 & ... & ... & ... &  & (0.0080) & (0.0108)  & (0.0108) \\
$\hat{\pi}$  $(\times 100)$ & ... & 1.2 & 10.9 &  & ... & 1.2  & 10.9\\
\hline\hline
\end{tabular}
}
\end{center}
\par
\vspace{-2mm} {\footnotesize Notes: (i) The estimates of $\beta _{0}=E(\beta
_{i}) $, $\phi _{2001}$ and $\phi _{2002}$ in the model (\ref{calm1}) in the main paper are
based on a panel of $1,358$ households. (ii) FE and GP estimators are given
by (\ref{fee}) and (\ref{gpe}) in the main paper, respectively. The TMG
estimator is given by (\ref{TMGb}), and its asymptotic variance is estimated
by (\ref{varC}) in the main paper. (iii) The TWFE estimator is given by (\ref%
{TWFEhat}), and its asymptotic variance is estimated by (\ref{CvarTWFE}).
The GP-TE estimator of $\boldsymbol{\beta }_{0}$ and $\boldsymbol{\phi}_{0}$
is given by equations (25) and (24) in \cite{GrahamPowell2012}. (iv) When $%
T=3 > k$, the TMG-TE estimator of $\boldsymbol{\beta}_{0}$ and $\boldsymbol{%
\phi}_{0}$ are given by (\ref{BetaTE2}) and (\ref{TE2}), and their
asymptotic variances are estimated by (\ref{VarbetaC}) and (\ref{Varphi2}),
respectively, in the main paper. The trimming threshold value for TMG and
TMG-TE estimators is given by $a_{n}=\bar{d}_{n} n^{-\alpha}$, where $\bar{d}%
_{n} =\frac{1}{n} \sum_{i}^{n} d_{i}$, $d_{i} =\func{det}(\boldsymbol{X}%
_{i}^{\prime} \boldsymbol{M}_{T} \boldsymbol{X}_{i})$, $\boldsymbol{X}_{i}=(%
\boldsymbol{x}_{i1},\boldsymbol{x}_{i2},...,\boldsymbol{x}_{iT})^{\prime}$
and $\boldsymbol{M}_{T} = \boldsymbol{I}_{T} - \boldsymbol{\tau}_{T}%
\boldsymbol{\tau}_{T}^{\prime}/T$. $\alpha$ is set to $1/3$. (iv) $\hat{\pi}$
is the estimated fraction of individual estimates being trimmed given by (%
\ref{pin}) in the main paper. \textquotedblleft ..." denotes that the
estimation algorithms are not applicable. The numbers in brackets are
standard errors. }
\end{table}

\newpage \clearpage
\noindent {\large \textbf{References} }

\bigskip

\noindent Graham, B. S. and J. L. Powell (2012). \href{https://doi.org/10.3982/ECTA8220}%
{Identification and estimation of average partial effects in
\textquotedblleft irregular\textquotedblright correlated random coefficient panel data models}%
. \textit{Econometrica} 80, 2105--2152.

\noindent Hill, B. M. (1975). \href{https://www.jstor.org/stable/2958370}{A
simple general approach to inference about the tail of a distribution}. 
\textit{The Annals of Statistics} 3, 1163--1174.

\noindent Pesaran, M. H. (2015). \href{https://doi.org/10.1093/acprof:oso/9780198736912.001.0001}%
{\textit{Time Series and Panel Data Econometrics}}. Oxford University Press,
Oxford.

\noindent Pesaran, M. H. and L. Yang (2026). Mean group estimation with time
effects. Unpublished manuscript.

\smallskip

\end{document}